\newcommand{\ket}[1]{ | #1 \rangle}
\newcommand{\bra}[1]{ \langle #1  |}
\newcommand\oa[1]{\text{OA}\left( #1  \right) }
\newcommand\ame[1]{\text{AME}\left( #1  \right) }
\DeclareMathOperator{\tr}{tr}
\DeclareMathOperator{\id}{Id}
\newcommand\GF[1]{\text{GF}(#1)}
\DeclareMathOperator{\PT}{\Gamma}
\DeclareMathOperator{\R}{R}
\newtheorem{lemma}{Lemma}
\newtheorem{observation}{Observation}
\newtheorem{proposition}{Proposition}
\newtheorem{theorem}{Theorem}
\newtheorem{corollary}{Corollary}
\newtheorem{conjecture}{Conjecture}
\newtheorem{question}{Question}
\theoremstyle{definition}
\newtheorem{definition}{Definition}
\newtheorem{example}{Example}
\newtheorem{remark}{Remark}
\newcommand{\Id}[0]{\text{Id}}
\newcommand{\blue}[1]{\textcolor{blue}{#1}}
\newcommand{\bs}[1]{\textbf{#1}}
\newcommand{\ea}[1]{\begin{align}#1\end{align}}
\newcommand{\eq}[1]{\begin{equation}#1\end{equation}}
\newcommand{\ma}[1]{\mathcal{#1}}
\newcommand{\SL}[0]{\text{SL}}
\newcommand{\SLC}[0]{\text{SL(}2,\mathbb{C}\text{)}}
\newcommand{\SLIP}[0]{\text{SLIP}}
\NewDocumentCommand{\highlight}{O{blue!40} m m}{%
\draw[mycolor=#1] (#2.north west)rectangle (#3.south east);
}
\NewDocumentCommand{\fhighlight}{O{blue!40} m m}{%
\draw[myfillcolor=#1] (#2.north west)rectangle (#3.south east);
}
\crefname{proposition}{Proposition}{Propositions}
\crefname{definition}{Definition}{Definitions}
\crefname{lemma}{Lemma}{Lemmas}
\crefname{figure}{Figure}{Figures}
\crefname{theorem}{Theorem}{Theorems}
\crefname{corollary}{Corollary}{Corollary}
\crefname{conjecture}{Conjecture}{Conjectures}
\crefname{section}{Section}{Sections}
\crefname{appendix}{Appendix}{Appendixes}
\crefname{observation}{Observation}{Observation}
\crefname{question}{Question}{Questions}
\crefname{remark}{Remark}{Remark}
\crefname{example}{Example}{Examples}
\crefname{equation}{Eq.}{Eqs.}
\crefname{table}{Table}{Tables}
\crefname{chapter}{Chapter}{Chapters}
\def\club{\ding{168}}
\def\diamond{\color{red}\ding{169}}
\def\spade{\ding{171}}
\tikzset{%
  highlight/.style={rectangle,rounded corners,fill=red!15,draw,fill opacity=0.5,thick,inner sep=0pt}
}
\newcommand{\tikzmark}[2]{\tikz[overlay,remember picture,baseline=(#1.base)] \node (#1) {#2};}
\tikzset{%
  highlight/.style={rectangle,rounded corners,fill=red!15,draw,fill opacity=0.5,thick,inner sep=0pt}
}
\newcommand{\Highlight}[1][submatrix]{%
    \tikz[overlay,remember picture]{
    \node[highlight,fit=(left.north west) (right.south east)] (#1) {};}
}
\newcommand{\tikzmarkk}[2]{\tikz[overlay,remember picture,baseline=(#1.base)] \node (#1) {#2};}
\tikzset{%
  highlightt/.style={rectangle,rounded corners,fill=blue!15,draw,fill opacity=0.3,inner sep=-0.2pt}
}
\tikzset{%
  highlighttt/.style={rectangle,rounded corners,fill=red!15,draw,fill opacity=0.3,inner sep=-0.2pt}
}
\newcommand{\Highlightt}[1][submatrix]{%
    \tikz[overlay,remember picture]{
    \node[highlightt,fit=(up.north west) (down.south east)] (#1) {};}
}
\newcommand{\Highlighttt}[1][submatrix]{%
    \tikz[overlay,remember picture]{
    \node[highlighttt,fit=(up.north west) (down.south east)] (#1) {};}
}
\newcommand{\daywidth}{1.3 cm}
\tikzset{mycolor/.style = {line width=1bp,color=#1}}%
\tikzset{myfillcolor/.style = {draw,fill=#1}}%
\DeclareMathOperator{\OA}{OA}
\begin{document}
 
\begin{titlepage}
	\centering
	{\scshape\large\bfseries Jagiellonian University, Krak\'{o}w\par}
	\vspace{0.1cm}
	{\scshape\large\bfseries Faculty of Physics, Astronomy and Applied Informatics\par} 
	\vspace{0.9cm}
	\includegraphics[scale=0.1]{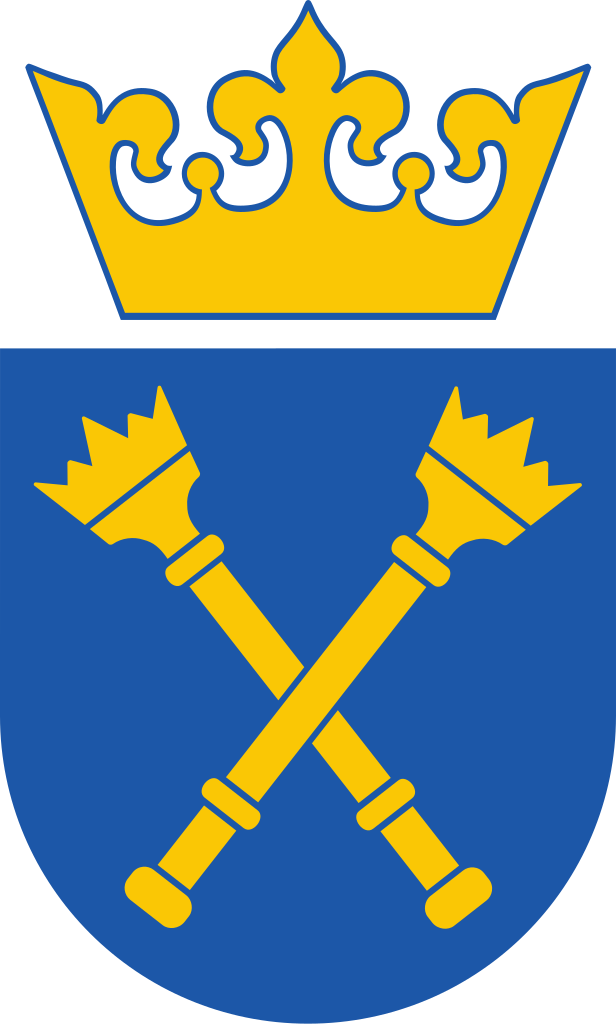}
	\par\vspace{0.9cm}
	
	{\Large\scshape\bfseries PhD dissertation \par}
	\vspace{0.1cm}	
	{\scshape\large\bfseries Physical sciences\par}
	\vspace{0.1cm}	
	{\scshape\large\bfseries Physics\par}
	\vspace{1.6cm}

	{\Large\bfseries Symmetry and Classification of Multipartite Entangled States\par}
	\vspace{1.3cm}
	
	{\huge Adam Burchardt\par}
	\vspace{1.6cm}
	
	{\bfseries Supervisor\par}
	\vspace{0.1cm}
	{\Large Prof. dr hab. \textsc{Karol \.{Z}yczkowski}} \par
	\vspace{1.9cm}
	

	{\large Krak\'{o}w, September 2021}
\end{titlepage}


\newpage
\begin{titlepage}
	\begin{flushright}
	\mbox{}
     \vfill
	{\Large \textit{cioci Basi}\par}
	\end{flushright}
\end{titlepage}
\begin{titlepage}
	\begin{flushright}
	\mbox{}
     \vfill
	{\Large \textit{Separable states are all alike; every entangled state is entangled in its own way.}\par}
	\vspace{1cm}
	{\Large — Lew Tolstoy, Anna Karenina (Paraphrasis)\par}
	\end{flushright}
	\end{titlepage}
\begin{titlepage}
$\quad$
\vskip 7cm

\indent
I am immensely grateful to all people who have been in my life for the last three years. 
First of all, to my supervisor, Karol {\.Z}yczkowski, who introduced me to the world of quantum information, which I liked so much. Thank you for the scientific problems you presented to me, for the advice you gave me, and for the time you devoted to me. 

Thank you to all colleagues and collaborators with whom I have had the opportunity to work, talk, and chat over the past years:
Zahra Raissi, Gon\c{c}alo Quinta, Rui Andr\'{e}, Jak{\'{o}}b Czartowski, Marcin Rudzi{\'{n}}ski, Grzegorz Rajchel-Mieldzio{\'{c}}, Wojtek Bruzda, Suhail Ahmad Rather, Arul Lakshminarayan, Baichu Yu, Pooja Jayachandran, Valerio Scarani, Yu Cai, Nicolas Brunner, Pawe\l{} Mazurek, M\'at\'e Farkas, Jerzy Paczos, Marcin Wierz\-bi{\'{n}}\-ski, Waldemar K{\l}obus, Mahasweta~Pandit, Tam\'as V\'ertesi, Wies{\l}aw Las\-kow\-ski, Adrian Ko{\l}odziejski, Felix Huber, Markus Grassl, Dardo Goyeneche, M\={a}ris Ozols, Timo Simnacher, Antonio Ac\'{i}n, Albert Rico Andr\'{e}s, Gerard Angles Munne, Kamil Korzekwa, Oliver Reardon-Smith, Roberto Salazar, Alexssandre de Oliveira, Micha{\l} Eckstein, Simon Burton,  John Martin, David Lyons, and Fereshte Shahbeigi, among many others. 
I express my special thanks to two of my colleagues: Konrad Szyma{\'{n}}ski and Stanis{\l}aw Czach{\'{o}}rski, who greatly helped me in the last days of preparing this dissertation. 

I would like to thank Agata Hadasz, Agnieszka Hakenszmidt and Agnieszka Golak for all their help during my stay in Krakow. 

All of this would not have been possible without the help of my family, friends, and beloved.
\end{titlepage}

\chapter*{Abstract}
One of the key manifestations of quantum mechanics is the phenomenon of quantum entanglement. While the entanglement of bipartite systems is already well understood, our knowledge of entanglement in multipartite systems is still limited. This dissertation covers various aspects of the quantification of entanglement in multipartite states and the role of symmetry in such systems. Firstly, we establish a connection between the classification of multipartite entanglement and knot theory and investigate the family of states that are resistant to particle loss. Furthermore, we construct several examples of such states using the Majorana representation as well as some combinatorial methods. Secondly, we introduce classes of highly-symmetric but not fully-symmetric states and investigate their entanglement properties. Thirdly, we study the well-established class of Absolutely Maximally Entangled (AME) quantum states. On one hand, we provide construction of new states belonging to this family, for instance an AME state of 4 subsystems with six levels each, on the other, we tackle the problem of equivalence of such states. Finally, we present a novel approach for the general problem of verification of the equivalence between any pair of arbitrary quantum states based on a single polynomial entanglement measure.


\newpage
\subsection*{This PhD Dissertation is based on the following publications and preprints available online}
\begin{enumerate}
\item[{[A]}]
G. M. Quinta, R. Andr\'{e}, \textbf{A. Burchardt}, and K. \. Zyczkowski
\textit{Cut-resistant links and multipartite entanglement resistant to particle loss}, 
\href{https://journals.aps.org/pra/abstract/10.1103/PhysRevA.100.062329}{Phys. Rev. A 100, 062329 (2019)}. 
\item[{[B]}]
\textbf{A. Burchardt}, Z. Raissi, 
\textit{Stochastic local operations with classical communication of absolutely maximally entangled states}, 
\href{https://journals.aps.org/pra/abstract/10.1103/PhysRevA.102.022413}{Phys. Rev. A 102,022413 (2020)}.
\item[{[C]}]
\textbf{A. Burchardt}, J. Czartowski, and K. \. Zyczkowski, 
\textit{Entanglement in highly symmetric multipartite quantum states},
\href{https://journals.aps.org/pra/abstract/10.1103/PhysRevA.104.022426}{Phys. Rev. A 104, 022426 (2021)}.
\item[{[D]}]
S. Rather${}^\ast$, \textbf{A. Burchardt}${}^\ast$, W. Bruzda, G. Rajchel-Mieldzio{\'c}, A. Lakshminarayan, K. {\.Z}yczkowski, 
\textit{Thirty-six entangled officers of Euler: Quantum solution to a classically impossible problem}, 
\href{https://journals.aps.org/prl/abstract/10.1103/PhysRevLett.128.080507}{Phys. Rev. Lett. 128, 080507 (2022)}. 
\\${}^\ast$Contributed equally
\item[{[E]}]
\textbf{A. Burchardt}, G. M. Quinta, R. Andr\'{e}, 
\textit{Entanglement Classification via Single Entanglement Measure}, 
\href{https://arxiv.org/abs/2106.00850}{ArXiv: 2106.00850 (2021)}.
\end{enumerate}

\newpage
\subsection*{List of Abbreviations}

\begin{table}[h!]
\begin{tabular}{ll}
AME & Absolutely Maximally Entangled    \\
LM &  Local Monomial  \\
LU &  Local Unitary  \\
MOLS & Mutually Orthogonal Latin Square     \\
OA & Orthogonal Array     \\
OLS & Orthogonal Latin Square     \\
QECC & Quantum Error Correction Code \\
SL & Special linear \\
SLIP &  SL-symmetric Invariant Polynomial  \\
SLOCC &  Stochastic Local Operations with Classical Communication  \\
\end{tabular}
\end{table}

\setcounter{tocdepth}{1}
\tableofcontents
 
\chapter{Introduction}

In 1935, Albert Einstein, Boris Podolsky, and Nathan Rosen studied some counterintuitive predictions of quantum mechanics about strongly correlated systems, known today as Einstein–Podolsky–Rosen paradox (EPR paradox) \cite{PhysRev.47.777}. 
The EPR paradox aroused great interest among physicists. 
One of the concerned was Erwin Schrödinger, who in a later letter to Albert Einstein coined the word \textit{entanglement} to describe this phenomenon. 
The phenomenon of quantum entanglement occurs when a collection of distinct particles interact in such a way that the resulting quantum state cannot be described independently for each particle, exactly as it is observed in the EPR paradox. The first experiment that verified entanglement was successfully corroborated by Chien-Shiung Wu and Irving Shaknov in 1949 \cite{PhysRev.77.136}. This result specifically proved the quantum correlations of a pair of photons. Since then, quantum entanglement has been demonstrated experimentally in many other systems, as with photons \cite{physRevLett.18.575}, neutrino,\cite{PhysRevLett.117.050402}, electrons \cite{verification13}, large molecules as buckyballs \cite{Arndt1999,Nairz03}, and even small diamonds \cite{Lee1253}. The EPR paradox and notion of entanglement became a milestone for the development of quantum mechanics and our understanding of the world.  Moreover, quantum entanglement has become the heart of a completely new and dynamically developing field of science lying in the intersection of Quantum Physics and Information Theory: Quantum Information Theory.

\subparagraph*{Qubits.}
A central notion in Quantum Information Theory is \textit{qubit}, an abstract term, distilled from various concrete physical realizations. Arguably, the simplest quantum-mechanical system is a two-level (or two-state) system. Such a system might be physically realized as the spin of the electron, which in a given reference frame is either up or down, or as the polarization of a single photon, with distinguished vertical and horizontal polarization.  In an abstract way, a pure qubit state is a coherent superposition of the aforementioned distinguished basis states $\ket{0},\ket{1}$, i.e. a linear combination 
\begin{equation}
\label{eqQubit}
\ket{\psi}=\alpha \ket{0}+ \beta \ket{1}, 
\quad\quad 
|\alpha|^2+|\beta |^2 =1.
\end{equation}
Such an object is nowerdays refered as \textit{qubit} - quantum binary unit \cite{PhysRevA.51.2738}, a quantum analogue of a classical bit. In a classical system, a bit is always in precisely one of two states, either $0$ or $1$. However, quantum mechanics allows the qubit to be in a coherent superposition of both states simultaneously, a property that is fundamental to quantum mechanics and quantum computing.


\subparagraph*{Two-qubit system.} 
The most common variant of the EPR paradox (formulated by Bohm \cite{Bohm1,PhysRev.108.1070,RevModPhys.81.1727}) was expressed in terms of the quantum mechanical formulation of spin. In the more recent notations, it might be explained as a particular state of a two-qubit system, i.e. a linear combination of four vectors $\ket{00},\ket{01},\ket{10},\ket{11}$ representing directions of both spins. So-called EPR state might be thus written as
\begin{equation}
\ket{\text{EPR}}=\dfrac{1}{\sqrt{2}} (\ket{00} +\ket{11}).
\end{equation}
Such state is strongly correlated in any reference frame, in a way that classical correlations do not allow. 

While studying quantum entanglement, it is useful to abstract from certain local properties of a state which do not affect global entanglement and quantum correlations. An example of this type of abstraction is a Local Unitary (LU) equivalence of two given states. Having say that, it is known that any two-qubit state is LU-equivalent to the following system
\begin{equation}
\ket{\psi} = \sqrt{p} \ket{00}+\sqrt{1-p} \ket{11},
\label{EPRp}
\end{equation}
for some parameter $p \in [0,\frac{1}{2} ]$. In particular, for $p=0$ the state is separable, while for $p=\frac{1}{2}$ the state is maximally entangled, and coincides with the famous EPR state. In such a way, we obtained a satisfactory quantification of entanglement in two-qubit systems. Indeed, the exact amount of entanglement is measured by two related values $(p, 1-p)$, which are also known as Schmidt coefficients \cite{RanksReducedMatrices}. The closer the values $p$ and $1-p$ are to each other, the more entangled the state is. 

Furthermore, the number of non-vanishing values among $p, 1-p$ is known as Schmidt rank, which is equal to $1$ for separable states and equal to $2$ for entangled states. This results in the coarse-grained classification of entanglement in two-qubit systems. Indeed, with respect to the Schmidt rank two classes of states are distinguished: separable and entangled. This straightforward division into two classes of bipartite states coincides with the division under another class of local operations: Stochastic Local Operations with Classical Communications (SLOCC) \cite{ThreeQub}. SLOCC operations include not only local unitary rotations but also additions of ancillas (i.e. enlarging the Hilbert space), measurements, and throwing away parts of the system, each performed locally on a given subsystem \cite{ThreeQub,PhysRevA.63.012307}. As it was shown, mathematically SLOCC operations are represented via \emph{invertiable} operators \cite{ThreeQub}. SLOCC operators cannot generate entanglement between subsystems, however, they might enhance or strengthen the existing entanglement with some non-vanishing probability of success.  This is reflected in the fact that there are only two states which are not equivalent with respect to SLOCC transformations: separable $\ket{00}$ state and entangled $\ket{\text{EPR}}$ state. 

An important feature of both types of local operations: LU and SLOCC are that each of them provides an equivalence relation on the state-space, and hence divides it into equivalence classes \cite{LUupto5qubits}. Any two states from one class are interconvertible by an adequate local operator, while such a transformation cannot be provided for states from different classes. In that way, quantum states which exhibit the same (for LU) or similar (for SLOCC) entanglement properties are grouped together, which gives the solution for the problem of \emph{entanglement classification} of two-qubit systems. 

As we already discussed, the problem of entanglement classification of two-qubit states is well-understood. Indeed, there are infinitely many LU-equivalence classes, indexed by one real parameter $p\in [ 0 ,\frac{1}{2}]$, see \cref{EPRp}. On the other hand, there is only one SLOCC-equivalence class of genuinely entangled states, which might be represented by the EPR state.

In 1997 Scott Hill and William K. Wootters introduced the notion of \textit{Concurrence} $C$, an entanglement measure, which for any two-qubit pure state state 
$
\ket{\psi} = c_{00} \ket{00}+c_{01}  \ket{01} +c_{10} \ket{10}+c_{11}  \ket{11} 
$ 
reads 
\begin{equation}
\label{conLag}
C (\ket{\psi} ):= 2|c_{00}c_{11}-c_{01}c_{10} | \in [0,1]
\end{equation}
as an absolute value of the degree two polynomial in the state-coefficients \cite{PhysRevLett.78.5022}. 
As it was observed, the value of concurrence is not only independent of LU operation performed on any qubit, but also on any local Special Linear operation (SL), i.e. local invertible matrices with determinant one. Any invertible operation might be presented as SL operation up to the global constant. In that way, local SL operations are relevant to SLOCC operations on qubits. The notion of concurrence was an important step towards a modern approach to the problem of entanglement quantification via SL-\textit{polynomial invariant} ($\SLIP$), measures. $\SLIP$ measures are at the same time polynomials in the state coefficients and invariant under SLOCC operations \cite{Eltschka_2014}, which significantly facilitates the problem of quantifying entanglement resources.

\subparagraph*{Three-qubit system.}
In the late 1980s Daniel Greenberger, Michael Horne and Anton Zeilinger studied three-qubit entangled state the tripartite generalisation of the EPR pair \cite{greenberger2007going}:
\[
\ket{\text{GHZ}}=\dfrac{1}{\sqrt{2}} (\ket{000}+\ket{111} ).
\]
Bouwmeester et al. firstly presented the experimental observation of the state above as a polarization entanglement for three spatially separated photons \cite{PhysRevLett.82.1345}. Since then, many other experimental realisations of GHZ state were performed. From the theoretical point of view, there are two important features of GHZ state. On one hand it is \textit{maximally entangled}, i.e each of its reductions to the one-particle subsystem is maximally mixed, which is often a desired property. On the other hand, however, state GHZ is not robust against particle loss. Indeed, all its reductions to the two-particle subsystem form an unentangled mixed state, in other words its two-particle correlations are of a classical nature. 

There is yet another particular three-partite entangled state
\[
\ket{\text{W}}=\dfrac{1}{\sqrt{3}} (\ket{001}+\ket{010}+\ket{100}  ),
\]
introduced by in 2000 by Wolfgang Dür, Guifre Vidal, and Ignacio Cirac in their seminal paper ``Three qubits can be entangled in two inequivalent ways'' \cite{ThreeQub}. Contrary to the GHZ state, the W state is robust against the particle loss, indeed, each of its two-particle subsystems exhibits quantum correlations. On the other hand, the one-particle subsystems are not maximally-mixed, as it was for GHZ state.

As it was shown, the aforementioned GHZ and W states are the only two distinct SLOCC-non-equivalent states of genuinely entangled three-qubit states \cite{ThreeQub}. Furthermore, there are infinitely many LU-equivalence classes of genuinely entangled three qubit states, parametrized by three real parameters \cite{ThreeQub,threeQubits}. 

In 2000 Valerie Coffman, Joydip Kundu, and William K. Wootters introduced the first entanglement measure related to the $3$-body quantum correlations in the system: the \emph{three-tangle} $\tau^{(3)}$ \cite{DistributedEntanglement}. Three-tangle distinguishes between GHZ and W states and achieves two extreme values $\tau^{(3)}(\text{GHZ}) =1$ and $\tau^{(3)}(\text{W}) =0$ on both states. While the GHZ state exhibits maximal $3$-body and vanishing $2$-body quantum correlations, the quantum correlations in the W state are reversed. As it was shown, the three-tangle is a SLIP measure, hence invariant under SLOCC operations. In that way, the three-tangle provides a satisfactory method for distinguishing between different SLOCC classes of genuinely entangled three-qubit states.

\begin{figure}[h!]
  \centering
  \includegraphics[width=0.4\textwidth]{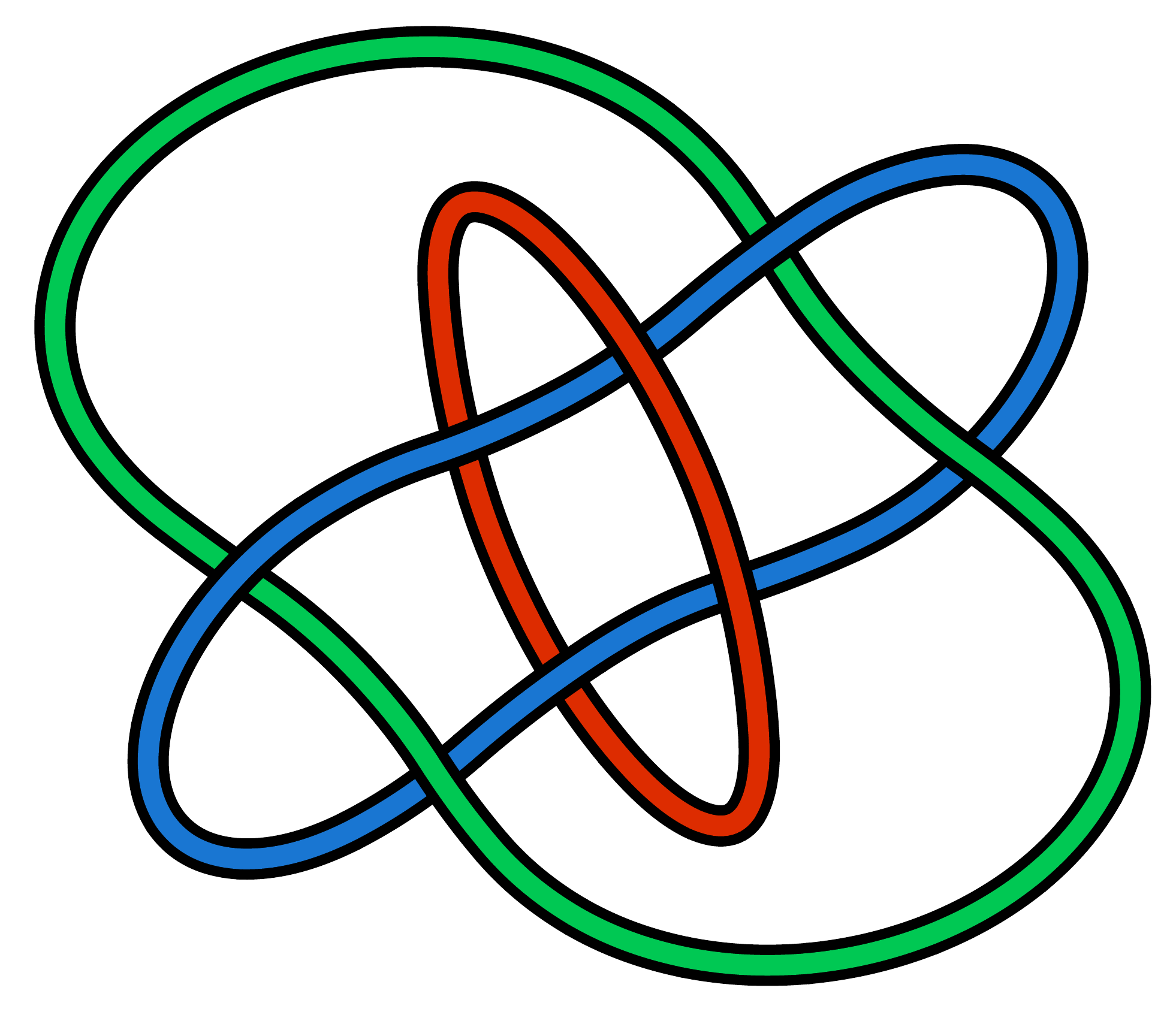}
  \hspace{1cm}
    \includegraphics[width=0.4\textwidth]{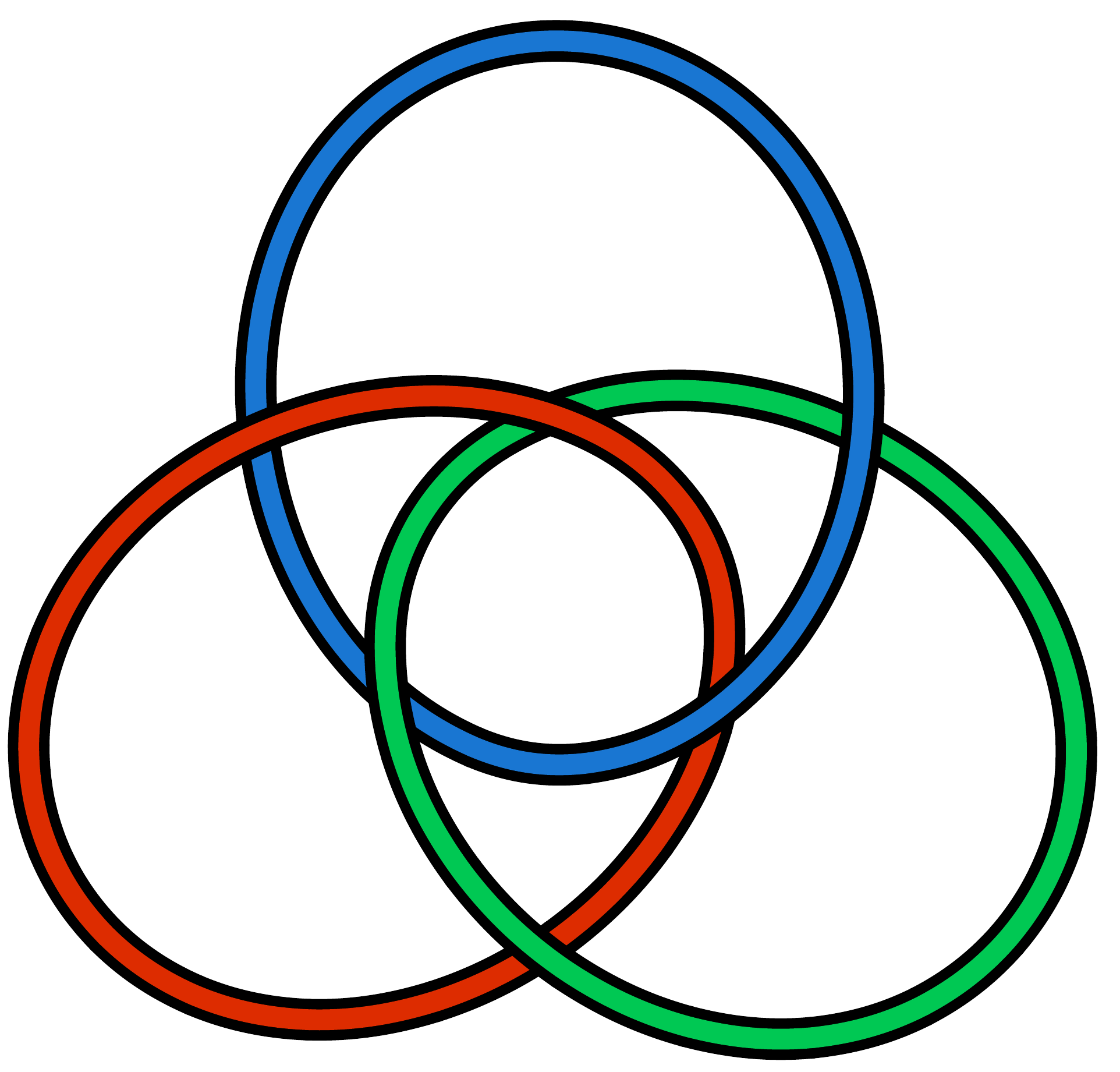}
      \caption{Two distinct three-partite qubit states: GHZ and W represented in the form of links. On the left \textit{Borromean rings}, characterized by the fact that if one ring is cut, the remaining two become disconnected. This resembles entanglement in GHZ state, which is not robust against particle-loss. On the right another link of three knots, for which cut of any component does not disconects the link itself. This is relevant to the robustness of the W state.}
      \label{34}
\end{figure}

\subparagraph*{Four-qubit system.}
If the discussion of entangled in three-qubit systems did not convince the reader that entanglement in the multipartite system has a rich and complex form, we shall proceed to the four qubit states. As we shall see, a system of four qubits is already large enough to reveal two substantial problems concerning multipartite-entanglement: the existence of maximally entangled states and the classification of entanglement. In the following, I discuss briefly both problems.

Atsushi Higuchi and Anthony Sudbery in their seminal paper ``How entangled can two couples get?'' investigate entanglement in system of four qubits concerning bipartitions of the system \cite{HIGUCHI2000213}. In particular they showed that there is no four-partite qubit system, which is maximally entangled with respect to any of bipartitions for two-partite systems: $12|34$, $13|24$, and $14|23$. Furthermore, they showed that the following state
\begin{equation}
\ket{M_4} =   \dfrac{1}{\sqrt{6}} \Big( \ket{0011} + \ket{1100} +  e^{\frac{2 \pi i}{  3}} (\ket{1010}+ \ket{0101}) + e^{\frac{4 \pi i}{ 3}} (\ket{1001} + \ket{0110}) \Big),
\end{equation}
maximize the average entropy of entanglement for such bipartitions. As it was later observed, state $\ket{M_4} $, as well its conjugate transpose exhibits intriguing symmetric properties \cite{Lyons_2011,CenciCurtLyons}. Contrary to tripartite states $\ket{\text{GHZ}}$ and $\ket{\text{W}}$, state $\ket{M_4}$ is not fully permutation-invariant. Nevertheless state $\ket{M_4}$ is invariant under any even permutation of qubits. Such states for which any element of an \textit{alternating subgroup} of the permutation group, $A_4<S_4$, leaves the state invariant, shall be called $A_4$-symmetric.

Contrary to the case of three-qubits, there are infinitely many SLOCC classes of four-qubit states \cite{threeQubits}. Despite this obvious obstacle, the problem of classification of entanglement in four-qubit states arouses great interest \cite{FourQubits,Djokovic4qubits,Discussion4qubit,4QubitsAllClasses,PolInv4qubits,SLOCCallDim,SpeeKraus,FourQubits8}. The four-qubit states were successfully divided into nine families, most of which contain an infinite number of SLOCC-classes \cite{FourQubits,Djokovic4qubits,SpeeKraus}. In particular, the so called $G_{abcd}$ family with the most degrees of freedom is represented by the following highly-symmetric states
\begin{align*}
\ket{G_{abcd}}&=\frac{a+d}{2} \big(\ket{0000}+\ket{1111} \big)+\frac{a-d}{2} \big(\ket{0011}+\ket{1100} \big)
\\
&+\frac{b+c}{2}  \big(\ket{0101}+\ket{1010} \big)+\frac{b-c}{2} \big(\ket{0110}+\ket{1001} \big).
\end{align*}
Furthermore, non-SLOCC-equivalent four qubit states can be effectively distinguished by three \textit{independent} $\SLIP$ measures \cite{PolInv4qubits,PolynomialInvariantsofFourQubits}.

\subparagraph*{General $N$-qubit system.}
Quantum states with non-equivalent entanglement represent distinct resources and hence may be useful for different protocols. The idea of clustering states into classes exhibiting different qualities under quantum information processing tasks resulted in their classification under SLOCC. Such a classification was successfully presented for two, three and four qubits \cite{ThreeQub,FourQubits,SLOCCallDim,4QubitsAllClasses}. However, the full classification of larger systems is completely unkown. Even the much simpler problem of detecting if two $N$-qubit states ($N>4$) are SLOCC-equivalent is, in general, quite demanding \cite{PolInv4qubits,Zhang_2016,KempfNessToEntanglement,
BurchardtRaissi20,sowik2019link}.

It was shown that for multipartite $N\geq 4$ systems there
exist infinitely many SLOCC classes \cite{ThreeQub}. In particular, the set of equivalence classes under SLOCC operations depends at least on $2(2^N-1)-6N$ parameters, which grows exponentially with the number of qubits \cite{ThreeQub}. Even larger number of LU-equivalence classes was also investigated \cite{PhysRevLett.104.020504,RaG,PhysRevLett.108.060501}.

So far, great effort has been made to reduce this verification problem to the problem of computing entanglement measures, which by definition take the same value for the equivalent states \cite{PolynomialInvariantsofFourQubits,GT09,_okovi__2009, PolInv4qubits,Li_2013, Szalay_2012,JAFARIZADEH2019707,compass1,Compass2}. Nevertheless, starting from four-partite states, one needs to compute values of at least three independent measures, to decide with certainty on local equivalence. Furthermore, the number of independent measures grows exponentially with the number of qubits \cite{ThreeQub}, making it intractable to use this approach to discriminate locally equivalent states \cite{Love07}. In fact, this rapid increasing difficulty to verify the entanglement equivalence between two states is common to any procedure \cite{Love07}.


\subparagraph*{Qutrits and beyond.} 
So-far, in our study of entanglement we focused on the multipartite qubit systems, consisting of two-level subsystems. 
These theoretical considerations go hand in hand with experiments, since manipulation and relative control over several qubits have already become a standard task \cite{PhysRevLett.106.130506,Riofr_o_2017}. Even though two-level subsystems are the most common multipartite states, they are certainly not the most general from both a theoretical and experimental point of view. Recently, much attention has been paid to the \textit{qutrites}. Qutrits are realized by a 3-level quantum system, that may be in a superposition of three mutually orthogonal quantum states \cite{Nisbet_Jones_2013}. Similarly to \cref{eqQubit}, a qutrit state might be written in the form
\begin{equation*}
\ket{\psi}=\alpha \ket{0}+ \beta \ket{1}+ \gamma \ket{2}, 
\quad\quad 
|\alpha|^2+|\beta |^2 +|\gamma |^2=1.
\end{equation*}
Although precise manipulation of such states turned out to be a demanding task, there are some successful indirect methods for it \cite{PhysRevLett.100.060504}. Furthermore, there is an ongoing development of quantum computers using qutrits and qubits with multiple states \cite{Nisbet_Jones_2013}. 

Overall, there is no reason to restrict the number of levels of each subsystem in the general considerations of multipartite quantum states \cite{10.3389/fphy.2020.589504}. There are several advantages of using larger dimensional subsystems in quantum computations, including the increasing the variety of quantum gates available, applications to adiabatic quantum computing devices \cite{Amin2013AdiabaticQO,Zobov2012ImplementationOA,Peng_2008}, or to topological quantum systems \cite{Cui_2015,Bocharov16}.

In this thesis, beside of multipartite qubit systems, I consider homogeneous multipartite qudit systems, i.e. systems of $N$ particles in which each particle has the same number of levels $d$ ($d=2$ for qubits, $d=3$ for qutrits, etc.). 
The heterogenous multipartite systems are also being investigated \cite{Huber_2018}, and some effort has been made to classify them \cite{PhysRevA.67.012108,Miyake_2004,Hyperdeterminant2}.

\subparagraph*{Absolutely Maximally Entangled states.} 
Besides the difficult problem of the entanglement classification in the $ N $ qudit system, we prompt a simpler question about which states represent maximum entanglement in such a system. This question is ambiguous, there is no unique way of generalizing the maximally entangled state of two-qubit into any other system. In particular, according to different entanglement measures for multipartite states (like the tangle, the Schmidt measure, the localizable entanglement, or geometric measure of entanglement), the states with the largest entanglement do not overlap in general \cite{LUupto5qubits}. One of the possible interpretations of states with largest entanglement, however, was signalized by the work of Higuchi and Sudbery \cite{HIGUCHI2000213}. It resulted in the well-established notion of \textit{Absolutely Maximally Entangled} (AME) states \cite{HelwigAME}. AME states are those multipartite quantum states that carry absolute maximum entanglement for all possible partitions. 
AME states are being applied in several branches of quantum information theory: in quantum secret sharing protocols \cite{HelwigAME}, in parallel open-destination teleportation \cite{helwig2013absolutely}, in holographic quantum error correcting codes \cite{Pastawski2015HolographicQE}, among many others. 
Different families of AME states have been introduced \cite{Rains1999NonbinaryQC,Helwig2013AbsolutelyME} and the problem of their existence is being investigated \cite{PhysRevA.69.052330,HIGUCHI2000213,Felix72}. 
It has been demonstrated that the simplest class of AME states, namely AME states with the \emph{minimal support}, is in one-to-one correspondence with the classical error correction codes \cite{AME-QECC-Zahra} and combinatorial designs known as \emph{Orthogonal Arrays} \cite{ComDesi}. 
Henceforward, the two-way interaction with combinatorial designs and quantum error correction codes is observed \cite{PhysRevA.69.052330,DiK}. 
AME states are special cases of \emph{k-uniform} states characterized by the property that all of their reductions to k parties are maximally mixed \cite{kUNI}.

\subparagraph*{Structure and aims of the thesis.} 
As we have demonstrated so far, the quantification and classification of entanglement for multipartite states is an involving long-distance project. Characterization of different classes of entanglement in multipartite quantum systems remains a major issue relevant for various quantum information tasks and is interesting from the point of view of foundations of quantum theory. In this dissertation, we will show the progress in several directions of these complex issues. The main goals and results of the thesis are presented in the six next chapters of the dissertation. We briefly present the scope of the proceeding chapters and indicate the main results below. If not specified differently, the author's contributions to the work covered by this chapter were significant.

\subparagraph*{Chapter 2}constitutes a link between the classification of multipartite entanglement and knot theory. We introduce the notion of \textit{$m$-resistant states}, states which remains entangled after losing an arbitrary subset of $m$ particles, but becomes separable after losing any number of particles larger than $m$. I construct several families of $N$-particle states with the desired property of entanglement resistance to particle loss.

\subparagraph*{Chapter 3}discusses highly-symmetric states and their properties. Inspired by remarkable entanglement properties of a state $\ket{M_4}$, I introduce the notion of $G$-symmetric states of $N$-qubits, for any subgroup of the permutation group symmetric $G<S_N$. I present two methods for constructing such states: first based on the group and second on graph theory. I propose quantum circuits efficiently generating such states, and Hamiltonians with the ground states related to them. In addition, these states were experimentally simulated on available quantum computers: IBM – Santiago, Vigo, and Athens. 

\subparagraph*{Chapter 4}investigats a class of AME states and $k$-uniform states. I briefly recall correspondence between AME states and classical combinatorial designs focusing attention on the different linear structures of classical designs. 

\subparagraph*{Chapter 5}settles a long-standing problem in multipartite quantum entanglement: of whether absolutely maximally entangled states exist for all 4 party states, with local dimensions greater than 2. We settle this positively and I provide an explicit analytical example for local dimension 6, which was the sticking point. Presented work is connected to the famous Euler problem of the non-existence of orthogonal Latin squares in dimension 6, involving 36 officers from 6 different regiments and ranks. This result shows that if officers of different ranks and regiments can be entangled, the classically impossible problem has a quantum solution. I discuss how presented solution opens doors to the new area dubbed as quantum combinatorics. The aforementioned issue concerning the existence of the state presented in this thesis appears on at least two open-problem lists of quantum information.

\subparagraph*{Chapter 6}concerns the local equivalence of AME states. All reduced density matrices of AME states are maximally mixed. Therefore, the classical method for verification of local equivalence, which is comparison of Schmidt rank and coefficients, fails. I present general techniques for verifying either two AME states are locally equivalent. I falsify the conjecture that for a given multipartite quantum system all AME states are locally equivalent. I also show that the existence of AME states with minimal support of 6 or more particles results in the existence of infinitely many such non-equivalent states. As an immediate consequence, I show that not all AME states belong to the class of stabilizer states, which was supposed by several experts in the field. Moreover, I present AME states which are not locally equivalent to the existing AME states with minimal support.

\subparagraph*{Chapter 7}tackles a particularly discernible problem of discrimination and classification of multipartite entanglement. So far, states were compared by computing values of independent entanglement measures. Nevertheless, the number of independent measures grows exponentially with the number of qubits, making it intractable to use this approach to discriminate locally equivalent states. I provide a complementary approach and use a single entanglement measure for verification if generic multipartite states are locally equivalent. This approach can be used independently on the number of qubits and bypasses the exponential difficulty of standard procedures. In essence, I investigate the roots of an entanglement measure and show that the roots of locally equivalent states must be related by a Möbius transformation, which is straightforward to verify. Moreover, I apply the presented approach on $4$-qubit states and show that roots of the 3-tangle measure of the related reduced $3$-qubit states, might be used for the more demanding problem of classifying entanglement in $4$-qubit states. This gives hope to apply this method to classify entanglement in larger ($N>4$) systems, for which the classification is completely unknown. In addition, I have developed a novel method to obtain the normal form of a $4$-qubit state which bypasses the possibly infinite iterative procedure.

\subparagraph*{Chapter 8}summarizes results obtained in the dissertation and outlines the open problems and perspectives for further research.


\chapter{Quantum states and topological links}
\label{chap1}
In this chapter, I introduce the notion of \textit{$m$-resistant states}. We say that an entangled quantum state of $N$ subsystems is $m$-resistant if and only if it remains entangled after losing an arbitrary subset of $m$ particles, and becomes separable after losing any number of particles larger than $m$. Furthermore, I establish an analogy to the problem of designing a topological link consisting of $N$ rings characterized by the fact that by cutting any $(m + 1)$ of them, the remaining link becomes unknotted. This analogy allows us to exhibit several $N$-particle states with the desired property of entanglement resistance to particle loss. An extension of some parts of this chapter, in which the author’s contribution was not substantial can be found in the joint work \cite{PhysRevA.100.062329}. 

\section{Motivation}

An effort to study multipartite entanglement using the knot theory was proposed by Aravind, who related some $N$-partite quantum states with a link composed of $N$ closed rings \cite{Ar97}. The $|\textrm{GHZ}\rangle$ state, which becomes separable after any measurement in the computational basis performed on its subsystem, was an initial inspiration. In such a way $|\textrm{GHZ}\rangle$ state can be associated with a particular configuration of three rings, called \textit{Borromean rings}, see \cref{34}. Indeed, Borromean link is characterized by the fact that if one ring is cut, the remaining two become disconnected. Usage of other tools borrowed from knot theory to analyze multipartite entanglement was further advocated in \cite{KL02,KM19}.

The aforementioned analogy is, however, not basis independent. On the one hand, measuring the $|\textrm{GHZ}\rangle$ state in the computational basis collapses always in a separable state, on the other hand, measurement in the rotated basis $\ket{+}, \ket{-}$ basis results into an entangled state. This motivated Sugita \cite{AS07} to revise the above analogy by proposing the partial trace of a subsystem as an alternative interpretation for cutting related ring. In that way, the physical process corresponding to cutting the ring becomes basis independent \cite{MSS13}. This analogy between quantum entanglement and linked rings, was later explored \cite{Quinta_2018}, and will be used here. Partial trace over given subsystems can be physically interpreted as not registering them by a measurement device or as the loss of related particles. Studying quantum entanglement resistant to a particle loss thus corresponds to the question, whether a given reduced density matrix represents an entangled state \cite{Neven,OB18,KNM02,KZM02}.

\section{Links and quantum states}
\label{MresistStates}

Recently, Neven et al. introduced the notion of $m$-resistant quantum $N$-qubit states which entanglement of the reduced state of $N-m$ subsystems is fragile with respect
to loss of any additional subsystem \cite{Neven}.

\begin{definition}
\label{Aarhus1}
An entangled state $\ket{\psi}$ of $N$ parties is
called $m$-resistant if:
\begin{itemize}
\item $\ket{\psi}$ remains entangled as any $m$ of its $N$ subsystems are traced away;
\item $\ket{\psi}$ becomes separable if a partial trace is performed over
  an arbitrary set of $m+1$ subsystems.
  \end{itemize}
\end{definition}

The first non-trivial examples of $m$-resistance arise for $N=3$ qubits, which is simple enough to study on intuitive grounds. The $m=0$ resistance is exemplified by the GHZ state
\eq{
\ket{\rm GHZ} = \frac{1}{\sqrt{2}}\left(\ket{000} + \ket{111}\right) ,
\label{GHZ3}
}
since tracing out any party of GHZ state results in a separable state. The corresponding link has the properties of the well-know Borromean link, depicted in \cref{34}. In particular, if any ring in Borromean link is cut, the remaining two rings become separated. Such a connection between Borromean link and the GHZ state was already noticed by Aravind \cite{Ar97}. For arbitrary number $N$ of rings, the natural generalization of Borromean link is called Brunnian link \cite{Brunn}. Brunnian link is characterized by the same property that cut of any of its rings disconnect the link, see \cref{N4k01connected}. On the other hand, the three-qubit state \cite{ThreeQub},
\eq{
\ket{\rm W} = \frac{1}{\sqrt{3}}\left( \ket{100} + \ket{010} + \ket{001} \right)
}
possesses property of being $m=1$ resistant. Indeed, a partial trace over any of its subsystems will result in an entangled mixed state. A corresponding link which mimics this property is represented on \cref{34}. It was first considered as
a representation of the W state in~\cite{Ar97}.

\section{In search for m-resistant states of N-qubit system}
\label{search}

In \cite{PhysRevA.100.062329}, we presented the general form of the symmetric mixed state of $N$ qubits with the property of being $m$ resistant. This form was obtained by relating to a given $m$-resistant link to the associated quantum state. Unlike the case of mixed states, similar identification between links and pure states is more intricate, if at all possible. It seems that the problem of finding pure $m$-resistant states for a given $m$ and the arbitrary number $N$ of qubits is not trivial. Obviously, the desired property is fully symmetric with respect to the exchange of subsystems. Therefore a search for such states among symmetric states seems to be a reasonable approach. Below, we present the partial solution to this problem, which is based on the Majorana representation~\cite{Aulbach_2010,Martin10}. 


The \textit{stellar representation} of Majorana \cite{M_kel__2010,Aulbach_2010,bengtsson_zyczkowski_2006} provides an alternative intuition on the geometry of symmetric states. Any permutation-invariant state of $N$ qubits might be wtitten in the following form
\eq{\label{MajRep}
\ket{\psi} = \frac{1}{\sqrt{K}} \sum_{\sigma \in \textrm{S}_N} \ket{\eta_1}_{i_1} \ldots \ket{\eta_N}_{i_N}
}
where $K$ is a suitable normalization constant, the sum runs over all permutations $\sigma \in \textrm{S}_N$ of particles indices $i_1, i_2,\ldots,i_N$, and
\eq{
\ket{\eta_j}_{X} = \cos\left(\frac{\theta_j}{2}\right)\ket{0}_X + e^{i \phi_j} \sin\left(\frac{\theta_j}{2}\right) \ket{1}_X\,.
}
The pairs of angles $(\theta_j,\phi_j)$ represent a point on the sphere, and are called Majorana stars. Thus, one may define a fully symmetric $N$-qubit state by fixing $N$ points on the sphere. For this reason, the name stellar representation is also common, as each point represents a star in the sky, while a group of stars forms a constellation.

Among many advantages, the Majorana representation ascribes some geometrical intuition of entanglement of a symmetric state. If one chooses all of the stars at a single point the corresponding state is separable. For example, $N$ degenerated stars on the North pole
represent the separable state $\ket{0\cdots0}$. Nevertheless, as the degeneracy is lifted, most of the non-trivial constellation
of stars corresponds to an entangled state of $N$ qubits.
Furthermore, the distance between the stars is related to the degree of entanglement \cite{ZS01,PhysRevA.85.032314},
although the precise criterium to quantify entanglement is not uniquely defined. 

We introduce the following family of states:
\eq{
\label{GenMPansatz}
\ket{\psi^{N}_{m}} = \frac{1}{\sqrt{1+\binom{N}{m}}}\left(\sqrt{\binom{N}{m}}\ket{0}^{\otimes N} - (-1)^{N+m} \ket{D^{N}_{m}}\right)
}
where $\ket{D^{N}_{m}}$ are so-called \textit{Dicke states} \cite{Dicke54}:
\begin{equation}
\ket{\text{D}_N^m} \propto 
\sum_{\sigma \in \mathcal{S}_N}  \sigma \Big( \ket{ \underbrace{1\;\cdots \;1}_{k \text{ times}} \; \underbrace{0 \;\cdots \; 0}_{N-m \text{ times}}} \Big) ,
\end{equation}
with the summation going over all permutations $\sigma$. This family of states was motivated by invatigation of $N=3,4$ qubit states. Indeed, for $N=3,4$ and any non-trivial value of $m$, the state $\ket{\psi^{N}_{m}}$ provides example of $m$-resistant state. In particular in the case of $N=3$ qubits presented constellations give us $m=0$, and $m=1$ resistant states respectively:
\eq{\label{N3m0State}
\ket{\psi^{3}_{0}} = \frac{1}{\sqrt{2}}\left(\ket{000}+\ket{111}\right),
}
\eq{
\ket{\psi^{3}_{1}}= \frac{1}{\sqrt{12}}\left(3\ket{000}+\ket{011}+\ket{101}+\ket{110}\right)\,.
}
It is straightforward to show that the constellations presented on \cref{StarsN3} give the desired answer for both non-trivial values of $m$. Similarly, states $\ket{\psi^{4}_{m}}$ are $m=0,1,2$ resistant states of four qubits for respective values of $m$. Related constellations are presented on \cref{StarsN4}. Note that $\ket{\psi^{N}_{m}}$ may be directly related to the distribution of $m$ stars on the North pole, with all others evenly distributed along the equator.

 \begin{figure}[ht!]
  \subfloat[0-resistant state \label{N3m0Star}]{\includegraphics[width=.25\columnwidth]{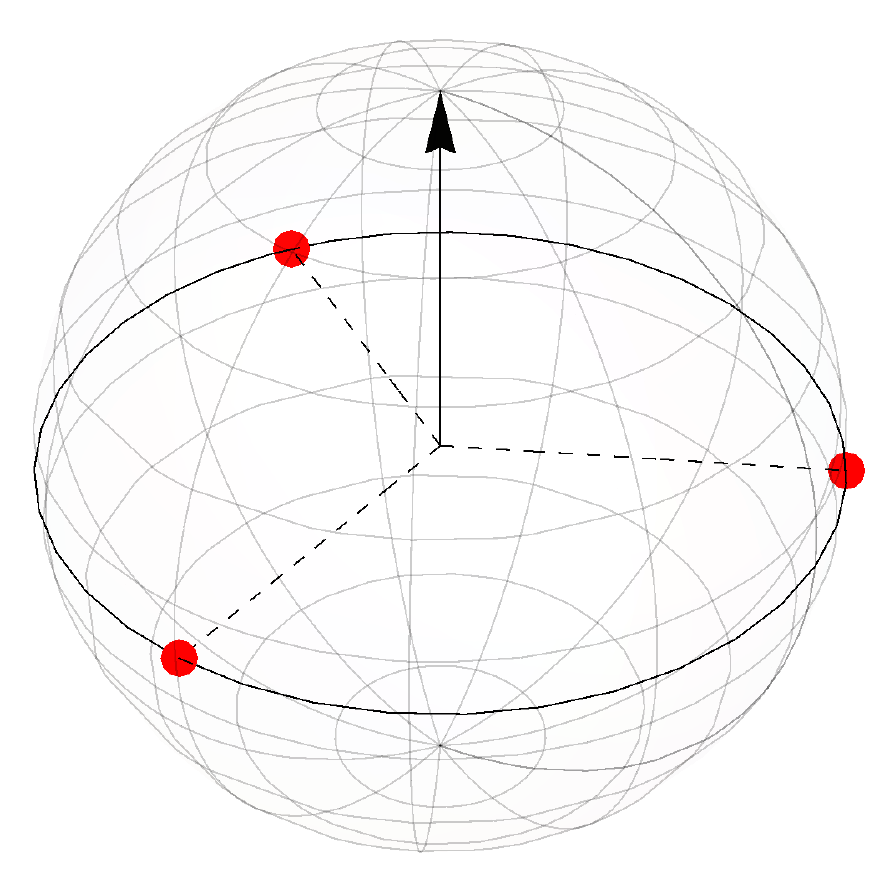}}
  \hspace{30mm}
  \subfloat[1-resistant state\label{N3m1Star}]{\includegraphics[width=.25\columnwidth]{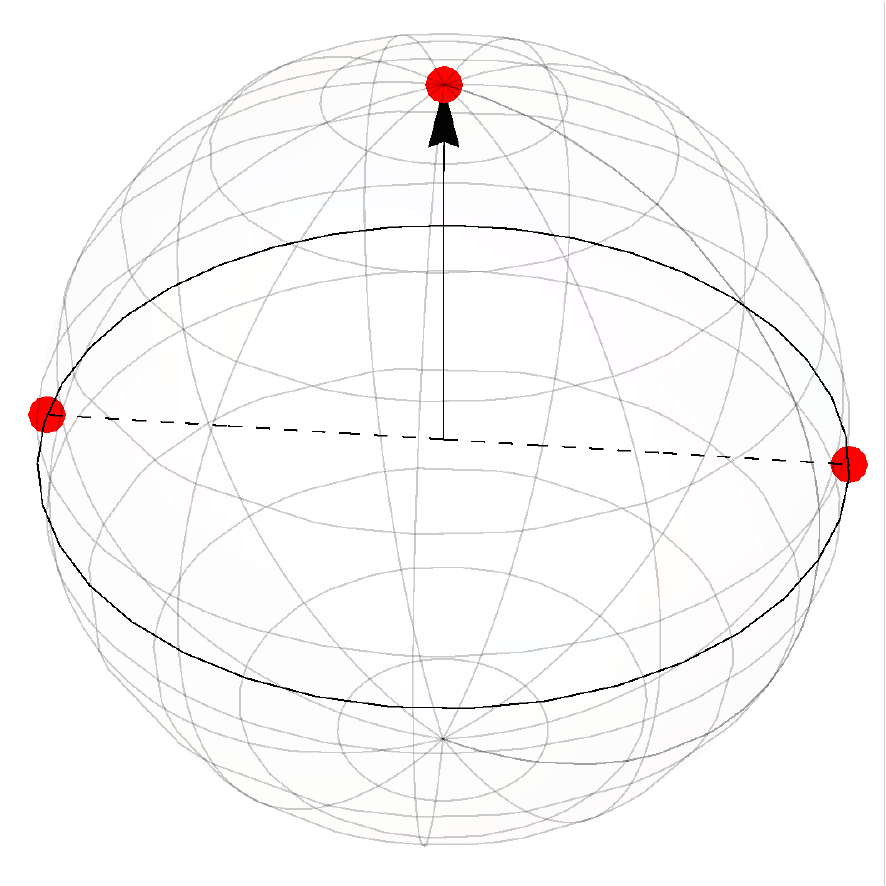}}
  \caption{Constellations defining $m$-resistant states of 3 qubits. The middle arrow serves as a reference pointing to the North pole. The constellation in (a) corresponds to the state $|\textrm{GHZ}\rangle$ and is illustrated by the Borromean link, see \cref{34}, while (b) presents an exemplary 1-resistant state of three qubits.}
  \label{StarsN3}
\end{figure}

\begin{figure}[ht!]
  \subfloat[$0$-resistant state\label{N4m0Star}]{\includegraphics[width=.25\columnwidth]{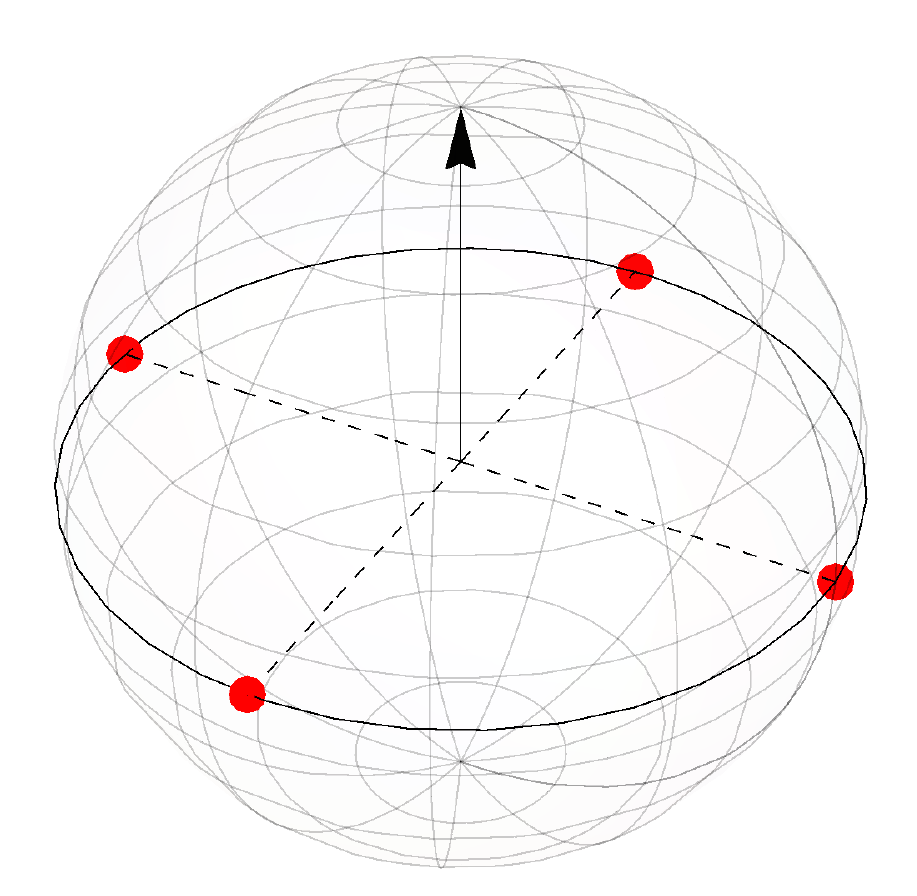}}
  \hfill
  \subfloat[$1$-resistant state\label{N4m1Star}]{\includegraphics[width=.25\columnwidth]{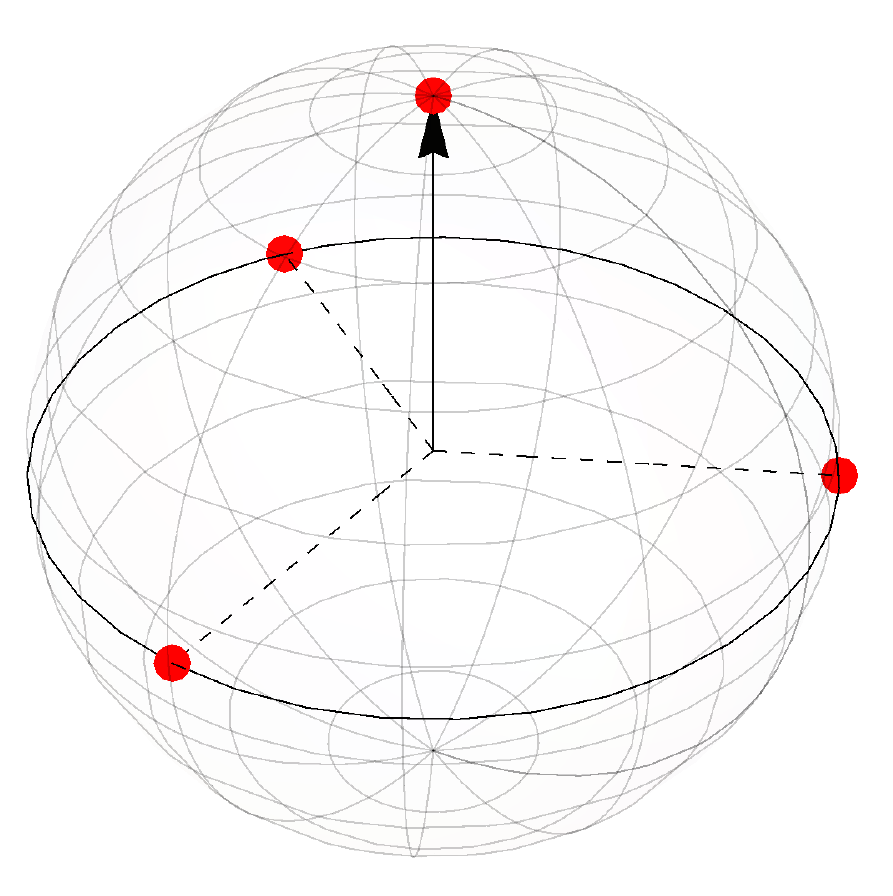}}
  \hfill
  \subfloat[$2$-resistant state\label{N4m2Star}]{\includegraphics[width=.25\columnwidth]{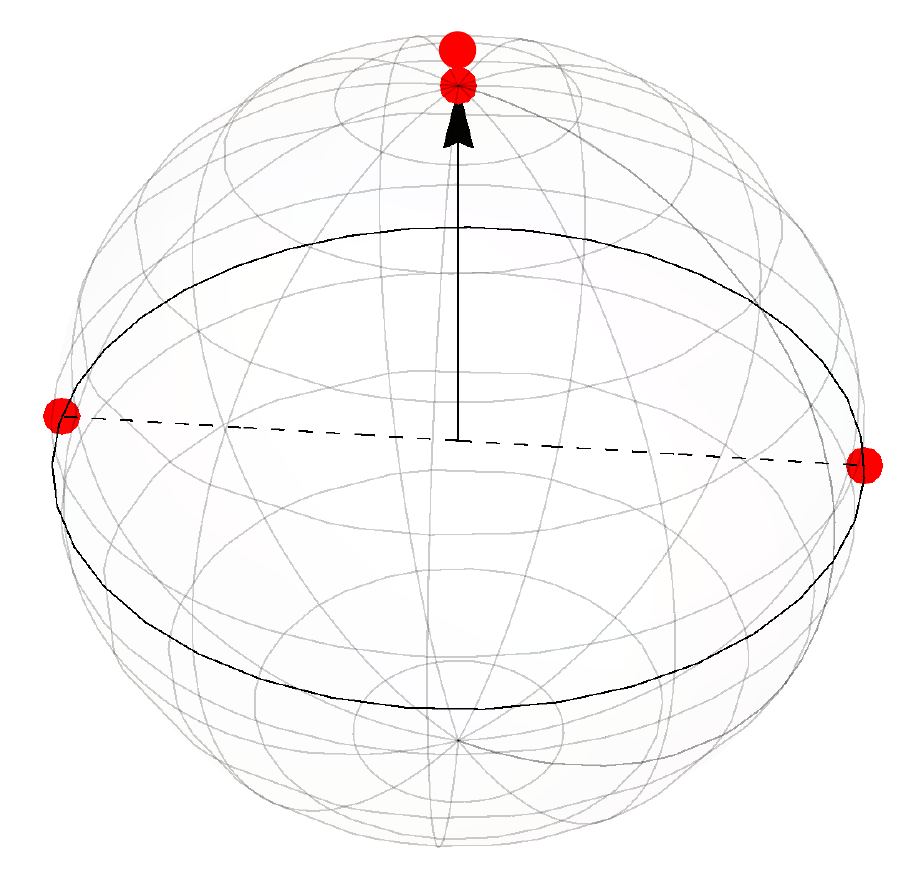}}
  \caption{Constellations defining $m$-resistant states of 4 qubits for $m=0,1,2$. The middle arrow serves as a reference pointing to the North pole. Presented constellations are related to three links presented on \cref{N4k01connected}.}
  \label{StarsN4}
\end{figure}

\begin{figure}[ht!]
\centering
\includegraphics[width=0.25\textwidth]{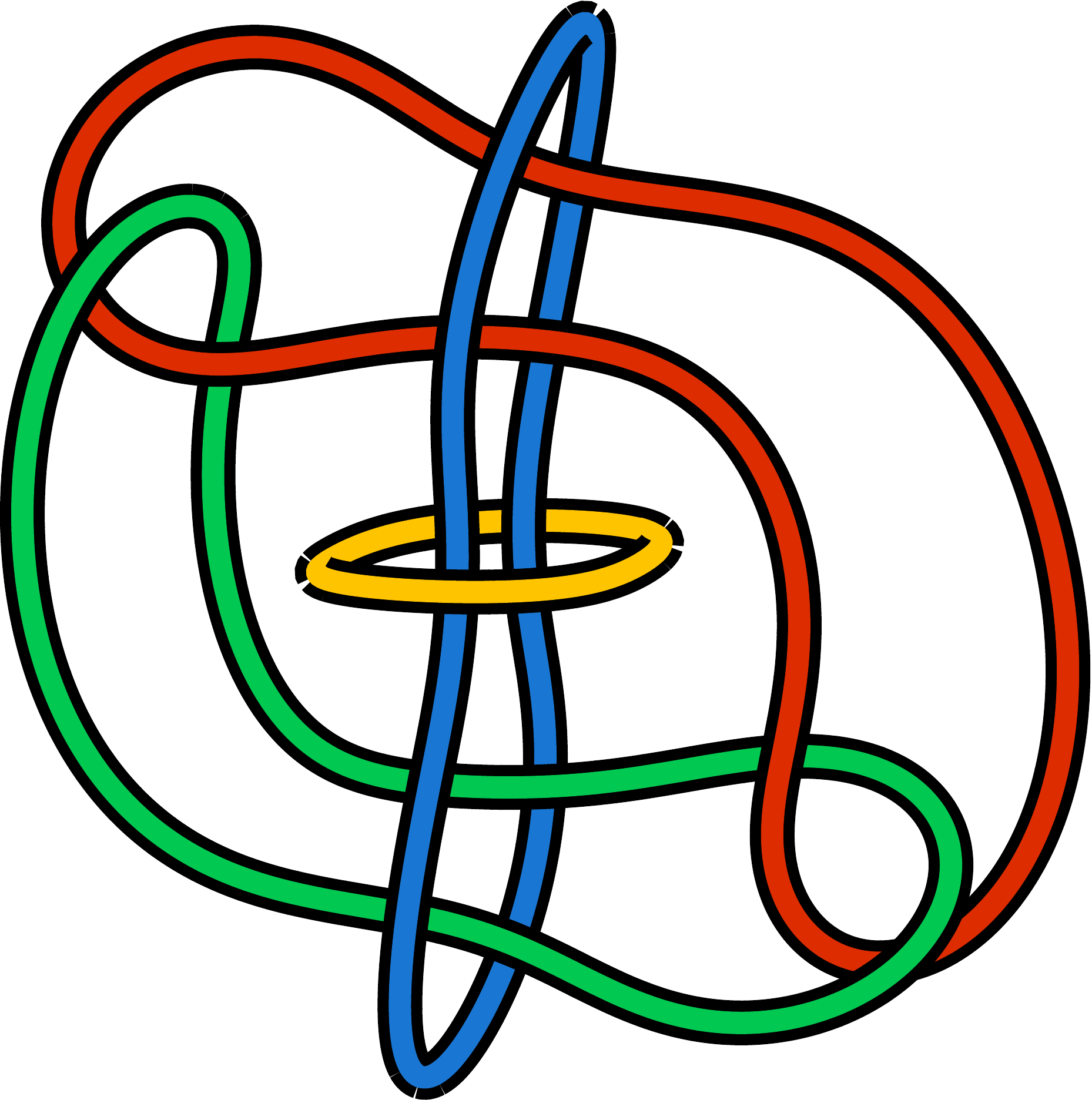}
\hspace{1cm}
\includegraphics[width=0.25\textwidth]{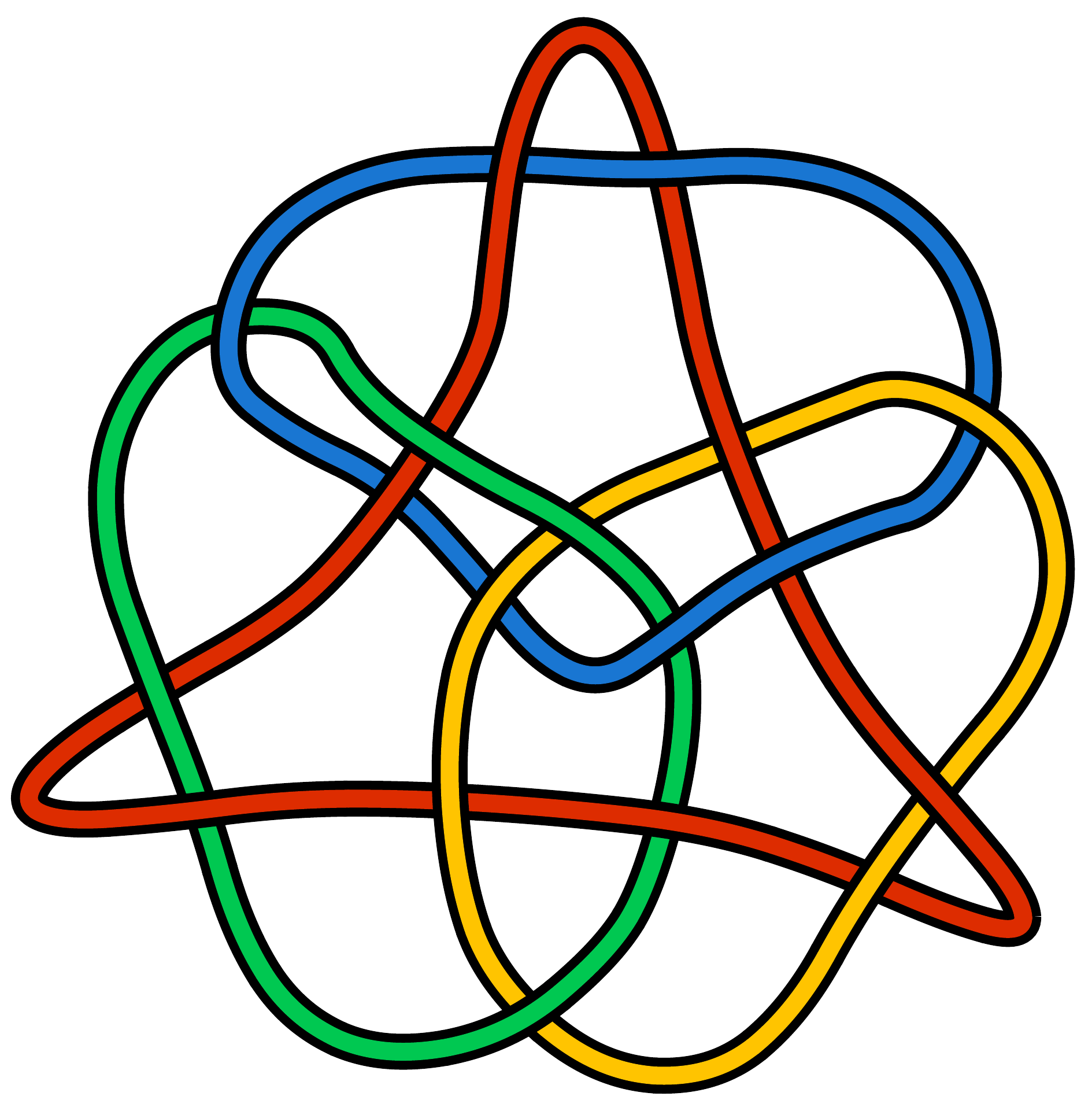}
\hspace{1cm}
\includegraphics[width=0.25\textwidth]{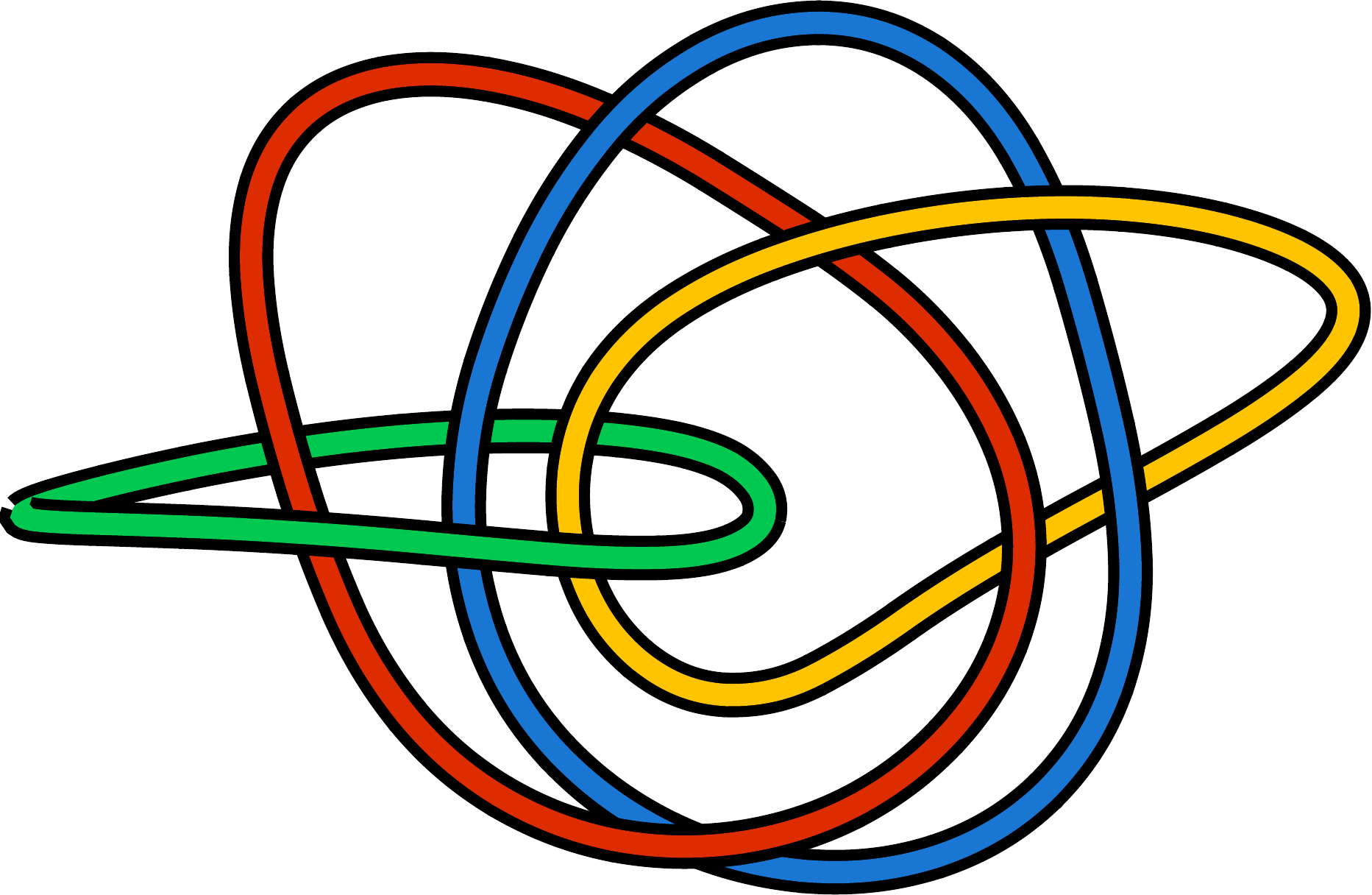}
\caption{On the left, an example of a 0-resistant link of 4 rings, which is a $4$-component Brunnian link. Notice that after cutting a single ring all three remaining rings become disconnected. 
In the center, an example of a 1-resistant link of 4 rings. After cutting any single ring one arrives at the Borromean link.
On the right, an example of a 2-resistant link of 4 rings. After cutting any two rings the remaining two are still connected.}
      \label{N4k01connected}
\end{figure}

The agreement between the above construction and $m$-resistant states breaks down for $N=5$ parties. Indeed, the state $\ket{\psi^{5}_{1}} = $ is in fact $2$-resistant. In all other cases, i.e. $m=0,2$ and $3$, related pure states are $0,2$, and $3$-resistant respectively. One may naturally inquire whether another symmetric state of five qubits might be in fact $1$-resistant. We have tested all combinations where two stars are lifted by a general latitude, as well as four stars, and it is possible to show that no $1$-resistant state exists for these families of states. Obviously, this does not prove that a pure $1$-resistant state of 5 qubits does not exist, although it is tantalizing to conjecture so. In general, we have verified the following.

\begin{proposition}
\label{Aarhus4}
For any number of particles $N\geq 3$, the family of states \cref{GenMPansatz}:
\eq{
\ket{\psi^{N}_{m}} = \frac{1}{\sqrt{1+\binom{N}{m}}}\left(\sqrt{\binom{N}{m}}\ket{0}^{\otimes N} - (-1)^{N+m} \ket{D^{N}_{m}}\right)
}
provides examples of $m=0,(N-3)$, and $(N-2)$-resistant states respectively.
\end{proposition}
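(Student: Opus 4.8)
The plan is to use the permutation symmetry of $\ket{\psi^{N}_{m}}$: because the state is symmetric, the reduced state after tracing out a set of parties depends only on the number $k$ of traced parties, not on which ones. Writing $e:=N-m$ for the number of excitations carried by the Dicke component and $n:=N-k$ for the number of kept qubits, the first step is to record the shape of the reduced state $\rho_n$. Projecting $\ket{\psi^{N}_{m}}$ onto a computational-basis string $\ket{t}$ of Hamming weight $w$ on the traced block gives a fixed multiple of the weight-$(e-w)$ symmetric state on the kept block, plus an $A\ket{0}^{\otimes n}$ piece exactly when $w=0$ (from the $\ket{0}^{\otimes N}$ term of the ansatz). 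Since the partial trace sums the resulting rank-one pieces and all strings $t$ of a fixed weight $w\geq 1$ project to the same vector, they combine with multiplicity $\binom{k}{w}$, yielding
\eq{
\rho_{n} = \ketbra{\phi_0} + \sum_{w\geq 1}\binom{k}{w} B^{2}\binom{n}{e-w}\,\ketbra{D^{n}_{e-w}},
}
where $\ket{\phi_0}=A\ket{0}^{\otimes n}+B\sqrt{\binom{n}{e}}\ket{D^{n}_{e}}$, $\ket{D^{n}_{j}}$ is the normalized $n$-qubit state with $j$ excitations, $A,B$ are the (real) coefficients of the ansatz with $A^{2}/B^{2}=\binom{N}{m}^{2}$, and any term with an out-of-range index is zero. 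So $\rho_n$ is one coherent GHZ-like term plus an incoherent background of Dicke projectors, and the whole analysis flows from this.

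For $m=0$ one has $e=N$, and $\ket{\psi^{N}_{0}}$ is a GHZ-type state; tracing a single party ($k=1$) collapses the formula to $A^{2}\ketbra{0}^{\otimes(N-1)}+B^{2}\ketbra{1}^{\otimes(N-1)}$, a mixture of two product states and hence separable, while the state itself is entangled. This settles the $m=0$ case. For $m=N-2$ ($e=2$) and $m=N-3$ ($e=3$) I would invoke the Peres--Horodecki criterion on the small reductions that arise. The ``remains entangled'' requirement means that tracing $m$ parties must leave an entangled state: two qubits when $m=N-2$, three qubits when $m=N-3$. In each case it is enough to produce one negative eigenvalue of a partial transpose. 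The key mechanism is that the GHZ-like coherence $AB$ of $\ket{\phi_0}$ is relocated by the partial transpose into a $2\times 2$ principal block whose off-diagonal entry is $AB$ and whose diagonal entries are small Dicke weights proportional to $B^{2}$; this block has negative determinant exactly when $A^{2}/B^{2}=\binom{N}{m}^{2}$ exceeds a product of binomial coefficients in $N$, an inequality that holds for all admissible $N$ (e.g. for $m=N-2$ it is just $\binom{N}{2}>N-2$).

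The ``becomes separable'' requirement is automatic for $m=N-2$, since tracing $m+1=N-1$ parties leaves a single qubit. For $m=N-3$ it is the substantive direction: tracing $m+1=N-2$ parties leaves two qubits, and I must show this reduction is separable. Here the coherent term of $\ket{\phi_0}$ disappears (because $\binom{2}{3}=0$, so $\ket{\phi_0}=A\ket{0}^{\otimes 2}$ is already a product), and the only obstruction to separability is the entangled Dicke projector $\ketbra{D^{2}_{1}}$ appearing in the background. Using that two-qubit PPT is equivalent to separability, this reduces to checking that the relevant partial transpose is positive semidefinite, i.e. that a single $2\times 2$ determinant is nonnegative; that determinant works out to be controlled by the inequality $\binom{N}{3}^{2}\geq \frac{1}{12}(N-1)(N-2)(N-3)$, which again holds comfortably. (For $N=3$ the value $m=N-3=0$ coincides with the GHZ case already treated.)

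I expect the genuine obstacle to be exactly this separability direction for $m=N-3$. In the entanglement direction a single negative principal minor certifies non-separability, so I only need to find one bad $2\times 2$ block; but certifying separability requires controlling the full partial transpose and invoking the two-qubit PPT theorem, so I must show that the GHZ-like coherence does not overwhelm the incoherent Dicke background. Tracking precisely how the coherence $AB$ is moved by the partial transpose and reducing each condition to a clean polynomial comparison in $N$ (essentially a degree-six versus degree-three inequality) is where the real work sits; once the reduced-state formula above is established, the remaining steps are routine.
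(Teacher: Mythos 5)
Your proposal is correct and takes essentially the same route as the paper's proof: permutation symmetry reduces everything to explicit few-qubit reduced density matrices (your general formula for $\rho_n$ reproduces, up to overall normalization, exactly the matrices with entries $\alpha_{ij}=\binom{N-i}{j}$ displayed in the paper), after which the PPT criterion is applied in both directions --- a negative $2\times 2$ principal minor of the partial transpose certifying entanglement of the $(N-2)$- and $(N-3)$-party reductions, and positivity of the partial transpose together with the two-qubit Peres--Horodecki theorem giving separability one trace later. The only discrepancies are cosmetic: the exact determinant condition in your $m=N-3$ separability step is $(N-2)\bigl[\binom{N}{3}^{2}+\binom{N-2}{3}\bigr]\ge\binom{N-2}{2}^{2}$ rather than the cubic bound you quote (both hold trivially for all $N\ge 3$), and your $2\times 2$-minor certificates are in fact slightly more robust than the paper's closed-form eigenvalue expressions, whose formula for the $(N-3)$ case degenerates to zero at $N=3$ --- a case you correctly defer to the already-settled GHZ analysis.
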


\begin{proof}
Observe that $\ket{\psi^{N}_{0}}$ is zero resistant. Indeed, by (\ref{GenMPansatz}), we have
\eq{
\ket{\psi^{N}_{0}} = \frac{1}{2}\left(\ket{0}^{\otimes N} - (-1)^{N} \ket{1}^{\otimes N}\right)\,.
}
This state is a generalization of the GHZ state for $N$ qubits, which, after any partial trace, returns a density matrix of the form
\eq{
{\textrm{tr}_{k}} \left[\hat{\rho}^{N}_{0}\right] = \frac{1}{2}\left(\ket{0}\bra{0}^{\otimes (N-k)} + \ket{1}\bra{1}^{\otimes (N-k)}\right) ,
}
where ${\textrm{tr}_{k}}$ denotes the trace over any
set consisting of $k$ qubits.
This density matrix is separable for any $k>0$, and thus the state $\ket{\psi^{N}_{0}}$ is $0$-resistant for any $N$.

Regarding the state $\ket{\psi^{N}_{N-2}}$, after simple algebraic operations, one finds that any $N-2$ partial traces will result in the density matrix proportional to
\eq{
{\textrm{tr}_{N-2}} \left[ \hat{\rho}^{N}_{N-2} \right] \propto
\begin{pmatrix}
\alpha_{22} + (\alpha_{02})^2 & 0 & 0 & -\alpha_{02} \\
0 & \alpha_{21} & \alpha_{21} & 0 \\
0 & \alpha_{21} & \alpha_{21} & 0 \\
-\alpha_{02} & 0 & 0 & 1
\end{pmatrix} ,
}
where we define $\alpha_{ij} \equiv \binom{N-i}{j}$.
The partially transposed matrix has one of its eigenvalues equal to
\eq{\lambda_1 =
-N^2+3N-4
}
which is negative for any $N$. Thus, by the
 positive partial transpose (PPT) test,
 we confirm that each 2-qubit reduced density matrix is always entangled, and the state $\ket{\psi^{N}_{N-2}}$ is $(N-2)$-resistant for any $N$.

Finally, focusing on the state $\ket{\psi^{N}_{N-3}}$, we
perform the partial trace over any set of  $N-2$ qubits and obtain a two-qubit state
\eq{
{\textrm{tr}_{N-2}} \left[ \hat{\rho}^{N}_{N-3} \right] \propto
\begin{pmatrix}
\alpha_{23} + (\alpha_{03})^2 & 0 & 0 & 0 \\
0 & \alpha_{22} & \alpha_{22} & 0 \\
0 & \alpha_{22} & \alpha_{22} & 0 \\
0 & 0 & 0 & \alpha_{21}
\end{pmatrix} .
}
The partially transposed matrix of size four has all positive eigenvalues for $N>1$.
In this case the PPT test
 guarantees that the resulting state is always separable. We must then look into the reduced density matrix with one less partial trace, in order to check if it is entangled. Such a 3-qubit density matrix is proportional to
\ea{
& {\textrm{tr}_{N-3}} \left[ \hat{\rho}^{N}_{N-3} \right] \propto  
 \begin{pmatrix}
\alpha_{33} + (\alpha_{03})^2 & 0 & 0 & 0 & 0 & 0 & 0 & \alpha_{03}\\
0 & \alpha_{32} & \alpha_{32} & 0 & \alpha_{32} & 0 & 0 & 0 \\
0 & \alpha_{32} & \alpha_{32} & 0 & \alpha_{32} & 0 & 0 & 0 \\
0 & 0 & 0 & \alpha_{31} & 0 & \alpha_{31} & \alpha_{31} & 0 \\
0 & \alpha_{32} & \alpha_{32} & 0 & \alpha_{32} & 0 & 0 & 0 \\
0 & 0 & 0 & \alpha_{31} & 0 & \alpha_{31} & \alpha_{31} & 0 \\
0 & 0 & 0 & \alpha_{31} & 0 & \alpha_{31} & \alpha_{31} & 0 \\
\alpha_{03} & 0 & 0 & 0 & 0 & 0 & 0 & 1 \\
\end{pmatrix}\,.
}
By partially transposing any qubit, we will obtain a matrix which possesses an eigenvalue equal to
\eq{\lambda_1 =
\left(13-7N+N^2 - \sqrt{193-202N + 79N^2 -14N^3 + N^4}\right)
}
which is negative for all $N$. PPT criterion implies that all the subsystems are entangled. This proves that the state $\ket{\psi^{N}_{N-3}}$ is $(N-3)$-resistant.
\end{proof}

As a final remark to the section, we note that the constellations for all $0$-resitant states follow the same rule, which is a regular $N$-sided poligon. This result has been previously found in \cite{Neven}.

\section{In search for m-resistant states of N-qudit system}
\label{Aarhus2}


As we mentioned in \cref{search}, we did not find a general construction of an $m$-resistant pure qubit state of $N$ qubit system. Hence, we have expanded our search for non-symmetric states with subsystems of a larger local dimension $d\ge 3$. We present the general formula for a $m$-resistant $N$-qudit pure state for any $N \geq 2m$. Our construction is based on a particular family of combinatorial designs and their connection to quantum states. In particular, we use the notion of \textit{orthogonal arrays} (\textit{OA}), and the established link~\cite{DiK} between multipartite quantum states and OA. This connection will be expanded and further use in \cref{chapter3} for designing another family of quantum states.

Orthogonal arrays \cite{OA} are combinatorial arrangements, tables with entries satisfying given orthogonal properties. A close connection between OA and codes, entangled states, error-correcting codes, uniform states has been established \cite{OA}. Therefore,
investigation of the connections between OA and resistant states seems to be a
natural approach.
Firstly, we briefly present the concept of OA, secondly, we demonstrate relations between OA and resistant states.

An orthogonal array $\oa{r,N,d,k}$ is a table composed by $r$ rows, $N$ columns with entries taken from $0,\ldots,d-1$ in such a way that each subset of $k$ columns contains all possible combination of symbols with the same amount of repetitions. The number of such repetitions is called \textit{the index} of the OA and denoted by $\lambda$. One may observe, that the index of OA is related to the other parameters:
\eq{
\lambda =\dfrac{r}{d^k} .
}
\cref{OA111} presents an example of an OA. A pure quantum state consisting of $r$ terms might be associated with $\oa{r,N,d,k}$, simply by reading all rows of OA
\cite{DiK,OA}.
The state  of $N$ qudits
associated with the orthogonal array $\oa{r,N,d,k}$
will be denoted as
by $\ket{\phi_{(N,k )}}_d$.

\begin{figure}[ht!]
\centering
\[
 \begin{array}{*5{c}}
    \tikzmark{left}{0} &
0&0&0&0 \\
0&1&1&1&1\\
0&2&2&2&2\\
0&3&3&3&3\\
1&0&1&2&3\\
1&1&0&3&2\\
1&2&3&0&1\\
1&3&2&1&0\\
2&0&2&3&1\\
2&1&3&2&0\\
2&2&0&1&3\\
2&3&1&0&2\\
3&0&3&1&2\\
3&1&2&0&3\\
3&2&1&3&0\\
3&    \tikzmark{right}{3}&0&2&1
\Highlight[first]
  \end{array}
  \qquad
  \begin{array}{*5{c}}
0&
\tikzmarkk{up}{0}&0&0&0 \\
0&1&1&1&1\\
0&2&2&2&2\\
0&3&3&3&3\\
1&0&1&2&3\\
1&1&0&3&2\\
1&2&3&0&1\\
1&3&2&1&0\\
2&0&2&3&1\\
2&1&3&2&0\\
2&2&0&1&3\\
2&3&1&0&2\\
3&0&3&1&2\\
3&1&2&0&3\\
3&2&1&3&0\\
3&  3&\tikzmarkk{down}{0}&2&1
\Highlightt[first]
  \end{array}
      \qquad
 \begin{array}{*3{c}}
 \ket{(5,1)_4}&\propto&\ket{00000} \\
 &+&\ket{01111} \\
&+&\ket{02222}\\
&+&\ket{03333}\\
&+&\ket{10123} \\
&+&\ket{11032}\\
&+&\ket{12301}\\
&+&\ket{13210}\\
&+&\ket{20231} \\
&+&\ket{21320} \\
&+&\ket{22013}\\
&+&\ket{23102}\\
&+&\ket{30312} \\
&+&\ket{31203} \\
&+&\ket{32130}\\
&+&\ket{33021}
  \end{array}
\]
\Highlightt[second]
\captionsetup{justification=justified,singlelinecheck=false}
\caption{Orthogonal array of unity index $\oa{4^2,5,4,2}$ obtained from the Reed-Solomon code of length $5$ over Galois field GF($4$). Each subset consisting of two columns contains all possible combination of symbols. Here, two such subsets are highlighted. The relevant quantum state is obtained by forming a superposition
of states corresponding to consecutive rows of the array  -- see expression
on the right-hand side.}
\label{OA111}
\centering
\end{figure}

The crucial quantity for our purpose, related to OA, is its index. It preserves the following information: how many repetitions of any sequence $i_1,\ldots,i_k$ there are for each subsystem of $k$ rows. For $\lambda =1$, any sequence appears only once, and such an array is called \textit{index unity array}. We emphasize their remarkable role in the search for resistant quantum states.

\begin{proposition}
\label{prop111}
For any orthogonal array of index unity $\OA \left( d^k,N,d,k \right) $, where $N \geq 2k$, the relevant quantum state $\ket{(N,m)_d}$ is $k-1$-resistant. 
\end{proposition}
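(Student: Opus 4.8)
The plan is to verify the two bullet points of \cref{Aarhus1} separately for $m=k-1$: after discarding any $k$ of the $N$ qudits the reduction should be separable, while after discarding any $k-1$ of them it should still be entangled. Throughout I would use only two consequences of index unity: that any $k$ columns of $\OA(d^k,N,d,k)$ carry each symbol string exactly once, so that fixing the entries of a row in $k$ columns determines the whole row; and, as a corollary, that two distinct rows can coincide in at most $k-1$ columns. It is convenient to write the associated pure state as $\ket{(N,m)_d}=d^{-k/2}\sum_{\mathbf r}\ket{\mathbf r}$, the sum running over the $d^k$ rows.

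First I would dispatch the separability bullet. Trace out an arbitrary set $T$ of $k$ columns and let $S$ be the complement. In $\rho_S=\tr_T\ketbra{(N,m)_d}$ only pairs of rows agreeing on $T$ survive, but by index unity agreement on $k$ columns forces the two rows to be equal. Hence $\rho_S=d^{-k}\sum_{\mathbf r}\ketbra{\mathbf r_S}$ is diagonal in the computational basis, i.e. a classical mixture of product states, and therefore fully separable. Discarding more than $k$ parties only further reduces this separable state, so separability persists.

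The substantial step is the entanglement bullet, and it is here that $N\ge 2k$ enters. Now let $T$ be any $k-1$ traced columns, so $|S|=N-k+1\ge k+1$. Grouping the surviving rows by their common value $\mathbf v$ on $T$ yields a block decomposition $\rho_S=d^{-k}\sum_{\mathbf v}\ketbra{\Psi_{\mathbf v}}$ with $\ket{\Psi_{\mathbf v}}=\sum_{\mathbf r_T=\mathbf v}\ket{\mathbf r_S}$. Since no two distinct rows agree on all $N-k+1$ columns of $S$, the blocks have disjoint computational support and each $\ket{\Psi_{\mathbf v}}$ is a coherent superposition of $d$ basis strings that differ in every surviving coordinate (a generalized GHZ string, labelled by the value $j$ taken on a fixed reference column $c_0$). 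I would then fix one block $\mathbf v_0$ and two symbols $j_1\ne j_2$ and apply the PPT test across the bipartition $c_0\,|\,(S\setminus c_0)$, computing the $2\times 2$ compression of the partially transposed $\rho_S$ onto the ``swapped'' basis vectors $\ket{j_1}\ket{\beta^{\mathbf v_0}_{j_2}}$ and $\ket{j_2}\ket{\beta^{\mathbf v_0}_{j_1}}$, where $\beta^{\mathbf v}_{j}$ denotes the $(S\setminus c_0)$-part of the row selected by $(\mathbf v,j)$.

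The main obstacle is showing this compression is purely off-diagonal, $\left(\begin{smallmatrix}0 & d^{-k}\\ d^{-k} & 0\end{smallmatrix}\right)$, so that its minimal eigenvalue $-d^{-k}$ is negative; by the variational characterization of $\lambda_{\min}$ this immediately forces the full partial transpose to have a negative eigenvalue, hence $\rho_S$ to be entangled. The two diagonal entries vanish precisely because a row with $c_0$-value $j_1$ whose $(S\setminus c_0)$-part equalled $\beta^{\mathbf v_0}_{j_2}$ would agree with the row $(\mathbf v_0,j_2)$ on all $N-k$ columns of $S\setminus c_0$; as $N-k\ge k$ this is impossible unless the two rows coincide, which they cannot since $j_1\ne j_2$. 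This is exactly where $N\ge 2k$ is indispensable. The single surviving off-diagonal term, equal to $d^{-k}$, comes from the block $\mathbf v_0$ alone, since matching on $c_0$ together with $S\setminus c_0$ over-determines the row. Because the whole argument invokes only the symmetric strength-$k$ property, the same witness applies to every choice of the discarded $k-1$ parties, which would complete the proof that $\ket{(N,m)_d}$ is $(k-1)$-resistant.
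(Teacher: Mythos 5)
Your proposal cannot be checked line-by-line against the dissertation's own argument for \cref{prop111}, because the dissertation does not contain one: immediately after the statement it says ``For the proof of above statement, we refer to \cite{PhysRevA.100.062329}'', deferring entirely to the companion paper. Judged on its own merits, your proof is correct and complete. The separability half is exactly right: tracing out any $k$ columns kills every cross term, because two rows of an index-unity array that agree on $k$ columns must be identical, so the reduction is a diagonal (classical) mixture of product states, and further partial traces preserve separability. The entanglement half is also sound. Each block $\ket{\Psi_{\mathbf v}}$ indeed contains exactly $d$ rows (index unity applied to the $k$ columns $T\cup\{c_0\}$ gives the bijection with the symbol $j$ on $c_0$); distinct rows have distinct restrictions to $S$ since they can agree on at most $k-1$ columns while $|S|=N-k+1>k-1$; and your $2\times 2$ compression of the partially transposed $\rho_S$ onto the swapped basis vectors $\ket{j_1}\ket{\beta^{\mathbf v_0}_{j_2}}$, $\ket{j_2}\ket{\beta^{\mathbf v_0}_{j_1}}$ evaluates, exactly as you claim, to $d^{-k}\left(\begin{smallmatrix}0&1\\1&0\end{smallmatrix}\right)$: the diagonal entries vanish because two distinct rows cannot agree on the $N-k\ge k$ columns of $S\setminus c_0$, and the off-diagonal entry receives a contribution from the block $\mathbf v_0$ alone, for the same reason. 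Since a principal compression of a positive semidefinite matrix is positive semidefinite, the eigenvalue $-d^{-k}$ certifies a PPT violation, hence entanglement across $c_0\,|\,(S\setminus c_0)$, for every choice of the $k-1$ discarded parties; and you correctly isolate $N\ge 2k$ as the hypothesis that makes $N-k\ge k$, which both vanishing-diagonal arguments require. The method — diagonal reduction for the separability bullet, an explicit negative-eigenvalue witness of the partial transpose for the entanglement bullet — is moreover entirely in the spirit of this chapter, whose \cref{Aarhus4} is proved by the same PPT technique.
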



For the proof of above statement, we refer to \cite{PhysRevA.100.062329}. From the OA presented in \cref{OA111}, we obtain, for example, the following
five-ququart, $1$-resistant state:
\begin{align}
\label{erty}
\ket{(5,1)_4}\propto&\ket{00000} +\ket{01111} +\ket{02222}+\ket{03333}+ \nonumber \\
&\ket{10123} +\ket{11032} +\ket{12301}+\ket{13210}+ \nonumber \\
&\ket{20231} +\ket{21320} +\ket{22013}+\ket{23102}+ \nonumber \\
&\ket{30312} +\ket{31203} +\ket{32130}+\ket{33021}.
\end{align}

Bush provided the general method for constructing OAs of index unity \cite{Bush}. 

\begin{theorem}[Bush, 53']
If $d$ is a prime power, i.e $d=p^n$ for some prime number $p$ and natural number $n$, then we can construct the array $\OA \left( d^k, d+1,d,k\right) $.
\end{theorem}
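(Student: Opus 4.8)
The plan is to realize the symbols $\{0,\ldots,d-1\}$ as the elements of the finite field $\GF{d}$, which exists precisely because $d=p^{n}$ is a prime power; this is the single place where the hypothesis enters, and it is exactly the reason the construction must fail when $d$ is not a prime power, since then no field of order $d$ exists. I would index the $d^{k}$ rows of the array by the polynomials
\[
f(x)=a_{k-1}x^{k-1}+\cdots+a_{1}x+a_{0}
\]
of degree strictly less than $k$ over $\GF{d}$. Each of the $k$ coefficients ranges freely over the $d$ field elements, so there are exactly $d^{k}$ such polynomials, matching the required number of rows. For the $d+1$ columns I would take one column for every field element $\alpha\in\GF{d}$ together with a single extra \emph{column at infinity}.

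Next I would define the entries. In the column labelled by $\alpha\in\GF{d}$, the entry of the row $f$ is the evaluation $f(\alpha)$, while in the infinity column the entry is the leading coefficient $a_{k-1}$. Both quantities live in $\GF{d}$, so the array is filled with the correct alphabet of $d$ symbols, and it has the correct dimensions $d^{k}\times(d+1)$.

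The heart of the argument is verifying that this array has strength $k$ with index unity, i.e.\ that every choice of $k$ columns displays each of the $d^{k}$ possible $k$-tuples exactly once. I would split into two cases according to whether the infinity column is among the chosen columns. If all $k$ chosen columns correspond to distinct field elements $\alpha_{1},\ldots,\alpha_{k}$, then matching a prescribed tuple $(y_{1},\ldots,y_{k})$ amounts to finding a polynomial of degree $<k$ with $f(\alpha_{i})=y_{i}$; Lagrange interpolation (equivalently, invertibility of the $k\times k$ Vandermonde matrix with distinct nodes over a field) produces exactly one such polynomial, so the tuple occurs exactly once. If instead the infinity column is chosen together with $k-1$ finite columns $\alpha_{1},\ldots,\alpha_{k-1}$, I would fix the target leading coefficient $a_{k-1}=c$, write $f(x)=c\,x^{k-1}+g(x)$ with $\deg g<k-1$, and observe that the conditions $f(\alpha_{i})=y_{i}$ become interpolation conditions $g(\alpha_{i})=y_{i}-c\,\alpha_{i}^{k-1}$ at $k-1$ distinct nodes, which again have a unique solution. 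In both cases each $k$-tuple appears exactly once, so the index is $\lambda=d^{k}/d^{k}=1$, completing the construction.

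The finite-column case is a direct appeal to Vandermonde invertibility, so the genuinely new ingredient is the role of the infinity column, and the main obstacle I anticipate is the bookkeeping in the second case. There one must confirm that fixing the leading coefficient and interpolating the remaining coefficients really yields a bijection onto all $k$-tuples: this relies on the fact that, after removing the $c\,x^{k-1}$ term, the residual polynomial $g$ carries exactly $k-1$ free coefficients and is therefore pinned down by the $k-1$ interpolation nodes. Once this is checked, the two cases together certify strength $k$ and index unity, giving the desired $\OA\!\left(d^{k},d+1,d,k\right)$.
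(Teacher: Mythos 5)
Your construction is correct: it is precisely the classical Bush construction (rows indexed by polynomials of degree $<k$ over $\GF{d}$, one evaluation column per field element plus a column at infinity recording the coefficient of $x^{k-1}$, with strength $k$ verified by Lagrange interpolation/Vandermonde invertibility in both the all-finite and infinity-column cases). The paper itself states this theorem without proof, citing Bush's original work, and your argument is exactly the one found there, so there is nothing further to reconcile.
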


Note that combining Bush's result with \cref{prop111} provides the existence of $m$-resistant $N$-qudit states for $N \geq 2(m+1)$.

\begin{proposition}
\label{Aarhus5}
For any $N \geq 2(m+1)$ there exists the $N$-qudit state which is $m$-resistant. The local dimension $d$ is the smallest prime power larger than $N-1$.
\end{proposition}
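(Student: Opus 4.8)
The plan is to read this statement as a direct corollary obtained by composing Bush's theorem with \cref{prop111}, so the proof is mostly a matter of lining up parameters. First I would put $k = m+1$, so that the target resistance $m$ equals $k-1$, which is exactly the resistance value produced by \cref{prop111}. Under this substitution the hypothesis $N \geq 2(m+1)$ becomes $N \geq 2k$, which is precisely the inequality that \cref{prop111} requires. Hence it suffices to manufacture an index-unity orthogonal array $\OA(d^{k}, N, d, k)$ on exactly $N$ columns and then quote \cref{prop111}.

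Next I would supply the raw material with Bush's theorem. Taking $d$ to be the smallest prime power exceeding $N-1$ forces $d \geq N$ by integrality, hence $d+1 > N$, and also $d > N-1 \geq 2k-1 \geq k$, so the strength $k$ sits well below the alphabet size $d$ and any implicit Singleton-type size constraint in Bush's construction is met. Bush's theorem then yields the index-unity array $\OA(d^{k}, d+1, d, k)$, which has $d+1$ columns, strictly more than the $N$ we need.

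The only genuine lemma to verify is that deleting columns from an index-unity orthogonal array preserves both its strength and its index. This is immediate: any $k$ columns of the reduced array are also $k$ columns of the full array, so they still contain every length-$k$ string of symbols with identical multiplicity; moreover the index $\lambda = r/d^{k}$ depends only on $r$, $d$, and $k$, none of which change when columns are discarded. Restricting Bush's array to any $N$ of its $d+1$ columns therefore gives a bona fide index-unity $\OA(d^{k}, N, d, k)$.

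Finally I would apply \cref{prop111} to this array: since $N \geq 2k$, the associated $N$-qudit state $\ket{(N,m)_d}$ is $(k-1)$-resistant, that is, $m$-resistant, which is the assertion. The nearest thing to an obstacle is purely bookkeeping, namely checking that the three parameter inequalities align: $k \leq d$ for Bush's construction, $d+1 \geq N$ so that enough columns are available, and $N \geq 2k$ for \cref{prop111}. All three follow from the single choice $d > N-1$ together with the standing assumption $N \geq 2(m+1)$, so no further work is needed beyond assembling the pieces.
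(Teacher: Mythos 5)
Your proof is correct and takes essentially the same route as the paper, which obtains this proposition in a single line by combining Bush's theorem with \cref{prop111}. Your explicit treatment of the column-deletion step (restricting the $\OA(d^k,d+1,d,k)$ to $N$ columns while preserving strength and index unity) and of the parameter inequalities $k\leq d$, $d+1>N$, $N\geq 2k$ merely spells out what the paper leaves implicit.
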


\cref{erty}presents an example of a $m$-resistant qudit state obtained by reading consequtive roves of OA. A more interested reader might easily reproduce other resistant states with the help of available OA tables \cite{OAlib}. We organize all obtained results in \cref{table}. They encourage us to pose the following conjecture.

\begin{conjecture}
\label{Aarhus6}
For any $N$ and $m$, there exists an $m$-resistant $N$-qudit state in some local dimension $d$.
\end{conjecture}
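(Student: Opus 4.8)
The statement is a conjecture, so the task is to devise a program that could close it rather than a finished argument. The first step is to record exactly what is already in hand. Combining \cref{Aarhus4} (which yields $m=0,\,N-3,\,N-2$) with \cref{Aarhus5} (which yields every $m$ with $N\ge 2(m+1)$, i.e.\ $m\le \lfloor (N-2)/2\rfloor$) already settles the conjecture for all $N\le 6$ and, for larger $N$, for every $m$ outside the window
\begin{equation}
\lfloor N/2\rfloor \;\le\; m\;\le\; N-4 .
\end{equation}
This window is empty until $N=7$ (where only $m=3$ is missing) and widens slowly thereafter. Hence the entire content of the conjecture reduces to producing $m$-resistant $N$-qudit states in this intermediate regime, and the plan is to attack it by pushing the combinatorial (orthogonal-array) construction past the $N\ge 2(m+1)$ barrier.

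The natural vehicle is the reduced-state mechanism underlying \cref{prop111}. For a minimal-support state built from an array with row set $R$, tracing out a set $T$ of columns gives, up to normalisation,
\begin{equation}
\rho_{T^c} \;\propto\; \sum_{\substack{u,v\in R\\ u|_{T}=v|_{T}}} \ket{u|_{T^c}}\bra{v|_{T^c}},
\end{equation}
so coherences survive precisely between rows agreeing on the discarded columns. This reframes both halves of $m$-resistance as coding conditions. \emph{Separability after losing $m+1$ parties} holds as soon as any $m+1$ discarded coordinates determine the whole row, for then $\rho_{T^c}$ is diagonal in the product basis and hence fully separable; this is a statement about the minimum distance of the associated code. \emph{Entanglement after losing $m$ parties} requires that some $m$-column trace leave genuine coherences and that the resulting matrix violate the PPT criterion, exactly the explicit eigenvalue computation already carried out in \cref{Aarhus4}. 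The concrete task is therefore to locate triples $(N,m,d)$ in the gap for which a code (Reed--Solomon, algebraic-geometry, or a tailored MDS-type family over a prime power $d$) simultaneously satisfies the two opposing requirements.

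Because the orthogonal-array route saturates exactly at $m=\lfloor (N-2)/2\rfloor$, I would run this search in tandem with an interpolation strategy bridging the two known endpoints. One promising idea is a lifting/recursion that adjoins a single particle while raising $m$ by one, so that a family interpolating between the OA states (small $m$) and the Dicke-type states $\ket{\psi^{N}_{N-2}},\ket{\psi^{N}_{N-3}}$ of \cref{Aarhus4} (large $m$) sweeps through the missing window; concatenation of short codes, or gluing an AME/$k$-uniform block onto a smaller resistant state, are the concrete mechanisms I would test first on the smallest open case $(N,m)=(7,3)$.

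The hard part is intrinsic to the definition: the two conditions pull against each other. Rigidity of the code --- any $m+1$ coordinates determining the row --- is what makes the $(m+1)$-loss reduction \emph{provably} and fully separable (the diagonal structure sidesteps the fact that PPT is not sufficient beyond $2\times2$ and $2\times3$), yet that same rigidity tends to destroy the coherences needed for the $m$-loss reduction to remain entangled. Index-unity arrays balance these forces only when $N\ge 2(m+1)$, and it is exactly this balance that fails in the gap. The genuine obstacle, then, is to certify \emph{full} separability of a reduction onto $N-m-1\ge 3$ remaining parties --- a genuinely multipartite, not merely bipartite, separability problem --- for a state whose complementary $m$-loss reductions are still entangled. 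This will almost certainly demand designs or states beyond the minimal-support/orthogonal-array class, and it is the step on which I expect the conjecture to stand or fall.
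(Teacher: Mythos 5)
This statement is an open conjecture in the paper, not a theorem --- there is no proof on the paper's side to compare against --- so your decision to treat it as a research program rather than manufacture a complete argument is the correct reading of its status. Your accounting of the known territory is accurate and coincides exactly with the paper's own evidence base: combining \cref{Aarhus4} with \cref{prop111} and \cref{Aarhus5} leaves open precisely the window $\lfloor N/2\rfloor \le m \le N-4$ (first nonempty at $(N,m)=(7,3)$, visible as the empty cell in \cref{table}), and your proposed orthogonal-array/coding-theoretic attack on that window, including the observation that index-unity rigidity is what certifies full separability of the $(m+1)$-loss reductions while threatening the coherences needed for the $m$-loss reductions, is the same combinatorial direction from which the paper itself derives the conjecture.
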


\noindent
Furthermore, we leave as a list of related open problems.
\begin{enumerate}
\item Investigate whether there exist pure states of $N$ qubits
with the $m$-resistance property, for any $m=0,1,\dots N-2$.
\item If such a state do not exists, for each $N$ find the minimal local dimension $d$ such that there exist $m$-resistant states of $N$ qudits.
\item For any class of $m$-resistant states of $N$ qubits find a state for which its average entanglement after partial trace over any set of $m$ parties is the largest,  if measured with respect to a given measure of entanglement.
\end{enumerate}

\begin{table}
\begin{tikzpicture}[x=\daywidth, y=-1cm, node distance=0 cm,outer sep = 0pt]
\tikzstyle{day}=[draw, rectangle,  minimum height=1cm, minimum width=\daywidth, fill=yellow!15,anchor=south west]
\tikzstyle{day2}=[draw, rectangle,  minimum height=1cm, minimum width=1.5 cm, fill=yellow!20,anchor=south east]
\tikzstyle{hour}=[draw, rectangle, minimum height=1 cm, minimum width=1.5 cm, fill=yellow!30,anchor=north east]
\tikzstyle{1hour}=[draw, rectangle, minimum height=1 cm, minimum width=\daywidth, fill=yellow!30,anchor=north west]
\tikzstyle{Planche0}=[1hour,fill=red!40]
\tikzstyle{Planche1}=[1hour,fill=blue!50]
\tikzstyle{Planche2}=[1hour,fill=blue!35]
\tikzstyle{Planche3}=[1hour,fill=blue!20]
\tikzstyle{Planche4}=[1hour,fill=green!30]
\tikzstyle{Planche5}=[1hour,fill=blue!20]
\tikzstyle{Planche6}=[1hour,fill=blue!10]
\tikzstyle{Planche7}=[1hour,fill=blue!5]
\tikzstyle{Planche8}=[1hour,fill=blue!2]
\tikzstyle{Ang2h}=[2hours,fill=green!20]
\tikzstyle{Phys2h}=[2hours,fill=red!20]
\tikzstyle{Math2h}=[2hours,fill=blue!20]
\tikzstyle{TIPE2h}=[2hours,fill=blue!10]
\tikzstyle{TP2h}=[2hours, pattern=north east lines, pattern color=magenta]
\tikzstyle{G3h}=[3hours, pattern=north west lines, pattern color=magenta!60!white]
\tikzstyle{Planche}=[1hour,fill=white]
\node[day] (lundi) at (1,8) {4};
\node[day] (mardi) [right = of lundi] {5};
\node[day] (mercredi) [right = of mardi] {6};
\node[day] (jeudi) [right = of mercredi] {7};
\node[day] (vendredi) [right = of jeudi] {8};
\node[hour] (8-9) at (1,8) {0};
\node[hour] (9-10) [below = of 8-9] {1};
\node[hour] (10-11) [below= of 9-10] {2};
\node[hour] (11-12) [below = of 10-11] {3};
\node[hour] (12-13) [below  = of 11-12] {4};
\node[hour] (13-14) [below = of 12-13] {5};
\node[hour] (14-15) [below = of 13-14] {6};
\node[Planche1] at (1,8)  {$\ket{\psi^{4}_{0}}$}; 
\node[Planche2] at (1,9) {$\ket{\psi^{4}_{1}} $};
\node[Planche3] at (1,10) {$\ket{\psi^{4}_{2}} $};

\node[Planche1] at (2,8) {$\ket{\psi^{5}_{0}} $};
\node[Planche4] at (2,9) {$\ket{\phi^{}_{1,5}}_4 $};
\node[Planche2] at (2,10) {$\ket{\psi^{5}_{2}} $};
\node[Planche3] at (2,11) {$\ket{\psi^{5}_{3}} $};

\node[Planche1] at (3,8) {$\ket{\psi^{6}_{0}} $};
\node[Planche4] at (3,9) {$\ket{\phi^{}_{1,6}}_5 $};
\node[Planche0] at (3,10) {AME};
\node[Planche2] at (3,11) {$\ket{\psi^{6}_{3}} $};
\node[Planche3] at (3,12) {$\ket{\psi^{6}_{4}} $};

\node[Planche1] at (4,8) {$\ket{\psi^{7}_{0}} $};
\node[Planche4] at (4,9) {$\ket{\phi^{}_{1,7}}_7 $};
\node[Planche4] at (4,10) {$\ket{\phi^{}_{2,7}}_7 $};
\node[Planche2] at (4,12) {$\ket{\psi^{7}_{4}} $};
\node[Planche3] at (4,13) {$\ket{\psi^{7}_{5}} $};

\node[Planche1] at (5,8) {$\ket{\psi^{8}_{0}} $};
\node[Planche4] at (5,9) {$\ket{\phi^{}_{1,8}}_7 $};
\node[Planche4] at (5,10) {$\ket{\phi^{}_{2,8}}_7 $};
\node[Planche4] at (5,11) {$\ket{\phi^{}_{3,8}}_7 $};
\node[Planche2] at (5,13) {$\ket{\psi^{8}_{5}} $};
\node[Planche3] at (5,14) {$\ket{\psi^{8}_{6}} $};


\node[day2] at (1,8) {};
\draw (1,8)--(-0.15, 7);
\node at (0.65,7.4) {$N$};
\node at (0.1,7.6) {$m$};
\end{tikzpicture}
\captionsetup{justification=justified,singlelinecheck=false}
\caption{\label{table} Collection of $m$-resistant states of $N$ parties obtained so far. Three families: $\ket{\psi^{N}_{0}} $, $\ket{\psi^{N}_{N-3}} $ and $\ket{\psi^{N}_{N-2}} $ of $m$-resistant $N$-qubit states discussed in \cref{search} are presented on the differently shaded blue background. 
Furthermore, $m$-resistant qudit states $\ket{\phi^{}_{m,N}}_d $ constructed by virtue of orthogonal arrays are demonstrated on the green background. The number in subscript is relevant to the local dimension $d$ of a state. Observe that the local dimension $d$ of qudit states is usually (but not always) equal to $N-1$. 
Finally, a six-qubit state AME(6,2) provides an example of a $2$-resistant state.
}
\end{table}


\section{Asymptotic case}
\label{Aarhus3}

Consider a system consisting of $N$ subsystems with $d$ levels each and assume that $N$ is large. It is well known that a generic pure state of such a system is typically strongly entangled \cite{ZS02,HLW06}. Therefore the partial trace of the corresponding $N$-party  projector over any choice of subsystems traced away becomes mixed.  Typically, the more parties $k$ are traced out, the more mixed is the state describing the remaining $N-k$  subsystems.

If the number $k$ of subsystems removed is equal to $N/2$ the reduced state typically has a full rank so it can belong to the separable ball around the maximally mixed state \cite{ZHSL98, GB02}. It was indeed observed \cite{KNM02,KZM02,HLW06} that for $k\approx N/2$ a transition from entangled reduced states to the reductions with positive partial transpose takes place. Furthermore, if $k \approx 3N/5$  the remaining subsystem consisting of $2N/5$ particles is typically separable \cite{ASY12,ASY14} and this transition becomes sharp if $N \to \infty$. Thus one can expect that a generic pure state of a $N$--system state is $m$-resistant with $m$ typically of the order of $3N/5$ independently of the local dimension $d$.

\section{Conclusions}

In this chapter, I presented an analogy between quantum states and topological links, in which the operation of losing a subsystem is related to neglecting an associated ring. Furthermore, I introduced the notion of $m$-resistant states, which entanglement is resistant for loss of any $m$ particles but fragile for loss of any larger subsystem (\cref{Aarhus1}). I presented two methods of constructing $m$-resistant states: \cref{search} is based on the Majorana representation of symmetric states, while \cref{Aarhus2} on the combinatorial notion of Orthogonal Arrays. Related results are summarized in \cref{Aarhus4} and \cref{prop111,Aarhus5} respectively. \cref{table} presents both aforementioned families of states at glance. Finally, \cref{Aarhus3} discusses the typical resistance property of large systems.

\chapter{Highly symmetric states and groups}
\label{chap2}
In this chapter, I present an approach for constructing highly-entangled quantum states which is based on the group theory. Introduced states resemble the Dicke states, whereas the interactions occur only between specific subsystems related by the action of the chosen group. The states constructed by this technique exhibit desired symmetry properties and form a natural resource for less-symmetric quantum informational tasks. Furthermore, I~introduce a larger class of genuinely entangled states based on an arbitrary network structure, graphically represented as a (hyper)graph, with excitations appearing only in particular subsystems represented by (hyper)edges. I investigate the entanglement properties of both families of states and show an interesting phenomenon: most of the entanglement is concentrated between nodes of distance two and is absent between immediate neighbours. In addition I present two different methods of constructing introduced states: firstly, I propose quantum circuits whose complexity is comparable with the complexity of quantum circuits proposed for Dicke states, secondly, I present such states as a ground states of Hamiltonians with $3$-body interactions. I~demonstrate the viability of the provided constructions by simulating one of the considered states on available quantum computers: IBM - Santiago and Athens. An extension of some parts, in which the author's part was not substantial, as well as proofs of presented results, can be found in the joint paper \cite{burchardt2021entanglement}.

\section{Motivation}
	
Among different types of entangled states, permutation invariant states attracted a lot of attention in both continuous \cite{PhysRevA.71.032349,PhysRevLett.99.150501} and discrete \cite{PhysRevLett.98.060501,Bergmann_2013} variable systems. A remarkable example of such states is due to  Dicke \cite{Dicke54}, {\sl Dicke state} with $k$ excitations in a system of $N$-qubits is defined as \cite{PhysRevA.67.022112},
\begin{equation}
\label{Dicke}
\ket{\text{D}_N^k} \propto 
\sum_{\sigma \in \mathcal{S}_N}  \sigma \Big( \ket{ \underbrace{1\;\cdots \;1}_{k \text{ times}} \; \underbrace{0 \;\cdots \; 0}_{N-k \text{ times}}} \Big) ,
\end{equation}
where the summation runs over all permutations in the symmetric group $\mathcal{S}_N$. On one hand, symmetry of the Dicke states simplifies their theoretical \cite{Liu_2019} and experimental \cite{L_cke_2014,Wieczorek_2009} detection, furthermore facilitates the tasks of quantum tomography \cite{Mazza_2015}. On the other hand, the entanglement of Dicke states turned out to be maximally persistent and robust for the particle loss \cite{Dicke54,PhysRevLett.86.910}, such states provide inherent resources in numerous quantum information contexts, including quantum secret sharing protocols \cite{PhysRevLett.98.020503}, open destination teleportation \cite{PhysRevA.59.156}, quantum metrology \cite{RevModPhys.90.035005}, and decoherence free quantum communication \cite{PhysRevLett.92.107901}. 
	
So far most of the scientific attention was focused on fully symmetric tasks, for example, parallel teleportation \cite{helwig2013absolutely}, or symmetric quantum secret sharing protocols \cite{HelwigAME}. In various realistic situations, however, such a full symmetry between collaborating systems is not possible, required or even desirable. As an example, it was shown that there is no four-qubit state which is maximally entangled with respect to all possible symmetric partitions \cite{HIGUCHI2000213}. Such a state would allow for the parallel teleportation protocol of two qubits between any two two-qubit subsystems. Nevertheless, the following state:
\[
\dfrac{1}{2} \Big( \ket{1100}+\ket{0110}+\ket{0011}+\ket{1001} \Big)
\]
allows for the teleportation of a single qubit  to an arbitrary subsystem, and additionally for the parallel teleportation across the partition $13|24$. As we already discussed in Introduction, the following state
\begin{equation}
\ket{M_4} =   \dfrac{1}{\sqrt{6}} \Big( \ket{0011} + \ket{1100} +  e^{\frac{2 \pi i}{  3}} (\ket{1010}+ \ket{0101}) + e^{\frac{4 \pi i}{ 3}} (\ket{1001} + \ket{0110}) \Big),
\end{equation}
turned out to maximize the average entropy of entanglement over all bipartitions of four-qubit state. As it was later observed, state $\ket{M_4} $ is not fully permutational invariant, but invariant under any even permutation of qubits \cite{Lyons_2011,CenciCurtLyons}.

On one hand, it is especially reasonable to share resources in a not fully symmetric way in variants of quantum secrets sharing schemes, allowing only some parties for cooperation. On the other hand, various molecules in Nature (like benzene) stand out with remarkable symmetries, the general investigations of entanglement in highly symmetric systems may shed some light on the nature of correlation in relevant chemical molecules. In recent years, correlations and the entanglement contained in chemical bounds was investigated \cite{Szalay_2017,Szalay_2015,ding2020concept}, and a special attention was dedicated to highly symmetrical molecules \cite{Szalay_2015}. Although for most molecules the total correlation between orbitals seems to be classical, the general significance of entanglement in chemical bonds seems to be present \cite{ding2020concept}. 

\section{Group of symmetry of a quantum state}
\label{sec3a}
	A pure state $\ket{\psi}$ is called \textit{symmetric} if it is invariant under any permutation of its subsystems, i.e. $\sigma \ket{\psi} = \ket{\psi}$ for any element $\sigma$ from the permutation group $ \mathcal{S}_N$, where $N$ denotes number of subsystems. 	This might be generalized to a very natural definition of the group of symmetry of a quantum state.
	
\begin{definition}
\label{Aarhus9}
		A state $\ket{\psi}$ of $N$ subsystems is called $H$-symmetric, where $H$ is a subgroup of the permutation group, $H< \mathcal{S}_N$, iff it is permutation invariant for any $\sigma \in H$, and only for such permutations.
	\end{definition}
	
We begin with several examples of states with restricted group of symmetry. Firstly, consider a system consisting of three qubits. The celebrated states \cite{ThreeQub}:
\begin{align*}
\ket{\text{GHZ}} &\propto \ket{000} +\ket{111} , \\
\ket{\text{W}} &\propto \ket{001} +\ket{010} +  \ket{100} .
\end{align*}
are fully symmetric, see \cref{fig1ch2}. We also may construct three--qubit quantum states with other types of symmetry. For example
\[
|\chi_1\rangle \propto \ket{001}+\ket{010}
\]
is $\mathcal{S}_2$-symmetric state. This state is, however, bi-separable with respect to the partition $A|BC$. Nevertheless, a similar example $|\chi\rangle$ can be found among genuinely entangled states:
\[
|\chi_2\rangle \propto \ket{001}+\ket{010}+2\ket{100}+ 2\ket{111}.
\]
Furthermore the following state
\[
|\chi_3\rangle \propto \ket{001}+ e^{\frac{2 \pi i}{3}} \ket{010}+ e^{\frac{4 \pi i}{3}} \ket{100}
\]
exhibits the alternating $\mathcal{A}_3$-symmetry, symmetry of all even permutations of three qubits. Note that the above state is entangled and locally equivalent to the celebrated $\ket{W}$ state. In that way, we constructed states with all possible types of discrete symmetries in a three-qubit setting.

Observe that there exists an easy recipe for the construction of an $H$-symmetric state, where $H<S_N$ on $N$ qudits. 

\begin{proposition}
\label{Aarhus8}
Consider a subgroup $H < \mathcal{S}_N$. The following state $\ket{\psi} \in \mathcal{H}^{\otimes N}_N$ of the local dimension $N$:
\begin{equation*}
\ket{\psi} \propto \sum_{\sigma \in H} \ket{ \sigma_0 (0) \cdots \sigma_{N-1} (N-1)} 
\end{equation*}
is $H$ symmetric.
\end{proposition}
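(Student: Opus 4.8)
The plan is to verify directly the two halves of the definition of an $H$-symmetric state: invariance under every $\tau\in H$, and genuine non-invariance under every $\tau\notin H$. First I would fix conventions by reading the basis ket attached to a permutation $\sigma\in\mathcal{S}_N$ as $\ket{\sigma(0)\,\sigma(1)\cdots\sigma(N-1)}$, i.e. subsystem $i$ carries the symbol $\sigma(i)$. Since distinct permutations yield distinct length-$N$ sequences, the $|H|$ summands are pairwise orthogonal computational-basis vectors. Hence $\ket{\psi}$ is, up to the normalization $1/\sqrt{|H|}$, the uniform superposition supported exactly on the kets indexed by $H$; in particular there is no cancellation and the notion of its \emph{support} is unambiguous. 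This observation is what will make the converse direction work.

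Next I would compute the action of an arbitrary $\tau\in\mathcal{S}_N$ on a single summand. Using the convention that $\tau$ carries the content of subsystem $i$ to subsystem $\tau(i)$, the symbol landing in position $j$ is $\sigma(\tau^{-1}(j))$, so
\[
\tau\,\ket{\sigma(0)\cdots\sigma(N-1)} = \ket{(\sigma\tau^{-1})(0)\cdots(\sigma\tau^{-1})(N-1)},
\]
meaning that $\tau$ simply relabels the index of each summand from $\sigma$ to $\sigma\tau^{-1}$. Consequently $\tau$ maps the support $H$ bijectively onto the right coset $H\tau^{-1}$.

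For the forward direction, if $\tau\in H$ then $H\tau^{-1}=H$, so right multiplication by $\tau^{-1}$ merely permutes the summands among themselves; reindexing the sum by $\sigma'=\sigma\tau^{-1}$ gives $\tau\ket{\psi}=\ket{\psi}$, which establishes $H$-invariance. For the ``only for such permutations'' clause I would invoke the orthonormality noted above: if $\tau\notin H$ then $\tau^{-1}\notin H$ (as $H$ is closed under inverses), so $H\tau^{-1}\neq H$, and there exists $\rho\in H\tau^{-1}\setminus H$. The ket indexed by $\rho$ occurs with amplitude $1/\sqrt{|H|}$ in $\tau\ket{\psi}$ but with amplitude $0$ in $\ket{\psi}$; therefore $\tau\ket{\psi}$ and $\ket{\psi}$ cannot be proportional, let alone equal, and $\ket{\psi}$ is not invariant under $\tau$.

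The argument is elementary, so there is no deep obstacle; the only points that demand care are fixing the direction of the subsystem-permutation action so that the correct one-sided coset appears (the opposite convention produces $H\tau$ instead of $H\tau^{-1}$, but the conclusion ``equals $H$ iff $\tau\in H$'' is unchanged because $H$ is a subgroup), and making the converse rigorous rather than merely group-theoretic. The latter is the real content: the equality $H\tau^{-1}\neq H$ of index sets becomes an honest inequality of quantum states precisely because the permutation-indexed kets are orthonormal, so differing supports force the two vectors apart.
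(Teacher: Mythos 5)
Your proof is correct and takes essentially the same route as the paper's: invariance under $\tau\in H$ by reindexing the sum (the coset $H\tau^{-1}=H$), and failure of invariance for $\tau\notin H$ because the ket labelled by a permutation outside $H$ appears in $\tau\ket{\psi}$ but not in $\ket{\psi}$, the key point being that distinct permutations label distinct computational-basis kets. The paper's two-line argument is simply a terser version of this support comparison; your write-up just makes the action convention, the coset computation, and the orthogonality step explicit.
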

	
\begin{proof}
Clearly, that the group $H$ stabilizes $\ket{\psi}$. Suppose there is a larger group $H'$, which contains $H$, i.e. $H<H'$, and at the same time stabilize $\ket{\psi}$. Take any element $\sigma \in H'\setminus H$. Observe, that there is no term $\ket{ \sigma_0 (0) \cdots \sigma_{N-1} (N-1)}$ in $\ket{\psi} $, hence $H' $ cannot stabilize $\ket{\psi} $.
\end{proof}
	
\noindent
As an example, the three qutrits state
\begin{equation}
\ket{\psi} \propto \ket{012} +\ket{201} +\ket{120}
\end{equation}
is $\mathcal{A}_3$-symmetric. Indeed, one can see, that the cyclic permutation of the last three qutrits does not change the  state. This encourages us to pose the following question.
	
\begin{question}
\label{Aarhus7}
Consider any group of symmetry $H< \mathcal{S}_N$. What is the minimal local dimension $d$ for which there exists a $H$-symmetric state $\ket{\psi} \in \mathcal{H}^{\otimes N}_d$?
\end{question}
	
Consider now a natural basis for symmetric states. Any symmetric state $\ket{\psi}$ is a superposition of Dicke states \cite{Dicke54,PhysRevA.67.022112}. A similar decomposition for a $H$-symmetric state is also possible. We proceed with the following definition. 
	
\begin{definition}
\label{Dickelike}
For a given subgroup $H <\mathcal{S}_N$, we define the $N$-qubit \textit{Dicke-like} state with $k$ excitations is the following way:
\begin{equation}
\label{Dicke-like}
\ket{\text{D}_N^k}_H \propto  
\sum_{\sigma \in H}  \sigma \Big( \ket{ \underbrace{1\;\cdots \;1}_{k \text{ times}} \; \underbrace{0 \;\cdots \; 0}_{N-k \text{ times}}} \Big)
\end{equation}
where the summation runs over all permutations from the group $H<\mathcal{S}_N$. 
\end{definition}
	
As we shall see, the normalization constant in Eq.~(\ref{Dicke-like}) highly depends on the structure of the subgroup $H$ and the number of excitations, and is difficult to present in a consistent way. Note that the group symmetry of the Dicke-like state $\ket{\text{D}_N^k}_H$ is not necessarily given by $H$. The general analysis of the group of symmetries is tightly connected with the \textit{partially ordered set} (poset) of all subgroups of $\mathcal{S}_N$, which has rather complicated structure \cite{Poset}.

Any symmetric state $\ket{\psi}$ can be written in the computational basis as:
\begin{equation}
\label{eq1}
\ket{\psi} \propto \sum_{\sigma \in \mathcal{S}_N}
\ket{\phi_{\sigma (1)}}\cdots \ket{\phi_{\sigma (N)}},
\end{equation}
where the sum is taken over all permutations $\sigma \in \mathcal{S}_N$ in the symmetric group. In such a way, symmetric states have an effective representation, called \textit{Stellar representation} \cite{M_kel__2010}, as $N$ points (stars) on the Bloch sphere, corresponding to vectors  $\ket{\phi_i},\ldots , \ket{\phi_N}$, as we already discussed in \cref{chap1}. Stellar representation turned out to play a role by classification of entanglement in symmetric quantum states \cite{M_kel__2010,CrossRatioNqubits,PhysRevA.85.032314}. Furthermore, special symmetry conditions imposed on the stars, are related with highly entanglement properties of resulted states \cite{Martin10,Martin15,Martin17}. 

I introduce a natural generalization of this approach -- the \textit{generalized stellar representation}, which is suitable for quantum states exhibiting modes symmetries, i.e. for which summation in \cref{eq1}  runs over a subgroup $H$ of the symmetric group $\mathcal{S}_N$. This might possibly restrict the group of symmetries in the resulting state. As for symmetric states, consider $N$ points on the Bloch sphere: $\ket{\phi_1},\ldots , \ket{\phi_N},$ and the following product
\begin{equation}
\label{eq2}
\ket{\psi} \propto
\sum_{\sigma \in H} 
\ket{\phi_{\sigma (1)}}\cdots \ket{\phi_{\sigma (N)}},
\end{equation}
\noindent
where the sum runs over all permutations $\sigma$ from the group $H$. We might represent the state $\ket{\psi}$ as a collection of $N$ points on the Bloch sphere relevant to vectors $\ket{\phi_i},\ldots, \ket{\phi_N}$ with indicated action of the group $H$, see \cref{fig1ch2}. 
	
Notice that a given constellation of `stars' at the Bloch sphere  with the selected symmetry group $H$ do not represent uniquely the quantum state. The important information is carried in how the group $H$ is contained in $\mathcal{S}_N$, which mathematically might be expressed by immersion $H\hookrightarrow \mathcal{S}_N$ of the group $H$ into the symmetric group $\mathcal{S}_N$.

\begin{figure}[ht!]
\begin{center}
\begin{minipage}{0.25\linewidth}
\subfloat{\includegraphics[scale=1]{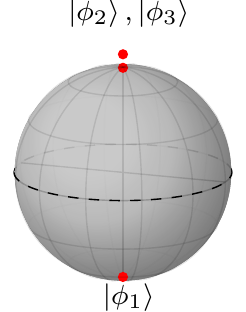}}
\end{minipage}
\hspace{1cm}
\begin{minipage}{0.6\linewidth}
\subfloat{\includegraphics[scale=0.9]{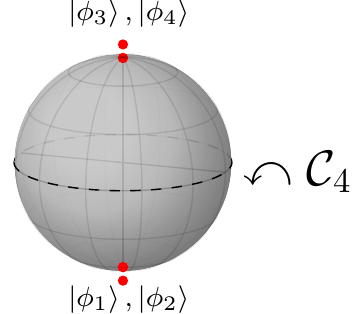}}
\subfloat{\includegraphics[scale=0.9]{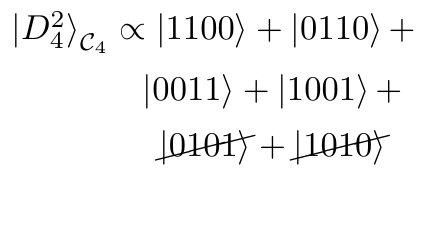}}
\end{minipage}
\end{center}
\caption{On the left, stellar representation of three qubits $\ket{\text{W}}$ state, by two stars at the North pole and a single star at the South pole. Stellar representation of a Dicke state $\ket{D_N^k}$ consists of $k$ stars at the South Pole and $N-k$ stars at the North Pole in general \cite{PhysRevA.85.032314}. Similar holds true for the Dicke-like states. On the right, $\ket{D_4^2}_{\mathcal{C}_4}$ defined in \cref{eeerrr} is represented by four stars on the Bloch sphere, with indicated action of the Cyclic group ${\mathcal{C}_4}$.
}
\label{fig1ch2}
\end{figure}
	
\section{Symmetric states related to (hyper)graphs}
\label{hyper}

In this section, I propose a scheme to associate to a given graph with $N$ vertices a single pure quantum state of an $N$-party system. Such a representation reflects not only the symmetry, but also the structure of a quantum circuit under which presented families of quantum states can be constructed. Let us emphasize that states constructed in that way are completely different form, the so-called \textit{graph states}, known also as \textit{cluster states} \cite{PhysRevLett.86.910,Graph1,Graph2,helwig2013absolutely}. 
	
A \emph{graph} $G$ is a pair $(V,E)$ where $V$ is a finite set, and $E$ is a collection of two-element subsets of $V$. We refer to elements of $V$ as \textit{vertices}, and elements of $E$ as \textit{edges} respectively. A \textit{hypergraph} is a natural generalisation of a graph, in which edges are arbitrary (not necessarily 2-element) subsets of $V$. A hypergraph is called \textit{uniform} if its all edges consist of the same number of elements equal to $k$, we refer to such an object as $k$-hypergraph. In particular, a 2-hypergraph is simply a graph.
	We shall denote the number of vertices by $N$. Moreover, we assume that vertices of the (hyper)graph are labeled by numbers $1,\ldots,N$.
	
\begin{definition}
\label{Aarhus10}
With a given (hyper)graph $G=(V,E)$, we associate a quantum state of $N$ qubits in the following way: 
\begin{equation}
\ket{G} := \dfrac{1}{\sqrt{|E|}}  \sum_{e \in E} \ket{\psi_e}
\end{equation} 
where $\ket{\psi_e}$ is a tensor product of $\ket{1}$ on positions labelled by indices which form the (hyper)edge $e$ and $\ket{0}$ on other positions. We shall refer to such states as \textit{excitation--states}. 
\cref{ex1p} illustrates this definition. 
\end{definition}

\begin{figure}[ht!]
\begin{center}
\includegraphics[scale=1]{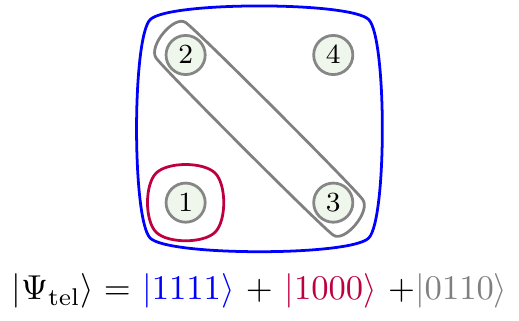}
\caption{So-called telescope state $\ket{\Psi_\text{tel}}$ \cite{ProblemTangle} represented as an excitation-state. Related  hypergraph is not uniform. Telescope state is simply obtained by reading over all edges. Note that as the empty set cannot form the hyperedge in the hypergraph, the ket $\ket{0000}$ cannot appear in the superposition which forms the excitation state.}
\label{ex1p}
\end{center}
\end{figure}

There is the following relation between the \textit{automorphisms group} of a complete $k$-hypergraph and the  group of symmetry of related state.

\begin{observation}
Consider an excitation-state $\ket{G}$ related to the hypergraph $G$. The group of symmetries of a state $\ket{G}$ is an automorphisms group of the related hypergraph $G$.
\end{observation}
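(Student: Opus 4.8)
The plan is to unwind both sides of the claimed equality directly from the definitions, exploiting the fact that a permutation of the $N$ subsystems acts on the computational basis merely by relabelling qubit positions. First I would fix notation: for a subset $S \subseteq V = \{1,\dots,N\}$ let $\ket{\psi_S}$ denote the computational basis vector carrying $\ket{1}$ on the positions belonging to $S$ and $\ket{0}$ elsewhere, so that $\ket{G} = |E|^{-1/2}\sum_{e\in E}\ket{\psi_e}$. A permutation $\sigma \in \mathcal{S}_N$ permutes the tensor factors, hence acts on these basis states by $\sigma\ket{\psi_S} = \ket{\psi_{\sigma(S)}}$, where $\sigma(S) = \{\sigma(i) : i\in S\}$ denotes the image set. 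Consequently $\sigma\ket{G} = |E|^{-1/2}\sum_{e\in E}\ket{\psi_{\sigma(e)}}$.

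The second step is to record that the assignment $S\mapsto\ket{\psi_S}$ is injective and that the vectors $\{\ket{\psi_S}\}_S$ are pairwise orthonormal, since $S$ is exactly the support (the set of positions equal to $1$) of $\ket{\psi_S}$. In particular the vectors $\{\ket{\psi_e}\}_{e\in E}$ occurring in $\ket{G}$ are linearly independent and carry the common positive amplitude $|E|^{-1/2}$. Because $\sigma$ is a bijection of $V$, it sends distinct edges to distinct subsets of the same cardinality, so $\sum_{e\in E}\ket{\psi_{\sigma(e)}}$ is again a superposition of $|E|$ distinct orthonormal basis vectors with equal coefficients.

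From linear independence I would then derive the key equivalence: $\sigma\ket{G}=\ket{G}$ holds if and only if the two families of basis vectors coincide, that is, if and only if $\{\sigma(e):e\in E\}=E$ as a collection of subsets. Since $\sigma$ preserves cardinalities, this is precisely the statement that $\sigma$ maps edges to edges bijectively, which is the definition of $\sigma$ being an automorphism of the (hyper)graph $G$. Hence $\{\sigma\in\mathcal{S}_N : \sigma\ket{G}=\ket{G}\}=\Aut(G)$, and by \cref{Aarhus9} this stabilizer is exactly the group of symmetries of $\ket{G}$, establishing the claim.

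The one point requiring genuine care is that invariance of the state really forces a set-theoretic equality of edge families, rather than mere equality up to phases or up to some accidental linear relation among the $\ket{\psi_e}$. This is guaranteed precisely because all amplitudes of $\ket{G}$ are equal and real-positive while the $\ket{\psi_e}$ are orthonormal, so no cancellation or rescaling can occur; this is the sole place where the specific structure of excitation-states (uniform amplitudes supported on distinct basis vectors) is used essentially, and I expect it to be the main, if modest, obstacle to a fully rigorous argument.
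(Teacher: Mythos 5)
Your proof is correct, and it is precisely the direct definition-unwinding argument this statement calls for: the dissertation states this as an Observation without proof (deferring details to \cite{burchardt2021entanglement}), because the equivalence $\sigma\ket{G}=\ket{G} \iff \sigma(E)=E$ follows immediately once one notes that permutations act on the orthonormal vectors $\ket{\psi_e}$ by relabelling supports and that the equal positive amplitudes preclude phases or cancellations. Your final paragraph correctly identifies the only point of substance, so nothing is missing.
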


\section{Entanglement properties}
	\label{cccc}

In this section, I present results concerning entanglement properties of general excitation states and Dike-like states. In particular, \cref{August1,cor5} provides separability criterion, while \cref{condition,A,B} provides formulas for exact value of a concurrence for regular graphs. As we shall see, entanglement properties of excitation states reflect the structure of the (hyper)graph. 

We begin by recalling the notion of \textit{Concurrence}, denoted by $C$ \cite{PhysRevLett.78.5022}. Concurrence is an entanglement measure, for any two-qubit mixed state $\rho$ its concurrence reads,
\begin{equation}
\label{concurrence}
C (\rho ):= \text{max} \{\lambda_1 -\lambda_2 -\lambda_3 -\lambda_4 ,0\},
\end{equation}
where $ \lambda _{1},\ldots,\lambda _{4}$ denote square 
roots of the eigenvalues of a Hermitian matrix ${\sqrt {\rho }}{\tilde {\rho }}{\sqrt {\rho }}$
ordered  decreasingly,
where
\[
{\displaystyle {\tilde {\rho }}=(\sigma _{y}\otimes \sigma _{y})\rho ^{*}(\sigma _{y}\otimes \sigma _{y})}
\]
is, the so-called, spin-flipped form of $\rho $. Note that the above definicion coincides with \cref{conLag} for all pure two-qubit states.

The \textit{generalized concurrence} $C_{v |\text{rest}}$ measures the entanglement between the subsystem $v$ and the rest of the system. For pure states, this quantity is determined by the relevant reduced density matrix \cite{PhysRevLett.78.5022}, $C_{v |\text{rest}} =2 \; \sqrt{\text{det} \rho_v} $, hence its values belong to the range $[0,1]$.
	
The distribution of bipartite entanglement, measured by the concurrence satisfies so-called monogamy inequality \cite{DistributedEntanglement,Osborne_2006}:
\begin{equation}
\label{eq5}
C_{v_1  |v_2 \ldots v_N }^2 \geq C_{v_1 v_2 }^2 + \ldots C_{v_1   v_N}^2 ,
\end{equation}
where $v_2 \ldots v_N$ are vertices relevant to subsystems.

In further considerations, we restrict our analysis to the $k$-uniform hypergraphs, and assume our graphs to be connected. Disconnected graphs are relevant for the tensor product of two excitation-states, and hence might be analyzed separately. We use the following notation. By \textit{distance} $d$ between vertices $v_0$ and $w$, we understand the minimal number of vertices $v_1 , \ldots v_d$, such that there exist (hyper)edges $e_1, \ldots, e_d$:
\[
v_{i-1}, v_i \in e_i
\]
and $v_d =w$. The \textit{degree} $d_v$ of the vertex $v$ is the number of edges on which $v$ is incident. The \textit{joint neighborhood} $n_{vw}$ of two vertices is the number of sets $W$ such that both: $W \cup v$ and $W \cup w$ constitute an edge. Finally, the \textit{section} $s_{vw}$ is the number of edges on which $v$ and $w$ are incident. Notice that for graphs, the joint neighborhood is simply the number of vertices adjacent to both: $v$ and $w$, while section $s_{vw}=1$ or $s_{vw}=0$ depending if vertices $v$ and $w$ are connected. Furthermore, to simplify the notation, for a given excitation state $\ket{G}$ and  two selected subsystems  $v$ and $w$, the concurrence of the reduced state $\rho_{vw}^G$, will be denoted as 
\begin{equation}
C_{vw} := C (\rho_{vw}^G ) .
\end{equation}

I obtained the following separability criterion.

\begin{proposition}
\label{condition}
For a $k$-uniform hypergraph, the two-party concurrence $C_{vw}$ reads
\begin{equation}
C_{vw} =\text{max} \Big\{ 0, \dfrac{2}{|E|} \big( n_{vw} - \sqrt{s_{vw} \lambda } \big) \Big\}
\end{equation}
where $\lambda =|E| - d_v -d_w   +s_{vw}$.	
\end{proposition}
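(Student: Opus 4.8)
The plan is to compute the two-qubit reduced density matrix $\rho_{vw}^G$ explicitly and then feed it into the mixed-state concurrence formula \eqref{concurrence}. First I would split the edge set $E$ according to how each edge meets the pair $\{v,w\}$: the $s_{vw}$ edges containing both, the $d_v-s_{vw}$ edges containing $v$ but not $w$, the $d_w-s_{vw}$ edges containing $w$ but not $v$, and the remaining $\lambda=|E|-d_v-d_w+s_{vw}$ edges containing neither (inclusion--exclusion identifies this last count with the stated $\lambda$). Tracing out all subsystems except $v,w$ from $\ketbra{G}$ kills every cross term $\ket{\psi_e}\bra{\psi_{e'}}$ for which $e$ and $e'$ differ outside $\{v,w\}$. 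Since all edges have the same size $k$, a surviving off-diagonal term forces $e\setminus\{v,w\}=e'\setminus\{v,w\}$ with $\{v,w\}$ split between $e$ and $e'$; these are exactly the pairs $(W\cup\{v\},W\cup\{w\})$ counted by the joint neighborhood $n_{vw}$. Writing the two selected qubits in the ordering $vw$, this yields a block-diagonal ``X''-type matrix
\begin{equation*}
\rho_{vw}^G=\frac{1}{|E|}
\begin{pmatrix}
\lambda & 0 & 0 & 0\\
0 & d_w-s_{vw} & n_{vw} & 0\\
0 & n_{vw} & d_v-s_{vw} & 0\\
0 & 0 & 0 & s_{vw}
\end{pmatrix}.
\end{equation*}

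Next I would evaluate the concurrence. Because the coefficients of $\ket{G}$ are real, $\rho_{vw}^G$ is real and $\tilde\rho=(\sigma_y\otimes\sigma_y)\rho_{vw}^G(\sigma_y\otimes\sigma_y)$ is obtained simply by exchanging the two corner entries $\lambda\leftrightarrow s_{vw}$ and swapping the two central diagonal entries, while fixing the off-diagonal $n_{vw}$. The product $\rho_{vw}^G\tilde\rho$ then decomposes into the two one-dimensional corner blocks, each contributing the eigenvalue $\lambda s_{vw}/|E|^2$, together with a central $2\times2$ block whose eigenvalues factor neatly as $\bigl(\sqrt{(d_v-s_{vw})(d_w-s_{vw})}\pm n_{vw}\bigr)^2/|E|^2$. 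Taking square roots gives the four numbers $\lambda_i$ entering \eqref{concurrence}: namely $\sqrt{\lambda s_{vw}}/|E|$ (twice) and $\bigl(\sqrt{(d_v-s_{vw})(d_w-s_{vw})}\pm n_{vw}\bigr)/|E|$.

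Finally I would assemble these into $C_{vw}=\max\{0,\lambda_1-\lambda_2-\lambda_3-\lambda_4\}$. The decisive combinatorial input is that each set $W$ counted by $n_{vw}$ has size $k-1$ and avoids both $v$ and $w$, so the assignments $W\mapsto W\cup\{v\}$ and $W\mapsto W\cup\{w\}$ embed the joint neighborhood injectively into the edges containing $v$ but not $w$, respectively $w$ but not $v$; hence $n_{vw}\le\min(d_v-s_{vw},d_w-s_{vw})\le\sqrt{(d_v-s_{vw})(d_w-s_{vw})}$. This inequality removes the absolute value when the two central square roots are subtracted, forces the largest $\lambda_i$ to be $\bigl(\sqrt{(d_v-s_{vw})(d_w-s_{vw})}+n_{vw}\bigr)/|E|$ whenever the concurrence is positive, and makes that difference equal to $2n_{vw}/|E|$; subtracting the two copies of $\sqrt{\lambda s_{vw}}/|E|$ collapses the expression to $\frac{2}{|E|}\bigl(n_{vw}-\sqrt{s_{vw}\lambda}\bigr)$, giving the claimed formula after taking the positive part. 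I expect the main obstacle to be precisely this ordering and sign bookkeeping in the concurrence formula: one must confirm that $\bigl(\sqrt{(d_v-s_{vw})(d_w-s_{vw})}+n_{vw}\bigr)/|E|$ really is $\lambda_1$ in the entangled regime and that in the complementary regime the expression correctly returns $0$, and it is the combinatorial inequality above that makes this case analysis close cleanly.
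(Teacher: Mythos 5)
Your proof is correct: the reduced state $\rho_{vw}^G$ is exactly the X-form matrix you write (the four diagonal counts follow from inclusion--exclusion, and $k$-uniformity kills all cross terms except the $n_{vw}$ pairs $(W\cup\{v\},W\cup\{w\})$), the spin-flip and eigenvalue computations check out, and the injection $W\mapsto W\cup\{v\}$ giving $n_{vw}\le\min(d_v-s_{vw},\,d_w-s_{vw})\le\sqrt{(d_v-s_{vw})(d_w-s_{vw})}$ is precisely what makes Wootters' formula collapse to $\max\{0,\tfrac{2}{|E|}(n_{vw}-\sqrt{s_{vw}\lambda})\}$ in both the entangled and separable regimes. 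The thesis itself gives no proof of this proposition---it defers to the companion paper \cite{burchardt2021entanglement}---but your route (explicit two-qubit reduction plus the mixed-state concurrence) is the same elementary calculation that reference describes, so there is nothing substantive to contrast.
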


Furthermore, I obtain the following result relating factorisation of the hypergraph with separability of the related state. The \textit{product} of two disjoint hypergraphs  $(V_1,E_1) \sqcup (V_2,E_2)$ is a hypergraph $(V_1 \cup V_2,E_1 \sqcup E_2  )$ with vertices being the union $V_1 \cup V_2$ of vertices sets and with edges of the following form:
\[
E_1 \sqcup E_2 : =\sum_{e_1 \in E_1 ,e_2 \in E_2} e_1 \cup e_2 .
\] 
	
\begin{proposition}
\label{August1}
The excitation-state $\ket{G}$ corresponding to a $k$-hypergraph is separable, $\ket{G} =\ket{G}_{V_1} \otimes \ket{G}_{V_2}$, iff it is a product hypergraph with respect to the division $V_1 | V_2 $.
\end{proposition}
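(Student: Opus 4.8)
The plan is to prove both implications by recording, for each edge $e \in E$, the pair of its restrictions $a(e) := e \cap V_1$ and $b(e) := e \cap V_2$ to the two parts of the division. Since $V_1$ and $V_2$ partition the set of parties, the computational basis vector $\ket{\psi_e}$ factorises as $\ket{\psi_e} = \ket{a(e)}_{V_1} \otimes \ket{b(e)}_{V_2}$, where $\ket{S}_{V_i}$ denotes the basis state carrying excitations exactly on the positions in $S \subseteq V_i$. The map $e \mapsto (a(e), b(e))$ is injective, since $e = a(e) \cup b(e)$ can be reconstructed from the pair; this injectivity is what lets me read the combinatorial data off the coefficients of $\ket{G}$.

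For the easy (``if'') direction I would assume $G = (V_1, E_1) \sqcup (V_2, E_2)$. Then every edge is $e_1 \cup e_2$ with $e_1 \in E_1$, $e_2 \in E_2$, all such combinations occur exactly once, and $|E| = |E_1|\,|E_2|$. Substituting into the definition of $\ket{G}$ and using the factorisation above, the double sum over $(e_1, e_2)$ separates into a tensor product, giving $\ket{G} = \ket{G}_{V_1} \otimes \ket{G}_{V_2}$, which is manifestly separable.

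For the (``only if'') direction I would start from an assumed factorisation $\ket{G} = \ket{\alpha}_{V_1} \otimes \ket{\beta}_{V_2}$ and expand both factors in the computational basis, $\ket{\alpha} = \sum_A \alpha_A \ket{A}_{V_1}$ and $\ket{\beta} = \sum_B \beta_B \ket{B}_{V_2}$. Comparing coefficients against $\ket{G} = \tfrac{1}{\sqrt{|E|}} \sum_{e} \ket{a(e)}_{V_1} \otimes \ket{b(e)}_{V_2}$ and invoking injectivity, one gets $\alpha_A \beta_B = \tfrac{1}{\sqrt{|E|}}$ whenever $A \cup B \in E$ and $\alpha_A \beta_B = 0$ otherwise. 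Defining $E_1 := \{A : \alpha_A \neq 0\}$ and $E_2 := \{B : \beta_B \neq 0\}$, the product structure follows: $A \cup B$ is an edge exactly when $A \in E_1$ and $B \in E_2$, so $E = E_1 \sqcup E_2$ and $G$ is the product hypergraph. Since $\alpha_A \beta_B$ is moreover constant on $E_1 \times E_2$, both $\alpha$ and $\beta$ are constant on their supports, recovering the sharper form $\ket{\alpha} \propto \ket{G}_{V_1}$ and $\ket{\beta} \propto \ket{G}_{V_2}$ asserted in the statement.

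The main obstacle is this second direction, and conceptually it reduces to the fact that the $0/1$ coefficient matrix $M_{A,B}$ (equal to $1$ iff $A \cup B \in E$) has rank one precisely when its support is a combinatorial rectangle $E_1 \times E_2$. The bookkeeping needed to check that every combination $A \cup B$ with $A \in E_1$, $B \in E_2$ genuinely occurs as an edge, and that no other subsets do, is the crux of the argument; $k$-uniformity then comes for free, forcing every $e_1 \in E_1$ to share a common size $k_1$ and every $e_2 \in E_2$ to have size $k - k_1$.
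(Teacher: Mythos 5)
Your proof is correct and complete. There is, in fact, nothing inside the dissertation to compare it against: the chapter states \cref{August1} without proof and defers all proofs to the joint paper \cite{burchardt2021entanglement}. Your argument is the natural one for this statement --- factorise each $\ket{\psi_e}$ as $\ket{e\cap V_1}_{V_1}\otimes\ket{e\cap V_2}_{V_2}$, note that $e\mapsto (e\cap V_1, e\cap V_2)$ is injective so the coefficient of $\ket{A}_{V_1}\otimes\ket{B}_{V_2}$ in $\ket{G}$ is exactly $1/\sqrt{|E|}$ times the indicator of $A\cup B\in E$, and then use that a rank-one $0/1$ pattern must be a combinatorial rectangle $E_1\times E_2$ --- and you correctly upgrade mere separability to the stated form $\ket{G}_{V_1}\otimes\ket{G}_{V_2}$ by observing that the amplitudes $\alpha_A\beta_B$ are constant on that rectangle, so each factor is constant on its support. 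One corner case deserves a sentence in a final write-up: if every edge of $E$ lies entirely inside $V_1$, your recipe produces $E_2=\{\emptyset\}$, and since the paper's convention forbids the empty hyperedge (see the caption to the telescope-state figure after \cref{Aarhus10}), the state $\ket{G}=\ket{G'}_{V_1}\otimes\ket{0\cdots 0}_{V_2}$ is then separable across $V_1|V_2$ without $G$ being a product hypergraph; the proposition's phrasing $\ket{G}_{V_1}\otimes\ket{G}_{V_2}$ implicitly excludes this degeneracy (the restricted excitation-state on $V_2$ is undefined when the restriction has no edges), so you should either exclude that case explicitly or state the empty-edge convention you adopt.
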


\begin{corollary}(Only for graphs) 
\label{cor5}
The excitation-state $\ket{G}$ is separable $\ket{G} =\ket{G}_{V_1} \otimes \ket{G}_{V_2}$ iff the relevant graph $G$ is complete bipartite graph, $G= K_{V_1 V_2}$. In fact, such a state forms a tensor product of two $|W\rangle$-like states:
\[
\ket{G} =\ket{W}_{V_1} \otimes \ket{W}_{V_2} .
\]
\end{corollary}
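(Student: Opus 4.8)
The plan is to reduce the claim to the general separability criterion of \cref{August1} and then determine what the product-hypergraph structure becomes when the hypergraph is $2$-uniform, i.e. an ordinary graph. By \cref{August1}, the excitation-state $\ket{G}$ factors as $\ket{G}_{V_1}\otimes\ket{G}_{V_2}$ across the bipartition $V_1|V_2$ exactly when $G$ is the product hypergraph $(V_1,E_1)\sqcup(V_2,E_2)$, meaning every edge of $G$ has the form $e_1\cup e_2$ with $e_1\in E_1$ and $e_2\in E_2$, and all such unions occur. So the whole corollary reduces to identifying which graphs arise as such products.

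The combinatorial heart of the argument is the following. Since $V_1$ and $V_2$ are disjoint, each product edge satisfies $|e_1\cup e_2|=|e_1|+|e_2|$; as $G$ is a graph, every edge has size $2$, so $|e_1|+|e_2|=2$ for all $e_1\in E_1$ and $e_2\in E_2$. I would stress here that the empty set is never an edge of a (hyper)graph -- the same convention already used for the telescope state in \cref{ex1p} -- which forces both $|e_1|=1$ and $|e_2|=1$ and rules out splitting a graph edge as $e_1\cup\emptyset$. Hence each factor is a $1$-uniform hypergraph whose edges are singletons, and since $G$ is assumed connected no vertex is isolated, so $E_1=\{\{v\}:v\in V_1\}$ and $E_2=\{\{w\}:w\in V_2\}$. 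The product edges are then precisely the pairs $\{v,w\}$ with $v\in V_1$ and $w\in V_2$, which is the edge set of the complete bipartite graph $K_{V_1 V_2}$. This classification is itself a biconditional, so it yields both directions of the corollary: $\ket{G}$ is separable across $V_1|V_2$ if and only if $G=K_{V_1 V_2}$.

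It then remains to identify the two tensor factors. By \cref{Aarhus10}, the excitation-state of a $1$-uniform hypergraph on $V_1$ with all singletons as edges is the equal superposition $\frac{1}{\sqrt{|V_1|}}\sum_{v\in V_1}\ket{\psi_{\{v\}}}$, where $\ket{\psi_{\{v\}}}$ places a single excitation $\ket{1}$ at position $v$ and $\ket{0}$ elsewhere; this is exactly the $\ket{W}$ state on $V_1$, and likewise on $V_2$. Finally I would match normalizations: $K_{V_1 V_2}$ has $|E|=|V_1|\,|V_2|$ edges, and the tensor factor $\ket{1_v}_{V_1}\ket{1_w}_{V_2}$ is precisely the basis state $\ket{\psi_{\{v,w\}}}$ attached to the edge $\{v,w\}$, so $\ket{G}=\ket{W}_{V_1}\otimes\ket{W}_{V_2}$.

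I expect the only real subtlety to be the nonempty-edge convention invoked in the second paragraph: without it a single graph edge could be mis-split as $e_1\cup\emptyset$, the product structure would no longer force bipartiteness, and the characterization would fail. Everything else is routine once \cref{August1} and the definition of the excitation-state are available.
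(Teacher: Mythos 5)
Your proof is correct and follows exactly the route the paper intends: the thesis presents \cref{cor5} as a direct consequence of the product-hypergraph separability criterion of \cref{August1} (with detailed verification deferred to the cited joint paper), and your argument --- reducing to \cref{August1}, showing that a product structure on a $2$-uniform, connected hypergraph with nonempty edges forces both factors to consist of all singletons and hence $G=K_{V_1 V_2}$, then identifying each $1$-uniform factor's excitation-state as a $\ket{W}$-like state with matching normalization --- is precisely that derivation. Your explicit appeal to the nonempty-edge convention and to the standing connectedness assumption correctly supplies the details the thesis leaves implicit.
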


\noindent
We might reformulate separability criteria in terms of symmetries of the Dicke-like states.

\begin{corollary}
\label{Aarhus11}
A Dicke-like state  $\ket{D_N^2}_H$ with two excitations is separable across the bipartition $N_1|N_2 $ iff its group of symmetry is equal to $\mathcal{S}_{N_1} \times \mathcal{S}_{N_2}$.
\end{corollary}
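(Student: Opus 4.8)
The plan is to translate the statement into the language of excitation-states and graph automorphisms developed in this section, and then to reduce everything to the automorphism group of a complete bipartite graph. First I would observe that the two-excitation Dicke-like state $\ket{D_N^2}_H$ is precisely the excitation-state $\ket{G}$ of the graph $G=(V,E)$ on $V=\{1,\dots,N\}$ whose edge set $E=\{\{\sigma(1),\sigma(2)\}:\sigma\in H\}$ is the $H$-orbit of the pair $\{1,2\}$ under the induced action on two-element subsets. Indeed, each summand $\sigma\big(\ket{1\,1\,0\cdots0}\big)$ in \cref{Dickelike} places its two excitations on the pair $\{\sigma(1),\sigma(2)\}$, which is exactly $\ket{\psi_e}$ for the edge $e=\{\sigma(1),\sigma(2)\}$. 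In particular $E$ is a \emph{single} $H$-orbit, and since $H$ is contained in the full symmetry group of the state, which by the Observation identifying the group of symmetry of an excitation-state with the automorphism group of its graph equals $\Aut(G)$, the graph $G$ is edge-transitive: $\Aut(G)$ acts transitively on $E$.

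With this dictionary the two implications become short. For the forward direction, \cref{cor5} gives that separability of $\ket{G}$ across $V_1\mid V_2$ with $|V_1|=N_1,\ |V_2|=N_2$ forces $G=K_{V_1V_2}$; it then remains to compute $\Aut(K_{V_1V_2})$. Any automorphism must preserve the two colour classes (the connected bipartite structure pins them down whenever both parts are nonempty), so it acts independently on $V_1$ and $V_2$, giving $\Aut(K_{V_1V_2})=\mathcal{S}_{N_1}\times\mathcal{S}_{N_2}$, which the Observation identifies with the group of symmetry. For the converse, suppose $\Aut(G)=\mathcal{S}_{N_1}\times\mathcal{S}_{N_2}$ for the partition $V=V_1\sqcup V_2$. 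The orbits of $\mathcal{S}_{N_1}\times\mathcal{S}_{N_2}$ on two-element subsets of $V$ are the pairs inside $V_1$, the pairs inside $V_2$, and the cross pairs; by edge-transitivity $E$ equals one of these three. The cross-pair option is $G=K_{V_1V_2}$, which is separable by \cref{cor5}, while the two one-sided options leave one half of the register in the state $\ket{0\cdots0}$ and hence again yield a product across $V_1\mid V_2$ (this matches \cref{August1} with one factor trivial).

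The hard part will be the automorphism-group bookkeeping, and in particular the degenerate case $N_1=N_2$. There the complete bipartite graph $K_{N_1,N_1}$ admits the extra part-swapping automorphism, so $\Aut(K_{N_1,N_1})=(\mathcal{S}_{N_1}\times\mathcal{S}_{N_1})\rtimes\mathbb{Z}_2$ rather than $\mathcal{S}_{N_1}\times\mathcal{S}_{N_2}$; the equivalence must therefore be read with $N_1\neq N_2$, or with the two tensor factors distinguished by the fixed bipartition. I would also check carefully that in the one-sided orbit cases the two colour classes stay distinguishable (the $V_1$-vertices having degree $N_1-1\ge1$ and the $V_2$-vertices degree $0$, since $N_1\ge2$ is forced by having two excitations), so that no spurious part-swap enlarges $\Aut(G)$ there. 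Verifying that these are the only subtleties — and that the passage from ``single $H$-orbit'' to genuine edge-transitivity of $\Aut(G)$ is airtight — is where I expect to spend most of the care.
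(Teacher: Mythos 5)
Your argument is correct and is essentially the paper's own route: the paper gives no separate proof of this corollary, presenting it as a direct reformulation of \cref{cor5} via the Observation that the symmetry group of an excitation-state equals the automorphism group of its graph --- exactly your dictionary between $\ket{D_N^2}_H$ and the orbit graph of $\{1,2\}$, followed by the computation $\Aut(K_{V_1 V_2})=\mathcal{S}_{N_1}\times\mathcal{S}_{N_2}$. Your explicit treatment of the corner cases (the part-swapping automorphism when $N_1=N_2$, and the one-sided orbits yielding $\ket{D^2_{N_1}}\otimes\ket{0\cdots 0}$) is a genuine sharpening that the paper's bare statement glosses over.
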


Recall that the graph is called \textit{regular} if the degree $d_v$ is constant for any vertex $v$. Moreover, it is called \textit{distance-1 regular graph}, if it satisfies an additional condition: If vertices $v$ and $w$ are connected, they are connected with the same number of hyperedges. In other words, the section $s_{vw} $ takes the same values depending on the distance between vertices:
	\begin{align*}
		s_{vw}=&\begin{cases}
			s \quad \; \; &\text{for} \quad d(v,w)=1,\\
			0 \quad &\text{for} \quad  d(v,w)>1 .
		\end{cases} 
	\end{align*}
From \cref{condition}, I derive an expression for the concurrence $C_{vw}$ between subsystems corresponding to vertices of a distance-1 regular graph $G$.
	
\begin{proposition}
\label{A}
For connected, and distance-1 regular graph $G$ the concurrence $C_{vw}$  between two nodes $v$ and $w$ reads, 
\begin{align}
\label{eq7}
C_{vw}(G) = \dfrac{2}{|E|} \cdot
\begin{cases}
\text{max} \{ 0,C \} \quad  &\text{for} \quad  d(v,w)=1 \\
n_{vw} \quad &\text{for} \quad d(v,w)=2  \\
0 \quad &\text{for} \quad  d(v,w)>2
\end{cases} 
\end{align}
where $C= n_{vw} - \sqrt{s (|E| -2d+s)}$, $d$ is the degree of each node, and $s$ is a section for each adjacent vertices.
\end{proposition}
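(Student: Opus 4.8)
The plan is to obtain the formula by direct specialization of the general two-party concurrence in \cref{condition}, inserting the parameter values forced by connectivity and distance-1 regularity, and then splitting according to the three distance regimes $d(v,w)=1$, $d(v,w)=2$, and $d(v,w)>2$.

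First I would record the simplifications valid for any connected, distance-1 regular graph. Regularity gives $d_v=d_w=d$ for every pair of vertices, so the auxiliary quantity in \cref{condition} becomes $\lambda=|E|-2d+s_{vw}$. Distance-1 regularity then pins down the section: $s_{vw}=s$ whenever $d(v,w)=1$ and $s_{vw}=0$ whenever $d(v,w)>1$. Because the prefactor $2/|E|$ is positive, it can be pulled outside the outer maximum, so the task reduces to evaluating $n_{vw}-\sqrt{s_{vw}\lambda}$ in each regime.

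For $d(v,w)=1$ we have $s_{vw}=s$ and $\lambda=|E|-2d+s$, so \cref{condition} returns $C_{vw}=\frac{2}{|E|}\max\{0,\,n_{vw}-\sqrt{s(|E|-2d+s)}\}$, which is exactly the first branch with $C=n_{vw}-\sqrt{s(|E|-2d+s)}$. For $d(v,w)=2$ we have $s_{vw}=0$, so the square-root term vanishes irrespective of $\lambda$, and since the joint neighborhood is nonnegative we obtain $C_{vw}=\frac{2}{|E|}\max\{0,n_{vw}\}=\frac{2}{|E|}\,n_{vw}$, the second branch.

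The only branch needing an argument beyond substitution is $d(v,w)>2$. Here $s_{vw}=0$ again, but I would additionally observe that the joint neighborhood vanishes: for a graph, $n_{vw}$ counts vertices adjacent to both $v$ and $w$, and any such common neighbor $u$ would yield a path from $v$ to $w$ through $u$ of length two, forcing $d(v,w)\le 2$ and contradicting $d(v,w)>2$. Hence $n_{vw}=0$ and \cref{condition} gives $C_{vw}=0$. I expect this observation — that a distance exceeding two forces an empty joint neighborhood — to be the only genuinely non-mechanical step; everything else is a direct reading of \cref{condition}.
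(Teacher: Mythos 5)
Your proposal is correct and follows essentially the same route as the paper: Proposition~\ref{A} is obtained by specializing the general formula of \cref{condition} to the distance-1 regular setting ($d_v=d_w=d$, and $s_{vw}=s$ or $0$ according to distance), the only non-mechanical step being the observation that $d(v,w)>2$ forces $n_{vw}=0$. That last observation also survives at the hypergraph level at which \cref{condition} is stated, since any set $W$ witnessing $n_{vw}>0$ yields a length-two path from $v$ to $w$ through any $u\in W$, so your argument covers the full intended scope of the proposition.
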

	
	\noindent
	Furthermore, an elementary calculations (presented in a detailed way in \cite{burchardt2021entanglement}) lead to the following result.

\begin{proposition}
\label{B}
Square of the generalized concurrence $C_{v |\text{rest}}^2$ between the particle $v$ and the rest of the system, expressed as a function of the number of edges $|E|$ and the number of vertices $N$, reads
\begin{equation}
\label{eq8}
C_{v |\text{rest}}^2 = 4 \; \dfrac{d_v \big( |E|-d_v\big) }{|E|^2} =4 \; \dfrac{k(N-k)}{N^2}.
\end{equation}
The second equation is valid under the assumption of the regularity of a $k$-hypergraph, i.e. the degrees of vertices are the same. 
\end{proposition}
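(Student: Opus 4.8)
The plan is to compute the generalized concurrence directly from its definition $C_{v|\text{rest}} = 2\sqrt{\det \rho_v}$, so that the entire problem reduces to determining the single-particle reduced density matrix $\rho_v$ of the excitation-state $\ket{G}$. Writing $\ket{G} = \frac{1}{\sqrt{|E|}} \sum_{e \in E} \ket{\psi_e}$ and tracing out every qubit except $v$, I would obtain
\[
\rho_v = \frac{1}{|E|} \sum_{e, e' \in E} \langle \text{env}_{e'} | \text{env}_e \rangle \, \ket{b_v^e}\bra{b_v^{e'}},
\]
where $b_v^e \in \{0,1\}$ records whether $v \in e$, and $\ket{\text{env}_e}$ denotes the product state of the remaining $N-1$ qubits determined by the edge $e$. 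The overlap $\langle \text{env}_{e'} | \text{env}_e \rangle$ equals $1$ exactly when $e \setminus \{v\} = e' \setminus \{v\}$ and vanishes otherwise.

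The key step, and the place where $k$-uniformity is essential, is to argue that $\rho_v$ is in fact diagonal. An off-diagonal term arises only from a pair of distinct edges $e \neq e'$ with $e \setminus \{v\} = e' \setminus \{v\}$ but differing occupancy at $v$; this forces one edge to contain $v$ and the other not, so that the two edges have $k$ and $k-1$ elements respectively. Since every edge of a $k$-uniform hypergraph has exactly $k$ elements, no such pair exists, and all coherences vanish. This cancellation is precisely the one nontrivial point of the argument, and I would flag it as the conceptual crux: it is exactly what would fail for a non-uniform hypergraph, in which case the concurrence formula would no longer reduce to a count of degrees.

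With $\rho_v$ diagonal, the diagonal entries simply count edges: there are $d_v$ edges through $v$, contributing weight to $\ket{1}\bra{1}$, and $|E| - d_v$ edges avoiding $v$, contributing to $\ket{0}\bra{0}$, each normalized by $1/|E|$. Hence $\det \rho_v = d_v(|E|-d_v)/|E|^2$, and $C_{v|\text{rest}}^2 = 4 \det \rho_v$ immediately yields the first claimed equality.

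For the second equality I would invoke a double-counting identity: summing degrees over all vertices counts each edge $k$ times, so $\sum_{v} d_v = k|E|$. Under the regularity hypothesis all degrees coincide, $d_v \equiv d = k|E|/N$. Substituting this value into the first formula and simplifying the resulting rational expression in $|E|$, $k$, and $N$ collapses it to $4k(N-k)/N^2$, completing the proof.
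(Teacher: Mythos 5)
Your proof is correct: the key observation that $k$-uniformity forces $\rho_v$ to be diagonal (since two distinct edges with $e \setminus \{v\} = e' \setminus \{v\}$ would have to have sizes $k$ and $k-1$), the counting of the diagonal entries as $d_v/|E|$ and $(|E|-d_v)/|E|$, and the double-counting identity $\sum_v d_v = k|E|$ under regularity yield exactly the claimed formula. The thesis itself does not print a proof of this proposition but defers it to the joint paper \cite{burchardt2021entanglement} as an ``elementary calculation,'' and your direct computation of the single-qubit reduced density matrix is precisely that calculation, correctly identifying uniformity as the hypothesis that kills the coherences.
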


\section{Examples of Dicke-like states}
\label{sec3}
	
In \cref{hyper,sec3a}, I presented two similar, but different, constructions of genuinely entangled states: excitation-states $\ket{G}$, related to a graph $G$, and Dicke-like states, determined by  a subgroup. A Dicke-like state can be considered as a special case of the excitation-states, which exhibits a certain symmetry structure. We combine both representations and introduce examples of excitation-states related to the graphs given by highly symmetric objects, such as \textit{regular polygons},  \textit{Platonic solids}, and \textit{regular plane tilings}. Such symmetric objects were already used in various contexts concerning multipartite entanglement including: quantification of entanglement of permutation-symmetric states \cite{Markham_2011,Martin15}, identification of quantumness of a state \cite{Goldberg_2020}, search for the maximally entangled symmetric state \cite{Aulbach_2010}, or general geometrical quantification of entanglement \cite{PhysRevA.85.032314,Giraud_2010} especially among states with imposed symmetries on the roots of Majorana polynomial \cite{Martin10,Martin15}.

Results from \cref{cccc}, allow us to discuss their entanglement properties. We computed the concurrence in two-partite subsystems for all presented examples. For excitation-states, the entanglement shared between a particular node $v$ and the rest of the system depends only on the number of parties $N$, as it was shown in \cref{{B}}. 
Hence, we may define the \textit{entanglement ratio} $\Gamma_v$ for the node $v$ as:
\begin{equation}
	\Gamma_v :=\dfrac{\sum_{i \neq v} C_{v|i}^2}{ C_{v|\text{rest}}^2} \in \left[ 0,1\right] , 
\end{equation}
which measures the ratio of entanglement shared between particular parties in a two-partite way in comparison to the amount of entanglement shared in the multi-partite way. Note that concurrence satisfies monogamy inequality (\ref{eq5}), hence the parameter  $\Gamma_v$ takes values in the range $[0,1]$. 

\begin{example} 
\label{eeerrr}
\textbf{Dicke states.} 
The Dicke states $\ket{D_N^k}$ are excitation-states for complete $k$-regular hypergraphs on $N$ vertices. Their group of symmetry is the entire permutation group $\mathcal{S}_N$. By \cref{eq7,eq8}, and elementary calculation, we show that the concurrence in two-partite subsystems reads,
\begin{align*}
C_{v|w} &=2 {{N}\choose{k}}^{-1}  \Bigg( {{N-2}\choose{k-1}}- \sqrt{{{N-2}\choose{k}}{{N-2}\choose{k-2}}} \Bigg) .
\end{align*} 
The entanglement ratio $\Gamma_v$ for a given node $v$ is equal to
\[
\Gamma_v  (\text{D}^{k}_N) 
=\frac{N-1}{
{{N-1}\choose{k}} 
{{N-1}\choose{k-1}}  }
\Bigg( {{N-2}\choose{k-1}}- \sqrt{{{N-2}\choose{k}}{{N-2}\choose{k-2}}} \Bigg)^2
.
\]
By \cref{A,B}, for the Dicke states $\ket{\text{D}^{k}_N}$ the entanglement ratio at infinite dimension is nonzero,
\begin{equation*}
\lim_{N\rightarrow\infty} \Gamma_v (\text{D}^{k}_N) = 2k-1- \sqrt{k(k-1)}.
\end{equation*}
In particular, for states related to graphs, $k=1$, we find
\begin{equation*}
\lim_{N\rightarrow\infty} \Gamma_v(\text{D}^{1}_N) = 3 - 2\sqrt{2}.
\end{equation*}
\end{example}

	\begin{figure}[ht!]
\begin{center}
\includegraphics[scale=1]{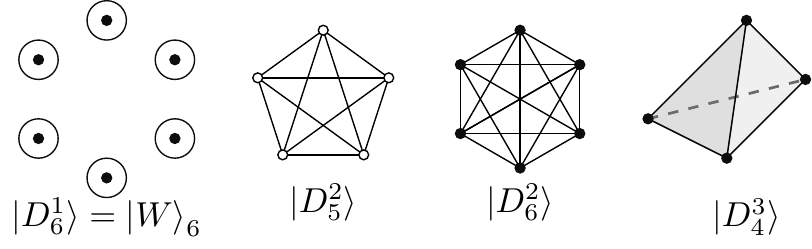}
\caption{Dicke state $\ket{\text{D}_N^k}$ are associated with a complete $k$-regular hypergraph with $N$ vertices. The graph and the corresponding state is completely symmetric, so the labels of the nodes can be omitted. Dicke states $\ket{\text{D}_N^2}$ with two excitations are related to complete graphs on $N$ vertices. A tetrahedron represents the Dicke states $\ket{\text{D}_4^1}$, $\ket{\text{D}_4^2}$ and $\ket{\text{D}_4^3}$, depending on whether we consider  the vertices, the edges or the faces of the tetrahedron.}
\label{ex1}
\end{center}
\end{figure}

\begin{example} 
\textbf{Cyclic states.} 
The simplest non-trivial subgroup of the permutation group $\mathcal{S}_N$ is a cyclic group $\mathcal{C}_N$. In general, the states $\ket{D_N^k}_{\mathcal{C}_N}$ are \textit{translationally invariant}, i.e. invariant under a cyclic permutation of qubits  \cite{TranslationallyInvariantStates,watson2020complexity}.  Such a family of states is widely considered in several 1D models in condensed-matter physics, like XY model or the  Heisenberg model. Note that such states $\ket{D_N^2}_{\mathcal{C}_N}$ with the number of excitations is equal to $k=2$ can be constructed as an excitation-state. Indeed, consider the cyclic graph on $N$ vertices. The relevant excitation-state, denoted by $\ket{C_N}$, matches perfectly $\ket{D_N^2}_{\mathcal{C}_N}$. Note that the group of automorphisms of a cyclic graph $C_N$ is a \textit{Dihedral} group $\mathcal{D}_{2N}$, where the lower index stands for the order of the group $|\mathcal{D}_{2N} |=2N$. Hence we conclude that the group of symmetries of $\ket{D_N^2}_{\mathcal{C}_N} = \ket{C_N}$ is a dihedral group $\mathcal{D}_{2N}$. Systems with dihedral $\mathcal{D}_{2N}$ symmetries were consider in a context of correlation theory of the chemical bond \cite{ding2020concept}. Molecules invariant under a rotation and inversion were investigated ibidem. The concurrence in two-partite subsystems takes the following value:
\begin{align*}
C_{v|w} ({C_N}) &=
\begin{cases}
2/N \quad \; \; &\text{for} \quad d(v,w)=2\\
0 \quad &\text{otherwise} 
\end{cases} ,
\end{align*} 
with a small correction for $N=4$, where $C^2_{v|w} =1$ for distance two vertices. The entanglement ratio $\Gamma_v$ for a given node reads
	\[
	\Gamma_v (C_N) =\dfrac{1}{N-2},
	\]
with the same correction in the case  $N=4$, for which $\Gamma_v=1$.
\end{example}

\begin{figure}[ht!]
\begin{center}
\includegraphics[scale=1]{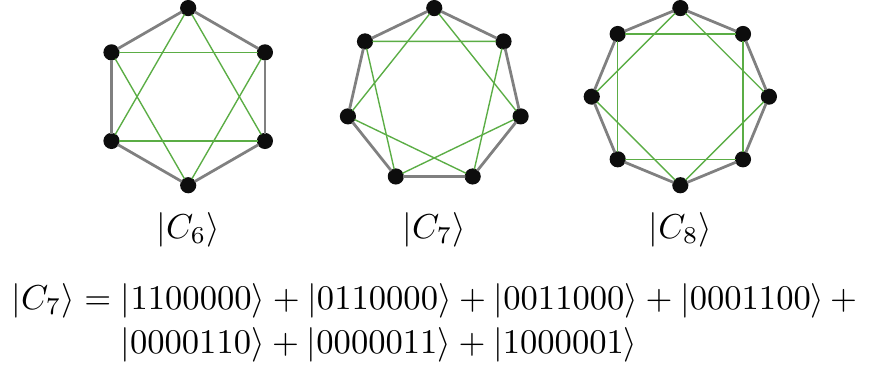}
\caption{Three cycle graphs: $C_6$, $C_7$ and $C_8$ indicated by gray edges. The form of the state $\ket{C_7}$, corresponding to $C_7$, is listed below. Entanglement  on two-parties subsystem is vanishing except of parties of distance two. Non-vanishing entanglement is indicated by green lines connecting relevant parties.}
\label{CyclicStates}
\end{center}
\end{figure}

\begin{example}
\label{Platonic states}
\textbf{Platonic states.} 
Platonic solids were used to construct quantum states in various contexts \cite{Giraud_2010,PhysRevA.85.032314,Goldberg_2020}. Here, with any  Platonic solid we associate two quantum excitation-states, by looking at the edges and faces of related solid. For instance, the tetrahedron is linked to the Dicke state $\ket{\text{D}_4^2}$ (by reading edges) and $\ket{\text{D}_4^3}$ (by reading faces), see \cref{ex1}. We denote such state by $\ket{P^e}_N$ and $\ket{P^f}_N$ respectively. An elementary argument from the representation theory shows that group of symmetries of the Platonic solid, determines the symmetry of related states. In such a way, we constructed states with symmetries $\mathcal{S}_4$, $\mathcal{S}_4 \times \mathcal{S}_2 $, and $\mathcal{A}_5 \times \mathcal{S}_2$ respectively. In particular, in the case of dodecahedron the alternating symmetry $A_5$ is observed, which is not easy to achieve. The concurrence  $C_{12}$ and the entanglement ratio $\Gamma_v$ for two-qubit systems obtained by partial trace of $(N-2)$ subsystems of distance-two of Platonic states determined by the edges of the solid is compared below.
\begin{table}[ht]
\centering 
\begin{tabular}{l c c c c c} 
\hline
			& $\quad \; \ket{P^e}_4\;$ & $\;\ket{P^e}_6$ & $\;\ket{P^e}_{8}$ &$\;\ket{P^e}_{12}$&$\;\ket{P^e}_{20}$ \\ [0.5ex] 
			\hline 
			Concurrence $C_{12}  $& 0.333 & 0.667 & 0.333 & 0.133 & 0.067 \\ [1ex]
			Ent. Ratio $\Gamma_v$ & 0.333& 0.500 & 0.444 & 0.160 & 0.074\\
			\hline 
\end{tabular}
\end{table}
\end{example}

\begin{figure}[ht!]
\includegraphics[scale=0.7]{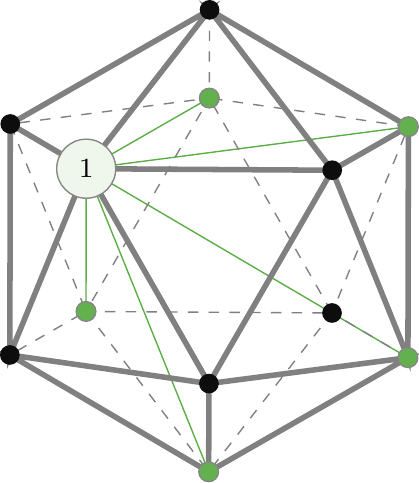}
\hspace{2cm}
\includegraphics[scale=0.8]{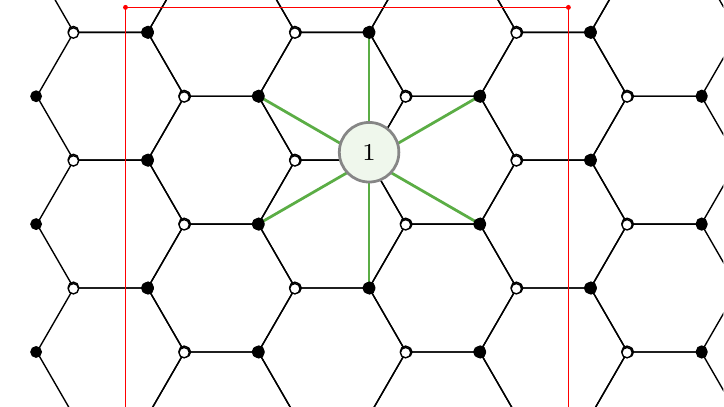}
\caption{On the left, Icosahedron, one of five Platonic solids. Its edges (indicated by gray) determine the excitation state $\ket{P^e}_{20}$ on 20 qubits. On the right, hexagonal tiling of a plane. The red rectangle indicates a finite region for which the boundary left-right and up-down is glued together. Such a region determines the excitation state $\ket{H}_{20}$ on 20 qubits. On both pictures, for a chosen node $1$, the entanglement in the two-party subsystem $C_{1 v}$ is vanishing except for subsystems corresponding to distance-two nodes, indicated by green lines and green nodes.}
\label{PlatonicStates}
\end{figure}

\begin{example}
\textbf{Regular $m$-polytope families.} 
There are three natural generalizations of Platonic solids in higher dimensions: the self-dual $m$-simplices and $m$-hypercubes with dual $m$-orthoplexes. Each of these polytopes provides a set of $k$-uniform hypergraphs for $1 \leq k \leq m-1$ defined by the set of their $k$-dimensional hyperedges. We denote the states corresponding to the $k$-dimensional hyperedges of $m$-simplex as $\ket{\text{S}^k_m}$ and analogously $\ket{\text{B}^k_{2^m}}$ for $m$-hypercubes and $\ket{\text{O}^k_{2m}}$ for $m$-orthoplexes, where the lower index stands for the number of subsystems $N$ in the state, note that $N=m$, $2^m$, $2m$ for $m$-simplex, $m$-hypercube, and $m$-orthoplexe respectively. The states related to the $m$-simplices are equivalent to the Dicke states $\ket{\text{S}^k_m} = \ket{D^k_m}$ and hence exhibit the full symmetry. On the other hand, the symmetry group of $\ket{\text{B}^k_{2^m}}$ and $\ket{\text{O}^k_{2m}}$ is given by the non-trivial hyperoctahedral group $B_m$. The states $\ket{\text{O}^{m-1}_{2m}}$ are separable with respect to the partition $12|34|...|(2m-1)2m$, with each pair of vertices being 2-distance or, in other words, lying on a common diagonal of the related $m$-orthoplex. Furthermore, one can easily calculate the concurrence $C_{vw}$ for the states $\ket{\text{O}^{2}_{2m}}$ connected to the 2-edges of the $m$-orthoplex,
\begin{equation}
		C_{vw} (\text{O}^{2}_{2m}) = \begin{cases}
			\text{max}\{0,\,\frac{\sqrt{2 m^2-4 m+3}-2 m+4}{2 m-2 m^2}\}   &  
			\!\! \text{for} \  d(v,w)=1 \\
			2/m    &\!\! \text{for} \  d(v,w)=2\\
			0    &\!\! \text{for} \  d(v,w)>2,
		\end{cases}
\end{equation}
and similarly the entanglement ratio
	\begin{equation}
		\Gamma_v (\text{O}^{2}_{2m}) = \begin{cases}
			1/(m-1)  \\ 
			 \frac{6 m^2-4 m\left(\sqrt{2 m^2-6 m+5}+5\right) +8 \sqrt{2 m^2-6 m+5}+19}{2 (m-1)^2},  
		\end{cases}
	\end{equation}
where the first case holds for $m \leq 3$ and the second one for $m>3$. With \cref{A,B} at hand, one may show that the entanglement ratio for the states $\ket{\text{O}^{2}_{2m}}$ converges to a nonzero value,
		\begin{equation}
			\lim_{n\rightarrow\infty} \Gamma_v (\text{O}^{2}_{2m}) = 3 - 2\sqrt{2}\approx 0.1716.
		\end{equation}
The situation is simpler for the hypercubic states $\ket{\text{B}^{2}_{2^m}}$, where the concurrence occurs only between the distance-2 vertices, $C_{12} (\text{B}^{2}_{2^m})= 2^{3-m}/m$, while the entanglement ratio reads 
\[
\Gamma_v (\text{B}^{2}_{2^m}) = \frac{4 (m-1)}{\left(2^m-2\right) m},
\]
which asymptotically tends to zero.
\end{example}

\begin{example} 
\textbf{Plane regular tilings.} 
Regular and semi-regular tessellation of the plane are yet another highly symmetrical objects. Regular tilings correspond to Dicke-like states. The group of symmetry is given by a relevant wallpaper group, restricted to the chosen size of the tiling. \cref{PlatonicStates} presents the hexagonal tiling of the plane and relevant excitation-state $\ket{H_N}$. For such a tiling, the concurrence in bipartite subsystems takes the following value:
\begin{align*}
		C_{v|w}({{H_N}}) &=
		\begin{cases}
			2/N \quad \; \; &\text{for} \quad d(v,w)=2\\
			0 \quad &\text{otherwise} 
		\end{cases} ,
\end{align*} 
for the tiling restricted to $N$ nodes (minimal size of a cut is $3\times 3$). Furthermore, the entanglement ratio $\Gamma_v$ for a given node $v$ reads,
\[
\Gamma_v ({H_N}) =\dfrac{4}{3}\dfrac{1}{N-2}.
\]
Although the value of the two-partite concurrence $C_{v|w}$ is the same as in the cyclic case, the parameter $\Gamma_v$ takes a larger value. 
\end{example}

\begin{figure}[ht!]
\includegraphics[scale=1]{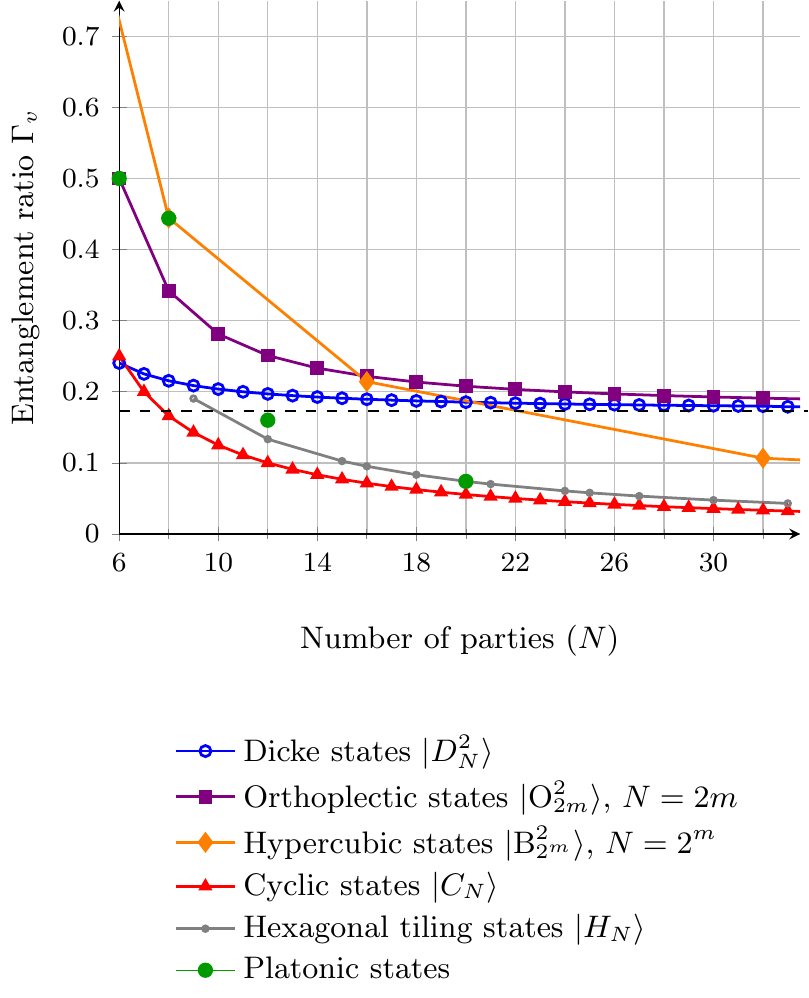}			
\caption{Distribution of entanglement for Dicke-like states with $k=2$ excitations is compared. The distribution of entanglement depends on the density of related graph. For states related to dense graphs, $\ket{D^2_N}$ and $\ket{\text{O}^2_{2m}}$, the ratio $\Gamma_v$ tends to a nonzero value $3 - 2\sqrt{2}\approx 0.17$ (indicated by the horizontal dashed line). For families of states in which the local degree does not scale with the size of a graph ($\ket{\text{B}^2_{2^m}}$, $\ket{C_N}$, $\ket{H_N}$), the complexity of $\Gamma_v$ is $\mathcal{O} (1/N)$. 
}
		\label{Comparison}
	\end{figure}

\section{Quantum circuits} 
	\label{Circuit}
	
We present below a quantum circuit efficiently transforming a  separable tensor product state into the excitations-state, related to any graph $G$. Presented construction was inspired by similar circuits for Dicke states recently developed in \cite{DickeCircuits,brtschi2019deterministic}. As we showed in \cite{burchardt2021entanglement}, our scheme uses between $\sim 4 |V|$ and $\sim 10 |E|$ CNOT gates, depending on the structure of a graph.

\textbf{First step}. Choose arbitrary vertex $v$, and consider all adjacent vertices $v_1 ,\ldots , v_d$, where $d$ is the degree of $v$. Suppose that each $v_i$ is related to the $i$th particle, while $v$ is related to $d+1$ particle. For $i=1, \ldots,d-2$, we consecutively apply the following three-qubit gates $U_{ \{i, d, d+1 \}}^{(1)}$ on parties $i,d,d+1$:
	\begin{align}
		\label{al0}
		\frac{1}{\sqrt{i}} \big(\ket{101} + \sqrt{i-1} \ket{011} \big) &\longmapsto  \ket{011} \\
		\ket{110} &\longmapsto \ket{110} \nonumber \\
		\ket{100} &\longmapsto \ket{100} \nonumber \\
		\ket{010} &\longmapsto \ket{010} \nonumber \\
		\ket{001} &\longmapsto \ket{001} \nonumber \\
		\ket{000} &\longmapsto \ket{000} \nonumber ,
	\end{align}
where the action on the remaining subspace is arbitrary. Application of $U_{ \{i,d, d+1 \}}^{(1)}$ operation is relevant to the graphical operation of deleting an edge $e= \{i,d+1 \}$, see \cref{circuit}. 
	
	Secondly, we apply the following three-qubit operator $U_{ \{d-1, d, d+1 \}}^{(2)}$ on parties $d-1,d,d+1$
	\begin{align}
		\label{al1}
\frac{1}{\sqrt{d-1}}\big(		\ket{101} + \sqrt{d-2} \ket{001} \big) &\longmapsto  \ket{001} \\
		\ket{011} &\longmapsto \ket{011} \nonumber \\
		\ket{100} &\longmapsto \ket{100} \nonumber \\
		\ket{010} &\longmapsto \ket{010} \nonumber \\
		\ket{000} &\longmapsto \ket{000} \nonumber ,
	\end{align}
where the action on the remaining subspace is arbitrary.
This is related to the graphical operation of deleting an edge $e= \{d-1,d+1 \}$, see \cref{circuit}.

Finally we apply the following two-qubit operator $U_{ \{d, d+1 \}}^{(3)}$ on parties $d, d+1$:
\begin{align}
\label{al2}
\frac{1}{\sqrt{d}}\big(\ket{11} + \sqrt{d-1} \ket{01} \big) &\longmapsto  \ket{01} \\
\ket{00} &\longmapsto \ket{00} \nonumber \\
\ket{10} &\longmapsto \ket{10}  \nonumber  
\end{align}
where the action on the remaining subspace is uniquely determined. This is relevant to deleting the only remaining edge: $e= \{d,d+1 \}$, see \cref{circuit}. There is a simple logic behind those three operations. We combine all terms having excitations on position $d+1$ into a single term with an excitation on this position. After applying these operations, the state takes the form of the superposition:
\begin{equation*}
\dfrac{1}{\sqrt{|E|}}
\Big( \sqrt{d} \ket{1}_{v} \otimes \ket{0\ldots 0}+  \ket{0}_{v} \otimes  
\sum_{e \in E  \setminus v} \ket{\psi_e}
\Big)  ,
\end{equation*}
where $E  \setminus v$ denotes a set of edges which do not contain vertex $v$.

\textbf{Second and the next steps}. Consider the graph $G' =(V \setminus v, E\setminus v)$ with deleted vertex $v$ and all adjacent edges. We repeat the procedure from the first step for an arbitrary vertex $w$ from the graph $G'$. We repeat this procedure further, for $G'' =(V \setminus v,w; E\setminus v,w)$, until we delete $N-1$ vertices, which fully separates the initial graph.
	
\textbf{Final step}. After applying presented procedure iteratively $N-1$ times, the state takes the form:
\begin{equation*}
		\dfrac{1}{\sqrt{|E|}} \sum_{v\in V} \sqrt{d_{v}'} \ket{1}_{v} \otimes \ket{0\ldots 0}_{v^c} ,
\end{equation*}
where $d_v'$ denotes the degree of the vertex $v$ of the graph reduced according to the procedure. The state above is similar to the state 
\[
\ket{\text{W}_N}:=\frac{1}{\sqrt{N}}
\big( \ket{10\cdots 0}+\cdots + \ket{0\cdots 01}\big)
. 
\]
The separation of $\ket{\text{W}_N}$ state is a well-known procedure \cite{Wgenerating}, and might be obtained by performing two-qubit gates $U_{1i}^{(4)}$:
	\begin{align*}
		\scriptscriptstyle
		\sqrt{d_{v_1}'+ \ldots +d_{v_{i-1}}' } 
		\textstyle
		\ket{10} +
		\scriptscriptstyle
		\sqrt{d_{v_i}'} 
		\textstyle
		\ket{01} &\longmapsto 
		\scriptscriptstyle
		\sqrt{d_{v_1}'+ \ldots + d_{v_{i}}' } 
		\textstyle
		\ket{10} \\
		\ket{00} &\longmapsto \ket{00} \nonumber 
	\end{align*}
	on particles $1$ and $i$. We refer to \cite{burchardt2021entanglement} for the precise estimation of the computational cost for the presented procedure.
	
\begin{figure}[ht!]
\includegraphics[scale=1]{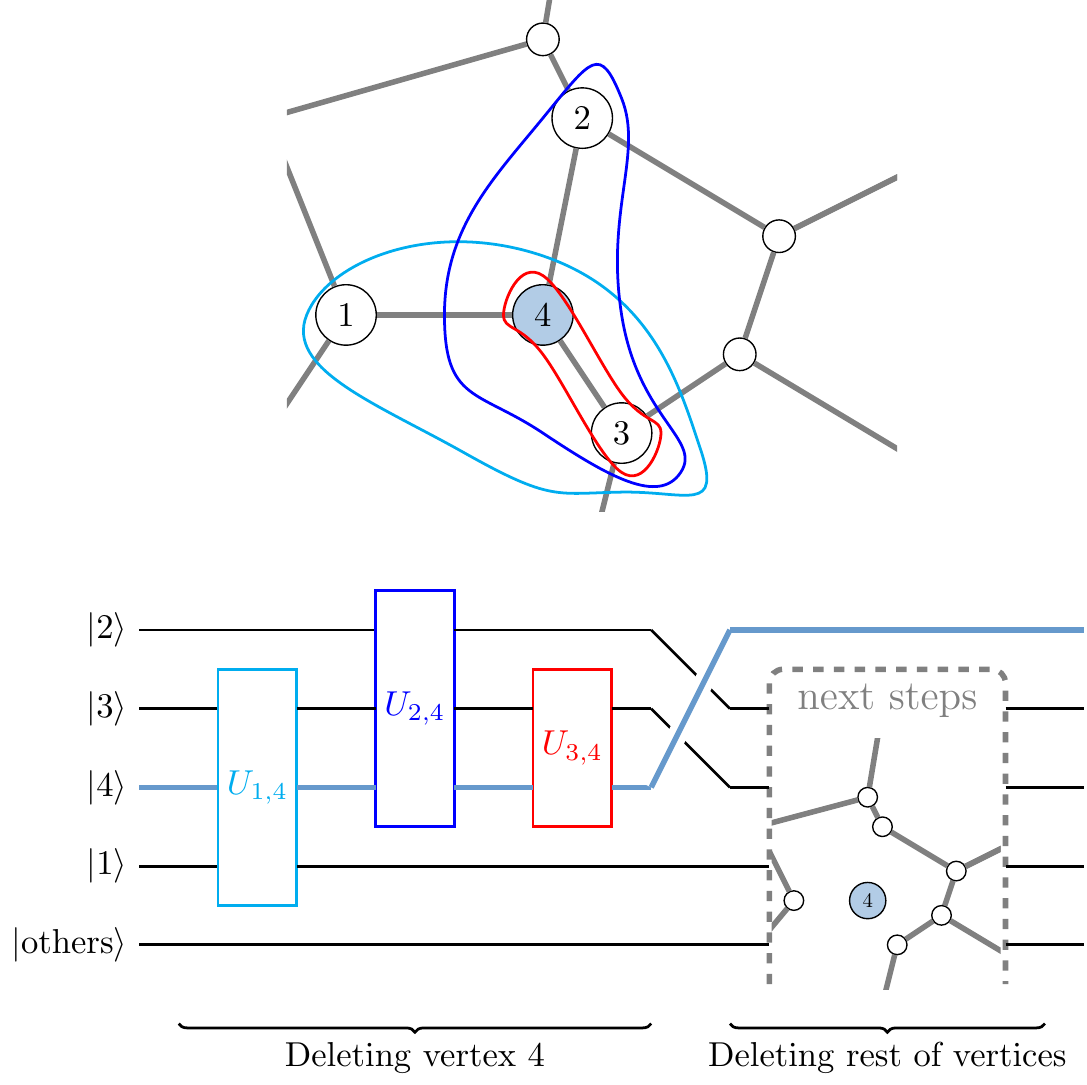}			
		\caption{Procedure sketched for vertex $4$ of degree $d=3$. Three operations described in the first step of the procedure: firstly, we apply unitary three-qubit operations $U_{ \{1,3,4 \}}^{(1)}$ on parties $1,3,4$, secondly $U_{ \{2,3,4 \}}^{(2)}$ on parties $2,3,4$, and finally the two-qubit gate $U_{ \{3,4 \}}^{(3)}$ on parties $3,4$. Resulting state is separable with respect to the $4$-th particle. This is graphically represented by deleting all adjacent edges. In the consecutive steps, we apply this procedure iteratively to the remaining vertices.}
\label{circuit}
\end{figure}
	
\begin{figure}[ht!]
\centering
\includegraphics[scale=0.79]{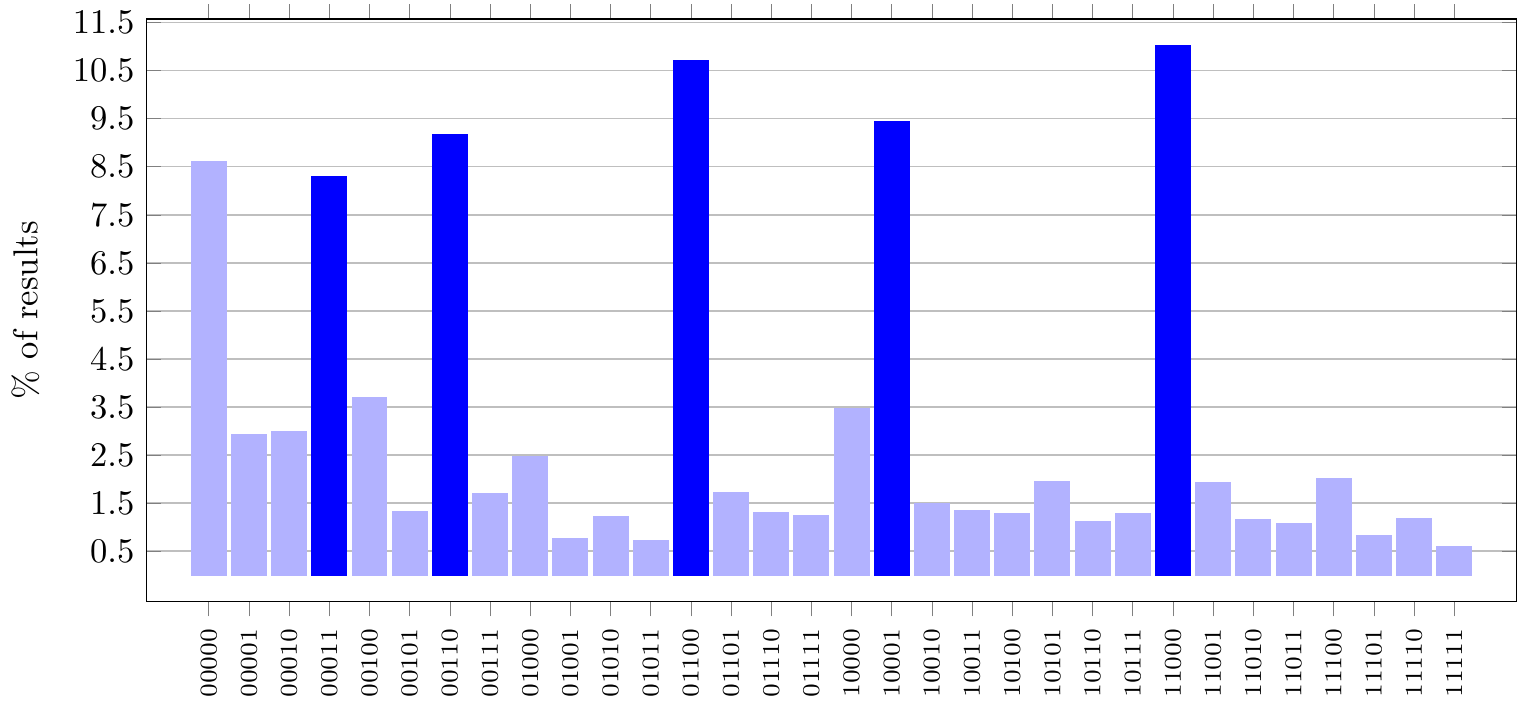}		
		\caption{Mean results obtained over total 740 000 samples acquired on IBM Vigo, Athens and Santiago Computers. Prominent peaks at all cyclic permutations of $\ket{00011}$ are plotted in dark, with an additional significant peak at $\ket{00000}$. Due to effects of decoherence we observe a background of all remaining states represented by light bars. }
\label{cyclic_hist}
\end{figure}

We simulated the simplest nontrivial cyclic state $\ket{C_5}$, in order to demonstrate the viability of the provided construction. We refer to \cite{burchardt2021entanglement} for a detailed description. The overall circuit requires $22$ CNOT operations, where the topology of the quantum computer is not taken into account. Such a circuit can be realised on the state-of-the-art 5-qubit quantum computers provided by IBM -- linear-topology Santiago and Athens with quantum volumes (QV) of 32 and Vigo with QV of 16 with T topology. In total we used more than 740 000 samples over all three computers, which gives distribution with significant values for all expected computational states proceeding from cyclic permutations of the state $\ket{00011}$ with the probability $0.487$  of finding the system in one of them, see \cref{cyclic_hist}. Further analises and the comparison of obtained results to a model of noise might be find in \cite{burchardt2021entanglement}.
	
\section{Hamiltonians}
\label{Hamiltonians}
	
The Dicke states were introduced as a ground states of the Hamiltonians with a well--defined number of excitations. The excitation-states can be also considered as ground states of  analogous Hamiltonians, with the same property of a well defined number of excitations. The single-mode Dicke Hamiltonian (known also as Tavis-Cummings model or generalized Jaynes-Cummings model) has the following form \cite{PhysRevE.67.066203,DickeModelRevisited}:
	\[
	H = \omega_0 \; J_z+ 
	\omega \; a^\dagger a +
	\dfrac{\lambda}{\sqrt{N}} \; \underbrace{(a^\dagger +a)(J_{-} + J_{+})}_{\text{interaction}}\; ,
	\]
	where $J_\alpha$ are the collective operators:
	\[
	J_z \equiv \sum_{i=1}^N \sigma_z^{i} , \quad 
	J_\pm \equiv \sum_{i=1}^N \sigma_\pm^{i} . 
	\]
By fixing interaction,  $\lambda =0$, the coupling term vanishes. Thus eigenstates have the simple tensor product form with a factor representing the Fock states of the field and the other factor as an eigenstate of the collective angular momentum operator $J_z$. Operator $J_z$  has the following degenerated eigenvalues: $- N/2, - N/2+1 ,\ldots ,N/2$. The further partition of its eigenspaces might be performed by the square of the total angular momentum operator $J^2$, which can be written as $J^2 \equiv J_{+} J_{-} + J_z (J_z+ \mathbb{I} )$, in terms of the raising and lowering operators. The Dicke states $\ket{j,m}_N$ (in the more classical notation) are eigenstates of both operators $J_z$ and $J^2$:
	\begin{align*}
		J_z \ket{j,m}_N &= m \ket{j,m}_N,\\
		J^2 \ket{j,m}_N &= j(j+1) \ket{j,m}_N 
	\end{align*}
with $m =-j, -j+1 ,\ldots ,j$ corresponding to the number of excitations, while $j =1/2 ,3/2 ,\ldots ,N/2$ (for $N$ odd) and $j =0,1,\ldots , N/2$ (for $N$ even) related to the cooperation number. In general, atomic configurations for $N>2$ contain entanglement \cite{Wang_2002} and are degenerated for $j < N/2$ \cite{DickeModelRevisited}. By taking the maximal value $j=N/2$, we obtain fully symmetric Dicke states $\ket{D_N^k}$,
	\[
	\ket{D_N^k} := \ket{N/2 ,k +N/2}_N .
	\]

Independently of the number of excitations $k$, the Dicke states $\ket{D_N^k}$ are eigenstates of the operator $J_{+} J_{-}$. Such an operator can be rewritten as a fully symmetric, two-body Hamiltonian, 
	\[
	H :=J_{+} J_{-} = \sum_{i=1}^N \sigma_{-}^{(i)} \sigma_{+}^{(i)}  +\sum_{i \neq j}^N \sigma_{-}^{(i)} \sigma_{+}^{(j)} ,
	\]
in which any two parties might exchange excitations. Note that the Dicke states $\ket{D_N^k}$ are eigenvectors corresponding to the non-degenerated maximal eigenvalue $k (N+1-k)$. In various physical scenarios, the particles are not distributed symmetrically as some interactions are more likely to occur. For instance, the two-body Hamiltonian of the system of $N$ particles with spin-$1/2$ considered in \cite{Huber_2016} generalizes the  Heisenberg model. 
	
We shall see that the excitation-states, and hence Dicke-like states, are ground states for Hamiltonians with an interaction term  analogous to $J_{+} J_{-}$. As in the Dicke case they have a well-defined excitation number. We restrict ourselves to the subspace of the operator $J_z$ relevant to two-excitations, i.e.  we set $k=2$. For any graph $G$, consider the four-body Hamiltonian, where any pair of edges might exchange excitations:
\begin{equation}
H_G :=  J_{+}^G J_{-}^G ,
\label{DH2}
\quad
\text{where}
\quad
	J_{\pm}^G \equiv \sum_{v v' \in E} \sigma_{\pm}^{(v)} \sigma_{\pm}^{(v')} \; .
\end{equation}
Notice the similarity between operators $J_{+}^G, J_{-}^G $ and $J_{+} J_{-}$. The maximal eigenvalue of the Hamiltonian $H_G$ is equal to the squared number of edges $|E|^2$. This energy level is non-degenerated and the corresponding eigenstate is exactly given by  the excitation-state $\ket{G}$. Indeed, the Hamiltonian $H_G$ is simply a sum of $|E|^2$ projections, which justifies the bound of the energy. The state, $\ket{G}$ saturates this bound, contrary to any other state with specified number of excitations, $k=2$.

Furthermore, the excitation-state $\ket{G}$ might be seen as a ground state of the three-body interaction Hamiltonian,
	\begin{equation}
	\label{3bodyH}
	H_G = \sum_{v \in V} \sum_{ \substack{ vv' \in E \\ v'v'' \in E}} \sigma_{-}^{(v)} \big( \sigma_{-}^{} \sigma_{+}^{} \big)^{(v')}  \sigma_{+}^{(v'')} .
	\end{equation}
Above Hamiltonian describes situation, in which the pairs of excitations might be exchanged only between connected edges. The maximal eigenvalue is equal to ${d \choose 2} |V|$, where $d$ denotes the degree of a vertex. In analogy to the Hamiltonian (\ref{DH2}) the maximal eigenvalue is not degenerated and the state $\ket{G}$ forms the relevant eigenvector. Observe that the Dicke state $\ket{D^2_N}$ is a ground state of the Hamiltonian associated with the fully connected graph $G$ on $N$ vertices.
	
It is worth mentioning that $\sigma_-\sigma_+ = (\mathbb{I} - \sigma_z)/2 = |1\rangle\langle1|$ and the Hamiltonian can be written in an alternative way:
\begin{equation}
	\label{3bodyHbis}
	H_G = \sum_{v \in V} \sum_{ \substack{ vv' \in E \\ v'v'' \in E}} \sigma_{-}^{(v)} |1\rangle\langle1|^{(v')}  \sigma_{+}^{(v'')} .
	\end{equation}
In this form the term $\sigma_-\sigma_+$ might be interpreted as conditional hopping interaction - if the site $v'$ is occupied, the hopping from $v$ to $v''$ is effected, otherwise no interaction happens. Such a behaviour is very much akin to what is called the quantum transistor. One of the simplest models of such a quantum transistor  \cite{loft2018quantum} involve left and right qubit and a two qubit gate, an interaction very similar to observed here.

	Furthermore, the excitation-state $\ket{G}$ related to the $k$-uniform hypergraphs $G$ is described as the singular ground state of the Hamiltonian with $2k$-body interaction, $H= J_{+}^G J_{-}^G $. The excitation-state $\ket{G}$ is a ground state of the Hamiltonian (\ref{DH2}), where the operators $J_{\pm}^G$ are defined as
	\[
	J_{\pm}^G \equiv \sum_{ \{v^1 , \ldots ,v^k \} \in E} \sigma_{\pm}^{(v^1)} \cdots\sigma_{\pm}^{(v^k)} .
	\]
	
	In summary, as the Dicke states $\ket{D_N^m}$ are uniquely determined as eigenvectors of operators $J_z$ and the two-body interaction Hamiltonian $H=J_{+} J_{-}$, the excitation-states are also eigenvectors of an operator $J_z$, nevertheless the further division of its subspaces differs from the one corresponding to the Dicke states. For any excitation state $\ket{G}$ one may construct a $(k+1)$-body interaction Hamiltonian with the state $\ket{G}$ being its non-degenerated ground state. Notice that the construction of the Hamiltonian depends on on the number of excitations. 

\section{Conclusions}

In this chapter, I introduced the notion of $H$-symmetric states, where $H$ is an arbitrary subgroup of permutation group (\cref{Aarhus9}). I show a natural example of $H$ symmetric state for any subgroup of symmetric group $H< \mathcal{S}_N$ (\cref{Aarhus8}). Note that any group $H$ might be realized as a subgroup of the symmetric group $S_N$ for $N$ arbitrary large. Furthermore, I introduce two related families of states with remarkable symmetric properties: in \cref{Aarhus10} excitation-states $\ket{G}$, related to a graph $G$, and in \cref{Dickelike} Dicke-like states, determined by an embedding of a subgroup $H< \mathcal{S}_N$. \cref{cccc} presents results concerning entanglement properties of general excitation states and Dike-like states (\cref{condition,August1,cor5,Aarhus11}). With some additional assumptions on the regularity of the graph, I obtained the form of $2$-partie concurrence (\cref{A}), and generalized concurrence (\cref{B}) of the related excitation states. In particular, I have shown that: 
\begin{enumerate}
\item 
Generalized concurrence $C_{v |\text{rest}}$ describing entanglement between given subsystem $v$ and the rest of the system depends only on the total number of nodes and uniformity of the graph.
\item 
Concurrence $C_{vw}$ between nodes $v$ and $w$ is positive only for distance-two nodes, and in states related to almost complete hypergraphs, i.e. $|E| \sim {{N}\choose{k}}$, also for distance-one nodes. 
\item 
Concurrence $C_{vw}$ between nodes of distance two is proportional to the number of shared neighbors. 
\end{enumerate}
\noindent
Furthermore, \cref{sec3} introduces several examples of excitation and Dicke-like states of given symmetries: cyclic, dihedral, symmetry of Platonic solids. \cref{Comparison} compares entanglement properties of introduced families of states. 
Moreover, I show two different methods of constructing introduced families of states: \cref{Hamiltonians} presents excitation states as ground states of Hamiltonians with $3$-body interactions, while \cref{Circuit} proposes relevant quantum circuits. The complexity of introduced circuits is comparable with the complexity of quantum circuits proposed for Dicke states known in the literature. A simple example of a cyclic state of five qubits is successfully simulated on available quantum computers: IBM - Santiago and Athens -- see \cref{cyclic_hist}.

\chapter{Absolutely maximally entangled states}
\label{chapter3}

Absolutely Maximally Entangled (AME) states of a multipartite quantum system are maximally entangled for every bipartition of the system. AME states are special cases of \emph{k-uniform} states characterized by the property that all of their reductions to k parties are maximally mixed. Both classes of states are crucial resources for various quantum information protocols. In this chapter, I briefly recall correspondence between AME states and classical combinatorial designs. I focus my attention on the different linear structures of classical designs that affect the structure of the related AME and k-uniform states. 

\section{AME and k-uniform states}

A multipartite quantum state $\ket{\psi} \in \mathcal{H}_d^{\otimes N}$ of $N$ parties with a local dimension $d$ each is called AME, if is maximally entangled for every of its bipartition, i.e. the partial trace 
\begin{equation}
\tr_S \ket{\psi}\bra{\psi} \propto \Id,
\end{equation} 
for any subsystem $S$ of $|S|=\lceil N/2 \rceil$ parties \cite{PhysRevA.69.052330}. 
AME states of $N$ partise with the local dimension $d$ are denoted as AME($N,d$). 
The class of AME states is being applied in several branches of quantum information theory: in quantum secret sharing protocols \cite{HelwigAME}, in parallel open-destination teleportation \cite{helwig2013absolutely}, in holographic quantum error correcting codes \cite{Pastawski2015HolographicQE,MazurekGrudka}, among many others. 
The state AME($N,d$) allows one to construct a \textit{pure quantum error correction code} (pure QECC), which saturates the Singleton bound \cite{HelwigAME}. 
Particular attention is paid to AME states of an even number of parties, those are equivalent to notions as perfect tensors \cite{Pastawski2015HolographicQE} or multiunitary matrices \cite{MultiUnitary}. AME states are special type of $k$-uniform states. 

\begin{definition}
\label{Aarhus13}
A quantum state $\ket{\psi} \in \mathcal{H}_d^{\otimes N}$ of $N$ is \emph{$k$-uniform} if its reduced density matrices are maximally mixed, i.e.
\[
\rho_S (\psi ) := \tr_{\overline{S}} \ket{\psi}\bra{\psi} \propto \Id
\]
for any subsystem $S$ of $k$ parties ($|\overline{S}|$=k) and the complementary subsystem $\overline{S}$. 
It is known that the uniformity $k$ cannot exceed $\lfloor N/2 \rfloor$ \cite{PhysRevA.69.052330}. 
States which saturate this bound, i.e. $\lfloor N/2 \rfloor$-uniform states, are called AME states. 
\end{definition}

\begin{example}
\label{ex1GHZ}
Greenberger–Horne–Zeilinger (GHZ) state
\[
\ket{\text{GHZ}} = \dfrac{1}{\sqrt{2}} \Big( \ket{00 0}  +\ket{111} \Big)
\] 
is a $1$-uniform state. Its natural generalization to $N$ parties with $d$ distinguishable energy levels:
\[
\ket{\text{GHZ}_d^N} =\dfrac{1}{\sqrt{d}}\Big( \ket{0\cdots 0} +\cdots +\ket{d-1 \cdots d-1} \Big)
\] 
is also $1$-uniform. 
\end{example}

\begin{example}
\label{ex2}
The following state of four qutrits
\begin{align*}
\ket{\text{AME(4,3)}} = \dfrac{1}{3} \Big(
&\ket{0000}+ \ket{0121}+ \ket{0212}+\\
&\ket{1110}+ \ket{1201}+ \ket{1022}+&&\\
&\ket{2220}+ \ket{2011}+ \ket{2102} \Big)&&
\end{align*}
is 2-uniform, so it is an AME state of 4 qutrits  \cite{Helwig2013AbsolutelyME}. 
It reveals larger entanglement properties than the corresponding $\ket{\text{GHZ}_3^4}$ state of four qutrits.
\end{example}

Note that both states presented in \cref{ex1GHZ,ex2} might be written in simple closed formulas: 
\begin{align}
\ket{\text{GHZ}_d^N} =&\dfrac{1}{\sqrt{d}}\sum_{i=0}^{d-1}\ket{i,\ldots, i},\nonumber\\ 
\ket{\text{AME(4,d)}} =&\dfrac{1}{d} \sum_{i,j=0}^{d-1}\ket{i,j,i+j,2i+j}. \label{AME43p} 
\end{align}
where the summation is understood modulo $d$.

AME($N,d$) states are maximizing entanglement properties among all $N$-parties states, each with $d$ levels \cite{HelwigAME}. 
There is no general construction of AME($N,d$) state, for an arbitrary number of parties $N$ and an arbitrary number of enery levels $d$.  Surprisingly, AME states do not exist for any numbers $N$ and $d$. Indeed, it was first observed by Higuchi and Sudbery in their study of bipartite entanglement that AME state of four qubits does not exist \cite{HIGUCHI2000213}. 
Until today, more of such negative results are known \cite{Felix72,Huber_2018}.

\section{Orthogonal arrays}

Combinatorial mathematics deals with the 
existence and properties of designs composed of elements of a finite set 
and arranged with certain symmetry and balance \cite{CD07}.
A simple example of a combinatorial design is given by a single Orthogonal array \cite{OA}. Orthogonal array is a combinatorial arrangements, tables with entries satisfying given orthogonal properties \cite{OAcars}. 
A tight connection between OAs and (quantum) error-correcting codes \cite{OA}, and maximally entangled states \cite{DiK} brought a new life for these combinatorial objects. 

As we already presented in \cref{Aarhus2}, an orthogonal array $\oa{r,N,d,k}$ is a table composed by $r$ rows, $N$ columns with entries taken from $0,\ldots,d-1$ in such a way that each subset of $k$ columns contains all possible combinations of symbols with the same amount of repetitions, see \cref{OA1}. The number $k$ is known as the \textit{strength} of the OA. 
We assume that OAs are simple, i.e. all rows are distinct. 
Notice, that the minimal number of rows in OA is equal to $r=d^k$, OAs saturating this bound are called of \emph{Index Unity}. 
In general \textit{the index} of the OA 
\[
\lambda =\dfrac{r}{d^k} 
\]
is always a natural number. 
An $\oa{r,N,d,k}$ is said to be an \textit{irredundant} orthogonal array (IrOA) if in any subset of $N-k$ columns all combinations of symbols are different \cite{DiK}. 

Usually, an orthogonal array is called \textit{linear} if the set of $r$ rows form a vector space over the Galois field $\GF{d}$. 
The linearity condition is equivalent to the following: for each two rows $R_1$ and $R_2$ of an OA, and any two elements $c_1,c_2 \in \GF{d}$, 
\begin{equation}
\label{linearity}
c_1 R_1 + c_2 R_2
\end{equation}
is also a row in the OA. 
Such an OA can be represented by a $N\times s$ \textit{generator matrix}, whose rows form a basis of aforementioned vector space, see \cref{OA1}. 
Up to the isomorphism of a linear space, the generator matrix $G$ can always be written in the
standard form $G=[\Id_s |A]$, where $A$ is an $(N-s)\times s$ matrix \cite{AME-QECC-Zahra}.

On the one hand, the generator matrix greatly suppresses the notation of an OA but also indicates the internal structure of OA. 
In fact, except for linear OAs, there is another class of OAs that might be written in a form of a generator matrix. 
Indeed, in place of the finite field $\GF{d}$, one may consider a finite ring $\mathcal{R}$ of order $|\mathcal{R}|=d$. 
In such a way all linear combinations of rows of the generator matrix forms a \textit{module} over the ring $\mathcal{R}$. 
A practical selection for a ring $\mathcal{R}$ is cyclic group $\mathbb{Z}_d$ or, more generally, simple sum of cyclic groups $\mathbb{Z}_{d_1} \oplus \ldots\oplus \mathbb{Z}_{d_s}$. 
In fact those are the only possibilities for commutative rings $\mathcal{R}$. 

\begin{figure}[ht!]
\centering
\[
{	
\begin{array}{*5{c}}
\tikzmarkk{up}{0}&0&0&\tikzmarkk{down}{0}\\
\Highlightt[first]
\Highlightt[second]
0&1&1&1\\
0&2&2&2\\
0&3&3&3\\
\tikzmarkk{up}{1}&0&1&\tikzmarkk{down}{2}\\
\Highlighttt[first]
1&1&2&3\\
1&2&3&0\\
1&3&0&1\\
\tikzmarkk{up}{2}&0&2&\tikzmarkk{down}{0}\\
\Highlightt[first]
\Highlightt[second]
2&1&3&1\\
2&2&0&2\\
2&3&1&3\\
\tikzmarkk{up}{3}&0&3&\tikzmarkk{down}{2}\\
\Highlighttt[first]
3&1&0&3\\
3&2&1&0\\
3&3&2&1\\
\\
\multicolumn{5}{c}{{\text{OA(16,4,3,1)}}}
  \end{array} 
}
  \;
\xleftarrow[\mathbb{Z}_4]{\text{over}}
  \;
   \left[ 
 \begin{array}{*4{c}}
1&0\\
0&1\\
  \end{array}
  \Big\vert
   \begin{array}{*4{c}}
1&2\\
1&1\\
  \end{array}
\right] 
  \;
\xrightarrow[\GF{4}]{\text{over}}
  \;
{	
\begin{array}{*5{c}}
0&0&0&0\\
0&1&1&1\\
0&2&2&2\\
0&3&3&3\\
1&0&1&2\\
1&1&0&3\\
1&2&3&0\\
1&3&2&1\\
2&0&2&3\\
2&1&3&2\\
2&2&0&1\\
2&3&1&0\\
3&0&3&1\\
3&1&2&0\\
3&2&1&3\\
3&3&0&2\\
\\
\multicolumn{5}{c}{{\text{OA(16,4,3,2)}}}
  \end{array} 
}
\]
\caption{In the middle, a generator matrix of an orthogonal array $\oa{d^2,4,d,k}$ for a prime power dimension $=4d=2^2$. 
Rows of an OA are given by all linear combinations $c_1 \left(1 ,0,1,2 \right)  + c_2 \left(0 ,1,1,1 \right)$ for $c_1,c_2 \in \GF{4}$ and $\mathbb{Z}_4$ respectively. Note a difference in OAs strength, $k=1$ and $k=2$ for $\GF{4}$ and $\mathbb{Z}_4$ respectively. Indeed, pairs of symbols $00$ and $02$ repeats twice on the second and fourth column in $\oa{d^2,4,d,1}$, while combinations $01$ and $03$ do not appear in those two columns. Both OAs are irredundant.}
\label{OA1}
\centering
\end{figure}

As we already have show in \cref{chap1}, OAs might be successfully used to construct $m$-resistant states of $N$-qudits, for which the entanglement of the reduced state of $N-m$ subsystems is fragile for the loss of any additional subsystem. In fact, states constructed in this method exhibit more exceptional property of being $k$-uniform, as we describe below.

For any $\oa{r,N,d,k}$ one may associate a pure quantum state $\ket{\psi}\in \mathcal{H}_d^{\otimes N}$ by reading all rows of OA and creating a superposition of $r$ terms in the computational basis \cite{DiK,OA}, see \cref{fig2AME}. 
Moreover, for IrOAa of a strength $k$, the corresponding quantum state is $k$-uniform \cite{DiK,OA}. 

Consider now a linear OAs over the module $\mathcal{R}$, $|\mathcal{R}|=d$ with the $N\times s$ generator matrix of the form $G=[\Id_s |A]$, where $A=(a_{ij})$ is an $(N-s)\times s$ array with elements from $\mathcal{R}$. The corresponding quantum state might be conveniently presented as follows:
\begin{align}
\label{Eq1}
\ket{\psi}
&:=\frac{1}{\sqrt{d^s}} \sum_{\vec{v} \in \mathcal{R}^{\otimes s}} \ket{G \vec{v} } 
\end{align}
where the multiplication and addition hidden inside an expression $G \vec{v} $ are determined by the structure of a ring $\mathcal{R}$, and vector $\vec{v} \in \mathcal{R}^{\otimes s}$ has exactly $d^s$ elements \cite{AME-QECC-Zahra}. 

\begin{example}
Consider the following generator matrix
\begin{equation}
G=   \left[ 
 \begin{array}{*4{c}}
1&0\\
0&1\\
  \end{array}
  \Big\vert
   \begin{array}{*4{c}}
1&2\\
1&1\\
  \end{array}
\right] ,
\label{Gmat}
\end{equation}
with entries from the ring $ \mathcal{R}=\mathbb{Z}_d$.  \cref{Eq1} associates the quantum state 
\begin{equation}
\ket{\psi}= \frac{1}{d}\sum_{i_1,i_2 =0}^{d-1} 
\ket{i_1} \ket{i_2} 
\ket{i_1 +i_2} \ket{ 2 i_1 +i_2},
\label{state4}
\end{equation}
where addition and multiplication are considered over the ring $\mathbb{Z}_d$. 
For any odd dimension $d$ ($2\nmid d$), the corresponding state $\ket{\psi} \in \mathcal{H}_d^{\otimes 4}$ is an $\ame{4,d}$ state, as the corresponding $\oa{d^2,4,d,2}$ has strength $k=2$, see \cref{fig2AME}. 
\end{example}

\begin{figure}
\centering
\[
   \left[ 
 \begin{array}{*4{c}}
1&0\\
0&1\\
  \end{array}
  \Big\vert
   \begin{array}{*4{c}}
1&2\\
1&1\\
  \end{array}
\right] 
  \;
\xrightarrow[\GF{3}]{\text{over}}
  \;
 \begin{array}{*4{c}}
    0 &
0&\tikzmark{left}{0}&0 \\
0&1&1&1\\
0&2&2&2\\
1&0&1&2\\
1&1&2&0\\
1&2&0&1\\
2&0&2&1\\
2&1&0&2\\
2&    2&1&\tikzmark{right}{0}
\Highlight[first]
  \end{array}
  \;
  \xrightarrow[\ket{\text{AME(4,3)}}\propto]{\substack{\text{corresponding}\\\text{state}}}
  \;
 \begin{array}{*1{l}}
\ket{0000}+ \\
\ket{0111}+ \\
\ket{0222}+\\
\ket{1012}+\\
\ket{1120}+\\
\ket{1201}+\\
\ket{2021}+ \\
\ket{2102}+ \\
\ket{2210}\\
  \end{array}
\]
\caption{The orthogonal array of unity index $\oa{9,4,3,2}$ on the left and repeated in the center. Each subset consisting of two columns contains all possible combinations of symbols. Here, two such subsets are highlighted. The relevant quantum state is obtained as a superposition of states corresponding to consecutive rows of the array.}
\label{fig2AME}
\end{figure}


\section{Different linear structures}
\label{Aarhus14}

For dimension $d=4$, \cref{OA1} presents two OAs corresponding to the ring structures $\mathcal{R}= \GF{4}, \mathbb{Z}_4$. Two quantum states $\ket{\psi}_{\GF{4}}$, and $\ket{\psi}_{\mathbb{Z}_4}$ might be constructed simply by reading consecutive rows of respective OAs. As the corresponding OAs has different strength, quantum states $\ket{\psi}_{\GF{4}}$, and $\ket{\psi}_{\mathbb{Z}_4}$ are 1-, and 2-uniform respectively, see \cref{OA1}. In general, states obtained by the linear action of a Galois field $\GF{d}$ exhibit larger uniformity $k$ then those based on the other ring structures $\mathcal{R}$, $|\mathcal{R}|=d$. 

Once more, consider the generator matrix $G$ defined by \cref{Gmat}. 
In dimension $d=9$, we may consider three different ring structures: $\mathcal{R}= \GF{9}, \mathbb{Z}_9$, and $\mathbb{Z}_3 \oplus \mathbb{Z}_3$. 
The ring $\mathbb{Z}_9$ consists of nine numbers $0,\ldots,8$ with the usual multiplication and addition modulo 9. 
We denote the related quantum state (\ref{state4}) by $\ket{\psi}_{\mathbb{Z}_9}$. 

Elements of $\mathbb{Z}_3 \oplus \mathbb{Z}_3$ are given by pairs of numbers $(a,b)$, where $a,b=0,1,2$, while the multiplication and addition is considered on respective positions
\begin{align*}
(a,b)+(a',b')&=(a +a' ,b+b') ,\\
(a,b)\cdot(a',b')&=(a \cdot a' ,b\cdot b'), 
\end{align*}
modulo 3. 
Note that a pair $(a,b)$ might be canonically associated with the number $3a+b \in \mathbb{Z}_9$, which we further use for linking with a pure quantum state (\ref{state4}) denoted by $\ket{\psi}_{\mathbb{Z}_3 \oplus \mathbb{Z}_3} \in \mathcal{H}_9^{\otimes 4}$ written in the computational basis, $(a,b)\rightarrow 3a+b \rightarrow \ket{3a+b}$. 

The structure of a Galois field $\GF{9}$ might be presented as an addition and multiplication of nine polynomials in $x$ variable $a x +b$, where $a,b =0,1,2$. Addition and multiplication of aforementioned polynomials is considered modulo an irreducible degree 2 polynomial $x^2 +1$. Although addition of polynomials $a x +b$ matches the addition of pairs $(a,b)$, the multiplication structure is different, see \cref{figGalois9}. 
Same as before, each polynomial $a x +b$ might be associated with the natural number $3a+b \in \mathbb{Z}_9$, which defines quantum state $\ket{\psi}_{\GF{9}}$ via \cref{state4} written in the computational basis. 

Unlike as for rings $\mathcal{R}= \GF{4}, \mathbb{Z}_4$, three arrays $\oa{81,4,9,2}$ corresponding to three different ring structures $\mathcal{R}= \GF{9}, \mathbb{Z}_9$ and $\mathbb{Z}_3 \oplus \mathbb{Z}_3$ are all of strength $2$. Therefore, the related quantum states $\ket{\psi}_{\mathbb{Z}_9}$, $\ket{\psi}_{\mathbb{Z}_3 \oplus \mathbb{Z}_3}$, and $\ket{\psi}_{\GF{9}}$, are all AME. In \cref{Aarhus15}, we discuss the problem of equivalence of such states.

\begin{figure}
\begin{center}
\includegraphics[scale=0.9]{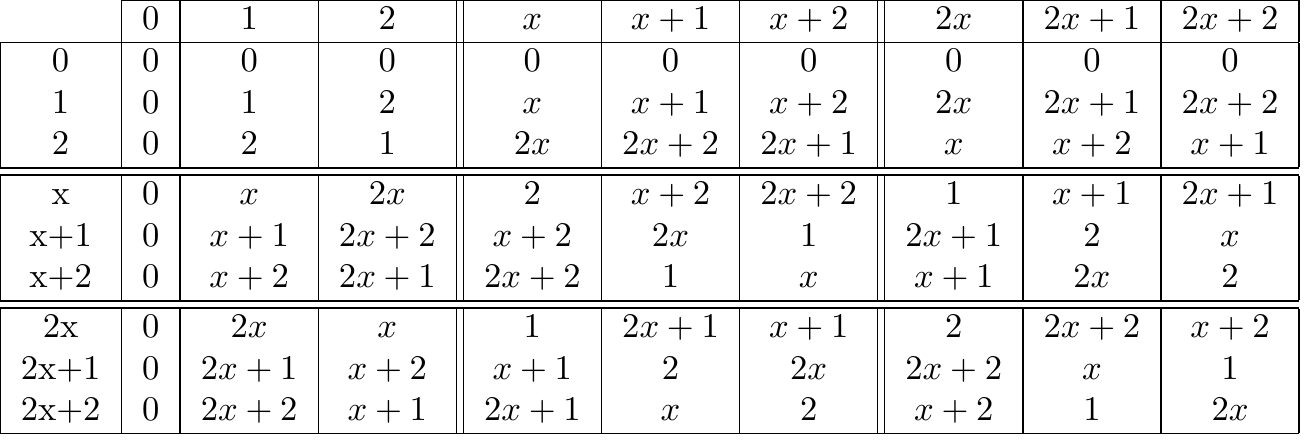}
\end{center}
\caption{Elements of the Galois field $\GF{9}$ are given by nine polynomials $ax+b$, where $a,b=0,1,2$. 
Their multiplication is considered modulo an irreducible polynomial $x^2 +1$.
Exact structure of a multiplication is presented in the table, while the addition structure is the same as for the ring $\mathbb{Z}_3 \oplus \mathbb{Z}_3$. 
Note that there are no zero divisors in $\GF{9}$, i.e pairs of non-zero elements which product is zero. 
Each polynomial $ax+b$ might be later associated with a number $3a+b=0,\ldots, 8$.} 
\label{figGalois9}
\end{figure}



\section{Conclusions}

In this chapter, I recall the well-established notion of Absolutely Maximally Entangled and $k$-uniform states (\cref{Aarhus13}) and their correspondence to classical combinatorial designs. I focus my attention on the different linear structures of classical designs that affect the structure of the related AME and k-uniform states. This chapter does not contain any significant contribution by the author and is intended to introduce some notions used in the consecutive chapters.

\chapter{Thirty-six entangled officers of Euler}
\label{ch4}

In this chapter, I discuss a quantum variant of the famous problem of $36$ officers of Euler. While the classical problem of Euler is known to have no solutions, I present an analytical form of an AME$(4,6)$ state, which might be seen as a quantum solution to the Euler's problem. Moreover, I show a coarse-grained combinatorial structure behind constructed state. Considerations of such a coarse-grained combinatorial structures might lead to a successful approach for constructing genuinely entangled states beyond the stabilizer approach. An extension of some parts of this chapter, in which the author's contributuion was not substantial can be found in the joint work \cite{RatherBur20}. 

\section{Thirty-six officers of Euler}
\label{chap51}

In 1779, Euler examined the now-famous officer problem \cite{Euler36}: 
``Six different regiments have six officers, each one belonging to different ranks. 
Can these 36 officers be arranged in a square formation so that each row and column contains one officer of each rank and one of each regiment?'' 
As Euler observed, the possibility of such an arrangement is equivalent to existence of \textit{Graeco-Latin squares} of order 6. 
Single Latin square of order $d$ is filled with $d$ copies of $d$ symbols arranged in a square in such a way that no row or column of the square contains the same symbol twice. 
Two orthogonal Latin squares (OLS), also called Graeco-Latin square, is an arrangement of ordered pairs of symbols, for instance, one Greek character and one Latin, into $d \times d$ square. 
Each symbol appears exactly once in each row and column, while each pair of symbols appears exactly once in the entire OLS, see \cref{OLS3}. 

\begin{figure}[ht!]
	\centering
\begin{equation} \
\begin{array}{|c|c|c|} \hline 
{ A\alpha}  & B\beta& {C\gamma}\\ \hline
{ C\beta} & A\gamma  & { B\alpha}\\ \hline
{B\gamma}    & C\alpha  &{A \beta} \\ \hline
\end{array} 
\ = \ 
\begin{array}{|c|c|c|} \hline 
{ A \text{\spade}}  & K \text{\club} &  {\color{red}  Q \text{\diamond} }  \\  \hline
{ Q\text{\club}}  & {\color{red} A \text{\diamond}} &  K \text{\spade}    \\  \hline
{ \color{red} K \text{\diamond} }    & Q \text{\spade} &  A \text{\club}   \\ \hline
\end{array} 
\ = \ 
\begin{array}{|c|c|c|} \hline 
{ 0,0}  & 2,1&  1,2  \\  \hline
1,1  & 0,2 &  2,0 \\  \hline
2,2   & 1,0&0,1  \\ \hline
\end{array} 
\nonumber
 \end{equation}
\caption{An example of Graeco-Latin square of order $d = 3$, relevant to AME(4,3) state. 
In the middle, Greek and Latin letters are replaced by 
ranks and suits of cards, on the right by pair of numbers.}
\label{OLS3}
\end{figure}

In \cref{chapter3} we discussed a connection between combinatorial designs, such as OLS, and multipartite entangled states, and error-correcting codes. In particular, to any OLS, one may associate a quantum state $\ket{\psi} \in \mathcal{H}_d^{\otimes 4}$
\begin{equation}
\label{state}
\ket{\psi}= 
\sum_{i,j,k,\ell =0}^{d-1}
T_{ijk\ell} \dfrac{1}{d}
\ket{i}\ket{j} \ket{k}\ket{\ell} ,
\end{equation}
where $T_{ijk\ell}=1$ if the pair $(k,\ell)$ is an entry in $i$-th row and $j$-th column, while $T_{ijk\ell}=0$ otherwise. 
Conditions imposed on OLS translates onto fact that for the following bipartition of four indices into pairs: $ij|k \ell,\; ik|j \ell,\; i \ell|jk$, the corresponding matrices are permutations, i.e. $T_{ij}^{k\ell}, T_{ik}^{j\ell}, T_{i\ell}^{jk}$ are permutation matrices. 

Unitary matrices are a natural generalization of permutation matrices. 
A tensor $T_{ijk\ell}$ for which each of matrices $T_{ij}^{k\ell}, T_{ik}^{j\ell}, T_{i\ell}^{jk}$ is a unitary matrix is called \textit{perfect}. Furthermore, such a square matrix $U=T_{ij}^{k\ell}$ is called 2-unitary, as the three matrices: $U=T_{ij}^{k\ell}$, and related \textit{reshuffled} $\mathcal{U}^{\R}=T_{ik}^{j\ell}$, and \textit{partialy transposed} $(\mathcal{U}^{\R})^{\PT}=T_{i\ell}^{jk}$ matrices are all unitary. 
Perfect tensor provides an \textit{isometry} between any pair of its indices. 
The partial trace of the state $\ket{\psi}$ related to a perfect tensor is maximally mixed for any bipartition of the system, i.e. $\rho_{S}\propto \Id$ for $|S|=2$, and hence is related to an AME($4,d$) state \cite{PhysRevA.69.052330}.

Two OLS of order $d$ exist for $d=3,4,5$ and any natural number $d\ge 7$ \cite{JCD01}. 
On the other hand, OLS of order $2$ and $6$ do not exist, which can be easily observed for order $2$, and what was proven by an exhaustive case study for order $6$ \cite{GastonTarry}. 
The only local dimension $d$ for which the existence of a quantum version of OLS was not decided is six \cite{HIGUCHI2000213,AME_IQOQI,Table_AME,yu2020complete}. 

\section{AME state of four quhex}

In \cite{RatherBur20}, we present a solution for the quantum version of the Euler's problem of 36 officers. 
More specifically, we provide there an analytical form of an AME($4,6$) state, which can be associated with an example of orthogonal quantum Latin squares  of order six \cite{MV16,ComDesi,Ri20,RatherBur20}. In order to find a $2$-unitary matrix of order $36$, we used an iterative numerical technique based on nonlinear maps in the space of unitary matrices $\mathbb{U}(d^2)$ \cite{SAA2020}. One of key challenges was to find an appropriate seed matrix, which generates a numerical $2$-unitary matrix of order $d^2=36$. Details of the procedure and an example of a seed that leads to the $2$-unitary solution might be find in \cite{RatherBur20}

The author's main contribution to the solution of this problem, was finding an analytical form of the presented state, obtained with relation (\ref{state}) from a 2-unitary matrix $U_{36}$, which determines the tensor $T_{ijk \ell}$. 
Any 2-unitary matrix remains 2-unitary under a multiplication by local unitary operators. 
Using this freedom, we applied a search algorithm over the group 
$\mathbb{U}(6) \otimes \mathbb{U}(6)$ of local unitary operations, 
to orthogonalize certain rows and columns in a given numerical 2-unitary matrix $U$ and its rearrangements $U^{\R}$, $(U^{\R})^{\PT}$. 
The particular choice of the orthogonality relations corresponds to the block structure of the eventually obtained analytical solution. 
These tools can be generalized to construct multi-unitary operators and corresponding AME states in other local dimensions and number of parties, and can potentially yield states that are not created by presently known techniques.

In order to present our solution for the problem of 36 entangled officers of Euler, we express
coefficients of a quantum  state $\ket{\psi} \in \mathcal{H}_6^{\otimes 4}$ (four qudits state) via a four--index tensor $T_{ijk\ell}$ as it is presented on Eq. \ref{state}. 
Non-vanishing terms of 
$T_{ijk\ell}$ might be conveniently written in form of a table, see Fig. \ref{fig2}. 
Provided construction is based on root of unity of order 20, denoted by $\omega=\exp(i \pi/10)$. 
There are three non-zero amplitudes:
\begin{align*}
a=&[\sqrt{2}(\omega+\overline{\omega})]^{-1}=[5+\sqrt{5}]^{-1/2}\simeq 0.3717 \\
b=&[\sqrt{2}(\omega^3+\overline{\omega}^{3})]^{-1}=[(5+\sqrt{5})/20]^{1/2} \simeq 0.6015\\
c=&1/\sqrt{2}\simeq 0.7071
\end{align*}
which might be expressed in terms of $\omega$. 
Two relations $a^2+b^2= c^2=1/2$ and $b/a=\varphi=(1+\sqrt{5})/2$ the golden ratio, determine uniquely amplitudes appearing in the solution and explain why the constructed AME state deserves to be called the {\sl golden} AME {\sl state}. 
Checking the property of being an AME state comes down to verification of several equations involving roots of unity of order 20, which I discuss in a detailed way in \cref{AMEcheckApp}.

\begin{figure}[ht!]
\centering
\includegraphics[scale=0.68]{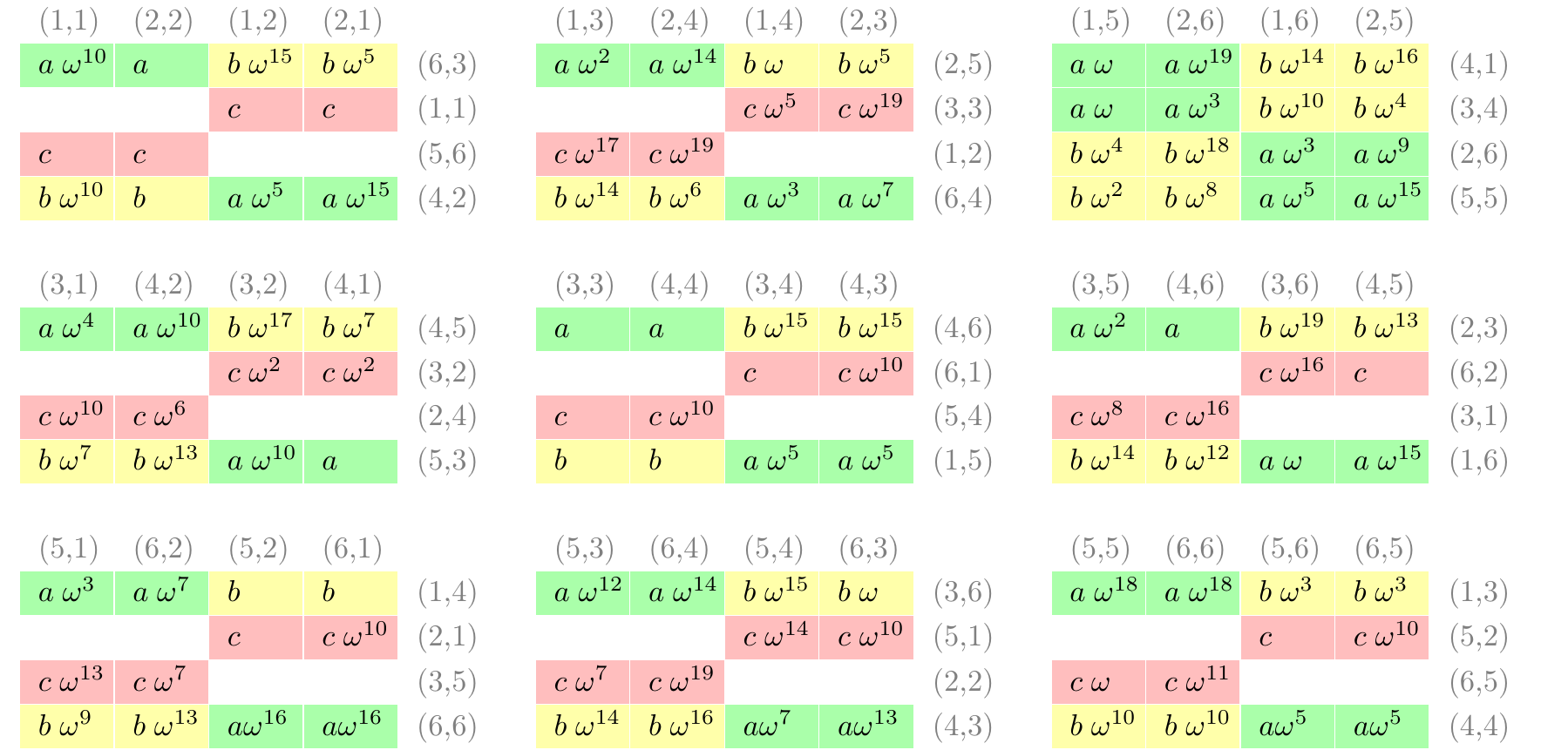}
\caption{Non-vanishing elements 
of a perfect tensor $T_{ijk\ell}$ related to the AME(4,6) state. A pair of indices $(i,j) $ are indicated in rows, $(k,l) $ in columns respectively. 
Treating each coordinates $(i,j)$ as a position (row and column) of Euler's officer, its rank and regiment are in superposition of two or four canonical ranks and regiments. 
}
\label{fig2}
\end{figure} 

\section{Structure of the AME(4,6) state}

Recall that a classical OLS corresponds to a $2$-unitary permutation matrix. 
Since there is no solution to the original Euler's problem, the $2$-unitary permutation matrix of size 36 does not exist. 
Nonetheless, the $2$-unitary matrix presented on \cref{fig2} has the structure of nine $4\times 4$ blocks. 
Moreover, the block structure is preserved under reshuffling and the partial transpose of the matrix. 
In that sense, we found the \textit{invariant} block structure in the original $36\times 36$ with respect to reshuffling and partial transpose operations. 
In fact, the problem of finding 36 entangled officers of Euler splits into two sub-problems: finding such a block invariant structure at first, and then select adequate non-zero elements inside the chosen structure. 
Such an approach indicates the plausible direction of search for other AME and $k$-uniform states. 
In fact, the notion of such an invariant structure might be formalized via certain combinatorial designs. 
Grouping symbols of indices $k$ and $\ell$ in the presented perfect tensor $T_{ijk \ell}$ in pairs: $1,2\rightarrow A /\alpha$, and $3,4\rightarrow B / \beta$, and $5,6\rightarrow C/ \gamma$ for $k/ \ell $ respectively, results in coarse-grained OLS, which reveals the described block structure, see \cref{fig3}.

\begin{figure}[ht!]
	\centering
\begin{equation} \
\begin{array}{|c|c|c|c|c|c|} \hline 
{ A\alpha}  & A\beta& {C\gamma}&
{ C\alpha}  & B\beta& {B\gamma}\\ \hline
{ C\alpha}  & C\beta& {B\gamma}&
{ B\alpha}  & A\beta& {A\gamma}\\ \hline
{ B\gamma} & B\alpha  & { A\beta}&
{ A\gamma} & C\alpha  & { C\beta}\\ \hline
{ A\gamma} & A\alpha  & { C\beta}&
{ C\gamma} & B\alpha  & { B\beta}\\ \hline
{C\beta}    & C\gamma  &{B \alpha} &
{B\beta}    & A\gamma  &{A \alpha} \\ \hline
{B\beta}    & B\gamma  &{A \alpha} &
{A\beta}    & C\gamma  &{C \alpha} \\ \hline
\end{array} 
\nonumber
 \end{equation}
\caption{A coarse-grained OLS of order 6, which reveals the block structure of a perfect tensor $T_{ijk \ell}$. 
Indices of non-vanishing elements of the tensor $T_{ijk \ell}$ are presented: $i$ in row, $j$ in column, while a pair of coarse-grained indices $k,\ell$ in relevant entry. 
Each pair of symbols repeats exactly four times on the grid. 
Moreover, each symbol on each position repeats exactly twice in each row and column. 
}
\label{fig3}
\end{figure} 

Furthermore, presented AME($4,6$) state might be written in the following from, which reveals its block structure
\begin{equation}
\ket{\Psi}= \sum_{i,j=1}^6 \ket{i,j} \otimes \ket{\psi_{i,j}},
\end{equation}
where each of states $\ket{\psi_{i,j}} \in \mathcal{H}_{36} \cong \mathcal{H}_{3} \otimes \mathcal{H}_{3}  \otimes \mathcal{H}_{2\cdot 2} $ has the following form:
\begin{equation}
\ket{\psi_{i,j}}=
\Big(
\underbrace{\ket{2 \Big\lfloor  \dfrac{j}{2}\Big\rfloor +2 i+1}}_{\mathcal{H}_{3}} 
\otimes
\underbrace{\ket{j + 2\Big\lfloor  \dfrac{i}{2}\Big\rfloor +1}}_{\mathcal{H}_{3}} 
\Big)
\otimes
\id \otimes U_{ij} 
\Big(
\underbrace{\dfrac{1}{\sqrt{2}}\ket{00} +\dfrac{1}{\sqrt{2}}\ket{11}}_{\mathcal{H}_{2\cdot 2}} 
\Big)
\end{equation}
where $i,j=1,\ldots,6$ and $U_{ij} $ are unitary matrices of order 2, which form might be read from \cref{fig44}. Note that pairs of indices 
\begin{equation}
\Big(
2 \Big\lfloor  \dfrac{j}{2}\Big\rfloor +2 i+1 
\;,\;
j + 2\Big\lfloor  \dfrac{i}{2}\Big\rfloor +1
\Big)
\end{equation}
are in accordance with the entries of $6\times 6 $ coarse-grained OLS presented on \cref{fig3} after associating Greek character and one Latin characters with numbers: $A, \alpha \rightarrow 0;  \;
B, \beta \rightarrow 1; \;
C,\gamma \rightarrow 2$. 

\newsavebox{\smlmat}
\savebox{\smlmat}{
$ U_{12} =\left( 
\begin{smallmatrix} 0 & \gamma \\ 
\gamma \omega^{10} & 0  
\end{smallmatrix}
\right) $.
}

\begin{figure}
\begin{center}
\includegraphics[scale=0.89]{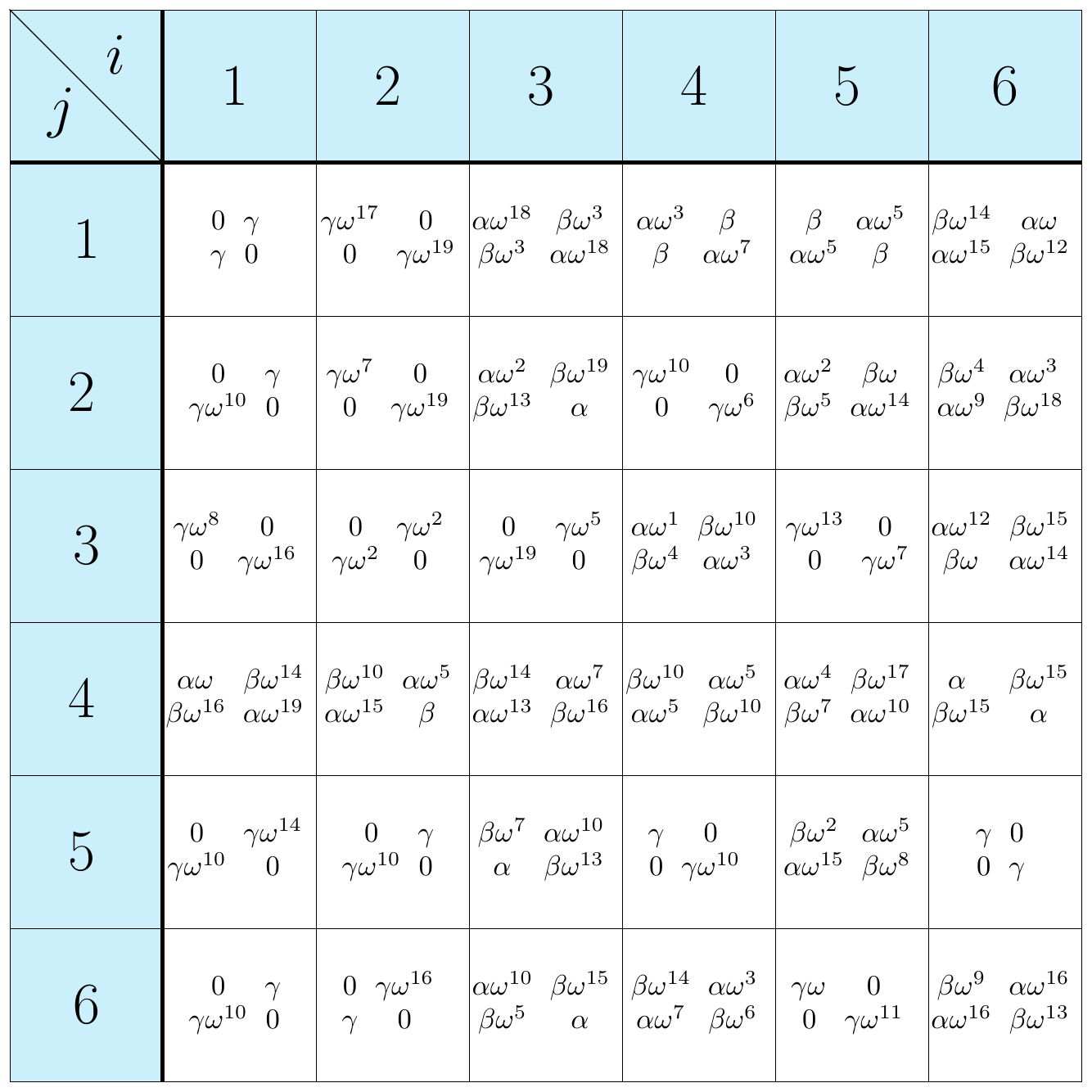}
\end{center}
\caption{An exact form of $2\times 2$ unitary matrices $U_{ij}$ for $i,j=1,\ldots,6$. For the sake of visibility, the correct matrix notation has been replaced only by indication of matrix  entries. For example, for $ i = 1, j = 2 $ the related matrix reads 
\usebox{\smlmat}. 
}
\label{fig44}
\end{figure}


Most of the examples of quantum error correction codes belong to the class of, so-called, additive (stabilizer) states  \cite{Cross_2009,AlsinaStab}. In particular all hitherto known AME states, are either stabilizer states \cite{AlsinaStab}, or might be derived from the stabilizer construction \cite{BurchardtRaissi20}. 
We examined all stabilizer sets of four quhex in their standard form and did not find an AME$(4,6)$ state, hence the presented state is not a stabilizer state. 

Nonadditive quantum codes are in general more difficult to construct, however, in many cases, they outperform the stabilizer codes \cite{NonadditiveCodes}. 
Thus far, the stabilizer approach practically contained the combinatorial approach to constructing AME and $k$-uniform states. 
As we demonstrated, the consideration of coarse-grained combinatorial structures might be successful in constructing genuinely entangled states and have advantages over the stabilizer approach. 
It is tempting to believe that the quantum design presented here
 will trigger further research on quantum combinatorics.

Even though OLS of order six does not exist, we present a coarse-grained OLS of that order, which structure stands behind the provided AME state. 
This sheds some light on how to construct quantum nonadditive AME states, and more generally QECCs, when the stabilizer approach fails.


\section{Verification of 2-unitarity}
\label{AMEcheckApp}

In this section we show that each of three matrices $T_{ij}^{k\ell}, T_{ik}^{j\ell}, T_{i\ell}^{jk}$ related to the tensor $T_{ijk\ell}$ is unitary. In order to simplify the notation, we introduce a matrix $\mathcal{U}$ of size $36$ with entries, determined by the tensor $T_{i\ell}^{jk}$, in the following way $\mathcal{U}_{p,s}=T_{ijkl}$, with $p=j+6(i-1)$ and $s=\ell(k-1)$, i.e. the matrix corresponding to the flattening $T_{i\ell}^{jk}$. Similarly, by $\mathcal{U}^{\R}$ and $(\mathcal{U}^{\R})^{\PT}$, we denote matrices corresponding to the flattening $T_{ik}^{j\ell}$ and $ T_{i\ell}^{jk}$ respectively. Operations $\R$ and $\PT$ are known as \textit{reshuffling} and the \textit{partial transpose} respectively, compare with \cref{chap51}. 

To show the tensor $T_{ijk\ell}$ is perfect, we shall verify that three related matrices: $\mathcal{U}$, $\mathcal{U}^{\R}$ and $(\mathcal{U}^{\R})^{\PT}$ are unitary. 
In fact, each of those matrices has the block structure with nine $4\times 4$ sub-matrices. 
Hence our task simplifies to verification that constituent blocks are unitary matrices. 

Consider the matrix $\mathcal{U}$. 
Interestingly, except for one block component in the matrix $\mathcal{U}$,
all eight remaining blocks are equivalent (up to a multiplication of rows and columns by adequate phases) to the following $4\times 4$ unitary matrix:
\[V=
\begin{bmatrix}
a & a & b & b\\
0 & 0 & c & -c\\
c& -c & 0 & 0\\
b & b & -a & -a
\end{bmatrix}.
\]
Orthogonality between rows in the matrix above might be presented as pairs of antipodal points on the complex plane, for example, the orthogonality between the first two rows reads
\begin{equation}
\label{constell}
bc \big(1-1\big)=0.
\end{equation} 
The right top corner in \cref{fig6} presents the exceptional block of matrix $\mathcal{U}$. 
Six orthogonality relations between rows read
\begin{align}
a^2\big(\omega^{8}  +\omega^{-8}\big) + b^2 \big(\omega^{4} +\omega^{-4}\big)&=0, \\
a b \big( 1  +\omega^{2} +  \omega^{-8} -1\big)&=0,  \\ \nonumber
a b \big( \omega^{-2}  +\omega^{2} +  \omega^{-8} + \omega^{8}\big)&=0,
\end{align}
up to a phase factor, where $\omega=\exp(i \pi/10)$. 
Each equation might be presented as a \textit{unitarity rectangle} - a constellation of four points on the complex plane which sum up to zero, as it is shown on Fig.~\ref{fig6}. 
Observe that the second and third equations above are both related to two pairs of antipodal points on the complex plane. 
Geometric interpretation of the amplitudes $a$ and $b$
is shown in Fig.~\ref{penta}.

\begin{figure}[ht!]
\includegraphics[scale=0.85]{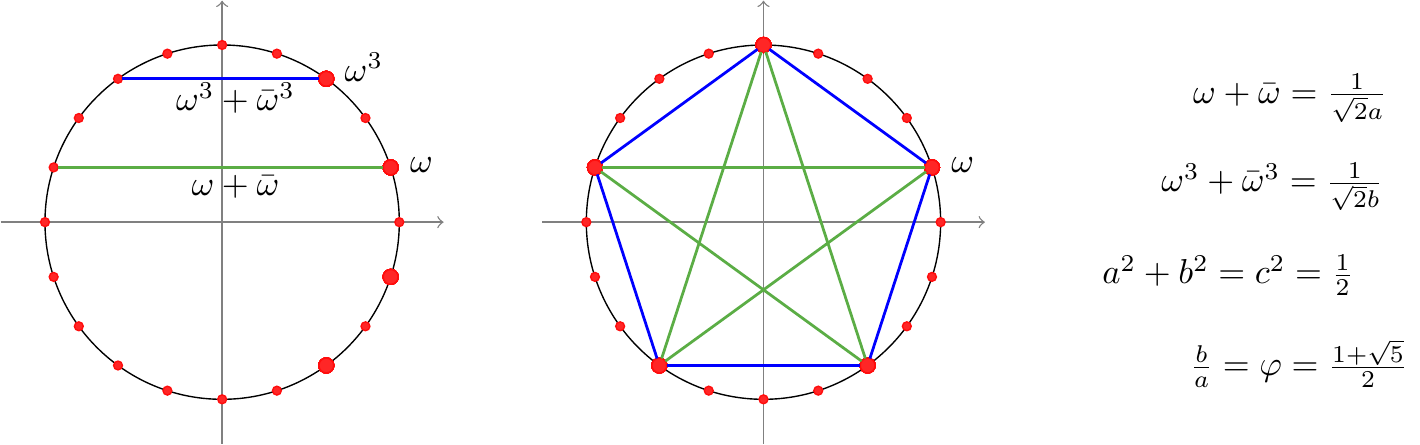}
\caption{Geometric interpretation of the three amplitudes $a,b$ and $c$ appearing in  the AME($4,6$) state with use of a regular pentagon. Amplitudes $a,b$ contain the phase $\omega=\exp(i\pi/10)$ and are related to the golden ratio $\varphi = \frac{1+\sqrt{5}}{2}$. }
\label{penta}
\end{figure}

\begin{figure}[ht!]
\centering
\includegraphics[scale=0.35]{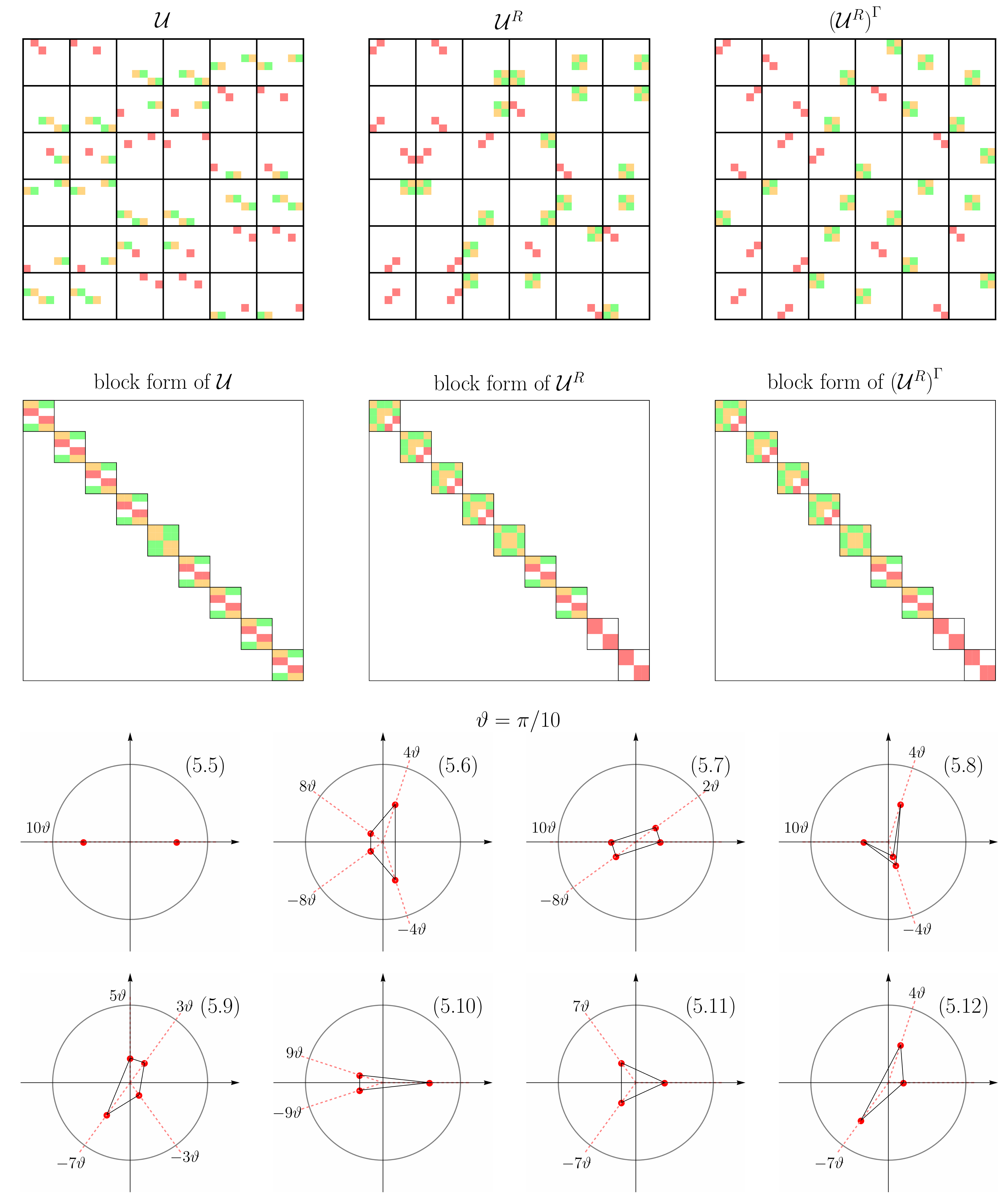}
\caption{Non-vanishing elements of three unitary matrices $\mathcal{U}$, $\mathcal{U}^{\R}$ and  $(\mathcal{U}^{\R})^{\PT}$ of order 36 are presented on top. 
The modulus of a non-vanishing element is represented by the intensity of the background color, where green, orange and red are related to constants $a$, $b$, and $c$ respectively.  
Each matrix has the structure of nine $4\times 4$ blocks, the sub-structure within the blocks is different. 
Orthogonality relations between pairs of rows in block matrices may be presented as a constellation of points on the complex plane which sum up to zero. 
Constellations related to Eqs.~(\ref{constell}-\ref{constell2}) are indicated. 
}
\label{fig6}
\end{figure}

Although both matrices $\mathcal{U}^{\R}$ and $ (\mathcal{U}^{\R})^{\PT}$ enjoy the structure of nine $4\times 4$ blocks,  similar to $\mathcal{U}$,
the particular arrangement inside their blocks is significantly different from the $\mathcal{U}$ matrix. 
Blocks in $\mathcal{U}^{\R}$ and $(\mathcal{U}^{\R})^{\PT}$ are of four distinct types up to multiplication of their rows and columns by phase factors, see Fig.~\ref{fig6}. 
Orthogonality relations between rows of both matrices reflect their complex structure. 
In particular, we distinguished five additional orthogonality relations given by the following equations:
\begin{align}
a^2 \omega^{4} +ab \big(\omega^{10}  +\omega^{-4}\big) + b^2 \omega^{-4} &=0, \\
a^2 \omega^{-3} +ab \big(\omega^{5}  +\omega^{3}\big) + b^2 \omega^{-7} &=0, \\
ab \big(\omega^{-4}  +\omega^{-6}\big) +  bc \omega^{5}  &=0, \\
ab \big(\omega^{-8}  +\omega^{-2}\big) +  ac \omega^{5}  &=0, \\
\label{constell2}
a^2 + b^2 \omega^{4} + bc\omega^{-7} &=0.
\end{align}
Related constellations are presented on Fig.~\ref{fig6}. 
The above-listed equalities provide orthogonality between rows in the three matrices $\mathcal{U}$, $\mathcal{U}^{\R}$ and $(\mathcal{U}^{\R})^{\PT}$, hence they imply that the matrix $\mathcal{U}$ is $2$-unitary.

Interestingly, in the presented solution each officer is entangled with at most three other officers (out of remaining $35$). This  implies the  matrix $\mathcal{U}$ is sparse. 
  
Amplitudes $a,b,c$ which appear in the presented construction might be defined as the unique solution of the following three algebraic equations: $a^2+b^2 =c^2 =1/2$ and $b/a = \varphi = (1+\sqrt{5} )/2$, see Fig.~\ref{penta}. 
Notice the similarities between algebraic equations which lead to values $a,b,c$, and the algebraic equations which lead to the amplitudes in an another AME state, in the heterogenous $2\times 3\times 3\times 3 $ system presented in \cite{Huber_2018}. 

The phases of the coefficients shown in \cref{fig2}, being multiples of $\omega=\exp(i \pi/10)$,  are chosen in such a way that all  $36$ quantum states $\ket{\psi_{ij}}$, each represented by a single row of the matrix $\mathcal{U}$, are equivalent to the standard, two-qubit Bell state. This is easy to see, for any state formed by two coefficients of moduli, $|c|=1/\sqrt{2}$, as states $\ket{\psi_{11}}$ or $\ket{\psi_{56}}$ represented in the second and the third row in
the upper left block in \cref{fig2}. One can show that this property holds also for other states. For example, the state $\ket{\psi_{63}}$ corresponding to the first line of
the aforementioned block can be written in the product basis as
\[
\ket{\psi_{63}} =
a\omega^{10}\ket{{11}}+
b\omega^{15}\ket{{12}}+
b\omega^{5}\ket{{21}}+
a\ket{{22}}.
\]
Thus the partial trace reads, 
${\rm Tr}_{\rm B} \ket{\psi_{63}}\bra{\psi_{63}}=
{\rm diag}(a^2+b^2, a^2+b^2)={\mathbb I}/2$, 
which proves that  $\ket{\psi_{63}}$ is locally equivalent to the maximally entangled Bell state. A similar reasoning works for all other states consisting of four terms and represented in \cref{fig2} by green and yellow elements. In such a way, all $36$ states, corresponding to $36$ entangled officers of Euler, can be considered as  maximally entangled, two qubit states. 

\section{Conclusions}

In this chapter I presented the famous combinatorial problem of 36 officers was posed by Euler, and its quantum version. I present an analytical form of an AME$(4,6)$ state of four subsystems with six levels each, which might be seen as a quantum solution to Euler's problem of finding two orthogonal Latin squares of order six. The existence of such a state is equivalent to the existence of a $2$-unitary matrix of order 36 or a perfect tensor with four indices, each running from one to six. This result allows us to optimally encode a single quhex into a three quhex state. Furthermore, I show a coarse-grained combinatorial structure behind the constructed state. Further analysis of such coarse-grained combinatorial structures might potentially lead to the construction of other genuinely entangled states beyond the stabilizer approach.

\chapter{Classification of absolutely maximally entangled states}
\label{chap5}

In this chapter of the thesis, I present techniques for verifying whether two AME states are equivalent with respect to Stochastic Local Operations and Classical Communication (SLOCC). 
I falsify the conjecture that for a given multipartite quantum system all AME states are SLOCC-equivalent. 
I also show that the existence of AME states with minimal support of 6 or more particles results in the existence of infinitely many such non-SLOCC-equivalent states. 
Moreover, I present AME(5,$d$) states which are not SLOCC-equivalent to the existing AME states with minimal support. Proofs of the statements presented in this Chapter, as well as an extension of some parts, in which the author’s part was not substantial can be found in the joint work 
\cite{BurchardtRaissi20}.

\section{Local equivalence of AME and k-uniform states}
\label{LUSLOCC}

The initial motivation addressed in this chapter is the question whether different constructions of AME states are equivalent by any local transformation. Note, that for AME states all reduced density matrices are maximally mixed. Therefore, the classical method for verification of local equivalence, which is  comparison of Schmidt rank and coefficients, will obviously fail \cite{RanksReducedMatrices}. 

Contrary to the AME states, it was already shown that some $k$-uniform states are not locally equivalent \cite{raissi2019new}, this result was indeed based on a comparison of Schmidt ranks of reduced density matrices. Nevertheless, the aforementioned rank argument is never conclusive for SLOCC-verification of two $k$-uniform states of minimal support. 

In this Chapter, we tackle those two cases, i.e $k$-uniform states of the minimal support and AME states, for which the bound on the number $k$ is saturated, and provide general techniques of SLOCC-equivalence verification between such states. 

We begin with the formal definition of SLOCC- and LU-equivalence. Two $N$-qudit states $\Ket{\psi}$ and $\Ket{\phi}$ are LU-equivalent if one can be transformed into another by local unitary operators, i.e
\begin{equation}
\Ket{\phi} = U_1 \otimes \cdots \otimes U_N \Ket{\psi}.
\end{equation}
Similarly, two states $\Ket{\psi}$ and $\Ket{\phi}$ are SLOCC-equivalent if and only if there exists a \emph{local invertible} operator connecting those states \cite{ThreeQub}:
\begin{equation}
\label{AArhusNoc}
\Ket{\phi} = O_1 \otimes \cdots \otimes O_N \Ket{\psi}.
\end{equation}
Since LU- and SLOCC-equivalences are equivalence relations, the state space might be naturally partitioned into \emph{LU classes} and \emph{SLOCC classes} respectively. As a consequence of Kempf-Ness theorem \cite{KempfNess}, two AME states, or more generally $k$-uniform states are SLOCC-equivalent if and only if they are LU-equivalent \cite{GourWallach}.

\section{Local equivalences, case 2k<N}
\label{31}
We begin with introducing the notion of a \textit{support} of a state and specific class of local unitary operations, namely \textit{monomial matrices}. Both are essential for the further investigations of the classification problem of $k$-uniform states.

The \textit{support} of a state $\ket{\psi}$ of $N$ subsystems with $d$ levels each is the the number of non-zero coefficients in the computational basis. Note, that the support of $k$-uniform state is at least $d^{k}$, since the partial trace over $N-k$ particles is an identity matrix $\Id_{d^{k}}$. The $k$-uniform state with support achieving its lower bound $d^{k}$ is called a \textit{minimal support} state. Note that examples of AME and $1$-uniform states presented in \cref{ex1GHZ,ex2}, \cref{chapter3} are all of the minimal support.

\begin{definition}
A unitary matrix $M$ is called \textit{monomial} if one of the following holds:
\begin{enumerate}
\item $M$ has exactly one nonzero element in each row and each column,
\item $M$ is a product of a permutation and an invertible diagonal matrix,
\item $M$ does not change the support of any quantum state.
\end{enumerate}
The local monomial operation of the multipartite quantum states are denoted as $LM$-eoperations.
\end{definition}

It is straightforward to verify that conditions 1-3 are equivalent. 
Obviously, any LM-operation provides the LU-equivalence between two states of minimal support.  
Interestingly, the reverse statement holds true for $k$-uniform states for which $2k<N$. This simplifies a search for LU-equivalence between two $k$-uniform states of minimal support to the search within the $LM$ class. In \cite{BurchardtRaissi20}, I showed the following.

\begin{proposition}
\label{prop1}
For $2k<N$, each LU- or SLOCC-equivalency between two $k$-uniform states of minimal support is in fact LM-equivalency.
\end{proposition}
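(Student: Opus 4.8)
The plan is to first reduce the problem to local unitary (LU) equivalence and then to show that any LU-equivalence between two minimal-support $k$-uniform states must act by a monomial matrix on every site. For the reduction I would invoke the Kempf--Ness result recalled above: since $k$-uniform states lie in the closed orbit of the local $SL$-action, two of them are SLOCC-equivalent iff they are LU-equivalent. Hence it suffices to treat a relation $\ket{\phi} = U_1 \otimes \cdots \otimes U_N \ket{\psi}$ with each $U_i$ unitary.

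Next I would record the combinatorial structure forced by minimal support. Writing $\ket{\psi} = \sum_{s \in S} a_s \ket{s}$ with $|S| = d^k$, the requirement that every $k$-party reduction be maximally mixed forces (i) all $|a_s|$ to be equal, and (ii) the projection of $S$ onto any $k$ coordinates to be a bijection onto $\{0,\dots,d-1\}^k$, i.e. $S$ is an index-unity orthogonal array of strength $k$. The hypothesis $2k<N$ enters here: two distinct strings of $S$ can agree on at most $k-1$ coordinates (otherwise a bijective $k$-projection would identify them), so the projection of $S$ onto any set of at least $k$ coordinates is injective, which is exactly the irredundancy property. The same holds for the support $T$ of $\ket{\phi}$.

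The core step is to prove each $U_i$ is monomial, and I would argue by contradiction through a support-growth estimate. Expanding $\ket{\phi}$ in the computational basis gives coefficients $b_t = \sum_{s\in S} a_s \prod_{i=1}^N (U_i)_{t_i,s_i}$, so the support of $\ket{\phi}$ is contained in $\bigcup_{s\in S}\prod_i \mathrm{supp}\big((U_i)_{\cdot,s_i}\big)$. If every $U_i$ is monomial this union has exactly $d^k$ elements and no cancellation occurs, reproducing the minimal support. Conversely, suppose some column of some $U_i$ --- say column $c$ of $U_1$ --- has two nonzero entries, in rows $a_1 \neq a_2$. Choosing a string $t$ that uses the ``off-diagonal'' value $a_2$ at site $1$ while matching a fixed $s \in S$ (with $s_1 = c$) on the remaining sites, I would show that injectivity of the projection onto sites $2,\dots,N$ collapses the sum defining $b_t$ to a single surviving term, which is nonzero. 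This produces a computational-basis string in the support of $\ket{\phi}$ beyond the $d^k$ ``diagonal'' ones, contradicting that $\ket{\phi}$ has minimal support.

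The main obstacle is the fully general case in which several $U_i$ are simultaneously non-monomial, where one must guarantee that the exhibited extra coefficient $b_t$ does not cancel against other contributions. This is precisely where the strict inequality $2k<N$ (hence injectivity of projections onto $\ge k$ coordinates) is indispensable: it lets me isolate a string $t$ reached by a unique $s$, so that no cancellation is possible. I would organize this either as an induction that ``monomializes'' one site at a time --- at each stage absorbing the already-monomial factors into a relabelling of $S$ --- or as a direct dimension count showing $|\,\mathrm{supp}\,\ket{\phi}\,| > d^k$ whenever any $U_i$ fails to be monomial. Once all the $U_i$ are shown to be monomial, the equivalence is by definition an LM-equivalence, which completes the proof.
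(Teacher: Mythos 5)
The thesis does not actually reproduce a proof of \cref{prop1} (it defers to the joint work \cite{BurchardtRaissi20}), so I compare your attempt against what any correct argument must accomplish. Your setup is fine: minimal-support $k$-uniform states have equal-modulus coefficients, their supports are index-unity orthogonal arrays, and distinct codewords agree in at most $k-1$ positions. The fatal problem is the core step. You claim that, after choosing $t$ to ``match'' a fixed codeword $s$ on sites $2,\ldots,N$, injectivity of the projection of $S$ onto those sites collapses $b_t=\sum_{s'\in S}a_{s'}\prod_i(U_i)_{t_i,s'_i}$ to a single surviving term. Injectivity only says distinct codewords have distinct tails; it says nothing about \emph{which} codewords contribute to $b_t$. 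The contributing set is $\{s' : (U_i)_{t_i,s'_i}\neq 0 \text{ for all } i\}$, and when the $U_i$ have no vanishing entries at all (Fourier matrices, for instance) \emph{every} codeword contributes to \emph{every} $t$, so no choice of $t$ isolates anything and cancellations must be confronted head-on. This is not a corner case; it is the entire content of the proposition. A decisive sanity check: nothing in your argument uses $2k<N$ in a way that breaks at $2k=N$ (the projection onto $N-1\geq k$ sites is injective there as well), so your proof would apply verbatim to the AME$(4,3)$ state of \cref{ex2} --- yet $F_3^{\otimes 4}$ is a non-monomial automorphism of that state (\cref{32}), so the statement is false at $2k=N$ and any argument that survives there is invalid. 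Indeed you have misplaced where the hypothesis enters: injectivity of projections onto $\geq k$ coordinates holds for every index-unity OA; what $2k<N$ really buys is that the tails of two distinct codewords, restricted to any $N-k$ coordinates, differ in at least $N-2k+1\geq 2$ positions. Your two fallbacks do not repair this: the site-by-site ``monomialization'' induction has no valid starting point (isolating site $1$ already requires the other sites to be monomial), and a ``dimension count'' ignoring cancellation begs the question. There is also a secondary gap: the Kempf--Ness reduction gives the \emph{existence} of a local unitary mapping $\ket{\psi}$ to $\ket{\phi}$, but the proposition constrains \emph{every} equivalence operator, including a given non-unitary SLOCC operator; knowing that some LU map is monomial does not restrict an arbitrary invertible product operator.

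Both gaps are closed by the route taken, in essence, in \cite{BurchardtRaissi20}. Split the parties into any group $A$ of $k$ sites and its complement $B$, and write $\ket{\psi}\propto\sum_{\vec{i}}\omega_{\vec{i}}\ket{\vec{i}}_A\ket{\lambda(\vec{i})}_B$, $\ket{\phi}\propto\sum_{\vec{j}}\omega'_{\vec{j}}\ket{\vec{j}}_A\ket{\lambda'(\vec{j})}_B$. Projecting $(\mathcal{A}\otimes\mathcal{B})\ket{\psi}=\ket{\phi}$ onto $\bra{\vec{j}}_A$ and applying $\mathcal{B}^{-1}$ gives, for every $\vec{j}$,
\begin{equation*}
\sum_{\vec{i}}\omega_{\vec{i}}\,\mathcal{A}_{\vec{j}\vec{i}}\,\ket{\lambda(\vec{i})}
\;=\;
\omega'_{\vec{j}}\,\mathcal{B}^{-1}\ket{\lambda'(\vec{j})} .
\end{equation*}
The right-hand side is a nonzero \emph{product} vector across the $N-k$ sites of $B$; the left-hand side is a superposition of computational strings that pairwise differ in at least $N-2k+1\geq2$ positions. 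If two coefficients $\mathcal{A}_{\vec{j}\vec{i}_1},\mathcal{A}_{\vec{j}\vec{i}_2}$ were nonzero, the product structure would force a nonzero amplitude on a hybrid string at Hamming distance $1$ from $\lambda(\vec{i}_1)$, which no $\lambda(\vec{i})$ can equal --- contradiction. Hence every row of $\mathcal{A}$ has exactly one nonzero entry, so $\mathcal{A}$, and therefore each $U_i$ with $i\in A$, is monomial; varying the group $A$ covers all sites. Note that this argument uses only invertibility, never unitarity, so it proves the SLOCC form of the claim directly, with no appeal to Kempf--Ness, and it defeats Fourier-type cancellation rather than assuming it cannot occur.
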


\begin{corollary}
\label{coro1}
For $2k<N$, two $k$-uniform states of minimal support are LU- or SLOCC-equivalent if and only if they are LM-equivalent.
\end{corollary}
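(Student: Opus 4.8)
The plan is to obtain \cref{coro1} as an immediate consequence of \cref{prop1} together with the elementary containments among the three equivalence relations, so that almost all of the genuine content is already carried by \cref{prop1} and the corollary merely packages it as a biconditional.

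First I would dispose of the direction ``LM-equivalent $\Rightarrow$ LU- and SLOCC-equivalent,'' which holds without any hypothesis on $k$ or $N$ and without reference to minimal support. By definition a monomial matrix is unitary, so a local monomial operation $M_1 \otimes \cdots \otimes M_N$ is in particular a local unitary operation; hence any two LM-equivalent states are LU-equivalent. Since every unitary operator is invertible, an LU-equivalence is a special case of the SLOCC-equivalence \cref{AArhusNoc}, obtained by taking $O_i = U_i = M_i$. Thus LM-equivalence implies both LU- and SLOCC-equivalence.

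For the converse I would simply invoke \cref{prop1}: under the standing hypotheses $2k<N$ and minimal support, any LU- or SLOCC-equivalence between two $k$-uniform states is in fact realized by an LM-operation. Combining the two directions shows that, in this regime, the three notions of LM-, LU-, and SLOCC-equivalence all coincide on $k$-uniform states of minimal support, which is exactly the stated biconditional. This is moreover consistent with the fact, recalled via the Kempf--Ness argument, that LU- and SLOCC-equivalence already agree for $k$-uniform states, so the ``or'' appearing in the statement introduces no ambiguity.

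The only real difficulty lies not in the corollary but in \cref{prop1} itself, whose proof is deferred to the cited work: there one must show that a local unitary (or local invertible) operator carrying one minimal-support $k$-uniform state to another cannot create genuine superpositions in the computational basis, i.e. must be monomial. The leverage is that $k$-uniformity with $2k<N$ forces enough reduced density matrices to be maximally mixed that the support structure becomes rigid, and the step where the inequality $2k<N$ is truly needed is precisely the one excluding any non-monomial local operator. For \cref{coro1}, however, no such work is required, as that step is supplied wholesale by \cref{prop1}.
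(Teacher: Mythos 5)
Your proposal is correct and matches the paper's (implicit) treatment: the paper derives \cref{coro1} as an immediate consequence of \cref{prop1}, with the trivial direction handled exactly as you do, since monomial matrices are by definition unitary and hence invertible, so LM-equivalence gives LU- and SLOCC-equivalence. You also correctly identify that all substantive work lives in \cref{prop1}, whose proof the paper likewise defers to the cited reference.
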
 

The above result turned to be useful in a more refined problem of SLOCC-verification between two AME states, with not necessary minimal support. Below, we present examples of two families of AME($5,d$) states.

\begin{example}
\label{ex5}
The following state
\[
\ket{\text{AME'(5,d)}} =\dfrac{1}{d} \sum_{i,j=0}^{d-1} \ket{i,j,i+j,2i+j,3i+j} 
\]
is an AME state with minimal support for all prime dimensions $d\geq 5$ \cite{AME-QECC-Zahra}. 
\end{example}

It is known that AME states with minimal support of five or six qubits do not exist \cite{Huber_2018}. Nevertheless, the construction of AME($5,2$) state with non-minimal support is known \cite{Rains1999NonbinaryQC,ComDesi}.

\begin{example}
\label{ex3}
The following state
\begin{equation*}
\ket{\text{AME(5,d)}} =
\dfrac{1}{\sqrt{d^3}}
\sum_{i,j,\ell=0}^{d-1} \omega^{(3i+j) \ell} 
\ket{i,j,i+j,2i+j +\ell,\ell} ,
\end{equation*}
where $\omega$ is $d$th root of unity, is $\ket{\text{AME(5,d)}} $ state for any integer number $d\geq 2$ \cite{ComDesi,Rains1999NonbinaryQC}. 
\end{example}

In \cite{BurchardtRaissi20}, I showed that states presented in \cref{ex3} and \cref{ex5} are not locally equivalent. In such a way, I falsified the conjecture that for a given multipartite quantum system all AME states are SLOCC-equivalent.

\begin{proposition}
\label{AME55}
For any prime local dimension $d$, two families of AME(5,d) states: $\ket{\text{AME(5,d)}}$ and $\ket{\text{AME'(5,d)}}$
presented in \cref{ex3} and \cref{ex5} respectively, are not SLOCC-/LU-equivalent. 
\end{proposition}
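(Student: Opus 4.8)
The plan is to reduce the problem to local unitary (LU) equivalence and then to exhibit an LU-invariant that separates the two families. Since both states are $k$-uniform with $k=2$, the Kempf--Ness theorem \cite{KempfNess,GourWallach} (recalled in \cref{LUSLOCC}) guarantees that SLOCC-equivalence coincides with LU-equivalence, so it suffices to show that no local unitaries $U_1,\dots,U_5$ satisfy $\ket{\text{AME(5,d)}} = (U_1\otimes\cdots\otimes U_5)\ket{\text{AME'(5,d)}}$. The obstacle at the outset is that every coarse invariant is blind here: because both states are AME, all their one- and two-party reductions are maximally mixed, and all three-party reductions share the identical spectrum $1/d^2$ with multiplicity $d^2$; hence the Schmidt ranks across every bipartition and the full list of reduction spectra agree, exactly as anticipated in \cref{LUSLOCC}. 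The bare supports ($d^2$ versus $d^3$) do differ, but support is not LU-invariant, so this observation gives nothing directly.

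My strategy is therefore to exploit the regime $2k<N$ (here $4<5$) together with the fact that $\ket{\text{AME'(5,d)}}$ has \emph{minimal} support. By \cref{prop1,coro1}, among minimal-support $2$-uniform five-partite states LU-equivalence collapses to local-monomial (LM) equivalence, i.e.\ to relabelling the $d^2$ product terms by a local permutation together with phases. I would assume an equivalence of the above form, transport the rigid combinatorial (orthogonal-array / MDS) structure of $\ket{\text{AME'(5,d)}}$ through the unitaries $U_i$, and use this LM-rigidity to pin down how the supporting product configuration of the minimal-support side is forced to be mapped.

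The heart of the argument is to convert the structural difference between the families into a genuinely phase-sensitive invariant. The natural first candidates fail: one might compare whether the range (support subspace) of a three-party reduction is spanned by product vectors, but these coincide. Indeed, for $\ket{\text{AME(5,d)}}$ the $\omega$-phased components over the two ``tail'' parties are maximally entangled, yet the accompanying quadratic (chirp) phases $\omega^{(3i+j)\ell}$ conspire, for prime $d$, to make these ranges product-spanned as well. This is precisely why the separation must be read at a finer level: one tracks, through the forced LM structure, the system of equations the entries of the $U_i$ and the roots of unity $\omega$ would have to satisfy, and shows it is inconsistent for every prime $d$. Primality of $d$ enters exactly here, ensuring $\mathbb{Z}_d$ is a field so that the relevant linear and Gauss-sum relations among the phases are non-degenerate.

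The main obstacle is this last step: since all spectrum-based quantities, and even the product-spanned-range criterion, agree, the proof cannot rest on any coarse invariant and must instead isolate a quantity that survives both the degeneracy of the reduction spectra and the full freedom of the local unitaries, and then verify its incompatibility across all relevant choices of traced-out parties. Carrying out this root-of-unity bookkeeping, which is where the weight of the proof lies, is done in detail in \cite{BurchardtRaissi20}; combining it with the LM-reduction of \cref{prop1} then shows that the states of \cref{ex3,ex5} are neither LU- nor SLOCC-equivalent, establishing the proposition.
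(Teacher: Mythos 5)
Your reduction of SLOCC- to LU-equivalence via the Kempf--Ness theorem matches the framework the paper sets up in \cref{LUSLOCC}, and your preliminary observations are sound: all reduction spectra of the two states coincide, support is not an LU-invariant, and even the ranges of the three-party reductions of $\ket{\text{AME(5,d)}}$ are spanned by orthonormal product vectors (the quadratic phases do make this so). The genuine gap lies in the step that is supposed to produce the contradiction. Your only concrete tool is the LM-rigidity of \cref{prop1,coro1}, but both of those results are stated for an equivalence between \emph{two} $k$-uniform states of \emph{minimal} support, whereas the state of \cref{ex3} has support $d^3>d^2$; neither result applies to the pair in question, and ``transporting the rigid OA/MDS structure through the unitaries'' is precisely the assertion that would need proof. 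What remains of your argument --- tracking ``the system of equations the entries of the $U_i$ and the roots of unity would have to satisfy'' --- is deferred wholesale to \cite{BurchardtRaissi20}. Since the thesis's own proof of \cref{AME55} \emph{is} that citation, your attempt contains no independent derivation: it is a strategy outline whose one substantive lemma is invoked outside its hypotheses.

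The missing step can in fact be carried out with no root-of-unity bookkeeping at all. Write $\ket{\text{AME(5,d)}}=\frac{1}{d}\sum_{i,j}\ket{i,j,i+j}\otimes\ket{B_{ij}}$ with $\ket{B_{ij}}=\frac{1}{\sqrt{d}}\sum_{\ell}\omega^{(3i+j)\ell}\ket{2i+j+\ell,\ell}$; these $d^2$ vectors are orthonormal, and each is maximally entangled across parties $4|5$. Projecting parties $1,2$ onto any product vector $\bra{a}\otimes\bra{b}$ yields $\sum_{m}\ket{m}_3\otimes\ket{v_m}$, where $\ket{v_m}=\sum_i \bar{a}_i\bar{b}_{m-i}\ket{B_{i,m-i}}$ lies in the span $W_m$ of the Bell vectors with $i+j=m$. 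The subspaces $W_m$ are mutually orthogonal, so productness across the cut $3|45$ forces exactly one nonzero $\ket{v_m}$; since $\|v_m\|^2=\sum_i |a_i|^2|b_{m-i}|^2$, this means the sumset $\mathrm{supp}(a)+\mathrm{supp}(b)$ is a singleton, which forces $\ket{a}$ and $\ket{b}$ to be computational basis vectors up to phase --- but then the projected state is $\ket{i+j}\otimes\ket{B_{ij}}$, entangled across $4|5$. Hence \emph{no} product vector of parties $1,2$ projects $\ket{\text{AME(5,d)}}$ onto a fully product state of parties $3,4,5$. By contrast, projecting $\ket{\text{AME'(5,d)}}$ onto $\bra{i}\otimes\bra{j}$ gives a multiple of $\ket{i+j,2i+j,3i+j}$, and an equivalence $U_1\otimes\cdots\otimes U_5$ would transfer this property to the product vectors $U_1\ket{i}\otimes U_2\ket{j}$ --- a contradiction. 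Note also that primality plays no role in this argument (it is needed only so that \cref{ex5} defines an AME state), contrary to your guess that it enters through non-degeneracy of Gauss-sum relations.
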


\section{Local equivalences, case 2k=N}
\label{32}

Interestingly, for AME($2k,d$) states of even number of parties $N=2k$, the statement of \cref{prop1} does not hold anymore. 
We illustrate it on the example below.

\begin{example}
Matrix proportional to the tensor product of the Fourier transform $F_3$:
\begin{equation}
\label{F(3)}
\big(F_3\big)^{\otimes 4}  =\dfrac{1}{9}
\begin{pmatrix} 
1 & 1 & 1 \\ 
1 & \omega & \overline{\omega} \\ 
1& \overline{\omega} & \omega
\end{pmatrix}^{\otimes 4} 
\end{equation}
provides an automorphism of AME($4,3$) state presented in \cref{ex2} in \cref{chapter3}. 
\end{example}

One can suspect that Fourier matrices are not the only non-monomial matrices which may provid the LU-equivalence between AME($2k,d$) states of minimal support. This turned out not to be true, if we consider AME states in composed dimension \cite{BurchardtRaissi20}. For sufficiently small dimensions, however, I derived some further conclusions on the form of matrices which can provide LU-equivalence between AME states. 
We define the class of matrices beyond the monomial class.

\begin{definition}
Let $d$, $q$ be arbitrary positive integers. A \emph{Butson-type} complex Hadamard matrix of order $d$ and complexity $q$ is a unitary matrix in which each entry is a complex $q$th root of unity scaled by the factor $1/\sqrt{d}$. 
The set of Butson-type matrices is denoted by $\text{BH}(d,q)$.
\end{definition}

Note that, in literature, \cite{Butson,Tadej}, Butson-type matrices are usually defined without scaling factor $1/\sqrt{d}$. In such a way, they are not unitary, but proportional to the unitary matrices. In particular, the class of matrices $\text{BH}(d,2)$ is simply the class of real Hadamard matrices of order $d$. In our analysis, we focus on matrices of the type $\text{BH}(d,d)$. Those matrices together with the monomial matrices provide all possible LU-equivalences of AME states. Note that the Fourier matrices belong to this class of matrices, $F_d \in \text{BH}(d,d)$.

\begin{proposition}
\label{prop1=}
Consider two AME($2k$,$d$) states of minimal support, where $k$ and $d$ are sufficiently small (see \cref{small}). 
Each LU-equivalency between them is of one of the following forms:
\begin{enumerate}
\item tensor product of Butson-type matrices $ B_i \in \text{BH}(d,d)$ multiplied by LM matrices from each side; 
\item or LM matrices itself.
\end{enumerate}
\end{proposition}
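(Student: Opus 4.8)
The plan is to first put both states into their minimal-support normal form and then strip away all monomial freedom before isolating the genuinely new phenomenon that appears only when $2k=N$. Recall from \cref{chapter3} that a minimal-support $k$-uniform state of $N=2k$ parties is carried by an MDS-type support: its $d^{k}$ nonzero coefficients sit on a classical code $C\subset\mathbb{Z}_d^{2k}$ in which every choice of $k$ coordinates is an information set, and all coefficients are unimodular. Thus I would write $\ket{\psi}=d^{-k/2}\sum_{\vec c\in C}\alpha_{\vec c}\ket{\vec c}$ and $\ket{\phi}=d^{-k/2}\sum_{\vec c\in C'}\beta_{\vec c}\ket{\vec c}$ with $\alpha,\beta$ phases, and expand the hypothesis $\ket{\phi}=U_1\otimes\cdots\otimes U_{2k}\ket{\psi}$ coefficientwise. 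This yields, for every $\vec x$, the scalar equation $\sum_{\vec c\in C}\alpha_{\vec c}\prod_i (U_i)_{x_i c_i}=\beta_{\vec x}$ when $\vec x\in C'$ and $=0$ otherwise, a rigid system that governs all admissible local factors.

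Next I would exploit the freedom to compose with local monomial ($LM$) matrices on both sides. Using the single-party fibre decomposition behind \cref{prop1} --- split off one party so that $\ket{\psi}=\sum_a\ket{a}\otimes\ket{\psi_a}$ with the $\ket{\psi_a}$ supported on \emph{disjoint} sets of $d^{k-1}$ basis vectors, disjointness being forced by the information-set property --- one recognizes the monomial part of each $U_i$ as exactly the residual classical code-equivalence freedom. After absorbing it, I reduce to the situation in which each $U_i$ is gauge-fixed and the only remaining question is whether a \emph{non}-monomial factor can survive. The hypothesis $2k=N$ enters precisely here: the code $C$ now has length $2k$ and support size $d^{k}$, i.e. it is ``half-dimensional,'' the regime in which the support can coincide with a dual structure. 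It is this self-dual-type feature --- absent when $2k<N$, cf. \cref{prop1,coro1} --- that lets a Fourier/Hadamard twist map one AME state onto another, as the example $(F_3)^{\otimes 4}$ already shows.

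The core of the argument is then a dichotomy for each leg: after gauge-fixing, $U_i$ is either monomial or has \emph{all} its nonzero entries of equal modulus, i.e. is a monomial multiple of a complex Hadamard matrix. I would establish this by feeding the coefficient equations above into the AME condition for \emph{both} $\ket{\psi}$ and $\ket{\phi}$: demanding that every balanced bipartition of $\ket{\phi}$ still yield a maximally mixed reduction forces the column moduli of each $U_i$ to be constant across its support, excluding any intermediate profile. This coupled system across all $2k$ legs --- each local constraint interacting with the perfect-tensor condition on the others --- is the main obstacle, and is what genuinely requires the global $2k=N$ balance. To upgrade ``complex Hadamard'' to the stated $\text{BH}(d,d)$, I would use that the output amplitudes $\beta_{\vec x}$ are again unimodular and must match the discrete support, which pins the Hadamard entries to roots of unity; here the hypothesis that $k$ and $d$ are \emph{sufficiently small} (\cref{small}) is essential, since it lets me invoke the complete low-dimensional classification of complex Hadamard matrices --- where they are all Butson-type, the continuous Haagerup-type families at $d=6$ being exactly what smallness excludes --- and thereby close the enumeration. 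Reassembling the legs, any equivalence either keeps every $U_i$ monomial, giving case (2), or produces a tensor product of Butson matrices dressed by $LM$ matrices on each side, giving case (1).
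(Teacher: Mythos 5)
Your proposal cannot be checked line-by-line against a proof in this thesis, because the thesis states \cref{prop1=} without proof and defers it to the joint work \cite{BurchardtRaissi20}; the only methodological hint given here is \cref{small}, which says the restriction on $(k,d)$ comes from ``the necessary condition for existence and extension of combinatorial designs called \emph{mutually orthogonal hypercubes}.'' Measured against that indication, and against correctness itself, your sketch has two genuine gaps. The first is structural: the dichotomy you need --- that each gauge-fixed local factor $U_i$ is either monomial or has \emph{all} entries of equal modulus --- is the entire technical core, and you only assert it (``feeding the coefficient equations into the AME condition forces the column moduli to be constant''). Crucially, your sketch establishes this dichotomy \emph{without invoking the bound of \cref{small} at all}; the bound appears only in your last step. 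But the dichotomy is false without it. By the paper's own \cref{small} and its table, the proposition fails at $d=9$ for $k=2$ (and $d=16$ for $k=3$): in composite dimension the minimal-support AME$(4,9)$ state built over $\mathbb{Z}_3 \oplus \mathbb{Z}_3$ (\cref{Aarhus14}) factorizes, under $\mathcal{H}_9 \cong \mathcal{H}_3 \otimes \mathcal{H}_3$ on each party, into two AME$(4,3)$ states, and applying $F_3 \otimes \Id_3$ on every party gives an LU-automorphism whose local factors have zeros yet are not monomial --- and no matrix with zero entries can be written as $M_1 B M_2$ with $M_i$ monomial and $B$ a (zero-free) Hadamard matrix. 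Any argument whose dichotomy step is dimension-independent therefore proves a false statement; the $(k,d)$ hypothesis must be wired into exactly the step you skipped, which is where the mutually-orthogonal-hypercube combinatorics enters in the actual proof.

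The second gap is the final upgrade from ``complex Hadamard'' to Butson $\text{BH}(d,d)$. You propose to close this by citing a ``complete low-dimensional classification of complex Hadamard matrices, where they are all Butson-type,'' with the $d=6$ Haagerup-type families ``exactly what smallness excludes.'' This is factually wrong on both counts. Complex Hadamard matrices are completely classified only up to order $5$; already at order $4$ there is a continuous one-parameter family whose generic member is not monomially equivalent to any $\text{BH}(4,4)$ matrix, and at order $7$ Petrescu's continuous family is non-Butson --- and $d=4,7,8$ all lie inside the proposition's range for $k=2$, which covers every $d<9$. Moreover the role you assign to the bound is mis-located: $d=6$ is \emph{allowed} by the hypothesis of \cref{prop1=}; it is merely vacuous because no OLS of order $6$, hence no minimal-support AME$(4,6)$ state, exists, whereas the dimension the bound actually excludes for $k=2$ is $d=9$. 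So the roots-of-unity property of the entries cannot be obtained by citation; it has to be derived from the arithmetic of your coefficient equations over the OA/MDS support (equivalently, from the hypercube combinatorics), and that derivation is missing. What is sound in your proposal --- the coefficientwise expansion over the MDS-type support, the disjoint-fibre decomposition inherited from \cref{prop1}, the use of LM gauge freedom, and the observation that only the balanced case $2k=N$ admits Fourier-type twists --- is a reasonable frame, but both load-bearing steps are absent or incorrect.
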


\begin{remark}
\label{small}
There is the following restriction on numbers $d$ and $k$ imposed in the statement of \cref{prop1=}:
\[
\Big(k+1 \Big) \Big( 1 + \sqrt[\leftroot{-3}\uproot{3}{k-1}]{k} \Big) \leq d .
\]
for $k>1$ and $2<d$ for $k=1$. 
This bound is related to the necessary condition for existence and extension of combinatorial designs called \emph{mutually orthogonal hypercubes}. 
We discuss them in detail in \cite{BurchardtRaissi20}. For example, for $k=2,3,4,5,6$ the smallest value of $d_{\text{min}}$ which does not satisfy the bound above is presented in a \cref{Aarhus21}. However, the given bound is not tight. In addition, \cref{Aarhus21} presents $d_{\text{max}}$ value, which is the maximal value of a local dimension $d$ for which \cref{prop1=} does not hold (we found a counterexample). We conjecture, that the asymptotic behavior of these values is $(k-1)^2$.
\end{remark}

\begin{table}[ht!]
  \begin{center}
    \begin{tabular}{|c|c|c|c|} 
    \hline
     k& AME(2k,d) & $d_{\text{min}}$ &$d_{\text{max}}$ \\
      \hline
    1&  AME(2,d) & $ 3$& $ 3$\\
    2&  AME(4,d) & $ 9$& $ 9$\\
    3&   AME(6,d) & $ 11$& $ 16$\\
    4&   AME(8,d) & $ 13$& $ 25$\\
        \hline
    \end{tabular}
  \end{center}
  \label{Aarhus21}
\caption{Values $d_{\text{min}}$ and $d_{\text{max}}$ for respective values of uniformity $k$. Note that the statement of \cref{prop1=} holds true for related AME(2k,d) states in any local dimension $d<d_{\text{min}}$.}
\end{table}

Note that for small values of $d=3, \ldots,12$, the Butson matrices are classified and contains $1,2,1,4,1,143,23,51,1,4497733$ matrices (classified up to monomial matrices) equivalently \cite{Lampio}. Tables of Butson matrices are available in \cite{BUlib,BUlib2}. 
Although, the class of Butson matrices $ \text{BH}(d,d)$ grows rapidly with the dimension $d$; it seems that the subclass of such matrices that might be involved in LU-equivalences of AME states is significantly smaller, and more specific classification of matrices providing eventual LU-equivalences of AME states is needed. 
We conjecture that all such matrices are Fourier transform and tensor product of such.

\section{One-uniform states}
\label{1-uniform states}
In three consecutive section, we apply results described in \cref{31} and \cref{32} for various classes of $k$-uniform states for $k=1,2,3$. As we shall see the analysis of their LU/SLOCC-equivalences differs substantially. 

We begin with $1$-uniform states. All $1$-uniform states of $N$-qudit system with the minimal support have the following form:
\[
\ket{\psi} =\dfrac{1}{\sqrt{d}} \sum_{i=0}^{d-1} \omega_i \ket{j^1_i ,\ldots ,j^N_i} ,
\]
where indices $j^\ell_i$ runs over all levels $0,\ldots,d-1$ for all indices $\ell$. 
One can observe that they are pairwise LU-equivalent, in particular equivalent to the generalized GHZ state $\ket{\text{GHZ}_d^N} $, presented in \cref{ex1GHZ} in \cref{chapter3}. 
Indeed, the following local transformation 
\[
U_1 \big( \ket{j^1_i} \big) =  \big( \omega_i^{-1} \ket{j_i^1} \big) , \quad
U_\ell \big( \ket{j_i^\ell} \big) =  \big(  \ket{j^1_i} \big) 
\]
for systems $\ell=2,\ldots,N$ gives aforementioned LU-equivalence. In that way, we conclude that all 1-uniform states of minimal support are LU-equivalent.

\section{Two-uniform states}
\label{2-uniform states}

Consider a family of $2$-uniform states  of $N$-qudit system with minimal support of the following form:
\begin{align*}
\ket{\phi_\alpha} :=  \dfrac{1}{d} \Bigg(&  \alpha\ket{0,\ldots,0} + \\
&\sum_{i,j \neq (0,0)} \ket{i,j} \otimes \ket{\phi_{i,j}} \Bigg)
\end{align*}
which is indexed by a complex numbers $\alpha$, $|\alpha |=1$. 
Each of such a state is LU-equivalent to $\ket{\phi_{\alpha =0}} $ by the following unitary transformations:
\begin{align*}
U_1 =& \text{diag} \Big( ({\overline{\omega_\alpha}})^{n-1}, 
1 ,\ldots ,
1  \Big), \\
U_i= & \text{diag} \Big(
({\overline{\omega_\alpha}})^{d-1},
\omega_\alpha ,\ldots ,
\omega_\alpha \Big) ,
\end{align*}
for $i=2,\ldots,N$, where $\omega_\alpha = \sqrt[\leftroot{-3}\uproot{3}{d (N-1)}]{\alpha} $ is an arbitrary root.
Therefore, all states $\ket{\phi_\alpha}$ are pairwise equivalent. Furthermore, if the exceptional phase stands by a different term, the similar transformation of such a state onto $\ket{\phi_{\alpha =0}}$ might be obviously given. 
Therefore, all $2$-uniform states of the form: 
\begin{equation*}
\ket{\phi_{\omega}} =  \dfrac{1}{d} \sum_{i,j } \omega_{i,j}\ket{i,j} \otimes  \ket{\phi_{i,j}}
\end{equation*}
are equivalent. 
We conclude this discussion in the following corollary.

\begin{corollary}
\label{phases}
Two $2$-uniform states of $N$-qudit system with the minimal support which differs only by phases,
\begin{align*}
\ket{\psi} =&\sum_{I \in \mathcal{I}} \omega_I \ket{I} , \\
\ket{\psi '} =&\sum_{I \in \mathcal{I}} \omega_I ' \ket{I} ,
\end{align*}
where the sum runs over multi-index set $\mathcal{I} \subset [d]^N$ of size $| \mathcal{I} | =d^2$, are always LU-equivalent 
(and hence belong to the same SLOCC class).
\end{corollary}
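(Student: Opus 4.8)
The plan is to prove that every $2$-uniform minimal-support state on a fixed support $\mathcal{I}$ is LU-equivalent to the single \emph{flat} reference state $\ket{\Omega}:=\frac{1}{d}\sum_{I\in\mathcal{I}}\ket{I}$; transitivity then gives $\ket{\psi}\sim_{LU}\ket{\psi'}$, and since every LU-equivalence is in particular an SLOCC-equivalence this immediately places the two states in one SLOCC class. First I would record that $2$-uniformity of a minimal-support state is a property of the support alone. Because $|\mathcal{I}|=d^{2}$, each projection of $\mathcal{I}$ onto a pair of coordinates is a bijection onto $[d]^{2}$, and irredundancy forces any two distinct rows to differ on every block of $N-2$ coordinates; consequently the off-diagonal entries of each two-party reduction vanish and each diagonal entry equals $1/d^{2}$, \emph{independently of the phases} $\omega_{I}$. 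Hence $\ket{\Omega}$ is itself a bona fide $2$-uniform minimal-support state with the same support, and is a legitimate common target. Note that this argument never invokes \cref{prop1}, so it is insensitive to whether $2k<N$ or $2k=N$.

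Next I would realise every such equivalence by purely diagonal local unitaries. Writing $D_{\ell}=\mathrm{diag}(e^{i\theta_{\ell,0}},\dots,e^{i\theta_{\ell,d-1}})$, the product $D_{1}\otimes\cdots\otimes D_{N}$ multiplies the basis term $\ket{I}$ by $\exp(i\sum_{\ell}\theta_{\ell,I_{\ell}})$, so $\ket{\psi}\sim_{LU}\ket{\Omega}$ as soon as the real system
\begin{equation}
\sum_{\ell=1}^{N}\theta_{\ell,I_{\ell}}\equiv-\arg\omega_{I}\pmod{2\pi},\qquad I\in\mathcal{I},
\end{equation}
is solvable. This is precisely the mechanism underlying the explicit unitaries $U_{1},U_{i}$ of the preceding discussion, which cancel one exceptional phase; the analogous diagonal correction is available for any single term, and because diagonal unitaries act multiplicatively and independently of the current coefficients, such corrections compose without any bookkeeping about their order. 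The whole corollary therefore collapses to the solvability of the displayed linear system for an \emph{arbitrary} phase vector $(\arg\omega_{I})_{I}$.

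The hard part will be establishing this solvability. Over $\mathbb{R}$ the image of $(\theta_{\ell,m})\mapsto(\sum_{\ell}\theta_{\ell,I_{\ell}})_{I}$ is the span of the incidence vectors $v_{\ell,m}:=(\mathbf{1}[I_{\ell}=m])_{I\in\mathcal{I}}\in\mathbb{R}^{\mathcal{I}}$, so I must show these $Nd$ vectors span all of $\mathbb{R}^{d^{2}}$; surjectivity over $\mathbb{R}$ then yields surjectivity modulo $2\pi$. Identifying $\mathcal{I}$ with the $d\times d$ grid via $I\mapsto(I_{1},I_{2})$, the vectors $v_{1,i},v_{2,j}$ become the row- and column-indicators, while each further subsystem contributes the $d$ permutation-indicators of a Latin square, the $N-2$ squares being mutually orthogonal. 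Thus the claim is that the point--line incidence matrix of the net carried by $\mathcal{I}$ (equivalently the index-unity orthogonal array, or system of MOLS) has full column rank $d^{2}$. I would attack this through the Gram matrix $\sum_{\ell,m}v_{\ell,m}v_{\ell,m}^{\mathsf{T}}=N\,\Id+C$, where $C$ has zero diagonal and $C_{IJ}$ counts the coordinates on which the rows $I\ne J$ agree, and argue that $-N$ is not an eigenvalue of $C$. In the extremal affine-plane case $N=d+1$ this is the classical identity $N\,\Id+C=d\,\Id+J$, which is manifestly positive definite, and enlarging $N$ only enlarges the spanning family; the genuine obstacle is to bound the spectrum of $C$ uniformly across all admissible $N\ge 4$ and all choices of orthogonal Latin squares, so that $N\,\Id+C$ remains invertible.
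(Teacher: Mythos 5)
Your first two paragraphs are correct and, in substance, they retrace the paper's own argument: the paper also works exclusively with diagonal local unitaries (its explicit matrices $U_1,U_i$ remove one exceptional phase, and the general case is claimed by composing such removals), so your reformulation as solvability of the linear system $\sum_{\ell}\theta_{\ell,I_{\ell}}\equiv-\arg\omega_I\pmod{2\pi}$ is exactly the mechanism at stake, and your observation that $2$-uniformity depends on the support alone is right. The genuine gap is the final step, and it is not a uniform spectral estimate waiting to be filled in: the spanning claim is \emph{false} for every admissible pair except the extremal one $N=d+1$. For a strength-$2$ index-unity support, your matrix $C$ is the adjacency matrix of the collinearity graph of the net on $\mathcal{I}$ (a Latin-square graph), which is strongly regular with eigenvalues $N(d-1)$, $d-N$ and $-N$, the eigenvalue $-N$ occurring with multiplicity $(d-1)(d+1-N)$. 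Since at most $d-1$ MOLS of order $d$ exist, the admissible range is $4\le N\le d+1$, so your monotonicity remark points the wrong way: $N=d+1$ is the \emph{largest} admissible degree and the only one for which $N\Id+C$ is nonsingular. For every $N\le d$ the incidence rank is $N(d-1)+1<d^{2}$, the reachable phase patterns form a proper subtorus, and not even a single exceptional phase can be cancelled: for $\mathcal{I}=\{(i,j,i+j,2i+j,3i+j)\}\subset\mathbb{Z}_5^{5}$, summing your equations over each parallel class forces $\theta_{\ell,m}=c_{\ell}+\tfrac{\phi}{5}\delta_{m,0}$, and the rows $(0,0,0,0,0)$ and $(1,0,1,2,3)$ then give $\sum_{\ell}c_{\ell}=0$ and $\sum_{\ell}c_{\ell}=-\phi/5$, a contradiction for generic $\phi$.

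You should also know that this failure is not specific to your route, and it exposes a restriction hidden in the paper's proof as well. The paper's unitaries leave the phase $\omega_\alpha^{N-1}=\alpha^{1/d}\ne1$ on every row having no zero coordinate, and such rows exist precisely when $N\le d$ (there are $(d-1)(d+1-N)$ of them); so the paper's argument, like yours, is really a proof only when every pair of rows agrees in exactly one position, i.e.\ when the support comes from a complete set of MOLS, $N=d+1$ — which does cover its worked examples such as AME$(4,3)$. Worse, the obstruction cannot be bypassed by cleverer non-diagonal unitaries in the regime $2k<N$: by \cref{prop1} any LU-equivalence between two such states is LM, so the LU-orbit of the flat state is a finite union of translates of the $\bigl(N(d-1)+1\bigr)$-dimensional subtorus above, and phase patterns outside it — which exist whenever $N\le d$, e.g.\ for the minimal-support state of \cref{ex5} with $(N,d)=(5,5)$ — are then genuinely inequivalent to the flat state. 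Thus \cref{phases} as stated stands in tension with \cref{prop1}; both your argument and the paper's establish it only under the extra hypothesis $N=d+1$, and any correct proof must use that hypothesis exactly where your spanning claim (and the paper's cancellation) breaks down.
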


From \cref{phases}, it becomes clear that the diversity of possible phases in front of each term in the $2$-uniform state with minimal support does not reflect in the number of SLOCC classes. 
In fact, each $2$-uniform state with minimal support is equivalent to the one with all equal phases. 

Henceforward, the enumeration of SLOCC classes of $2$-uniform states with the minimal support might be restricted only to states with all phases equal. 
In that way, it coincides with the classification of related orthogonal array. 

\begin{corollary}
Classification of $2$-uniform states with minimal support for $N>4$ particles is equivalent to the classification of relevant orthogonal arrays, OA($d^2$,N,$d$,$2$), up to permutation of indices on each position. Potentially, for $N=4$ two AME(4,d) states of minimal support might be in the same SLOCC class, even though the corresponding OA are not equivalent.
\end{corollary}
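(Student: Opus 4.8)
The plan is to combine the two structural facts already in hand—\cref{coro1}, which collapses SLOCC/LU-equivalence to LM-equivalence whenever $2k<N$, and \cref{phases}, which declares the phases decorating a minimal-support $2$-uniform state irrelevant up to LU-equivalence—and to read both through the dictionary between minimal-support $k$-uniform states and irredundant orthogonal arrays of strength $k$ recalled in \cref{chapter3}. I would first fix that correspondence: a $2$-uniform state of $N$ qudits with minimal support has exactly $d^2$ nonzero coefficients, and its support $\mathcal{I}\subset[d]^N$ is precisely the row set of an $\OA(d^2,N,d,2)$, while conversely every such array furnishes a state $\sum_{I\in\mathcal{I}}\omega_I\ket{I}$. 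Invoking \cref{phases}, each such state is LU-equivalent to the one with all phases equal to $1$, so the SLOCC class of the state depends only on the support set, that is, only on the underlying array.

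Next, for $N>4$ we have $2k=4<N$, so \cref{coro1} applies: two such states are SLOCC-equivalent exactly when they are LM-equivalent, i.e. related by a tensor product $M_1\otimes\cdots\otimes M_N$ of local monomial matrices. Writing each $M_\ell=P_\ell D_\ell$ as a permutation times an invertible diagonal, the permutation factors $P_\ell$ act by relabelling the $d$ symbols on the $\ell$-th column of the array—exactly the ``permutation of indices on each position''—whereas the diagonal factors $D_\ell$ only redistribute phases. This yields both implications at once: if two arrays agree after independent column-symbol permutations, the corresponding pure-permutation matrices supply an LM- (hence SLOCC-) equivalence; conversely any LM-equivalence forces, through its permutation parts, such a column-symbol identification of the two arrays, the diagonal parts touching only phases that are immaterial by \cref{phases}. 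Hence the SLOCC classification of minimal-support $2$-uniform states for $N>4$ coincides with the classification of $\OA(d^2,N,d,2)$ up to permutation of symbols within each column.

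For the $N=4$ caveat I would simply observe that the hypothesis $2k<N$ of \cref{coro1} fails, so LU-equivalences need no longer be monomial; the automorphism of $\ame{4,3}$ realised by $(F_3)^{\otimes 4}$ exhibited above shows that genuinely non-monomial local unitaries do occur. Consequently two $\ame{4,d}$ states of minimal support can lie in a single SLOCC class even when their arrays are inequivalent as combinatorial designs, which is precisely the asserted exceptional behaviour at $N=4$.

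I expect the main obstacle to be the clean reconciliation of the monomial structure coming from \cref{coro1} with the phase freedom coming from \cref{phases}: one must verify that the diagonal parts of the monomial matrices never generate an equivalence beyond what the permutation parts already capture, so that the combinatorial invariant is exactly the column-symbol-permutation class of the array—neither finer (no extra identifications sneaking in through phases) nor coarser (no genuine LM-equivalence left undetected by the permutation data).
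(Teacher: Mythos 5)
Your proposal is correct and follows essentially the same route as the paper: it combines \cref{phases} (phases are immaterial) with \cref{coro1} (for $2k<N$, SLOCC reduces to LM), decomposes each monomial factor into a permutation acting as a symbol relabelling on the corresponding OA column times a diagonal that only shifts phases, and handles the $N=4$ caveat exactly as the paper does, via the failure of the $2k<N$ hypothesis and the non-monomial $(F_3)^{\otimes 4}$ automorphism of AME$(4,3)$. The ``obstacle'' you flag at the end is already resolved by your own argument, since monomial operations map the support through their permutation parts alone, so the diagonal parts can neither create nor hide equivalences.
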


In literature, the classification of OAs is usually considered up to permutations of rows and columns \cite{HedayatIndex1OA,BushStudies}. Note that permutation of columns resembles the physical operation of  exchanging subsystems. Hence, while verifying the SLOCC-equivalence one should always indicate whether such operations are considered \cite{FourQubits,FourQubits8}. 
In particular, by the classification of OAs, there is at most one OA($d^k$,$N$,$d$,$k$) for dimensions $d=2,\ldots ,17$ and for any strength $k$ and any number $N$ \cite{OAlib}. 
Therefore, $2$-uniform state with minimal support and the local dimension $d=2,\ldots ,17$ are always SLOCC equivalent or SLOCC equivalent after permutation of parties. 

\begin{conjecture}
All $2$-uniform states of minimal support are LU-equivalent, and hence represent the same SLOCC class.
\end{conjecture}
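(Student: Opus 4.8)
The plan is to split the problem into its genuinely combinatorial part and its analytic part, and to pin down exactly where the difficulty lives. First I would dispose of the amplitudes: by \cref{phases} any $2$-uniform state of minimal support is LU-equivalent to the ``flat'' state obtained by setting every phase $\omega_I$ equal to $1$, so it suffices to compare flat states supported on index sets $\mathcal{I},\mathcal{I}'\subset[d]^N$. Each such support is exactly the row set of an index-unity $\oa{d^2,N,d,2}$, and two flat states are LU-equivalent precisely when the arrays are related by a local monomial map. For $N\geq 5$ we have $2k=4<N$, so \cref{prop1} and \cref{coro1} guarantee that \emph{every} LU- (equivalently SLOCC-) equivalence is already an LM-equivalence; combined with the phase reduction this shows that, in this range, LU-classification of the states coincides with isotopy-classification of the underlying arrays, i.e.\ equivalence under permutation of symbols within each column together with permutation of columns (the latter implementing an exchange of parties).

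The heart of the argument is therefore purely combinatorial: I would try to prove that, for every $d$ and every $4\le N\le d+1$, all index-unity arrays $\oa{d^2,N,d,2}$ lie in a single isotopy class. Via the classical dictionary, such an array is equivalent to a system of $N-2$ mutually orthogonal Latin squares (MOLS) of order $d$, where column symbol permutations correspond to relabelling entries of each square and of the row/column indices, and column permutations correspond to permuting the squares. Thus the statement to establish is that any two systems of $N-2$ MOLS of order $d$ are equivalent up to isotopy and reordering of the squares. For the dimensions tabulated in the OA libraries ($d\le 17$) this uniqueness is known by exhaustive classification, which already settles the conjecture there; the content of the conjecture is the extrapolation to all $d$.

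The case $N=4$, i.e.\ the AME$(4,d)$ states, must be handled separately, because the reduction above breaks down: here $2k=N$, \cref{prop1} fails, and non-monomial equivalences appear, as witnessed by the automorphism $(F_3)^{\otimes 4}$ of the AME$(4,3)$ state in \cref{ex2}. In this case I would invoke \cref{prop1=}: every LU-equivalence between minimal-support AME$(4,d)$ states (for $d$ within the stated bound) is a tensor product of Butson matrices $B_i\in\text{BH}(d,d)$ dressed by LM matrices, or an LM map outright. The task then becomes to show that, after allowing these finitely many Butson-type dressings, all isotopy classes of $\oa{d^2,4,d,2}$ collapse into one LU class. Because more equivalences are available than in the $N\ge 5$ regime, uniqueness should in principle be easier to achieve, but the bookkeeping is heavier and the analysis is only controlled in the small-$(k,d)$ range of \cref{prop1=}.

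I expect the combinatorial uniqueness of index-unity $\oa{d^2,N,d,2}$ to be the main obstacle, and a genuinely hard one. In full generality it runs against the known abundance of inequivalent MOLS systems at large orders, so a proof cannot proceed by a naive appeal to Latin-square theory; one would have to exhibit some additional rigidity forced by simultaneous $2$-uniformity and minimal support — for instance a canonical-form or \emph{linearisation} argument showing that every such array can be brought to the standard linear shape $\{(i,j,\,c_3 i+j,\dots,c_N i+j)\}$ — or else restrict the claim to those $(N,d)$ for which the MOLS system is provably unique. Short of such a rigidity result, the honest outcome of this plan is a proof for all $d$ covered by the exhaustive OA classification, together with a clean identification of the single combinatorial statement on which the general conjecture stands or falls.
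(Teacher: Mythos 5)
Note that the statement you were asked to prove is a \emph{conjecture} in the paper: the paper itself does not prove it, but supports it with exactly the reduction you describe — phase elimination via \cref{phases}, reduction of LU-classification to LM/orthogonal-array classification via \cref{prop1} and \cref{coro1} for $N>4$, the appeal to the known uniqueness of index-unity arrays in small dimensions, and the separate Butson-type treatment of the $N=4$ case via \cref{prop1=} — before leaving the general case open. Your proposal is therefore correct as far as it goes and takes essentially the same route as the paper; in particular, your identification of the isotopy-uniqueness of index-unity $\oa{d^2,N,d,2}$ (equivalently, of the corresponding MOLS systems) as the single combinatorial statement on which the conjecture stands or falls is precisely what the paper isolates as its own open problem in \cref{productCon}.
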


\section{Three-uniform states}
\label{3-uniform states}

In \cref{2-uniform states}, we have shown that the number of LU/SLOCC classes for $2$-uniform states of minimal support greatly coincides with the number of related OAs which are non-isomorphic. 
In particular, two states which differ only by phases are LU- and SLOCC-equivalent. 
As we shall see, this is in a strong contrast to the $3$-uniform states. 

We shall illustrate it on the example of AME($6,d$) states. Note, that for any $d\geq 4$ there exists an AME($6,d$) state with minimal support \cite{AME-QECC-Zahra}, its precise form might be obtained by reading consecutive rows of the corresponding OA($d^3$,6,$d$,3) from the OAs table \cite{OAlib}. 
One may enhance successive terms of an AME($6,d$) state with any phase factor $|\omega | =1$. Such family of states:
\begin{equation*}
\ket{\text{AME(6,d)}_\omega} =  \dfrac{1}{ d\sqrt{d}} \Bigg( 
\sum_{i,j,k=0}^{d-1} \omega_{i,j,k}  \ket{i,j,k} \otimes \ket{\psi_{i,j,k}}  \Bigg) 
\end{equation*}
is obviously a family of AME($6,d$) states. 
We focus our attention on states with all phases are equal to unity $\omega_{i,j,k} =1$ with one exception: $\omega_{0,0,0}=\alpha$. We denote them as $\ket{\psi_\alpha}$. In \cite{BurchardtRaissi20}, I showed that states $\ket{\psi_{e^{i\phi_1}}}$ and $\ket{\psi_{e^{i\phi_2}}}$ are not LU-equivalent unless $\phi_1 = \phi_2 +  \pi t$ for some integer $t$. In such a way, I obtained a continuous family of non-LU-equivalent AME($6,d$) states with minimal support. 

\begin{corollary}
With the above notation, the AME($6,d$) states, $d\geq 4$:
\begin{equation}
\ket{\text{AME(6,d)}_{e^{i\phi}}} :=  
\dfrac{1}{ d\sqrt{d}} \Bigg(
e^{i\phi} \ket{000000} + 
\sum_{i,j,k\neq (0,0,0)}  \ket{i,j,k} \otimes \ket{\psi_{i,j,k}}  \Bigg) 
\end{equation}
are pairwise in different LU- and SLOCC-classes for all phases $\phi \in [0,\pi )$.
\end{corollary}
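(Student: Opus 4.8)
The plan is to show that the relative phase $e^{i\phi}$ attached to the distinguished term $\ket{000000}$ is an LU-invariant modulo a residual sign, and that this sign is the only freedom a local operation can exploit. Write $\mathcal{I}\subset[d]^6$ for the common support of the whole family (the rows of the underlying index-unity $\OA(d^3,6,d,3)$), and suppose that $\ket{\psi_{\alpha_2}} = \bigotimes_{i=1}^6 U_i \,\ket{\psi_{\alpha_1}}$ with $\alpha_j = e^{i\phi_j}$. Both states are $3$-uniform of minimal support with $2k=N=6$, so \cref{prop1} does not apply; instead I would invoke \cref{prop1=}, which in the admissible range of $d$ forces each $U_i$ to be, up to local monomial matrices on either side, a Butson-type Hadamard factor $B_i\in\text{BH}(d,d)$, or else purely monomial.

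First I would dispose of the genuinely non-monomial factors. Because the Fourier-type automorphism of $\ame{4,3}$ shows that Hadamard factors really do preserve minimal support when $2k=N$, they cannot be discarded for free as in the $2k<N$ regime. The key observation is that $\bigotimes_i B_i$ must map the $d^3$-element support $\mathcal{I}$ back onto $\mathcal{I}$; matching the fibres of this map against the index-unity structure of the array pins down each $B_i$ up to a monomial correction and an overall constant, so the Butson part contributes at most a global phase and can be absorbed. This reduces the problem to the local-monomial (LM) case: $U_i = P_i D_i$ with $P_i$ a permutation of levels and $D_i$ diagonal.

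The heart of the argument is then a linear-algebra computation over the incidence structure of $\mathcal{I}$. The permutations $P_i$ act as a combinatorial automorphism of the array; requiring that the image again have all phases equal to $1$ except on a single distinguished row forces this automorphism to fix the all-zeros row, so the only remaining freedom is the diagonal reweighting. Writing $D_i=\mathrm{diag}(e^{i\theta_{i,0}},\dots,e^{i\theta_{i,d-1}})$, the phase deposited on a row $r=(j^1_r,\dots,j^6_r)$ is $\sum_i\theta_{i,j^i_r}$, and the achievable relative phases on $\ket{000000}$ form a subgroup of $\mathbb{R}/2\pi\mathbb{Z}$. I would compute this subgroup from the integer left-null-vectors of the row-versus-(subsystem, level) incidence matrix: every combinatorial rectangle in the array (rows agreeing with multiplicity on every level in every coordinate) imposes an additive relation among the row phases. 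The strength-$3$, index-unity structure supplies enough such relations to kill every continuous deformation, while the parity of the level counts of the all-zeros row leaves exactly a two-element residue, so the subgroup is $\{0,\pi\}\cong\mathbb{Z}/2$. This is precisely the combinatorial rigidity that is absent for $2$-uniform minimal-support states, where \cref{phases} shows the whole phase torus is gauge; the extra strength of the array is what collapses that torus to $\mathbb{Z}/2$ here.

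Consequently $\phi$ modulo $\pi$ is an LU-invariant of the family, so $\ket{\psi_{\alpha_1}}$ and $\ket{\psi_{\alpha_2}}$ are LU-inequivalent whenever $\phi_1\neq\phi_2$ in $[0,\pi)$, and the Kempf--Ness correspondence recalled in \cref{LUSLOCC} upgrades this to SLOCC-inequivalence. The main obstacle is the middle step: unlike the clean $2k<N$ situation governed by \cref{prop1}, the case $2k=N$ admits honest Hadamard-type equivalences, and the crux is to prove that these extra operations create no phase freedom beyond the $\mathbb{Z}/2$ already found, equivalently that matching the minimal support through a Butson factor is rigid enough to reduce everything to the monomial analysis.
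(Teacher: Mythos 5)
Your high-level skeleton --- reduce an arbitrary LU equivalence to a structured (monomial or Butson-type) one, run a combinatorial phase analysis on the underlying orthogonal array, and upgrade LU-inequivalence to SLOCC-inequivalence via Kempf--Ness as in \cref{LUSLOCC} --- is the same framework the thesis works in (the thesis itself defers the key lemma, that LU-equivalence forces $\phi_1=\phi_2+\pi t$, to the joint work \cite{BurchardtRaissi20}). However, two of your steps have genuine gaps. First, you invoke \cref{prop1=} as if it covered the whole claimed range $d\geq 4$. It does not: by \cref{small} and \cref{Aarhus21}, for $k=3$ that proposition is only guaranteed for $d<d_{\text{min}}=11$, and the thesis explicitly reports counterexamples to the Butson/LM classification for dimensions up to $d_{\text{max}}=16$. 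So for $d\geq 11$ your reduction step has no foundation, while the corollary asserts pairwise inequivalence for every $d\geq 4$.

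Second, and more seriously, your ``disposal of the Butson factors'' is not a proof, and the mechanism you offer is incorrect. A tensor product of Butson-type matrices has no vanishing entries, so it does not act on the support $\mathcal{I}$ fibre-wise; it sends every support basis vector to a superposition over all of $[d]^6$, and the image can have minimal support only through interference. The Fourier automorphism $\big(F_3\big)^{\otimes 4}$ of the AME$(4,3)$ state discussed in \cref{32} --- which you yourself cite --- is exactly such an operator: it preserves a minimal-support state while being in no sense monomial up to a global phase. Hence the claim that matching supports ``pins down each $B_i$ up to a monomial correction'' so that the Butson part ``contributes at most a global phase'' does not follow from anything you wrote; showing that Butson-type equivalences yield no freedom beyond LM equivalences is precisely \cref{Aarhus16}, which the thesis explicitly leaves as an open conjecture. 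Your argument therefore assumes the crux. Your closing lattice/parity computation on the row-versus-(subsystem,level) incidence matrix is the part closest in spirit to the combinatorial phase analysis of \cite{BurchardtRaissi20}, but as written it is also an assertion --- you never exhibit the integer relations that collapse the achievable phase subgroup to $\{0,\pi\}$ --- and in any case it only becomes relevant once the reduction to the monomial case has actually been justified.
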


In fact, for any $k$-uniform state with minimal support where $k>2$, the similar construction of continuous non LU-equivalent family might be provided.

\begin{corollary}
\label{AMEminSUP6}
If there exists a $k$-uniform state with minimal support $\ket{\psi}$ where $k>2$, then there are infinitely many pairwise non LU- and SLOCC-equivalent $k$-uniform states
\end{corollary}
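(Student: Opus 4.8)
The strategy is to start from the given minimal–support $k$-uniform state and perturb a \emph{single} amplitude by a phase, exactly as in the AME$(6,d)$ construction, obtaining a one–parameter family and then showing that the parameter labels infinitely many distinct equivalence classes. After relabelling local bases we may assume the all–zeros term lies in the support, and by the correspondence recalled in \cref{chapter3} we may write $\ket{\psi}=\tfrac{1}{\sqrt{d^{k}}}\sum_{I\in\mathcal{A}}\ket{I}$, where $\mathcal{A}$ is the set of rows of an index–unity irredundant orthogonal array of strength $k$. I would then define, for $\phi\in[0,2\pi)$,
\begin{equation}
\ket{\psi_{\phi}}:=\frac{1}{\sqrt{d^{k}}}\Big(e^{i\phi}\ket{0\cdots0}+\sum_{I\in\mathcal{A},\,I\neq 0}\ket{I}\Big).
\end{equation}

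The first step is to verify that every member of this family is still $k$-uniform. This is immediate from the minimal–support structure: for any $k$-subset $S$ the partial trace over the complementary $N-k$ parties annihilates all cross terms, since the irredundant array has all rows distinct on any $N-k$ columns, so $\rho_{S}=\tfrac{1}{d^{k}}\sum_{I}\ket{I_{S}}\bra{I_{S}}$ is manifestly independent of the amplitudes' phases; strength $k$ then forces $\rho_{S}\propto\Id$. Hence the phases are irrelevant to uniformity, and $\{\ket{\psi_{\phi}}\}$ is a genuine family of $k$-uniform minimal–support states.

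The second step reduces local equivalence to a combinatorial condition on phases. By \cref{coro1} (when $2k<N$), or by \cref{prop1=} together with \cref{small} in the even case $2k=N$, any LU-equivalence between minimal–support $k$-uniform states is an LM-operation, possibly composed with a tensor product of Butson matrices $B_{i}\in\mathrm{BH}(d,d)$. I would track how the amplitude pattern of $\ket{\psi_{\phi}}$ behaves: a local monomial operator acts on the computational terms by (i) a permutation of the multi–indices $I$ induced by an automorphism of $\mathcal{A}$, and (ii) multiplication of each $\ket{I}$ by a product phase $\prod_{n}\theta_{n}(I_{n})$. Writing phases additively, the achievable modifications of the log–phase vector form the subgroup $V$ generated by the vectors $\big(\sum_{n}\psi_{n}(I_{n})\big)_{I}$ and by the array's automorphisms, and $\ket{\psi_{\phi_{1}}}$, $\ket{\psi_{\phi_{2}}}$ are LM-equivalent precisely when $(\phi_{1}-\phi_{2})\,e_{0}$ lies in $V$. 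The invariant is then the class of $\phi\,e_{0}$ in the quotient by $V$.

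The heart of the argument, and the main obstacle, is to show this invariant is nontrivial, i.e.\ that $H:=\{\delta:\delta\,e_{0}\in V\}$ is a \emph{finite} subgroup of $\mathbb{R}/2\pi\mathbb{Z}$ rather than the whole circle. Concretely, one must prove that for $k>2$ no assignment of local phases can alter exactly one computational amplitude while fixing all others (outside a discrete set of $\delta$). This is exactly where the hypothesis $k>2$ is indispensable: at strength $k=2$ the explicit diagonal gauge exhibited before \cref{phases} \emph{does} remove a single–term phase, which is why all $2$-uniform minimal–support states coincide up to LU; for strength $\geq 3$ the richer incidence structure of the array obstructs such an isolation and forces $H$ to be finite (in the AME$(6,d)$ model $H=\{0,\pi\}$). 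Granting finiteness of $H$, the map $\phi\mapsto[\ket{\psi_{\phi}}]$ has finite fibres, so its image is infinite; since LU- and SLOCC-classes coincide for these states by the Kempf–Ness theorem (\cref{LUSLOCC}), we obtain infinitely many pairwise non-LU- and non-SLOCC-equivalent $k$-uniform states. In the case $2k=N$ one must additionally check that the finitely many Butson shapes permitted by \cref{prop1=} each impose only algebraic constraints with finitely many solutions in $\delta$, so that $H$ remains finite.
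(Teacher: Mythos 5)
Your construction coincides with the paper's: attach a phase $e^{i\phi}$ to the all-zeros term of the given minimal-support state, observe that $k$-uniformity is insensitive to phases (correct, via irredundancy of the index-unity array), reduce LU-equivalence between minimal-support states to LM operations (resp.\ LM plus Butson-type) via \cref{coro1} (resp.\ \cref{prop1=}), and invoke the Kempf--Ness fact recalled in \cref{LUSLOCC} so that LU- and SLOCC-classes coincide. Your reformulation in terms of the gauge subgroup $V$ and the invariant $H\subset\mathbb{R}/2\pi\mathbb{Z}$ is a faithful restatement of what has to be shown.

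The gap is that you then assume precisely the statement that carries all of the mathematical content. The thesis's own proof is the construction plus the concrete result imported from \cite{BurchardtRaissi20}: $\ket{\psi_{e^{i\phi_1}}}$ and $\ket{\psi_{e^{i\phi_2}}}$ are LU-equivalent only if $\phi_1-\phi_2\in\pi\mathbb{Z}$, i.e.\ $H=\{0,\pi\}$, from which the corollary is immediate. You replace this citation by ``Granting finiteness of $H$'' together with the heuristic that ``the richer incidence structure of the array obstructs such an isolation''; this correctly identifies why $k>2$ matters (for $k=2$ the explicit gauge before \cref{phases} does remove a single-term phase) but it is an assertion, not an argument. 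What a proof requires is an explicit obstruction, for instance an integer vector $c=(c_I)_{I\in\mathcal{A}}$ whose column marginals all vanish, $\sum_{I:\,I_n=j}c_I=0$ for every position $n$ and symbol $j$, while $c_{0\cdots0}\neq 0$: pairing such a $c$ with the phase constraints yields $c_{0\cdots0}\,\delta\equiv 0\pmod{2\pi}$ for every achievable $\delta$, hence $H$ finite, and one must prove that such a $c$ exists for every index-unity array of strength $k\geq 3$ (and cannot exist for $k=2$, consistently with \cref{phases}). Nothing in your proposal produces this. A secondary hole: in the AME case $N=2k$ (which the hypothesis certainly allows, e.g.\ minimal-support AME$(6,d)$ for all $d\geq 4$), the reduction you lean on, \cref{prop1=}, is only valid under the dimension restriction of \cref{small}, with counterexamples known up to $d_{\text{max}}$; so for large $d$ even your reduction step is unavailable, and one again needs the direct inequivalence result of \cite{BurchardtRaissi20} rather than the classification of local equivalences.
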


\section{Number of non SLOCC-equivalent AME orbits}

In this section, we shortly summarize the number of non-SLOCC-equivalent AME states (the number of orbits) and AME states with minimal support. 

The existence of AME states with minimal support for $N,d <8$ was analyzed \cite{Bernal75} based on the table of OAs and similar combinatorial designs. According to the discussion presented in the previous sections, if $N\geq 6$ existence of AME($N,d$) state with minimal support persuade to infinitely many non-SLOCC-equivalent such states, see \cref{AMEminSUP6}.  in \cref{table1} and \cref{table2} respectively. 

Note, that verification of the existence of AME states which are not necessarily of the minimal support is more complex problem. Several results concerning this problem \cite{HIGUCHI2000213,AME(4,Felix72,Bernal75,Huber_2018} are summarizes in the tables of AME states \cite{Table_AME}. Although the exact classification of AME states up to SLOCC-equivalence is not known, in some specific cases non-trivial lower bound is given.

\begin{table}
\begin{center}
\includegraphics[width=0.75\textwidth]{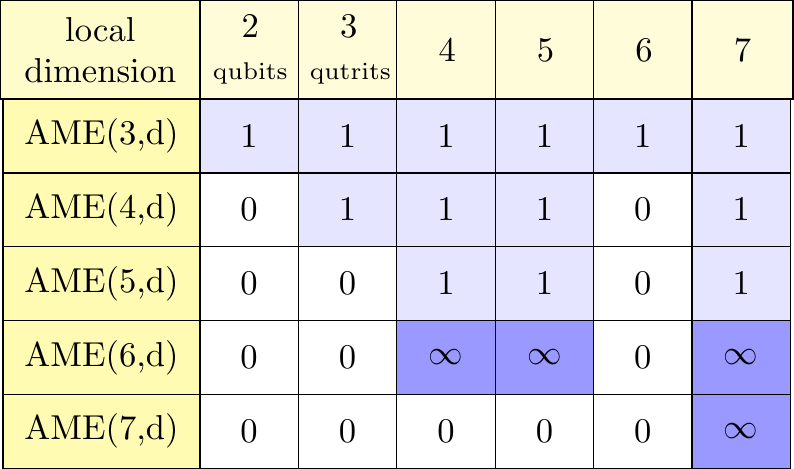}
\end{center}
\caption{\label{table2} 
The exact number of not SLOCC-equivalent AME orbits with minimal support presented on a differently shaded blue background. Note that classification is done up to the permutation of qudits.}
\end{table}

\begin{table}
\begin{center}
\includegraphics[width=0.75\textwidth]{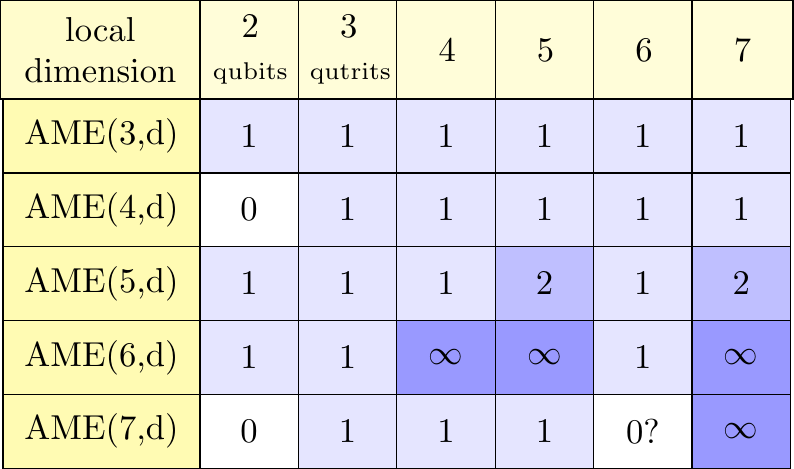}
\end{center}
\caption{\label{table1} 
The minimal number of non-SLOCC-equivalent AME orbits. The question mark by zero value suggests that the existence of the relevant state is dubitative, while $0$ itself emphasizes that the relevant state certainly does not exist. Note that the AME($4,6$) was constructed in \cref{ch4}.}
\end{table}

\section{Further discussion and open problems}
\label{Further discussion and open problems}

As we have shown, the LU- and SLOCC-classification of $k$-uniform states, even of minimal support, is in fact a complex project, which involves many open mathematical problems, in particular:
\begin{enumerate}
\item Existence and extension of mutually orthogonal Latin hypercubes.
\item Classification of Hadamard matrices of Butson type B($d,d$).
\item Classification/uniqueness of OAs of index unity (without permutation).
\end{enumerate}
among others. In addition, we discuss some open problems regarding LU- and SLOCC-classification of $k$-uniform states with minimal support in a detailed way. 

Firstly, consider any two $k$-uniform states of minimal support $\ket{\psi}$ and $\ket{\psi '}$ with all phases equal to $1$. We have showed that both states are LU-equivalent if and only if there exist local permutation matrices relating $\ket{\psi}$ and $\ket{\psi '}$. On the other hand, both states are in one-to-one correspondence with OAs of index unity. 
In such a way, the existence of local permutation is equivalent to an isomorphism between two OAs of index unity, and hence LU-classification is equivalent to the classification of OAs of index unity. 
Classification of OAs of index unity is an open mathematical problem. For small number of parties $N$, uniformity $k$, and the local dimension $d$ is small $d<9$, it is known that all OAs of index unity are isomorphic \cite{BULUTOGLU2008654,doi:10.1080/00401706.1992.10484952,stufken2007,Seveso_2018}. 

\begin{conjecture}
\label{productCon}
All OAs of index unity with small local dimension $d<9$ are isomorphic by permutations of symbols on each level. 
Equivalently, all $k$-uniform states with minimal support and all terms phases equal are LU-equivalent.
\end{conjecture}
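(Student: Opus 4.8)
The plan is to split the assertion into its two halves and first reduce the quantum statement to the purely combinatorial one. For $2k<N$, \cref{coro1} guarantees that two $k$-uniform states of minimal support are LU-equivalent precisely when they are LM-equivalent, i.e.\ related by a tensor product of monomial matrices $M_\ell = P_\ell D_\ell$. If both states carry all phases equal to $1$, a permutation never creates a phase, so the permutation parts $P_\ell$ of such an equivalence map the support of one orthogonal array onto the other and thereby realise an isomorphism of the underlying $\OA(d^k,N,d,k)$ by permuting symbols on each level; conversely, any per-column symbol permutation identifying the two arrays lifts to an LM operation. Hence, in the range $2k<N$, the equivalence ``all such states LU-equivalent $\Leftrightarrow$ all index-unity OAs isomorphic by per-column symbol permutations'' is exactly the content of \cref{coro1}, and the conjecture collapses to a single combinatorial claim. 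The borderline AME case $2k=N$ must be handled separately through \cref{prop1=}: there the admissible equivalences also include Butson-type factors, but these only \emph{enlarge} the LU-classes and so remain compatible with the conjectured collapse rather than obstructing it.

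It then remains to prove that every $\OA(d^k,N,d,k)$ of index unity with $d<9$ is unique up to permuting the symbols in each column. I would attack this through the dictionary recalled in \cref{Further discussion and open problems}: an index-unity OA is equivalent to an MDS code with parameters $[N,k,N-k+1]_d$ and, for strength $k=2$, to a system of $N-2$ mutually orthogonal Latin squares (more generally, mutually orthogonal Latin hypercubes). For $k=1$ the claim is immediate and was already established directly in \cref{1-uniform states}. For $k=2$ the task becomes the uniqueness, up to an isotopy fixing the roles of the coordinates, of the relevant MOLS systems; here I would invoke the enumeration results cited in the excerpt for $d\le 8$, treating the composite orders $d=4,6,8$ separately, where the length of admissible arrays is restricted (for $d=6$ no pair of MOLS exists, so only Latin squares with $N\le 3$ occur). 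For higher $k$ I would push the same reduction through the classification of short MDS codes over small alphabets, which are heavily constrained.

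The main obstacle is the combinatorial uniqueness itself. Unlike the transparent $k=1$ case, there is no structural theorem guaranteeing that all index-unity OAs of a given shape coincide up to symbol relabelling; the available results for $d<9$ rest on exhaustive computer classification rather than on a conceptual argument, and they do not visibly cover every admissible pair $(N,k)$ at once. One must also check that \emph{per-column symbol permutations} alone — as opposed to the full isotopy-and-conjugacy group ordinarily used to classify OAs — already identify the sporadic non-isotopic hypercube systems that arise for composite $d$, which is a delicate point. I therefore expect that a complete proof would either assemble the existing enumeration data case by case or require a genuinely new uniqueness theorem for index-unity OAs of small order; the absence of such a theorem is exactly why the statement is posed as a conjecture rather than a proposition.
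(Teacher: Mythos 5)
This statement is a conjecture --- the paper contains no proof of it; what the paper does provide (in the discussion immediately preceding it) is exactly the reduction you describe: by \cref{prop1} and \cref{coro1}, LU-equivalence of minimal-support $k$-uniform states with all phases equal amounts to the existence of local permutation matrices, i.e.\ to an isomorphism of the underlying index-unity OAs by per-column symbol permutations, after which the paper cites exhaustive enumeration results for $d<9$ and leaves the general uniqueness claim open. Your proposal follows the same route, and your closing assessment --- that the combinatorial core has no structural proof, rests on case-by-case classification, and that this is precisely why the statement is posed as a conjecture rather than a proposition --- is the correct conclusion. You are in fact slightly more careful than the paper on the boundary case $2k=N$: there \cref{prop1=} admits Butson-type factors, so the ``equivalently'' in the conjecture is a priori only one-directional (OA isomorphism implies LU-equivalence, but not conversely), a gap the paper itself defers to the separate \cref{Aarhus16}.
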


Secondly, in \cref{prop1} shows that any LU-operator between two $k$-uniform states of minimal support consists of products of phase (diagonal) and permutation matrices (for $2k<N$). 
Furthermore, considering states such states with various phases, we showed that not all of them are LU-equivalent for $k>2$. 
The precise description of SLOCC classes, however, is not given. 

Thirdly, \cref{32} discusses the significant difference between $k$-uniform states of minimal support with $2k<N$ and $2k=N$ particles. We showed that LU-equivalence between two $k$-uniform states with $2k=N$ might be decomposed into multiplication of Butson-type matrices and local monomial  matrices. 
From our analysis, it is not yet clear whether such LU-equivalences are beyond local monomial equivalences. 
Indeed, in all examples where LU-equivalence involves Butson-type matrices, related states were always LM-equivalent. 
Therefore we conjecture that \cref{coro1} holds true in the case $2k =N$ (even though \cref{prop1} does not hold anymore). 

\begin{conjecture}
\label{Aarhus16}
AME(2k,d) states with minimal support are LU-equivalent if and only if they are LM-equivalent
\end{conjecture}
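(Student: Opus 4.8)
The ``if'' direction is immediate: every LM matrix is unitary, so every LM-equivalence is in particular an LU-equivalence. The entire content is the ``only if'' direction, and the plan is to push it through \cref{prop1=}. Suppose $\ket{\psi}$ and $\ket{\psi'}$ are minimal-support AME($2k,d$) states with $\ket{\psi'}=(U_1\otimes\cdots\otimes U_{2k})\ket{\psi}$, and assume $(k,d)$ lies in the regime where \cref{prop1=} applies. By that proposition each factor $U_i$ is either monomial or of the form $L_iB_iR_i$ with $L_i,R_i$ monomial and $B_i\in\text{BH}(d,d)$. If every $U_i$ is LM we are done, so the real task is to eliminate the genuinely Butson factors.

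\textbf{Reduction to a pure Butson equivalence.} Since LM operations preserve minimal support and act within a single LM-class, I would first absorb the monomial factors into the two states: setting $\ket{\tilde\psi}=(\bigotimes R_i)\ket{\psi}$ and $\ket{\tilde\psi'}=(\bigotimes L_i^{-1})\ket{\psi'}$ yields two states, LM-equivalent to $\ket{\psi}$ and $\ket{\psi'}$ respectively, related by $\ket{\tilde\psi'}=(\bigotimes B_i)\ket{\tilde\psi}$, where $B_i$ is a pure Butson matrix on the nontrivial coordinates and the identity elsewhere. By transitivity of LM-equivalence it then suffices to prove the following reduced claim: if two minimal-support AME($2k,d$) states are related by a tensor product of matrices from $\text{BH}(d,d)$, they are LM-equivalent.

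\textbf{Exploiting rigidity and duality.} The leverage is that a minimal-support state has exactly $d^{k}$ nonzero amplitudes, all of modulus $d^{-k/2}$, with support an orthogonal array of index unity, equivalently $k$ mutually orthogonal Latin hypercubes. A Butson matrix such as the Fourier transform spreads each basis vector into a uniform superposition of all $d$ levels, so for $\bigotimes B_i$ to return a state with only $d^{k}$ equal-modulus amplitudes a massive cancellation must occur. I would encode this as a system of vanishing character sums over $\GF{d}$ (or the relevant ring) and read off that the Butson transform acts precisely as the code--dual operation on the underlying MDS code: the image $\ket{\tilde\psi'}$ is again a minimal-support AME state, built from the \emph{dual} array of $\ket{\tilde\psi}$. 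The reduced claim thus becomes the assertion that passing to the dual design never leaves the LM-class.

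\textbf{Main obstacle.} The crux, and the reason this remains a conjecture, is this last step. It rests on two hard inputs flagged in \cref{Further discussion and open problems}: a classification of which matrices in $\text{BH}(d,d)$ can survive the cancellation constraints (the expectation being that only Fourier transforms and their tensor products do), and the uniqueness of orthogonal arrays of index unity up to monomial isomorphism, \cref{productCon}. Granting both, the dual array is isomorphic to the original by a permutation of symbols on each level, which lifts to an LM matrix and closes the argument. Establishing the Butson-classification step \emph{unconditionally}, rather than verifying it case by case against the tabulated classifications for small $d$, is where I expect the genuine difficulty to lie; and outside the range of \cref{prop1=} one would additionally need to extend that proposition so as to control the admissible non-monomial unitaries in the first place.
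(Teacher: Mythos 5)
There is a genuine gap here, and it starts one level above the details of your argument: \cref{Aarhus16} is stated in the paper as a \emph{conjecture}, not a proposition. The paper offers no proof of it; the authors motivate it only by the empirical observation that in every example where an LU-equivalence between minimal-support AME($2k,d$) states involved a Butson-type factor, the states turned out to be LM-equivalent anyway (\cref{32}, \cref{Further discussion and open problems}). Your proposal does not close this gap, and to your credit you say so explicitly: the final step rests on two inputs that are themselves open problems --- a classification of which matrices in $\text{BH}(d,d)$ can actually appear in such equivalences (the paper only conjectures these are Fourier matrices and their tensor products), and the uniqueness of index-unity orthogonal arrays up to symbol permutations, which is precisely \cref{productCon}. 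A proof that is conditional on two unproven conjectures is a research program, not a proof, so the ``only if'' direction remains open after your argument exactly as it was before it.

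Two further points about why the route you chose cannot, even in principle, deliver the full statement. First, \cref{prop1=} holds only in the restricted regime of \cref{small}, and the paper reports explicit counterexamples to it in composite dimensions (the $d_{\text{max}}$ column of the accompanying table, e.g.\ failures up to $d=16$ for $k=3$); since \cref{Aarhus16} carries no restriction on $d$, any argument whose first move is ``decompose each $U_i$ via \cref{prop1=}'' silently excludes infinitely many cases, including ones where the needed decomposition is known to be false. Second, your duality step --- that the surviving Butson factors must implement the passage to the dual MDS code --- tacitly assumes the underlying index-unity orthogonal array is \emph{linear} over $\GF{d}$ or a ring, so that ``dual code'' makes sense. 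Minimal-support AME states correspond to index-unity OAs in general, and these need not be linear; for nonlinear arrays the character-sum cancellation argument does not reduce to a statement about dual codes, so even granting your two hard inputs the chain of reductions has a hole in the middle.
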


\label{Aarhus15}

Finally, we address the problem of equivalence of AME($2k$,$d$) states in composite dimensions. It turned out, that in many cases composite dimensions allows for the construction of AME states based on OAs of different linear structures. 
\cref{chapter3} presents quantum states related to OAs with different linear structures. In particular, in \cref{Aarhus14}, we show two states $\ket{\psi}_{\GF{4}}$, and $\ket{\psi}_{\mathbb{Z}_4}$ corresponding to OAs of different strength. It turned out that quantum states $\ket{\psi}_{\GF{4}}$, and $\ket{\psi}_{\mathbb{Z}_4}$ are 1-, and 2-uniform respectively, see \cref{OA1}. Notice that an immediate consequence is that both states are not SLOCC-equivalent. Indeed, one may compare the ranks of reduced density matrices, which agree for SLOCC-equivalent states \cite{RanksReducedMatrices}.

Nevertheless, \cref{Aarhus14} presents three AME states $\ket{\psi}_{\mathbb{Z}_9}$, $\ket{\psi}_{\mathbb{Z}_3 \oplus \mathbb{Z}_3}$, and $\ket{\psi}_{\GF{9}}$ of four particles with the local dimension $d=9$. They are related to OAs of a linear structures over rings $\mathcal{R}= \GF{9}, \mathbb{Z}_9$ and $\mathbb{Z}_3 \oplus \mathbb{Z}_3$ respectively. Notice that $\oa{81,4,9,2}$ corresponding to those three different ring structures are not equivalent by relabelling symbols. This might be scrutinized by analysis of existing tables of OAs \cite{OAlib}. This, however, does not imply that the corresponding quantum states are not LU-equivalent \cite{BurchardtRaissi20}, however, the numerical analysis suggests so. Note that results obtained in \cref{prop1=} do not allow for the verification of SLOCC-equivalence of the aforementioned states.

\begin{conjecture}
\label{proppppp2}
Consider three rings $\mathcal{R}= \GF{9}, \mathbb{Z}_9$ and $\mathbb{Z}_3 \oplus \mathbb{Z}_3$ with nine elements presented in \cref{figGalois9}. Three states constructed via
\begin{equation}
\ket{\psi}= \frac{1}{d}\sum_{i_1,i_2 =0}^{8} 
\ket{i_1} \ket{i_2} 
\ket{i_1 +i_2} \ket{ 2 i_1 +i_2},
\end{equation}
where addition and multiplication are considered over the relevant ring $\mathcal{R}$, are pairwise non SLOCC-equivalent AME(4,9) states.
\end{conjecture}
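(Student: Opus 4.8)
The plan is to first reduce the SLOCC question to an LU question and then to pin down the admissible local unitaries. Since all three states are AME, and hence have maximally mixed reductions, the Kempf--Ness theorem \cite{KempfNess,GourWallach} guarantees that two of them are SLOCC-equivalent if and only if they are LU-equivalent, so throughout it suffices to exclude LU-equivalences. The subtlety is that these are minimal-support $2$-uniform states of $N=4$ parties, so $2k=N$ and the clean reduction of \cref{coro1}---valid only for $2k<N$, and used for the $\ame{5,d}$ families in \cref{AME55}---is unavailable. Exactly here the exotic, non-monomial equivalences enter, as illustrated by the Fourier automorphism $(F_3)^{\otimes 4}$ of $\ame{4,3}$ in \cref{F(3)}. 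Ruling these out is the whole content of the statement.

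Second, I would separate the monomial part of any putative equivalence from the rest. Writing a hypothetical LU-equivalence as $U_1\otimes U_2\otimes U_3\otimes U_4$, recall that the three states are uniform superpositions (with unit phases) over the supports given by the three orthogonal arrays $\oa{81,4,9,2}$ attached to the rings $\GF{9}$, $\mathbb{Z}_9$ and $\mathbb{Z}_3\oplus\mathbb{Z}_3$ in \cref{Aarhus14,figGalois9}. If every $U_i$ were monomial, the equivalence would merely relabel symbols on each coordinate, i.e.\ it would induce an isomorphism of the underlying arrays; since these three $\oa{81,4,9,2}$ are pairwise non-isomorphic up to symbol relabeling (verifiable directly from the OA tables \cite{OAlib}), no monomial equivalence can exist. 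This disposes of the LM-type equivalences permitted by \cref{prop1,prop1=} and reduces the problem to excluding genuinely non-monomial equivalences.

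Third, and this is the crux, I must control the non-monomial local unitaries that map a minimal-support $\ame{4,9}$ state to another such state. The structural result \cref{prop1=} would assert that any such equivalence is a tensor product of Butson-type factors $B_i\in\text{BH}(9,9)$ composed with LM matrices on both sides; combined with the previous step this would finish the proof, since one could then check that no choice of Butson factors intertwines arrays of inequivalent linear type. The obstruction is that $d=9$ is precisely the smallest dimension at which \cref{prop1=} is known to \emph{fail} for $k=2$ (the value $d_{\max}=9$ recorded in \cref{small}), so the Butson factorization cannot be invoked and the full group of support-preserving non-monomial local unitaries in dimension nine must be analysed directly.

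To close the gap I would pursue two complementary routes. The combinatorial route is to extend the analysis of mutually orthogonal Latin hypercubes underlying \cref{prop1=} to classify all admissible non-monomial factors in this single dimension, thereby recovering a usable normal form. The invariant-theoretic route---which I expect to be more tractable---is to exhibit one LU-invariant that separates the three orbits: for instance an exact polynomial $\SLIP$-type invariant, or a trace of a product of the state's density matrix with its reshuffled and partially transposed partners. Because all amplitudes are rational and all phases are roots of unity (see \cref{figGalois9}), such invariants are algebraic numbers that can be evaluated exactly, turning the numerical evidence reported in \cite{BurchardtRaissi20} into a rigorous distinction. The main obstacle remains the missing classification in dimension nine; absent it, the proof must rest on constructing a provably discriminating invariant rather than on a general structure theorem.
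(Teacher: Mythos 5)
The statement you set out to prove is labeled a \emph{conjecture} in the paper (\cref{proppppp2}), and the paper offers no proof of it: it only observes that the three arrays $\oa{81,4,9,2}$ attached to $\GF{9}$, $\mathbb{Z}_9$ and $\mathbb{Z}_3\oplus\mathbb{Z}_3$ are pairwise non-isomorphic under relabeling of symbols, explicitly cautions that this does \emph{not} imply the corresponding states are LU-inequivalent, and reports that only numerical analysis suggests non-equivalence. Your proposal reconstructs this situation accurately. The reduction of SLOCC- to LU-equivalence via Kempf--Ness is the same one the paper uses (\cref{LUSLOCC}); the exclusion of monomial equivalences via non-isomorphism of the underlying orthogonal arrays matches the paper's observation; and your identification of the obstruction is exactly right: $k=2$, $d=9$ is precisely the entry $d_{\max}=9$ in the table accompanying \cref{small}, so \cref{prop1=} cannot be invoked here and no structure theorem in the paper controls the non-monomial (Butson-type or otherwise) local unitaries in this dimension.

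What you have written is therefore not a proof, and you say so yourself: both of your proposed routes --- classifying the admissible non-monomial factors in dimension nine, or exhibiting an exactly evaluable LU-invariant that separates the three orbits --- are left as programs rather than arguments. This is a genuine gap, but it is the same gap the paper leaves open; there is no hidden argument in the paper that you failed to find. The only substantive difference is that the paper rests its confidence on numerical evidence, whereas you sketch how that evidence might in principle be upgraded to an exact algebraic computation (all amplitudes being rational and all phases roots of unity). Had you actually carried out such a computation, you would not merely have matched the paper --- you would have settled its conjecture.
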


\section{Conclusions}

In this chapter, I develop general techniques of SLOCC-verification between $k$-uniform and AME states. In particular, I show that two $k$-uniform states of $N$ particles with the minimal support, where $k<2N$, are SLOCC-equivalent iff they are locally monomial (LM)-equivalent (\cref{prop1,coro1}). Furthermore, when uniformity achieves its upper bound, i.e. $k=2N$, such equivalence is always provided by a Butson-type matrix or monomial matrix (\cref{prop1=}). This restriction is valid, however, only for small local dimensions $d$ and number of parties $N$ (\cref{small}), in particular for arbitrary $N$ and $d<9$. I illustrate the usefulness of the provided criteria on various examples. Firstly, I show that the existence of AME states with minimal support of 6 or more particles yields the existence of infinitely many such non-SLOCC-equivalent states (\cref{AMEminSUP6}). Secondly, I show that some AME states cannot be locally transformed into existing AME states of minimal support (\cref{AME55}).

\chapter{Roots of polynomial invariants}
\label{chap6}

In this chapter, I provide necessary and sufficient conditions for generic $N$-qubit states to be equivalent under SLOCC operations using a single polynomial entanglement measure. More precisely, I investigate how the roots of the entanglement measure behave under SLOCC operations. I demonstrate that SLOCC operations may be represented geometrically by Möbius transformations on the roots of the entanglement measure on the Bloch sphere. I show that if the states are SLOCC-equivalent, then the roots of the polynomial entanglement measure for each state must be related by a Möbius transformation, which is straightforward to verify. I use this procedure to show that the roots of the $3$-tangle measure classify $4$-qubit generic states. Moreover, I propose an alternative method to obtain the normal form of a $4$-qubit state which bypasses the possibly infinite iterative procedure. An extension of presented results in which the author's part was not substantial can be found in the joint work \cite{Bur21SLIP}.

\section{Polynomial Invariant Measures}

An \textit{entanglement measure} is any function $E(\ket{\psi})$ defined for all pure states of $N$ qubits which vanishes for all separable states. As we discussed in Introduction, a particularly desired feature of an entanglement measure is invariance under SLOCC operations. Apart from the normalization of states, a SLOCC operation acting on $N$-qubit state $\ket{\psi}$ can be represented by the action of a local invertible operator with a determinant equal to one, $\mathcal{O}_{\vec{N}} \in \SLC^{\otimes N}$, where $\SLC$ is the special linear group of complex matrices of order 2. In this chapter, we consider only the subclass of invertible matrices with determinant one, unlike in \cref{chap5}, see \cref{AArhusNoc}. To distinguish the subclass of invertible matrices with determinant one from the class of all invariable matrices, we shall denote them with the calligraphic font $\mathcal{O}$ in contrast to the notation $O$ used in \cref{chap5} for all invertible matrices. 

A given entanglement measure $E(\ket{\psi})$ defined on the system of $N$ qubits is called a \textit{$\SL$-invariant polynomial of homogeneous degree $h$} if it is polynomial in the coefficients of a pure state $\ket{\psi}$ and satisfies
\begin{equation}
\label{AarhusWC}
E\big( \kappa \; \mathcal{O}_{\vec{N}} \ket{\psi} \big) =\kappa^h  E\big(\ket{\psi} \big)
\end{equation}
for each real constant $\kappa > 0$ and operator $\mathcal{O}_{\vec{N}} \in \SLC^{\otimes N}$~\cite{ThreeQub,SLOCCallDim,Eltschka_2014}. 
Note that \cref{AarhusWC} relates the entanglement measure of any non-normalized state with the entanglement measure of the related normalized state by fixing operator $\mathcal{O}_{\vec{N}}$ as an identity operator and choosing an appropriate constant $ \kappa $. 
A \textit{$\SL$-invariant polynomial} ($\SLIP$) measure of homogeneous degree $h$ will be denoted as $\SLIP_N^h$, where the upper index indicates the degree $h$ of the polynomial and the lower index is related to the number of qubits $N$. 

Among various approaches to the problem of quantification and classification of entanglement, the one via $\SLIP$ measures turned out to be a particularly useful. The most famous examples of such measures are concurrence and three-tangle, which measure the 2-body and 3-body quantum correlations of the system respectively \cite{PhysRevLett.80.2245,DistributedEntanglement}. $\SLIP$ measures provide a convenient method for entanglement classification and its practical detection at the same time. For example, it was demonstrated that almost all SLOCC equivalence classes can be distinguished by ratios of $\SLIP$ measures \cite{SLOCCallDim}. Moreover, any given two states are SLOCC-equivalent if a \textit{complete set of SLIP measures} achieves the same values for both of them \cite{PolInv4qubits}. However, the size of such a set grows exponentially with the number of qubits $N$, making it intractable to use this approach to decide SLOCC-equivalence between states with more than four qubits \cite{Love07}. 

Any $\SLIP$ measure $E$ might be further extended to the set of mixed states by determining the largest convex function on such set which coincides with $E$ on the set of pure states  \cite{Uhlmann98}. Despite the simple definition of a \textit{convex roof extension}, its evaluation requires non-linear minimization procedure, and through this, is a challenging task for a general density matrix \cite{Osborne_2006,Regula_2014,
PhysRevLett.114.160501,Regula_2016a,SG18}. An auspicious attempt to address this task was carried out by introducing the so-called \textit{zero-polytope}, the convex hull of pure states with vanishing $E$ measure \cite{OsterlohTangles,OsterlohWernerStates,
OsterlohExactZeros,Osterloh3}. In the simplest case of rank-2 density matrices $\rho $, the zero-polytope can be represented as a convex polytope inscribed in a Bloch sphere, spanned by the roots of $E$ \cite{OsterlohWernerStates,RegulaGeoTanglePRL}. I adapt this approach, and focus only on the vertices of the zero-polytope, equivalently the roots of polynomial invariant.

\section{System of roots}

We begin with the general discussion of roots of a $\SLIP$ measure, which adapts approach presented in \cite{OsterlohWernerStates,RegulaGeoTanglePRL}. Firstly, consider a $(N+1)$-partite qubit state $\ket{\psi}$. Such a state can be uniquelly written as
\begin{equation}
\label{Peq1}
\ket{\psi}=\ket{0} \ket{\psi_0} + \ket{1}\ket{\psi_1}\,,
\end{equation}
which provides the canonical decomposition of its reduced density matrix 
\[
\rho =\ket{\psi_0}\bra{\psi_0} +\ket{\psi_1}\bra{\psi_1}
\]
obtained by tracing out the first qubit. Notice that both states $\ket{\psi_0}$ and $\ket{\psi_1}$ are in general neither normalized nor orthogonal. Secondly, consider the following family of non-normalized states
\begin{equation}
\label{psi0psi1}
\ket{\psi_z} = z \ket{\psi_0} +  \ket{\psi_1},
\end{equation}
where $z\in \hat{\mathbb{C}} $ is taken from the extended complex plane $\hat{\mathbb{C}}$, which contains all complex numbers plus infinity. We shall refer to such a representation as the \textit{extended plane representation}. 

Furthermore, consider any $\SLIP_N^h$ measure $E$ defined on the set of $N$-partite pure qubit states. By the definition, $E$ is polynomial in the coefficients of $\ket{\psi_z}$, hence it is also polynomial in the complex variable $z$  \cite{OsterlohTangles}.
In such a way, the polynomial 
\[
E( z\ket{\psi_0} + \ket{\psi_1} )
\]
has exactly $h$ roots: $\zeta_1 ,\ldots ,\zeta_h $ (which may be degenerated and/or at infinity), in accordance with the degree of $E$. 

Using the complex number $z$, the states $\ket{\psi_z}$ can be mapped to the surface of a sphere via the standard \textit{stereographic projection}
\[(\theta,\phi ):=
(\text{arctan} \,1/ |z|,\; -\text{arg}\,z )\]
written in spherical coordinates.
Furthermore, a point on the unit 2-sphere $(\theta,\phi)$ can be associated with the following quantum state
\begin{equation}
\ket{\widetilde{\psi}_{z}} :=
 \text{cos} \dfrac{\theta}{2} \ket{\psi_0} +
 \text{sin} \dfrac{\theta}{2} e^{i \phi} \ket{\psi_1}
\end{equation}
with $z =\text{ctg}({\theta}/{2}) \: e^{-i \phi}$. Note that $\ket{\psi_0}$ lies in the North pole and $\ket{\psi_1}$ lies in the South pole, see \cref{Fig:StereoProjection}. We shall refer to such a representation as the \textit{Bloch sphere representation}. Note the following proportionality of states: ${\ket{\widetilde{\psi}_{z}} \propto \ket{\psi_{z}}}$, and that neither of these states is normalized, since $\ket{\psi_0}$ and $\ket{\psi_1}$ are not normalized in general either.

\begin{figure}[ht!]
\begin{center}
\includegraphics[width=.7\columnwidth]{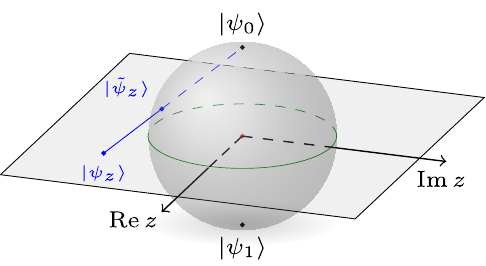}
\caption{The stereographic projection relates the family of states $\ket{\psi_z}$ on the extended complex plane with the family of states $\ket{\widetilde{\psi_z}}$ on the Bloch sphere. The spherical coordinates $({\theta},\phi)$ and the complex coordinate $z$ are related by the stereographic projection $z =\text{ctg} ({\theta}/{2}) \: e^{-i \phi}$.
}
\label{Fig:StereoProjection}
\end{center}
\end{figure}

\section{Local operations on the system of roots}

Each linear invertible operator $\mathcal{O} =\begin{psmallmatrix}
a&b\\
c&d
\end{psmallmatrix}$, might be associated with a \textit{Möbius transformation} $z\mapsto z':= \frac{az+b}{cz+d}$, which maps the extended complex plane $\hat{\mathbb{C}}$ into itself \cite{bengtsson_zyczkowski_2006,doi:10.1142/S0219749912300045}. The composition of two such transformations is related to the multiplication of the associated operators. Furthermore, $z\mapsto z':= \frac{dz-b}{-cz+a}$ is an inverse Möbius transformation related with
$\mathcal{O}^{-1} =
\begin{psmallmatrix}
d&-b\\
-c&a
\end{psmallmatrix}$.
Although Möbius transformations are typically represented on the extended complex plane, one may also represent them as transformations on the Bloch sphere via the stereographic projection. This correspondence between invertible operators and Möbius transformations represented on the Bloch sphere was already successfully used for SLOCC classification of permutation-symmetric states \cite{Bastin_2009,CrossRatioNqubits,Three-tangleMajorana}. Studies of effects of SLOCC operations on the system of roots might be summarized in the theorem below.

\begin{theorem}
\label{T1}
Consider an $(N+1)$-partite pure quantum state $\ket{\psi} =\ket{0}\ket{\psi_0}+\ket{1}\ket{\psi_1}$.
The roots $\zeta_i$ of any $\SLIP_N^h$ entanglement measure associated to the partial trace of the first qubit:
\begin{enumerate}
\item are invariant under invertible operators, i.e. invariant under $\textbf{1}\otimes\mathcal{O}_{\vec{N}} \in \SLC^{\otimes N}$ operators;
\item transform via an inverse Möbius transformation $\zeta_i '=\frac{d\zeta_i-b}{-c\zeta_i+a}$ w.r.t the
$\mathcal{O} =\begin{psmallmatrix}
a&b\\
c&d
 \end{psmallmatrix}\otimes \textbf{1}^N \in \SLC$ operator.
\end{enumerate}
\end{theorem}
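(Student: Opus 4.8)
The plan is to track how the two building-block states $\ket{\psi_0},\ket{\psi_1}$ — and hence the one-parameter family $\ket{\psi_z}=z\ket{\psi_0}+\ket{\psi_1}$ — behave under each type of local operation, and then exploit the homogeneity of $E$ to read off the effect on the polynomial $z\mapsto E(\ket{\psi_z})$ and on its roots. The two claims split neatly according to whether the operator acts on the last $N$ qubits or on the first qubit.

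For the first claim I would apply an operator $\mathbf 1\otimes\mathcal O_{\vec N}$ acting only on the last $N$ qubits. Since it commutes with the splitting of the first qubit, it sends $\ket{\psi_i}\mapsto\mathcal O_{\vec N}\ket{\psi_i}$ for $i=0,1$, so the whole family transforms covariantly, $\ket{\psi_z}\mapsto\mathcal O_{\vec N}\ket{\psi_z}$. Invoking the defining property \cref{AarhusWC} of a $\SLIP_N^h$ measure with $\kappa=1$ gives $E(\mathcal O_{\vec N}\ket{\psi_z})=E(\ket{\psi_z})$ for every $z$ (this is a polynomial identity, valid on non-normalized vectors by homogeneity). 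The polynomial in $z$ is therefore literally unchanged, and so are its roots, counted with multiplicity and including any root at infinity. This settles part (1).

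For the second claim I would apply $\mathcal O\otimes\mathbf 1^N$ with $\mathcal O=\begin{psmallmatrix}a&b\\c&d\end{psmallmatrix}$ on the first qubit. Expanding $(\mathcal O\otimes\mathbf 1^N)(\ket0\ket{\psi_0}+\ket1\ket{\psi_1})$ and recollecting the coefficients of $\ket0$ and $\ket1$ produces a new pair $\ket{\psi_0'},\ket{\psi_1'}$ that are linear combinations of $\ket{\psi_0},\ket{\psi_1}$ with coefficients read off from $\mathcal O$. Substituting these into $\ket{\psi_z'}=z\ket{\psi_0'}+\ket{\psi_1'}$ and factoring shows $\ket{\psi_z'}\propto\ket{\psi_w}$, where $w=\frac{az+b}{cz+d}$ is the Möbius image of $z$ associated with $\mathcal O$. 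Homogeneity of $E$ of degree $h$ — valid for an arbitrary \emph{complex} scaling factor, not merely a positive real one, since $E$ is a homogeneous polynomial in the coefficients — then yields $E(\ket{\psi_z'})=(cz+d)^h\,E(\ket{\psi_w})$. Hence $E(\ket{\psi_z'})$ vanishes exactly when $w=\zeta_i$, i.e. when $z$ is the inverse-Möbius image of a root, which solving $w=\zeta_i$ for $z$ gives as $\zeta_i'=\frac{d\zeta_i-b}{-c\zeta_i+a}$.

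The genuine content lies entirely in part (2), and the main obstacle is bookkeeping rather than any deep idea: I must get the matrix-to-Möbius dictionary and its inversion right — the roots follow the \emph{inverse} transformation because they label the argument $w$ that is held fixed at $\zeta_i$, and solving $w=\zeta_i$ for the new parameter inverts the map — and I must treat the point at infinity and possibly degenerate roots uniformly. Both are handled cleanly by working on the extended complex plane $\hat{\mathbb C}$, equivalently on the Bloch sphere via the stereographic projection of \cref{Fig:StereoProjection}, where a Möbius transformation is a bijection permuting the full root system $\{\zeta_1,\dots,\zeta_h\}$ and respecting multiplicities; the factors $(cz+d)^h$ that appear when clearing denominators account precisely for roots migrating to or from infinity.
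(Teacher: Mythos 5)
Your proposal is correct and follows essentially the same route as the paper's proof: both split the state over the first qubit, show that $\mathcal{O}\otimes\mathbf{1}^N$ sends $\ket{\psi_z}$ to a state proportional to $\ket{\psi_{(az+b)/(cz+d)}}$ so the roots pull back under the inverse M\"obius map, and both use $\SLC^{\otimes N}$-invariance of $E$ to conclude the roots are untouched by $\mathbf{1}\otimes\mathcal{O}_{\vec{N}}$. Your added remarks on complex homogeneity of the polynomial and on handling infinity and multiplicities on $\hat{\mathbb{C}}$ are just a more explicit rendering of the paper's statement that the prefactor $(c\zeta_i'+d)$ ``is not relevant.''
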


\noindent
Before we proceed with the proof of \cref{T1}, we shall emphasize that normalizing the states $\ket{\psi_0}$ and $\ket{\psi_1}$ after the action of the local operator $\ma{O}_{\vec{N+1}}\in \SLC^{\otimes N +1}$, as is the case in existing related works \cite{OsterlohTangles,OsterlohWernerStates,Osterloh3,RegulaGeoTanglePRL,RegulaGeoTanglePRA},
spoils the consistency of the stereographic projection with the roots transformations. As a consequence, the action of SLOCC operators on the states $\ket{\psi_z}$ would no longer be given by the associated Möbius transformation, and the statements presented in \cref{T1} would no longer hold true.

\begin{proof}[Proof of \cref{T1}]
As we discussed, a $N+1$ partite qubit state $\ket{\psi} \in \mathcal{H}_2^{\otimes (N+1)}$ might be written as
\begin{equation}
\label{Peq111}
\ket{\psi}=\ket{0} \ket{\psi_0} + \ket{1}\ket{\psi_1}.
\end{equation}
This form provides the canonical decomposition of the reduced density matrix 
\[\rho_{} =\ket{\psi_0}\bra{\psi_0} +\ket{\psi_1}\bra{\psi_1}\] 
for the two non-normalized states $\ket{\psi_0}, \ket{\psi_1}\in \mathcal{H}_2^{\otimes N}$. Consider now a reversible operator
$\mathcal{O} =\begin{psmallmatrix}
a&b\\
c&d
\end{psmallmatrix} \in \SLC$ acting on the first qubit. Under the action of above operator, the state $\ket{\psi}$ is transformed into
\eq{
\ket{\psi'}:=
\mathcal{O} \ket{\psi}=\ket{0} \Big(a\ket{\psi_0}+b\ket{\psi_1}\Big) + \ket{1} \Big(c\ket{\psi_0}+d\ket{\psi_1}\Big) = \ket{0}\ket{\psi'_0} + \ket{1}\ket{\psi'_1}\,
}
where
\ea{
\ket{\psi'_0} & := a\ket{\psi_0}+b\ket{\psi_1} \label{p0l}\,, \\
\ket{\psi'_1} & := c\ket{\psi_0}+d\ket{\psi_1} \label{p1l}\,.
}
Consider now any superposition of states $\ket{\psi'_0}$ and $\ket{\psi'_1}$. In particular, observe that
\begin{align*}
\ket{\psi'_z} := z \ket{\psi'_0}+\ket{\psi'_1}
&=z \Big(a\ket{\psi_0} +b \ket{\psi_1} \Big)  +  c\ket{\psi_0} +d \ket{\psi_1} \\
&=\left(a z+b\right) \ket{\psi_0} +\left( cz+d \right) \ket{\psi_1}\\
&\propto \dfrac{az+b}{cz+d}\ket{\psi_0 }+ \ket{\psi_1}.
\end{align*}
In other words, we have
\eq{
\ma{O}\ket{\psi_z} = \ket{\psi_{z'}}\,, \quad z' = \frac{az+b}{cz+d}\,,
}
i.e. the operator $\ma{O}$ transforms states in the extended plane representation by applying a Möbius transformation on the index $z$. Suppose that $\zeta_i$ is a complex root of a polynomial function $E$, i.e. $E (\zeta_i \ket{\psi_0}+ \ket{\psi_1})=0$. Acting on the first qubit with operator $\ma{O}$, the density matrix obtained by tracing out the first qubit becomes $\ket{\psi'_0}\bra{\psi'_0} +\ket{\psi'_1}\bra{\psi'_1}$, and hence the entanglement measure $E$ will vanish for a roots $\zeta'_i$, such that $E (\zeta'_i \ket{\psi'_0}+ \ket{\psi'_1})=0$. Using Eqs.~(\ref{p0l})-(\ref{p1l}), the later equation transforms into
\eq{
E\left((c\zeta_i'+d)\left(\dfrac{a\zeta_i'+b}{c\zeta_i'+d} \ket{\psi_0}+ \ket{\psi_1}\right) \right)=0
}
where the factor $(c\zeta_i'+d)$ is not relevant. Comparing with the equation for the roots before the action of $\ma{O}$, we conclude that the roots transform according to the inverse Möbius transformation as
\eq{\label{Mzeta}
\zeta'_i =\dfrac{d\zeta_i-b}{-c\zeta_i+a} \,,
}
under the action of the operator $\ma{O}$. Consequently, the roots of the zero-polytope transform with respect to the inverse Möbius transformation associated to the operator $\mathcal{O} =\begin{psmallmatrix}
a&b\\
c&d
\end{psmallmatrix}$.

Furthermore, consider multi-local operators $\mathcal{O}_{\vec{N}}= \mathcal{O}_1\otimes\ldots\otimes\mathcal{O}_N$ acting on the remaining qubits of the state $\ket{\psi}$ from \cref{Peq111}.
The state $\ket{\psi}$ transforms accordingly as
\eq{
\ket{\psi'}:=
\mathcal{O}_{\vec{N}}  \ket{\psi}=\ket{0}
\underbrace{\mathcal{O}_{\vec{N}} \ket{\psi_0} }_{:=\ket{\psi_0'}}+
\ket{1}
\underbrace{\mathcal{O}_{\vec{N}} \ket{\psi_1} }_{:=\ket{\psi_1'}}.
}
After the action of $\mathcal{O}_{\vec{N}}$, a value of entanglement measure $E$ reads
\[
E \Big( z\ket{\psi_0'}+ \ket{\psi_1'} \Big)=
E \Big( \mathcal{O}_{\vec{N}} \big(z\ket{\psi_0}+ \ket{\psi_1}\big) \Big)\,.
\]
However, since $E$ is $\SLC^{\otimes N}$ invariant function, one may conclude that $E (z \ket{\psi_0}+ \ket{\psi_1} )=0$ iff $E (z \ket{\psi_0'}+ \ket{\psi_1'} )=0$. Hence the roots of both polynomial equations are the same. As a consequence, the roots of the the zero-polytope remain unchanged under the action of $\mathcal{O}_{\vec{N}}$. This concludes the proof of \cref{T1}.
\end{proof}

\section{States discrimination} 
\label{Aarhus17}

The decomposition (\ref{Peq1}) can be performed with respect to any other subsystem, each with its own system of roots. Any local operator $\mathcal{O}_k
=\begin{psmallmatrix}
a&b\\
c&d
\end{psmallmatrix}$ acting on the $k$-th subsystem will influence independently the corresponding $k$-th system of roots and the associated zero-polytope according to the Möbius transformation $\zeta_i \mapsto  \frac{d\zeta_i-b}{-c\zeta_i+a}$. A global action of a local operator $\mathcal{O}_1 \otimes \cdots \otimes \mathcal{O}_{N+1}$ shall affect all roots and thus related zero-polytopes. Since a Möbius transformations are bijections of the Bloch sphere, the total number of roots will always be preserved \cite{RegulaGeoTanglePRA}. Furthermore, since Möbius transformations are fully classified, the existence of a local transformation between two given states becomes straightforward to verify.

In this way, \cref{T1} provides novel solution for the problem of discriminating $(N+1)$-qubit states up to the SLOCC-equivalence \cite{PolInv4qubits,Zhang_2016,KempfNessToEntanglement,BurchardtRaissi20}.
To verify if two pure states are SLOCC-equivalent, one can thus use the following procedure:

\begin{itemize}
\item[\textbf{1)}] Choose any $\SLIP_N^h$ entanglement measure of degree $h \geq 3$ and calculate its roots for each subsystem for both states. Note that a generic state will always have $h$ roots for each subsystem.
\item[\textbf{2)}] Focus on one subsystem $i$, where $1\leq i\leq N+1$ and choose 3 of the total $h$ roots from each state.
\item[\textbf{3)}] Write the unique Möbius transformation between the two triplets of roots. Derive the local operator $\ma{O}_i$ associated to such transformation. 
\item[\textbf{4)}] Choose a different set of $3$ roots of the second state and repeat step 3). Repeat it for all $3! {{h}\choose{3}}$ possibilities.
\item[\textbf{5)}] Repeat steps 3) and 4) for all other subsystems. Consider the tensor products of all the local operators obtained. This results in a finite set of operators of the form $\ma{O}_1 \otimes \dots \otimes \ma{O}_{N+1}$.
\item[\textbf{6)}] If the two given $(N+1)$-qubit states are SLOCC-equivalent, one of obtained operators must transform one state into the other. Otherwise, they are not SLOCC-equivalent.
\end{itemize}

Presented procedure has two main important features. 
Firstly, it factorizes the problem of finding SLOCC-equivalence. Indeed, local operations are determined separately for each subsystem. Secondly, it discretizes the initial discrimination task. Indeed, there are at most $(3! {{h}\choose{3}})^{N+1}$ local operators which might provide SLOCC equivalence between initial states.

\begin{figure}[ht!]
\begin{center}
\includegraphics[width=.95\columnwidth]{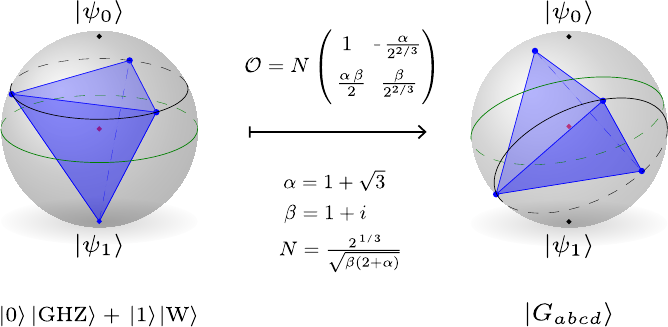}
\caption{The system of four roots (represented as blue dots) related to the 3-tangle polynomial measure $\tau^{(3)}$ evaluated on the first subsystem of the state $1/\sqrt{2}(\ket{0}\ket{\textrm{GHZ}}+\ket{1}\ket{\textrm{W}})$.
This system of four points can be mapped into a normal system (i.e. symmetrically related points $z,-z,1/z,-1/z$) by a Möbius transformation. Similar local transformations can be performed with respect to other subsystems, transforming the states into a state in the normal form.
}
\label{MobiusTrans}
\end{center}
\end{figure}

\section{Normal system of roots and cross-ratio}

Any three distinct points on the sphere can be transformed onto any other three distinct points via a unique Möbius transformation. This is not the case for four points on the sphere. For a given four points on the extended complex plane $z_1,z_2,z_3, z_4$, one may associate a so-called \textit{cross-ratio}
\begin{equation}
\label{CRR}
\lambda
\big(z_1,z_2,z_3,z_4 \big):=
\dfrac{z_3 -z_1}{z_3 -z_2}\dfrac{z_4 -z_2}{z_4 -z_1}\,,
\end{equation}
which is preserved under any Möbius transformations \cite{CrossRatioNqubits,bengtsson_zyczkowski_2006}. Two systems of four distinct points are related via Möbius transformations if and only if their associated cross-ratios are the same. The cross-ratio is not invariant under permutations of points, however, and depending on the ordering taken for the four points, it takes six related values \cite{CrossRatioNqubits}:
\[
\lambda, \frac{1}{\lambda}, 1-\lambda ,\frac{1}{1-\lambda},\frac{\lambda-1}{\lambda},\frac{\lambda}{\lambda-1}.
\]
A particular interesting set of four points is one of the form $z,1/z, -z,-1/z $, which we call a \textit{normal system}.

Consider such a set of four related complex points $\Phi=\{z,\frac{1}{z},-z,-\frac{1}{z} \}$. It is convinient to associate with them the cuboid spanned by eight points:
\[
\Phi\cup\bar{\Phi}=
\Big\{z,\frac{1}{z},-z,-\frac{1}{z} , \bar{z},\frac{1}{\bar{z}},-\bar{z},-\frac{1}{\bar{z}} \Big\},
\]
as it is presented on \cref{G24}. Notice that all six faces of the cuboid are parallel to one of the canonical planes: $XZ$, $XY$, or $YZ$. This property is equivalent to the initial assumption that the set of points $\Phi$ is in normal form.
Clearly, all rotations of the Bloch ball (associated to the unitary operations) preserve the form of the cuboid.
Nevertheless, only a special subgroup of all rotations leave faces of the cuboid parallel to $XZ$, $XY$, or $YZ$.
This special subgroup $\mathcal{G}_{24}$ contains exactly $24$ elements and is generated by three rotations of $\pi /2$ around $X$, $Y$, and $Z$ axis:
\begin{align}
\mathbf{R}_x ({\pi}/{2}) = &
\begin{psmallmatrix}\text{cos} \;\pi /4  & -i\;\text{sin}\;\pi /4\\-i\;\text{sin}\;\pi /4 & \text{cos}\;\pi /4\end{psmallmatrix}
=\frac{1}{\sqrt{2}}
\begin{psmallmatrix}1 & -i\\-i &1 \end{psmallmatrix}
,\; \label{Indeed} \\
\mathbf{R}_y ({\pi}/{2}) = &
\begin{psmallmatrix}\text{cos} \;\pi /4  & -\text{sin}\;\pi /4\\\text{sin}\;\pi /4 & \text{cos}\;\pi /4\end{psmallmatrix}
=\frac{1}{\sqrt{2}}
\begin{psmallmatrix}1 & -1\\1 &1 \end{psmallmatrix}
,\; \label{Indeed2} \\
\mathbf{R}_z ({\pi}/{2}) = &
\begin{psmallmatrix}e^{-i \pi /4 } & 0\\0 & e^{i \pi /4 }\end{psmallmatrix}
=\frac{1}{\sqrt{2}}
\begin{psmallmatrix}1-i  & 0\\0 & 1+i \end{psmallmatrix}. \label{Indeed3}
\end{align}
Notice that $\mathcal{G}_{24}$ is a group of rotations preserving the regular cube (the group of orientable cube symmetries). All rotations in the $\mathcal{G}_{24}$ group preserve the normal-form structure of $\Phi$. Therefore, the normal form is uniquelly determined up to $24$ rotations in the $\mathcal{G}_{24}$ group.

\begin{figure}[ht!]
\begin{center}
\includegraphics[scale=1.3]{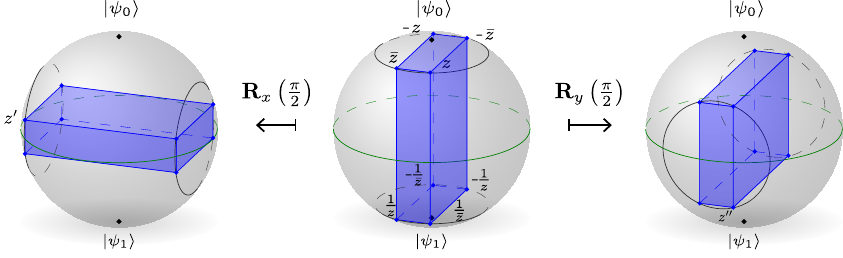}
\caption{A normal system of four points $z,1/z,-z,-1/z$ together with the conjugate points $\bar{z},1/\bar{z},-\bar{z},-1/\bar{z}$ span the cuboid whose faces are parallel to the $XZ$, $XY$ and $YZ$ planes. There are exactly $24$ rotations of the Bloch sphere which preserve this property, composing the elements of the group $\ma{G}_{24}$.
Two of them: rotation by a $\pi /2$ angle around $X$ and $Y$ axes are presented on the left and right respectively. The system of roots transforms according to Eqs.~(\ref{Indeed}-\ref{Indeed3}), which gives
$z\mapsto z':= \frac{z-i}{-iz+1}$ and $z\mapsto z'':= \frac{z-1}{z+1}$ for the two presented rotations.
}
\label{G24}
\end{center}
\end{figure}

We show that any set of four points may be mapped into a normal system, for which $z,1/z, -z,-1/z $.

\begin{proposition}
\label{24}
Any system of non-degenerated four points $z_1, z_2, z_3, z_4$ on the Bloch sphere can be transformed onto the normal form $z,\frac{1}{z},-z,-\frac{1}{z} $ via some Möbius transformation $T$. Transformation $T$ is uniquely defined up to $24$ rotations in the group $\mathcal{G}_{24}$.
\end{proposition}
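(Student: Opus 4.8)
My plan is to treat the cross-ratio \cref{CRR} as the complete invariant of four points under Möbius transformations: two ordered quadruples are carried onto one another by a (unique) Möbius map exactly when their cross-ratios agree, and the map sending the first three points to the first three then sends the fourth to the fourth automatically. The first step is therefore to compute the cross-ratio of a normal system. Ordering its points as $(z,-z,\tfrac1z,-\tfrac1z)$, a direct substitution into \cref{CRR} gives
\[
\lambda\!\left(z,-z,\tfrac1z,-\tfrac1z\right)=\left(\frac{1-z^{2}}{1+z^{2}}\right)^{2}.
\]
Since $z\mapsto z^{2}$ is onto $\hat{\mathbb{C}}$ and $w\mapsto\frac{1-w}{1+w}$ is a bijection of $\hat{\mathbb{C}}$, the right-hand side attains every value of $\hat{\mathbb{C}}$ as $z$ varies. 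Hence a normal system can realise an arbitrary cross-ratio.

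For existence, given non-degenerate points $z_1,z_2,z_3,z_4$ with ordered cross-ratio $\lambda_0$, I would solve $\big(\tfrac{1-z^{2}}{1+z^{2}}\big)^{2}=\lambda_0$ for $z$ — non-degeneracy of the four points keeps the solution away from the forbidden values $z\in\{0,\infty,\pm1,\pm i\}$ at which the normal system collapses — and take $T$ to be the unique Möbius transformation with $T(z_1)=z$, $T(z_2)=-z$, $T(z_3)=\tfrac1z$. Because the ordered cross-ratios now coincide, $T$ necessarily maps $z_4\mapsto-\tfrac1z$, so $T$ sends $\{z_1,z_2,z_3,z_4\}$ onto the normal system $\{z,\tfrac1z,-z,-\tfrac1z\}$.

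For the uniqueness clause I would use the description of a normal system as a single orbit of the Klein four-group $V=\{\mathrm{id},\,w\mapsto-w,\,w\mapsto\tfrac1w,\,w\mapsto-\tfrac1w\}$, whose nontrivial elements are the $\pi$-rotations about the $Z$, $X$ and $Y$ axes appearing in \cref{Indeed,Indeed2,Indeed3}; equivalently, a four-point set is normal iff it is $V$-invariant. If $T$ and $T'$ both bring the points into normal form, then $g:=T'T^{-1}$ maps the normal system $T(\{z_i\})$ onto the normal system $T'(\{z_i\})$. Both are $V$-invariant, so $g^{-1}\sigma g$ stabilises $T(\{z_i\})$ setwise for every $\sigma\in V$; for points in general position this setwise stabiliser is precisely $V$, and therefore $g^{-1}Vg=V$, i.e. $g$ normalises $V$ in $\mathrm{PSL}(2,\mathbb{C})$.

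It then remains to identify this normalizer with $\mathcal{G}_{24}$. The three involutions of $V$ fix the antipodal pairs $\{0,\infty\}$, $\{1,-1\}$ and $\{i,-i\}$, i.e. the six vertices of a regular octahedron; any $g$ normalising $V$ permutes these three pairs and hence preserves the octahedron, while conversely every octahedral symmetry normalises $V$. As holomorphic Möbius maps are orientation preserving, the octahedral Möbius symmetries are exactly the rotations of \cref{G24}, so $g\in\mathcal{G}_{24}$; combined with the already-noted fact that each element of $\mathcal{G}_{24}$ carries normal systems to normal systems, this shows that the normalizing transformation $T$ is determined exactly up to post-composition with $\mathcal{G}_{24}$. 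The main obstacle I anticipate is this uniqueness half: one must pin the setwise stabiliser of a generic normal system down to $V$ (isolating the harmonic and equianharmonic cross-ratios, where the symmetry is larger, for separate comment) and rule out any non-rotational Möbius map normalising $V$. The existence half, by contrast, reduces to the one-line cross-ratio identity above.
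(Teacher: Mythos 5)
Your existence half is essentially the paper's own argument: you match the cross-ratio of the given quadruple to that of a normal system and let three-point transitivity plus cross-ratio invariance place the fourth point. (The paper orders the normal system as $(z,\tfrac{1}{z};-z,-\tfrac{1}{z})$ and gets $4z^2/(1+z^2)^2$, while your ordering gives $\bigl(\tfrac{1-z^2}{1+z^2}\bigr)^2$; the two are related by $\lambda\mapsto 1-\lambda$, so both computations are correct, and your non-degeneracy bookkeeping is fine.) Your uniqueness half is genuinely different: the paper enumerates $6$ cross-ratio values times $4$ solutions $=24$ transformations and then asserts their correspondence with $\mathcal{G}_{24}$, whereas you characterize the ambiguity structurally as the normalizer of the Klein four-group $V$ in the Möbius group. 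This is more illuminating, but as written it has a genuine gap: the conjugation step requires the setwise Möbius stabilizer of the normal system $T(\{z_i\})$ to be exactly $V$, and you invoke this only ``for points in general position.'' The proposition is stated for \emph{every} non-degenerate quadruple, and the exceptional configurations are legitimate inputs: for harmonic cross-ratio ($\lambda\in\{-1,2,\tfrac{1}{2}\}$, e.g.\ the equatorial points $e^{\pm i\pi/4},e^{\pm 3i\pi/4}$) the stabilizer is dihedral of order $8$, and for equianharmonic cross-ratio (a primitive sixth root of unity) it is tetrahedral of order $12$. In those cases your argument yields only $g^{-1}Vg\subseteq\mathrm{Stab}\bigl(T(\{z_i\})\bigr)$, which no longer forces $g$ to normalize $V$, so the proof stalls precisely where the configuration is most symmetric. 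The statement remains true there: the number of maps onto normal systems is $3\times 8=2\times 12=24$, the three harmonic normal systems are the squares inscribed in the three coordinate great circles with $D_4$ stabilizers inside $\mathcal{G}_{24}$, and the two equianharmonic ones are the mutually dual regular tetrahedra with $A_4$ stabilizers inside $\mathcal{G}_{24}$ --- but this case analysis must actually be carried out; deferring it to ``separate comment'' leaves the uniqueness claim unproved on a family of admissible inputs. (To be fair, the paper's $6\times 4$ count is loosest at exactly the same point, since for these quadruples the six cross-ratio values collapse to three or two; your structural approach at least makes the issue visible.)

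A second, smaller gap is the final identification of the normalizer of $V$ with $\mathcal{G}_{24}$. That a Möbius map permuting the six octahedron vertices must be one of the $24$ rotations does not follow from orientation-preservation alone: a Möbius transformation is in general not an isometry of the sphere, so preserving a finite vertex set does not a priori make $g$ a geometric symmetry of the octahedron. You can close this either by finiteness plus the classification of finite Möbius groups (the stabilizer of the six points injects into $S_6$, contains $\mathcal{G}_{24}\cong S_4$, and no finite subgroup of the Möbius group properly contains a copy of $S_4$), or directly: since $\mathcal{G}_{24}$ acts transitively on the six vertices and itself normalizes $V$, compose $g$ with a rotation so that $g(0)=0$; because $g$ permutes the three antipodal pairs it then fixes $\infty$ as well, so $g(w)=\alpha w$, and preservation of $\{\pm 1\}\cup\{\pm i\}$ forces $\alpha\in\{1,-1,i,-i\}$, i.e.\ $g\in\mathcal{G}_{24}$. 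Two cosmetic points: the asserted equivalence ``a four-point set is normal iff it is $V$-invariant'' fails in one direction ($\{0,\infty,1,-1\}$ is $V$-invariant but is not a normal system) --- harmless here, since you use only the true direction --- and the elements of $V$ are the half-turns, i.e.\ the \emph{squares} of the quarter-turn rotations displayed in \cref{Indeed,Indeed2,Indeed3}, not those rotations themselves.
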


\begin{proof}
Consider a complex number $\lambda$. There exists a complex number $z$, such that the cross-ratio of the four points is equal to $\lambda$, i.e.
\begin{equation}
\Big(z,\frac{1}{z};-z,-\frac{1}{z} \Big)=
\lambda\,.
\label{CRvia}
\end{equation}
Indeed, the cross-ratio on the left side equals ${4 z^2} /{(1+z^2)^2}$, and the equation ${4 z^2} /{(1+z^2)^2}=\lambda$ has exactly four complex solutions
\begin{equation}
\label{eq23}
z_0 =\frac{4-2\lambda + \sqrt{1-\lambda}}{2\lambda}
,\; \frac{1}{z_0},\;-z_0,\;-\frac{1}{z_0}.
\end{equation}
In such a way, for a given value $\lambda$ there exists a unique $\lambda$-normal system, such that the cross-ratio of its vertices is given by $(z_0,\frac{1}{z_0};-z_0,-\frac{1}{z_0} )=\lambda$.
Note that replacement of the vertex $z_0$ by any other vertex $z_0,{1}/{z_0},-z_0$, or $-{1}/{z_0}$ does not change the value of the cross-ratio $(z_0,\frac{1}{z_0};-z_0,-\frac{1}{z_0} )=\lambda$. Furthermore, there is a unique Möbius transformation $T$ which maps $z_1,z_2,z_3$ onto $z_0, {1}/{z_0},-z_0$, with the remaining $z_4$ mapped onto $-{1}/{z_0}$. Observe that the value $z_0$ is unique up to its inverse, opposite and opposite inverse elements, according to \cref{eq23}, with the corresponding Möbius transformations associated to the matrices $T, \sigma_x T ,\sigma_y T$ and $\sigma_z T$.
Each of those transformations maps the entire set of points $\{z_1,z_2,z_3,z_4\}$ onto the same set of points $\{z_0, {1}/{z_0},-z_0,-{1}/{z_0}\}$, (the exact bijection between those two sets of roots is different for each transformation).

Depending on the chosen order of four points $\{z_1,z_2,z_3,z_4\}$, the corresponding cross-ratio takes one of six values:
$\lambda, \frac{1}{\lambda}, 1-\lambda ,\frac{1}{1-\lambda},\frac{\lambda-1}{\lambda}$ and $\frac{\lambda}{\lambda-1}$. 
For each value, there is a corresponding set of solutions of the form $\{z_0, {1}/{z_0},-z_0,-{1}/{z_0}\}$ via \cref{eq23} with related four Möbius transformations. hence, in total there are $24$ Möbius transformations that map any four non-degenerated points onto a normal system. Each transformation is related to an element of the group $\mathcal{G}_{24}$ which has exactly $24$ elements.
\end{proof}

\section{SLOCC classification of four qubit states}

In this section I discuss how one may use the normal system of roots for SLOCC-classification of small systems.

Firstly, consider the three-qubit case. Genuinely entangled pure three-qubit states are SLOCC-equivalent to one of two states \cite{threeQubits}
\[
\ket{\textrm{GHZ}}=\frac{1}{\sqrt{2}}(\ket{000}+\ket{111}),
\]
\[
\ket{\textrm{W}}=\frac{1}{\sqrt{3}}( \ket{001}+\ket{010}+\ket{100}),
\] 
as we discussed in Introduction. Consider the 2-tangle $\tau^{(2)}$ \cite{PhysRevLett.80.2245} as the entanglement measure, and its roots as we discussed so far. One may use the roots of $\tau^{(2)}$ to distinguish between the two classes. Indeed, all reduced density matrices of the $\ket{\textrm{W}}$ state have a single root, while there are always two distinct roots for the $\ket{\textrm{GHZ}}$ state \cite{RegulaGeoTanglePRL}.

Secondly, consider the four-qubit case. Contrary to the case of three qubits, there are infinitely many SLOCC classes of four qubit states \cite{threeQubits}. Nevertheless, four-qubit states were successfully divided into nine families, most of which contains an infinite number of SLOCC-classes  \cite{FourQubits,Djokovic4qubits,SpeeKraus}. The family with the most degrees of freedom, so called $G_{abcd}$ family, is represented by states of the following form \cite{FourQubits}:
\begin{align*}
\ket{G_{abcd}}&=\frac{a+d}{2} \big(\ket{0000}+\ket{1111} \big)+\frac{a-d}{2} \big(\ket{0011}+\ket{1100} \big)
\\
&+\frac{b+c}{2}  \big(\ket{0101}+\ket{1010} \big)+\frac{b-c}{2} \big(\ket{0110}+\ket{1001} \big),
\end{align*}
where $a^2 \neq b^2 \neq c^2 \neq d^2 $ are pairwise different. Note that any generic 4-qubit states, i.e. 4-qubit states with random coefficients belonging to the $G_{abcd}$ family, since it has the most degrees of freedom. 

Choosing the 3-tangle $\tau^{(3)}$ \cite{DistributedEntanglement} as the entanglement measure, we computed roots of states $\ket{G_{abcd}}$. As we shall see, each state has four non-degenerate roots already in the normal form. Indeed, the state $\ket{G_{abcd}}$ has the following decomposition with respect to the first subsystem $\ket{G_{abcd}} =\ket{0}\ket{\psi_0} +\ket{1}\ket{\psi_1}$, where
\begin{align*}
\ket{\psi_0} &= \frac{a+d}{2}\ket{000} +\frac{a-d}{2}\ket{011} +\frac{b+c}{2}\ket{101} +\frac{b-c}{2}\ket{110}\,, \\
\ket{\psi_1} &= \frac{a+d}{2}\ket{111} +\frac{a-d}{2}\ket{100} +\frac{b+c}{2}\ket{010} +\frac{b-c}{2}\ket{001}\,.
\end{align*}
Suppose that the three-tangle vanishes, $\tau^{(3)} (\zeta \ket{\psi_0} +\ket{\psi_1} )=0$. Since $\tau^{(3)}$ is a $\SLC^{\otimes 3}$ invariant measure, for any local operators $\mathcal{O}_1$, $\mathcal{O}_2$, $\mathcal{O}_3$ we have
\[
\tau^{(3)} \Big( (\mathcal{O}_1\otimes\mathcal{O}_2\otimes\mathcal{O}_3)
\big(\zeta\ket{\psi_0} +\ket{\psi_1} \big)\Big)=0\,.
\]
Thus
\begin{align*}
\ket{\psi_0}=&(\sigma_x \otimes \sigma_x \otimes \sigma_x) \ket{\psi_1}\,,\\
\ket{\psi_1}=&(\sigma_x \otimes \sigma_x \otimes \sigma_x) \ket{\psi_1}\,,
\end{align*}
where $\sigma_x,\sigma_y,\sigma_z$ are Pauli matrices.
Hence, taking all local operators $\mathcal{O}_1$, $\mathcal{O}_2$, and $\mathcal{O}_3$ equal to $\sigma_x$, we conclude that
\begin{equation}
0=\tau^{(3)} \Big( (\sigma_x \otimes \sigma_x \otimes \sigma_x)
\big(\zeta\ket{\psi_0} +\ket{\psi_1} \big)\Big)=
\zeta\ket{\psi_1} +\ket{\psi_0}
\propto
\frac{1}{\zeta}\ket{\psi_0}+\ket{\psi_1},
\end{equation}
and $1/\zeta$ is another root of $\tau^{(3)}$. Similarly, considering $(\sigma_y \otimes \sigma_y \otimes \sigma_y)$ and $(\sigma_z \otimes \sigma_z \otimes \sigma_z)$, one may find another two roots $-\zeta, \;-1/\zeta$ of a measure $\tau^{(3)}$. Furthermore, it shows that the roots of $\tau^{(3)}$ evaluated on any state from the $G_{abcd}$ family are symmetrical with respect to rotations around $X,Y,Z$ axes by the angle $\pi$. Writting $\tau^{(3)} (z \ket{\psi_0} +\ket{\psi_1} )=0$ explicitely, we obtain the equation
\begin{equation}
\tau^{(3)} (z \ket{\psi_0} +\ket{\psi_1} ) = A z^4 - 2(2 B+A)z^2 + A = 0,
\label{eqq}
\end{equation}
where $A = (b^2 - c^2)  (a^2 - d^2)$ and $B = (c^2-d^2)  (a^2 - b^2)$. The above equation is non-degenerated iff $A,B,A+2B\neq 0$, which happens iff $a^2 \neq b^2 \neq c^2 \neq d^2 $ are pairwise different.

Recall that the normal form of roots is unique up to the group of rotations $\mathcal{G}_{24}$. In such a way, the problem of SLOCC-equivalence of states $\ket{G_{abcd}}$ becomes solvable, with a discrete amount of solutions. Indeed, two states in the $G_{abcd}$ class are SLOCC equivalent iff one can be obtained from the other by the action of an element of the finite class of operators $\mathcal{O}\in \mathcal{G}_{24}^{\otimes 4}$. We thus find that exactly 192 states of the form $\ket{G_{a bcd}}$ are SLOCC-equivalent.

\begin{proposition}
Two states $\ket{G_{a bcd}}$ and $\ket{G_{a' b'c'd'}}$ are SLOCC-equivalent iff their coefficients are related by the following three operations: 
\begin{enumerate}
\item multiplication by a phase factor $(a', b',c',d')=e^{i\phi} (a ,b,c,d)$, 
\item and permutation $\sigma \in S_4$ of coefficients $(a', b',c',d')=\sigma (a ,b,c,d)$, 
\item and change of sign in front of two coefficients from $a,b,c,d$.
\end{enumerate} 
\label{propGabcd}
\end{proposition}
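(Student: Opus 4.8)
The plan is to prove both implications with the root/cross-ratio machinery of \cref{T1} and \cref{24}, organised around the Bell-basis form of the family. Grouping the qubits into the pairs $(1,2)$ and $(3,4)$ one may rewrite the state as
\[
\ket{G_{abcd}} = a\ket{\Phi^+}\ket{\Phi^+} + b\ket{\Psi^+}\ket{\Psi^+} + c\ket{\Psi^-}\ket{\Psi^-} + d\ket{\Phi^-}\ket{\Phi^-},
\]
where $\ket{\Phi^\pm},\ket{\Psi^\pm}$ are the four Bell states. In this form the relevant symmetries act by permuting and re-signing the four orthonormal blocks, which is exactly the structure appearing in the statement, so it will serve as the backbone of both directions.

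For necessity I would start from the quartic \cref{eqq}, whose roots $\zeta,\tfrac1\zeta,-\zeta,-\tfrac1\zeta$ already form a normal system. A one-line computation converts its cross-ratio into
\[
\lambda=\frac{4\zeta^2}{(1+\zeta^2)^2}=\frac{A}{A+B}=\frac{(b^2-c^2)(a^2-d^2)}{(b^2-d^2)(a^2-c^2)},
\]
i.e.\ precisely the cross-ratio of the four numbers $a^2,b^2,c^2,d^2$ (using $A+B=(b^2-d^2)(a^2-c^2)$). By \cref{T1} any SLOCC equivalence acts on the roots of $\tau^{(3)}$ on a fixed subsystem by a Möbius transformation, and the cross-ratio is a Möbius invariant; hence equal cross-ratios of the squared coefficients is a necessary condition. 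To upgrade this to the exact list I would bring the roots of both states to the common normal form via \cref{24}: the residual freedom on each subsystem is then an element of $\mathcal{G}_{24}$, so the only candidate operators lie in $\mathcal{G}_{24}^{\otimes 4}$. Testing which of these finitely many operators actually carries $\ket{G_{abcd}}$ back into the family, and reading off the induced map on $(a,b,c,d)$ in the Bell-basis form, is a finite computation that yields exactly the announced operations.

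For sufficiency I would exhibit the three operations directly. A global phase is operation~1 and is trivially an (even local-unitary) equivalence, since $\ket{G_{abcd}}$ is linear in $(a,b,c,d)$. Permutations of $(a,b,c,d)$ are realised by single-qubit Pauli rotations that permute the four Bell blocks simultaneously on both pairs — for instance $\sigma_x$ on qubits $1$ and $3$ interchanges the $\Phi$- and $\Psi$-blocks — each lying in $\SLC$ up to phase, giving the full $S_4$. An even sign change such as $(a,b,c,d)\mapsto(a,-b,-c,d)$ is produced by $\sigma_z\otimes\sigma_z$ on one pair, which fixes $\ket{\Phi^\pm}$ and negates $\ket{\Psi^\pm}$; composing such operators generates the group of even sign changes. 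Together with $S_4$ this produces the $|S_4|\times 2^3=192$ discrete transformations quoted in the text.

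The hard part will be the refinement from arbitrary to \emph{even} sign changes, i.e.\ showing that an odd change such as $(a,b,c,d)\mapsto(-a,b,c,d)$ does \emph{not} give an equivalent state. The cross-ratio is blind to this, since it only sees $a^2,b^2,c^2,d^2$. I see two ways to close the gap. The cleaner route is to invoke an independent $\SLC^{\otimes 4}$ invariant whose restriction to the family is proportional to $abcd$: this quantity is fixed by $S_4$ and by even sign changes but flips under an odd one, and because its square is already determined by the symmetric functions of $a^2,b^2,c^2,d^2$, it records exactly the missing sign. The alternative, staying inside the present framework, is to carry out the $\mathcal{G}_{24}^{\otimes 4}$ computation above carefully: the constraint that a \emph{single} product operator must act consistently on all four subsystems at once (not merely match the roots on one of them) is what correlates the four local factors and forbids the odd sign, reproducing precisely the group $S_4\ltimes(\mathbb{Z}_2)^3$ of order $192$.
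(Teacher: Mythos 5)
Your overall strategy coincides with the paper's own proof. Necessity is obtained exactly as in the paper: the roots of $\tau^{(3)}$ for any state of the family already form a normal system, so by \cref{24} any SLOCC equivalence must act on each qubit through an element of $\mathcal{G}_{24}$ (this is \cref{lemmmmm}), and the classification is completed by a finite check over $\mathcal{G}_{24}^{\otimes 4}$. Note that the paper settles what you call the ``hard part'' (excluding odd sign changes) by precisely this finite computation --- the $8\times 24=192$ surviving tuples --- i.e.\ by your route 2, so that is not a gap. Your two additions are correct and worth keeping: the Bell-diagonal form $\ket{G_{abcd}}=a\ket{\Phi^+}\ket{\Phi^+}+b\ket{\Psi^+}\ket{\Psi^+}+c\ket{\Psi^-}\ket{\Psi^-}+d\ket{\Phi^-}\ket{\Phi^-}$, and the identity $\lambda=A/(A+B)$ with $A+B=(a^2-c^2)(b^2-d^2)$, which exhibits the Möbius-invariant cross-ratio of the roots of \cref{eqq} as a cross-ratio of the squared coefficients; neither is made explicit in the paper, and the latter gives a cleaner way to see why the squares $a^2,b^2,c^2,d^2$ are the natural variables.

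There is, however, one concrete error in your sufficiency argument. Pauli operators permute the four Bell blocks only through the Klein four-group: your example, $\sigma_x$ on qubits $1$ and $3$, sends $(a,b,c,d)\mapsto(b,a,d,c)$, and the analogous $\sigma_z$ and $\sigma_y$ choices give $(d,c,b,a)$ and $(c,d,a,b)$; any Pauli assignment that does not act ``diagonally'' on the two pairs takes the state out of the Bell-diagonal family altogether. So no choice of Paulis yields a single transposition or a $3$-cycle, and ``giving the full $S_4$'' does not follow from the mechanism you describe. What is needed are square roots of Paulis, namely the $\pi/2$ rotations of \cref{Indeed,Indeed2,Indeed3} applied to all four qubits, exactly as in the paper's proof: $\mathbf{R}_x(\pi/2)^{\otimes 4}$ realizes $(a,b,c,d)\mapsto(-b,-a,c,d)$, $\mathbf{R}_y(\pi/2)^{\otimes 4}$ realizes $(a,d,c,b)$, and $\mathbf{R}_z(\pi/2)^{\otimes 4}$ realizes $(-d,b,c,-a)$. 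Single transpositions are then recovered by composing with the even sign flips you already constructed (e.g.\ $\sigma_z\otimes\sigma_z$ on one pair), and these generators produce the full group of order $192$. With this replacement your proof closes and is essentially the paper's.
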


Before we proceed with the proof, note that the symmetry in \cref{propGabcd} is given by the Weyl group of Cartan type $D_4$. Such symmetry has already been observed among generators of four-qubit polynomial invariants exhibit this type of symmetry \cite{PolynomialInvariantsofFourQubits,Djokovic4qubits,Holweck_2014}. As a consequence, this result constitutes a new relation between 4-qubit invariants and the convex roof extension of 3-tangle $\tau^{(3)}$. This may shed some light on the problem of generalizing the CKW inequality \cite{DistributedEntanglement} for four qubit states \cite{GourWallach,ProblemTangle,Regula_2014,Eltschka_2014,Regula_2016a},
and beyond
\cite{ProblemTangle,GourWallach,MonogamyEqualitiesDeg2and4}.

\begin{lemma}
\label{lemmmmm}
Any local operator
$\mathcal{O}=
\mathcal{O}_1 \otimes \mathcal{O}_2 \otimes \mathcal{O}_3 \otimes \mathcal{O}_4 \in \SLC^{\otimes 4}$
which transforms states
$\ket{G_{a' b'c'd'}} \propto \mathcal{O} \ket{G_{a bcd}}$ with $a^2 \neq b^2 \neq c^2 \neq d^2 $, is of the form $\mathcal{O}_i \in \mathcal{G}_{24} $.
\end{lemma}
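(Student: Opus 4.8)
The plan is to reduce the claim, one subsystem at a time, to the uniqueness of the normal form proved in \cref{24}. The key input I would use first is the computation around \cref{eqq}: for \emph{every} state of the $G_{abcd}$ family with $a^2\neq b^2\neq c^2\neq d^2$, the four roots of the $3$-tangle $\tau^{(3)}$ taken with respect to any single subsystem already form a normal system $\{z,\tfrac{1}{z},-z,-\tfrac{1}{z}\}$. The hypothesis that the four squares are pairwise distinct is exactly what guarantees $A,B,A+2B\neq 0$ in \cref{eqq}, hence that these four roots are non-degenerate. Since $\mathcal{O}$ is invertible and induces a Möbius transformation on each system of roots (a bijection of the Bloch sphere), it maps non-degenerate roots to non-degenerate roots; thus the image $\ket{G_{a'b'c'd'}}$, being again a member of the family, also carries non-degenerate roots in normal form with respect to each subsystem.

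First I would fix a subsystem $i\in\{1,2,3,4\}$ and write the decomposition $\ket{G_{abcd}}=\ket{0}\ket{\psi_0}+\ket{1}\ket{\psi_1}$ relative to it. By \cref{T1}, the factors $\mathcal{O}_j$ with $j\neq i$ leave the $i$-th system of $\tau^{(3)}$-roots invariant, whereas $\mathcal{O}_i=\begin{psmallmatrix}a&b\\c&d\end{psmallmatrix}$ acts on that system through the (inverse) Möbius transformation $M_i$ associated to $\mathcal{O}_i$. Because $\ket{G_{a'b'c'd'}}\propto \mathcal{O}\ket{G_{abcd}}$, the transformation $M_i$ must send the normal system $N$ of roots of $\ket{G_{abcd}}$ onto the normal system $N'$ of roots of $\ket{G_{a'b'c'd'}}$.

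The heart of the argument is then to conclude $M_i\in\mathcal{G}_{24}$. Möbius transformations preserve the cross-ratio, so $N$ and $N'$ share the same cross-ratio (up to the six values coming from reorderings), i.e.\ they are normal systems of the same type, which rules out $M_i$ sending a normal system to an inequivalent one. Now observe that both the identity map and $M_i$ carry the non-degenerate system $N$ to a normal form (namely $N$ and $N'$ respectively); by the uniqueness part of \cref{24}, any two Möbius transformations bringing a fixed non-degenerate quadruple to normal form differ by an element of $\mathcal{G}_{24}$, whence $M_i\in\mathcal{G}_{24}$. Since $\mathcal{G}_{24}$ is a group and the correspondence $\mathcal{O}_i\mapsto M_i$ is a homomorphism determining $\mathcal{O}_i$ up to a global phase, this gives $\mathcal{O}_i\in\mathcal{G}_{24}$. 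Running the same argument for $i=1,2,3,4$ would finish the proof.

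The hard part will be the bookkeeping of the last two paragraphs: one must be certain that the normal-form property holds \emph{simultaneously} for the source and the target, which is precisely where the $D_4$-type symmetry of the $G_{abcd}$ family enters (the relations $\ket{\psi_0}=\sigma_x^{\otimes 3}\ket{\psi_1}$ and its $\sigma_y,\sigma_z$ analogues forcing the roots $\zeta,1/\zeta,-\zeta,-1/\zeta$), and one must invoke cross-ratio invariance to exclude a Möbius map between genuinely different normal systems. Once these two points are secured, the passage to \cref{24} is immediate, and the remaining identification of $\mathcal{O}_i$ with a concrete matrix among \cref{Indeed,Indeed2,Indeed3} is routine.
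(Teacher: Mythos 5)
Your proposal is correct and follows essentially the same route as the paper's own (much terser) proof: both root systems are in normal form by the computation around \cref{eqq}, each $\mathcal{O}_i$ acts on the $i$-th root system by a Möbius transformation (\cref{T1}), and the uniqueness part of \cref{24} — applied to the identity and to $M_i$, which both carry the source roots to a normal form — forces $M_i\in\mathcal{G}_{24}$. Your additional bookkeeping (propagating non-degeneracy to the target via bijectivity, and lifting the Möbius map back to $\mathcal{O}_i$ up to sign) fills in details the paper leaves implicit; the cross-ratio remark is harmless but not needed, since the existence of $M_i$ mapping one normal system onto the other is given rather than something to be ruled in or out.
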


\begin{proof}
Consider a local operator $\mathcal{O}_1$ acting on the first qubit which transforms the state $ \ket{G_{abcd}}$ onto $\ket{G_{a'b'c'd'}}$. Operator $\mathcal{O}_1$ also transforms corresponding systems of roots denoted as $\Lambda$ and $\Lambda '$, respectively, via the action of the related Möbius transformation. Note that both systems $\Lambda$ and $\Lambda '$ are in the normal form, therefore, according to \cref{24}, we have that $\mathcal{O}_i \in \mathcal{G}_{24} $. Similar analysis with respect to all other qubits shows that $\mathcal{O}_2,\mathcal{O}_3,\mathcal{O}_4 \in \mathcal{G}_{24}$.
\end{proof}

\begin{proof}[Proof of \cref{propGabcd}]
\cref{lemmmmm} shows that the search for a general form of an SLOCC-equivalence between the states $\ket{G_{a bcd}}$ and $\ket{G_{a' b'c'd'}}$ might be restricted to the search within the finite class of operators $\mathcal{O} \in \mathcal{G}_{24}^{\otimes 4}$. Note that the group $\mathcal{G}_{24}$ has only 24 elements, and hence one may numerically verify that there are exactly $8\times 24 =192$ states in the $G_{abcd}$ family which are SLOCC-equivalent to $\ket{G_{abcd}}$ by $\mathcal{O}\in \mathcal{G}_{24}^{\otimes 4}$.
For example, the following tensor operation 
\begin{align}
\label{tuples1}
\mathbf{R}_x \Big(\frac{\pi}{2}\Big) \otimes
\mathbf{R}_x \Big(\frac{\pi}{2}\Big)\otimes
\mathbf{R}_x \Big(\frac{\pi}{2}\Big) \otimes
\mathbf{R}_x \Big(\frac{\pi}{2}\Big) &
\end{align}
transforms state $\ket{G_{abcd}}$ into $\ket{G_{-b-acd}}$.
This might be written as a transformation of a tuples of indices: the tuple $(a,b,c,d )$ is transformed into the tuple $(-b,-a,c,b)$. Similarly, the operators showed on the following right hand sides provide the corresponding transformations of the tuple $(a,b,c,d)$ on the left side:
\begin{align}
\label{tuples2}
\nonumber
\mathbf{R}_y \Big(\frac{\pi}{2}\Big) \otimes
\mathbf{R}_y \Big(\frac{\pi}{2}\Big)\otimes
\mathbf{R}_y \Big(\frac{\pi}{2}\Big) \otimes
\mathbf{R}_y \Big(\frac{\pi}{2}\Big) &\quad \longleftrightarrow \quad (a,\blue{d},c,\blue{b}), \\
\nonumber
\mathbf{R}_z \Big(\frac{\pi}{2}\Big) \otimes
\mathbf{R}_z \Big(\frac{\pi}{2}\Big)\otimes
\mathbf{R}_z \Big(\frac{\pi}{2}\Big) \otimes
\mathbf{R}_z \Big(\frac{\pi}{2}\Big) &\quad \longleftrightarrow \quad (\blue{-d},b,c,\blue{-a}), \\
\nonumber
\mathbf{R}_y ({\pi}) \otimes
\mathbf{R}_y ({\pi})\otimes
\bs{1} \otimes
\bs{1} &\quad \longleftrightarrow \quad
(a,\blue{-b},\blue{-c},d), \\
\nonumber
\mathbf{R}_x ({\pi}) \otimes
\mathbf{R}_x ({\pi})\otimes
\bs{1} \otimes
\bs{1} &\quad \longleftrightarrow \quad
(a,b,\blue{-c},\blue{-d}), \\
\nonumber
\mathbf{R}_y ({\pi}) \otimes
\bs{1} \otimes
\mathbf{R}_y ({\pi})\otimes
\bs{1} &\quad \longleftrightarrow \quad
(\blue{d},\blue{c},\blue{b},\blue{a}), \\
\mathbf{R}_x ({\pi}) \otimes
\bs{1} \otimes
\mathbf{R}_x ({\pi})\otimes
\bs{1} &\quad \longleftrightarrow \quad
(\blue{b},\blue{a},\blue{d},\blue{c})\,. \nonumber
\end{align}
In addition, the tuples $ (a,b,c,d)$ and $(-a,-b,-c,-d)$ represent the same state, since states are defined up to the global phase. Note that any composition of the above operations also provides SLOCC equivalences between states of the form $\ket{G_{abcd}}$. The eight aforementioned transformations of tuples generate all possible permutations of the $a,b,c,d$ indices, together with the change of a sign of any two or all four indices. Note that there are exactly $24$ permutations in total and for each permutation the signs can be matched in exactly $1+ {{4}\choose{2}} +1 =8$ ways. This gives $192$ tuples representing SLOCC equivalent states, which perfectly matches the numerical result.

Note yet another trivial manipulation with indices $a,b,c,d$ comes from multiplying by a global phase. This operation transforms the indices as
\[
(e^{i\theta} a,\;e^{i\theta} b,\; e^{i\theta} c,\; e^{i\theta} d ) \sim
(a,b,c,d)\,,
\]
resulting in the same quantum state for any real number $\theta \in [0,2 \pi)$. In particular, for $\theta=\pi$, the system of opposite indices determines the same state as the initial one, i.e. $(-a,-b,-c,-d)\sim (a,b,c,d)$.
\end{proof}


Finally, we show a link between normal systems of roots and the normal form of pure states of four qubits in the $G_{abcd}$ class. We say that a state is in its normal form if all its reductions are maximally mixed \cite{NormalFormVerstraete}. The process to determine the normal form of a state (if it exists) is straightforward, although, it may also turn out to be an infinite iterative process \cite{NormalFormVerstraete}. However, presented results of \cref{T1} applied to the four qubits states in the $G_{abcd}$ family show that this difficulty can be avoided. Indeed, consider the representative state of the $G_{abcd}$ family is in normal form \cite{NormalFormVerstraete} and the corresponding roots associated with the three-tangle $\tau^{(3)}$ measure also form a normal system. One may calculate the roots associated with any state in the $G_{abcd}$ class, and transform them into a normal system of roots using a Möbius transformation. Furthermore, the associated SLOCC operator to convert the initial state into a state in the normal form.
We illustrate this procedure by transforming the widely discussed four-partite state $\frac{1}{\sqrt{2}}(\ket{0}\ket{\textrm{GHZ}}+\ket{1}\ket{\textrm{W}})$ \cite{Osterloh3,OsterlohWernerStates} into its normal form, see \cref{MobiusTrans}. Without presented technique, the standard way of obtaining the normal form would result in an infinite iterative procedure.

\section{Conclusions}
In this chapter, I showed how a single $\SL$-invariant polynomial entanglement measure $\SLIP_N^h$ defined in $N$-quibit system can be used to derive necessary and sufficient conditions for any two $(N+1)$-qubit states to be equivalent under SLOCC operations. This was possible by showing that the roots of any $\SLIP_N^h$ measure transform via Möbius transformations under the SLOCC operations performed on the subsystems (\cref{T1}). In that way, SLOCC equivalence between two states is implied by the easily verifiable existence of a Möbius transformation relating the aforementioned roots for each subsystem, as it is described in a detailed way in the procedure presented in \cref{Aarhus17}. Furthermore, I demonstrated that the 3-tangle measure $\tau^{(3)}$ is enough to classify 4-qubit generic states (\cref{propGabcd}). Lastly, I present a procedure of determining the normal form of states in the $G_{abcd}$ family that circumvents the possibility of an infinite iterative process of the standard procedure.

\chapter{Concluding remarks}
\section{Conclusions}

This dissertation covers various aspects of the quantification and classification of entanglement in multipartite states and the role of symmetry in such systems. The main results of the research in this thesis can be summarized in the following.

\begin{enumerate}
\item 
\cref{chap1} introduces the notion of $m$-resistant states of $N$-partite systems and presents it by analogy with topological links (\cref{Aarhus1}). I present two general methods of constructing $m$-resistant states, one based on the Majorana representation of symmetric states (\cref{Aarhus4}), the other on the combinatorial objects known as orthogonal arrays (\cref{prop111,Aarhus5}).
%
%
\item 
In \cref{chap2}, I introduce the notion of $H$-symmetric states, where $H$ is an arbitrary subgroup of permutation group -- see \cref{Aarhus9}. \cref{Aarhus8} provides an elementary example of a quantum state with the given group of symmetry. 
\item 
I introduce two families of states with remarkable symmetric properties: in \cref{Aarhus10} excitation-states $\ket{G}$, related to a graph $G$, and in \cref{Dickelike} Dicke-like states, determined by an embedding of a subgroup $H< \mathcal{S}_N$. Furthermore, I investigate entanglement properties of excitation states related to regular graphs, in particular the form of $2$-partie concurrence (\cref{A}), and generalized concurrence (\cref{B}) of such states.
%
%
\item 
I present two different methods of constructing excitation states: in \cref{Hamiltonians} excitation states are obtained as ground states of Hamiltonians with $3$-body interactions, while in \cref{Circuit} as an outcome of a quantum circuit, with satisfactory complexity. A simple example of five qubit excitation is successfully simulated on available quantum computers: IBM - Santiago and Athens (\cref{cyclic_hist}).
\item 
In \cref{ch4}, I present the famous combinatorial problem of 36 officers that was posed by Euler, and its quantum version. I show an analytical form of an AME$(4,6)$ state, which might be seen as a quantum solution to the Euler's problem. Moreover, I present a coarse-grained combinatorial structure behind constructed AME$(4,6)$ state, and elaborate how such coarse-grained combinatorial structures might lead to the construction of other genuinely entangled states beyond the stabilizer approach.
%
%
\item 
\cref{chap5} presents general techniques of verification whether two $k$-uniform or AME states are SLOCC-equivalent. In particular, I show that SLOCC equivalence between two $k$-uniform states of $N$ particles with the minimal support has one of the following forms: if $k<2N$, it is provided by local monomial matrix (\cref{prop1,coro1}), while if $k=2N$ by a Butson-type complex Hadamard matrix or monomial matrix (\cref{prop1=}).
%
%
\item
I apply general results concerning the SLOCC equivalence of AME states. In particular, I show that some AME states cannot be locally transformed into existing AME states of minimal support (\cref{AME55}), which falsifies the conjecture that for a given multipartite quantum system all AME states are locally equivalent. Furthermore, I show that the existence of AME states with minimal support of 6 or more particles yields the existence of infinitely many such non-SLOCC-equivalent states (\cref{AMEminSUP6}). As an immediate consequence, I show that not all AME states belong to the class of stabilizer states.
%
%
\item 
In \cref{chap6}, I show how a single entanglement measure can be used to derive necessary and sufficient conditions for any two multipartite quantum qubit states to be equivalent under SLOCC operations. Precisely, I show that under an SLOCC operation the roots of any appropriate measure transform via related Möbius transformation (\cref{T1}). SLOCC equivalence between two states is implied by the easily verifiable existence of a Möbius transformation relating the aforementioned roots for each subsystem, as it is precisely described in the procedure presented in \cref{Aarhus17}.
%
%
\item 
I apply general results described in \cref{chap6} to $4$-qubit systems. In particular, I demonstrate that the 3-tangle measure is enough to classify 4-qubit generic states (\cref{propGabcd}). Lastly, I present a procedure of determining the normal form of a generic $4$-qubit state that circumvents the possibility of an infinite iterative process of the standard procedure.
\end{enumerate}

\section{Open problems}

As we have demonstrated so far, the quantification and classification of entanglement for multipartite states is an ambitious long-term project. We hope that the results presented in this dissertation will contribute to the development of these complex issues. In addition, we outline related problems and important open questions for future research.

\begin{enumerate}
\item[{A.}]
As we elaborated in \cref{chap1}, it is not clear if $m$-resistant states of $N$-qubit do exists for any $m=0,1,\ldots, N-2$, especially among symmetric states. One may also pose the more general problem of verification whether, for any number of parties $N$ and number $m$, there exists an $m$-resistant $N$-qudit state, see \cref{Aarhus6}.
%
%
\item[{B.}] 
Admissible groups of symmetries in multipartite quantum systems of $N$-qudits. For any subgroup of symmetric group $H< \mathcal{S}_N$, what is the minimal local dimension $d$ for which there exists a $H$-symmetric state $\ket{\psi} \in \mathcal{H}^{\otimes N}_d$, see \cref{Aarhus7}? In particular, which groups of symmetries are admissible in systems of $N$ qubits?
\item[{C.}] 
\cref{ch4} presents a coarse-grained combinatorial structure behind constructed AME$(4,6)$ state. Is there any general method for constructing such objects in any dimension? Based on such coarse-grained combinatorial structures, is it possible to provide a general construction of other genuinely entangled states beyond the stabilizer approach?
\item[{D.}] 
Uniqueness of AME states in small dimensions. Verify if all OAs of index unity with small local dimension $d<9$ are isomorphic by permutations of symbols on each level (\cref{productCon}). From this would follow that all related $k$-uniform states with minimal support and all phases equal are LU-equivalent.
\item[{E.}] 
Precise form of SLOCC-equivalence between AME($2k$,$d$) states. 
Which subclass of Butson-type complex Hadamard matrices might appear in the LU-equivalence between to AME($2k$,$d$) states of the minimal support? Verify if all AME($2k$,$d$) states with minimal support are LU-equivalent if and only if they are LM-equivalent (\cref{Aarhus16}).
%
%
\item[{F.}] 
Linear structures of orthogonal arrays and corresponding $k$-uniform and AME states. Verify if  AME states related to different linear structures of classical combinatorial designs are not equivalent. In particular, verify whether three AME(4,9) states presented in \cref{proppppp2} are SLOCC-equivalent. 
%
%
\item[{G.}] 
CKW-equalities \cite{DistributedEntanglement}. \cref{propGabcd} constitutes a new relation between 4-qubit invariants and the convex roof extension of 3-tangle. Is it possible  to understand presented relation in context of generalized CKW (in)equalities of four qubit states?
\item[{H.}] 
Classification of large multipartite systems. \cref{chap6} shows how generic 4-qubit states may be classified by the 3-tangle measure. Is it possible to provide such classification for generic $N$-qubit states, $N>4$, via a single entanglement measure?
\end{enumerate}

\newpage
\addcontentsline{toc}{chapter}{Bibliography}
\bibliographystyle{alpha}
\bibliography{PhD_Adam_Burchardt.bbl}

\appendix
\chapter*{Appendix \\ $\quad$ \\ List of selected publications and preprints}
\addcontentsline{toc}{chapter}{List of publications}

This PhD Dissertation is based on the following publications and preprints available online. As they present some further results not discussed here, I list them for the convenience of the reader.
\begin{enumerate}
\item[{[A]}]
G. M. Quinta, R. Andr\'{e}, \textbf{A. Burchardt}, and K. \. Zyczkowski
\textit{Cut-resistant links and multipartite entanglement resistant to particle loss}, 
\href{https://journals.aps.org/pra/abstract/10.1103/PhysRevA.100.062329}{Phys. Rev. A 100, 062329 (2019)}. 
\item[{[B]}]
\textbf{A. Burchardt}, Z. Raissi, 
\textit{Stochastic local operations with classical communication of absolutely maximally entangled states}, 
\href{https://journals.aps.org/pra/abstract/10.1103/PhysRevA.102.022413}{Phys. Rev. A 102,022413 (2020)}.
\item[{[C]}]
\textbf{A. Burchardt}, J. Czartowski, and K. \. Zyczkowski, 
\textit{Entanglement in highly symmetric multipartite quantum states},
\href{https://journals.aps.org/pra/abstract/10.1103/PhysRevA.104.022426}{Phys. Rev. A 104, 022426 (2021)}.
\item[{[D]}]
S. Rather${}^\ast$, \textbf{A. Burchardt}${}^\ast$, W. Bruzda, G. Rajchel-Mieldzio{\'c}, A. Lakshminarayan, K. {\.Z}yczkowski, 
\textit{Thirty-six entangled officers of Euler: Quantum solution to a classically impossible problem}, 
\href{https://journals.aps.org/prl/abstract/10.1103/PhysRevLett.128.080507}{Phys. Rev. Lett. 128, 080507 (2022)}. 
\\${}^\ast$Contributed equally
\item[{[E]}]
\textbf{A. Burchardt}, G. M. Quinta, R. Andr\'{e}, 
\textit{Entanglement Classification via Single Entanglement Measure}, 
\href{https://arxiv.org/abs/2106.00850}{ArXiv: 2106.00850 (2021)}.
\label{appE}
\end{enumerate}

\newpage

\chapter*{List of all publications and preprints}

\begin{enumerate}
\item[{[1]}]
M. Borkowski, D. Bugajewski, \textbf{A. Burchardt}, 
\textit{On Topological Properties of Metrics Defined via Generalized 'linking Construction'}, 
\href{https://www.hindawi.com/journals/jfs/2017/4901762/}{Journal of Function Spaces 2017,4901762 (2017)}.
\item[{[2]}]
W. K{\l{}}obus, \textbf{A. Burchardt}, A. Ko{\l{}}odziejski, M. Pandit, T. V\'{e}rtesi, K. {\.Z}yczkowski, W. Laskowski, 
\textit{k-uniform mixed states}, 
\href{https://journals.aps.org/pra/abstract/10.1103/PhysRevA.100.032112}{Phys. Rev. A 100,032112 (2019)}.
\item[{[3]}]
G. M. Quinta, R. Andr\'{e}, \textbf{A. Burchardt}, K. {\.Z}yczkowski, 
\textit{Cut-resistant links and multipartite entanglement resistant to particle loss}, 
\href{https://journals.aps.org/pra/abstract/10.1103/PhysRevA.100.062329}{Phys. Rev. A 100,062329 (2019)}.
\item[{[4]}]
\textbf{A. Burchardt}, 
\textit{Algebras with two multiplications and their cumulants}, 
\href{https://link.springer.com/article/10.1007/s10801-019-00898-3}{Journal of Algebraic Combinatorics 52,157-186 (2020)}.
\item[{[5]}]
\textbf{A. Burchardt}, Z. Raissi, 
\textit{Stochastic local operations with classical communication of absolutely maximally entangled states}, 
\href{https://journals.aps.org/pra/abstract/10.1103/PhysRevA.102.022413}{Phys. Rev. A 102,022413 (2020)}.
\item[{[6]}]
\textbf{A. Burchardt}, 
\textit{The top-degree part in the matchings-jack conjecture}, 
Electronic Journal of Combinatorics 28,P2.15 (2021).
\item[{[7]}]
\textbf{A. Burchardt}, J. Czartowski, K. {\.Z}yczkowski, 
\textit{Entanglement in highly symmetric multipartite quantum states},
\href{https://journals.aps.org/pra/abstract/10.1103/PhysRevA.102.022413}{Phys. Rev. A 104, 022426 (2021)}. 
\item[{[8]}]
J.Paczos, M. Wierzbi{\'{n}}ski, G. Rajchel-Mieldzio{\'{c}}, \textbf{A. Burchardt}, K. {\.Z}yczkowski, 
\textit{Genuinely quantum SudoQ and its cardinality}, 
\href{https://journals.aps.org/pra/abstract/10.1103/PhysRevA.104.042423}{Phys. Rev. A 104, 042423 (2021)}.
\item[{[9]}]
B. Yu, P. Jayachandran, \textbf{A. Burchardt}, Y. Cai, N. Brunner, V. Scarani, 
\textit{Absolutely entangled sets of pure states for bipartitions and multipartitions}, 
\href{https://journals.aps.org/pra/abstract/10.1103/PhysRevA.104.032414}{Phys. Rev. A 104, 032414 (2021)}.
\item[{[10]}]
S. Rather${}^\ast$, \textbf{A. Burchardt}${}^\ast$, W. Bruzda, G. Rajchel-Mieldzio{\'c}, A. Lakshminarayan, K. {\.Z}yczkowski, 
\textit{Thirty-six entangled officers of Euler: Quantum solution to a classically impossible problem}, 
\href{https://journals.aps.org/prl/abstract/10.1103/PhysRevLett.128.080507}{Phys. Rev. Lett. 128, 080507 (2022)}. 
\\${}^\ast$Contributed equally
\item[{[11]}]
\textbf{A. Burchardt}, G. M. Quinta, R. Andr\'{e}, 
\textit{Entanglement Classification via Single Entanglement Measure}, 
\href{https://arxiv.org/abs/2106.00850}{ArXiv: 2106.00850 (2021)}. 
\item[{[12]}]
Z. Raissi, \textbf{A. Burchardt}, E. Barnes, 
\textit{General stabilizer approach for constructing highly entangled graph states}, 
\href{https://arxiv.org/abs/2111.08045}{ArXiv: 2111.08045 (2021)}. 
\item[{[13]}]
K. {\.Z}yczkowski, W. Bruzda, G. Rajchel-Mieldzio{\'c}, \textbf{A. Burchardt}, S. Rather, A. Lakshminarayan, 
\textit{9$\times$4 = 6$\times$6: Understanding the quantum solution to the Euler's problem of 36 officers}, 
\href{https://arxiv.org/abs/2204.06800}{ArXiv: 2204.06800 (2022)}. 
\end{enumerate}

\renewcommand{\thesection}{\Alph{section}}

\end{document}